\documentclass{article}

\usepackage{iclr2020_conference,times}

\usepackage{amsmath,amsfonts,bm}









\def\eqref#1{equation~\ref{#1}}









\def\1{\bm{1}}










\DeclareMathAlphabet{\mathsfit}{\encodingdefault}{\sfdefault}{m}{sl}
\SetMathAlphabet{\mathsfit}{bold}{\encodingdefault}{\sfdefault}{bx}{n}













\usepackage[utf8]{inputenc} 
\usepackage[T1]{fontenc}    
\usepackage{hyperref}       
\usepackage{url}            
\usepackage{booktabs}       
\usepackage{amsfonts}       
\usepackage{nicefrac}       
\usepackage{microtype}      
\usepackage{todonotes}
\usepackage{amsmath}
\usepackage{placeins}
\usepackage{amsthm}
\usepackage{array} 
\usepackage{enumitem}
\usepackage{subcaption}

\usepackage{subcaption}
\usepackage{xcolor}
\definecolor{ForestGreen}{rgb}{0, 0.6, 0.33}

\newtheorem{lemma}{Lemma}
\long\def\comment#1{}


\iclrfinalcopy 

\title{Robust and interpretable blind image \\ denoising via bias-free convolutional\\ neural networks} 

\author{
  Sreyas Mohan\thanks{Equal contribution.}\\
      Center for Data Science \\
      New York University \\
      \texttt{sm7582@nyu.edu} \\
  \And
  Zahra Kadkhodaie$^*$ \phantom{align zahra aaaaaaaaaaaa}\\
      Center for Data Science \\
      New York University \\
    \texttt{zk388@nyu.edu}\\
  \AND
  Eero P. Simoncelli\\
      Center for Neural Science, and\\
      Howard Hughes Medical Institute\\
      New York University\\
    \texttt{eero.simoncelli@nyu.edu}\\
  \And
  Carlos Fernandez-Granda\\
      Center for Data Science, and\\
      Courant Inst. of Mathematical Sciences\\
      New York University\\
    \texttt{cfgranda@cims.nyu.edu}
   }

%

\begin{document}

\maketitle

\begin{abstract}
We study the generalization properties of deep convolutional neural networks for image denoising in the presence of varying noise levels. We provide extensive empirical evidence that current state-of-the-art architectures systematically overfit to the noise levels in the training set, performing very poorly at new noise levels. We show that strong generalization can be achieved through a simple architectural modification: removing all additive constants. The resulting "bias-free" networks attain state-of-the-art performance over a broad range of noise levels, even when trained over a narrow range. They are also locally linear, which enables direct analysis with linear-algebraic tools.  We show that the denoising map can be visualized locally as a filter that adapts to both image structure and noise level. In addition, our analysis reveals that deep networks implicitly perform a projection onto an adaptively-selected low-dimensional subspace, with dimensionality inversely proportional to noise level, that captures features of natural images. 
\end{abstract}

\section{Introduction and Contributions}

The problem of denoising consists of recovering a signal from measurements corrupted by noise, and is a canonical application of statistical estimation that has been studied since the 1950's. Achieving high-quality denoising results requires (at least implicitly) quantifying and exploiting the differences between signals and noise.
In the case of photographic images, the denoising problem is both an important application, as well as a useful test-bed for our understanding of natural images. In the past decade, convolutional neural networks~\citep{lecun2015deep} have achieved state-of-the-art results in image denoising~\citep{zhang2017beyond,chen2017trainable}. Despite their success, these solutions are mysterious: we lack both intuition and formal understanding of the mechanisms they implement. 
Network architecture and functional units are often borrowed from the image-recognition literature, and it is unclear which of these aspects contributes to, or limits, the denoising performance. 
The goal of this work is advance our understanding of deep-learning models for denoising. Our contributions are twofold: First, we study the generalization capabilities of deep-learning models across different noise levels. Second, we provide novel tools for analyzing the mechanisms implemented by neural networks to denoise natural images.

An important advantage of deep-learning techniques over traditional methodology is that a single neural network can be trained to perform denoising at a wide range of noise levels. Currently, this is achieved by simulating the whole range of noise levels during training~\citep{zhang2017beyond}. Here, we show that this is not necessary. Neural networks can be made to \emph{generalize automatically across noise levels} through a simple modification in the architecture: removing all additive constants. We find this holds for a variety of network architectures proposed in previous literature. We provide extensive empirical evidence that the main state-of-the-art denoising architectures systematically overfit to the noise levels in the training set, and that this is due to the presence of a net bias. Suppressing this bias makes it possible to attain state-of-the-art performance while training over a very limited range of noise levels. 

The data-driven mechanisms implemented by deep neural networks to perform denoising are \emph{almost completely unknown}. It is unclear what priors are being learned by the models, and how they are affected by the choice of architecture and training strategies. Here, we provide novel linear-algebraic tools to visualize and interpret these strategies through a local analysis of the Jacobian of the denoising map. The analysis reveals locally adaptive properties of the learned models, akin to existing nonlinear filtering algorithms. In addition, we show that the deep networks implicitly perform a projection onto an adaptively-selected low-dimensional subspace capturing features of natural images. 


\section{Related Work}

\begin{figure}[]
\centering
\begin{subfigure}{.33\textwidth}
  \centering
  \includegraphics[width=1\linewidth]{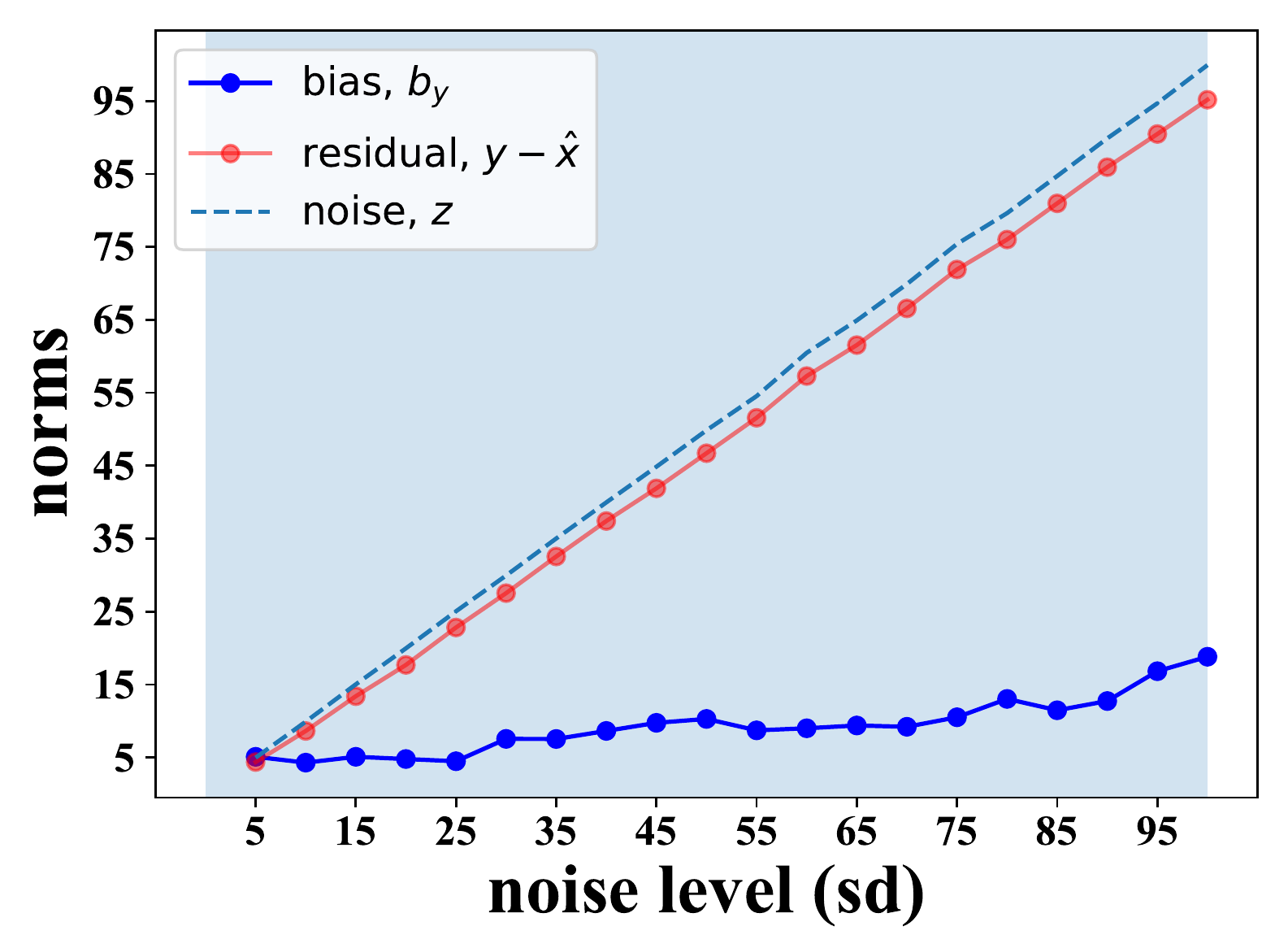}
  \caption{}
\end{subfigure}%
\begin{subfigure}{.33\textwidth}
  \centering
  \includegraphics[width=1\linewidth]{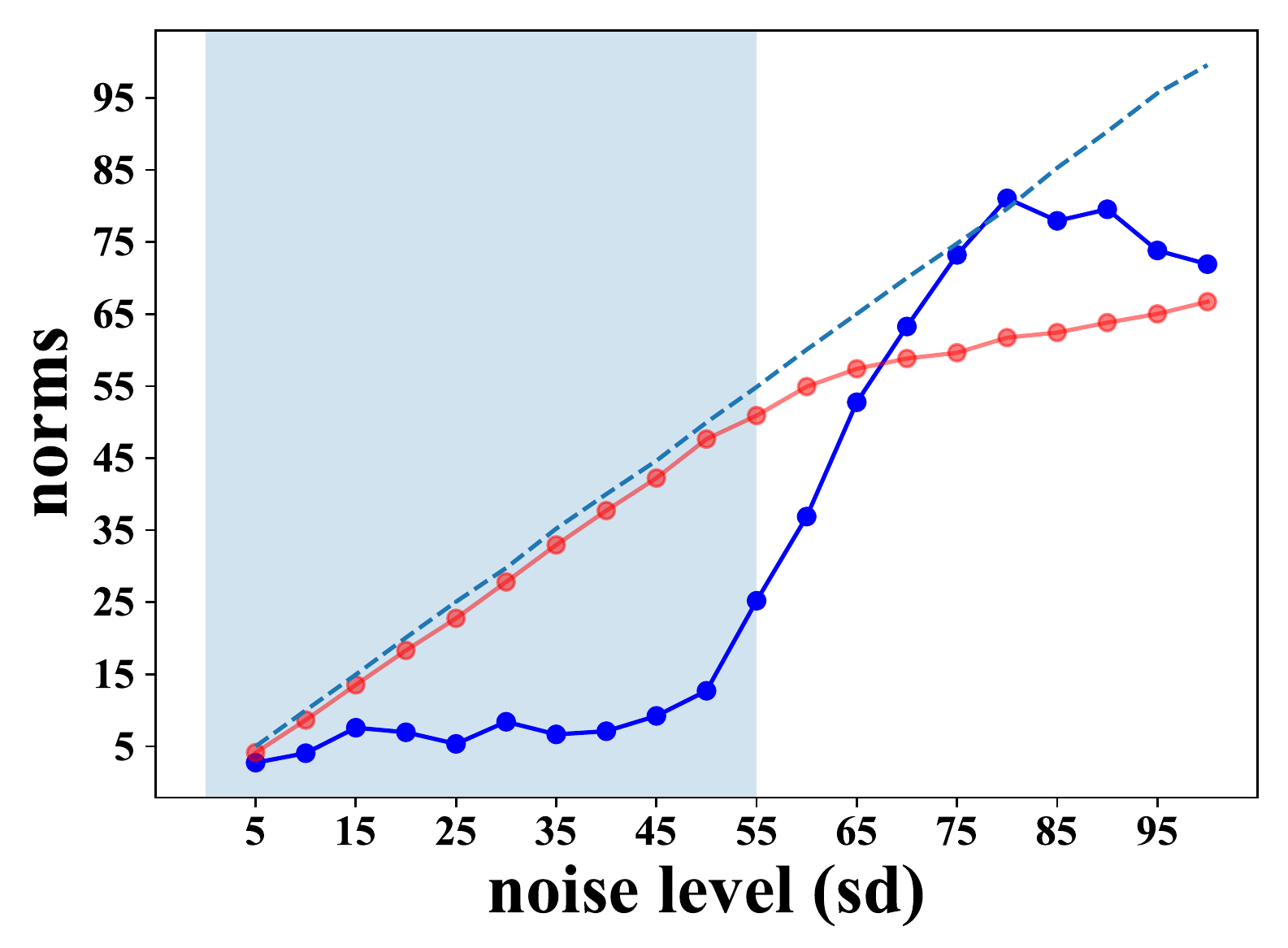}
  \caption{}
\end{subfigure}
\begin{subfigure}{.33\textwidth}
  \centering
  \includegraphics[width=1\linewidth]{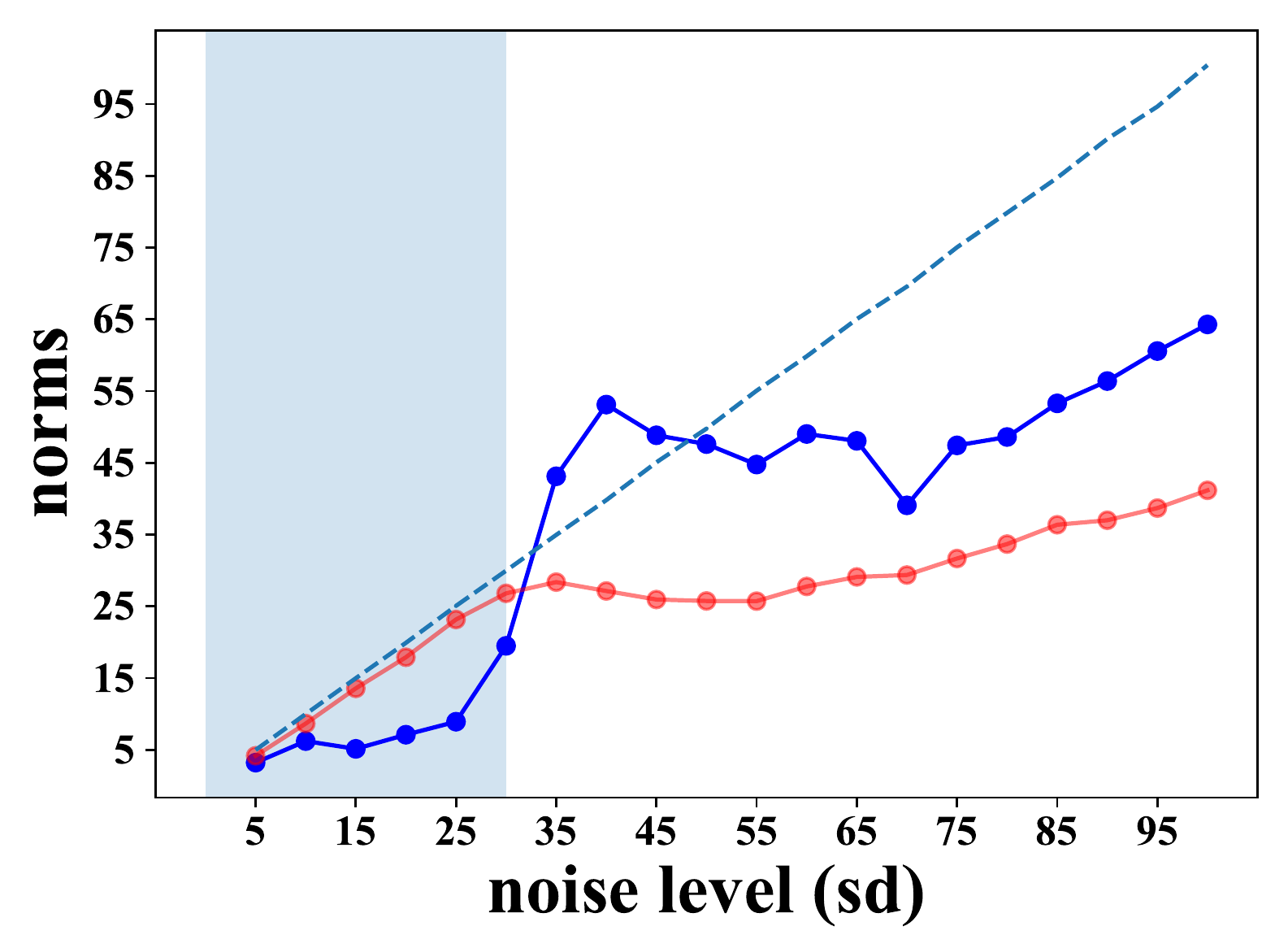}
  \caption{}
\end{subfigure}
\caption{First-order analysis of the residual of a denoising convolutional neural network as a function of noise level. The plots show the norms of the residual and the net bias averaged over 100 $20\times 20$ natural-image patches for networks trained over different training ranges. The range of noises used for training is highlighted in blue. {\bf (a)} When the network is trained over the full range of noise levels ($\sigma \in [0,100]$) the net bias is small, growing slightly as the noise increases. {\bf (b-c)} When the network is trained over the a smaller range ($\sigma\in [0,55]$ and  $\sigma\in [0,30]$), the net bias grows explosively for noise levels beyond the training range. This coincides with a dramatic drop in performance, reflected in the difference between the magnitudes of the residual and the true noise. The CNN used for this example is DnCNN~\citep{zhang2017beyond}; using alternative architectures yields similar results as shown in Figure~\ref{fig:bias_plots_others}.}
\label{fig:bias}
\end{figure}

The classical solution to the denoising problem is the Wiener filter~\citep{wiener1950extrapolation}, which assumes a translation-invariant Gaussian signal model. The main limitation of Wiener filtering is that it over-smoothes, eliminating fine-scale details and textures. Modern filtering approaches address this issue by adapting the filters to the local structure of the noisy image (e.g. \cite{tomasi1998bilateral,milanfar2012tour}). Here we show that neural networks implement such strategies implicitly, learning them directly from the data.

In the 1990's powerful denoising techniques were developed based on multi-scale ("wavelet") transforms. These transforms map natural images to a domain where they have sparser representations. This makes it possible to perform denoising by applying nonlinear thresholding operations in order to discard components that are small relative to the noise level~\citep{Donoho95a,Simoncelli96c,chang2000adaptive}. From a linear-algebraic perspective, these algorithms operate by projecting the noisy input onto a lower-dimensional subspace that contains plausible signal content. The projection eliminates the orthogonal complement of the subspace, which mostly contains noise. This general methodology laid the foundations for the state-of-the-art models in the 2000's (e.g. \citep{dabov2006image}), some of which added a data-driven perspective, learning sparsifying transforms~\citep{elad2006image}, and nonlinear shrinkage functions~\citep{HelOr2008,Raphan08}, directly from natural images. Here, we show that deep-learning models learn similar priors in the form of local linear subspaces capturing image features. 

In the past decade, purely data-driven models based on convolutional neural networks~\citep{lecun2015deep} have come to dominate all previous methods in terms of performance. These models consist of cascades of convolutional filters, and rectifying nonlinearities, which are capable of representing a diverse and powerful set of functions. Training such architectures to minimize mean square error over large databases of noisy natural-image patches achieves current state-of-the-art results~\citep{zhang2017beyond,huang2017densnet,ronneberger2015unet,DURR}.  


\section{Network Bias Impairs Generalization}
\label{sec:local_affine}

\begin{figure}[]
\centering
\captionsetup[subfigure]{labelformat=empty}
\begin{subfigure}{.245\textwidth}
  \centering
  \includegraphics[width=1\linewidth]{./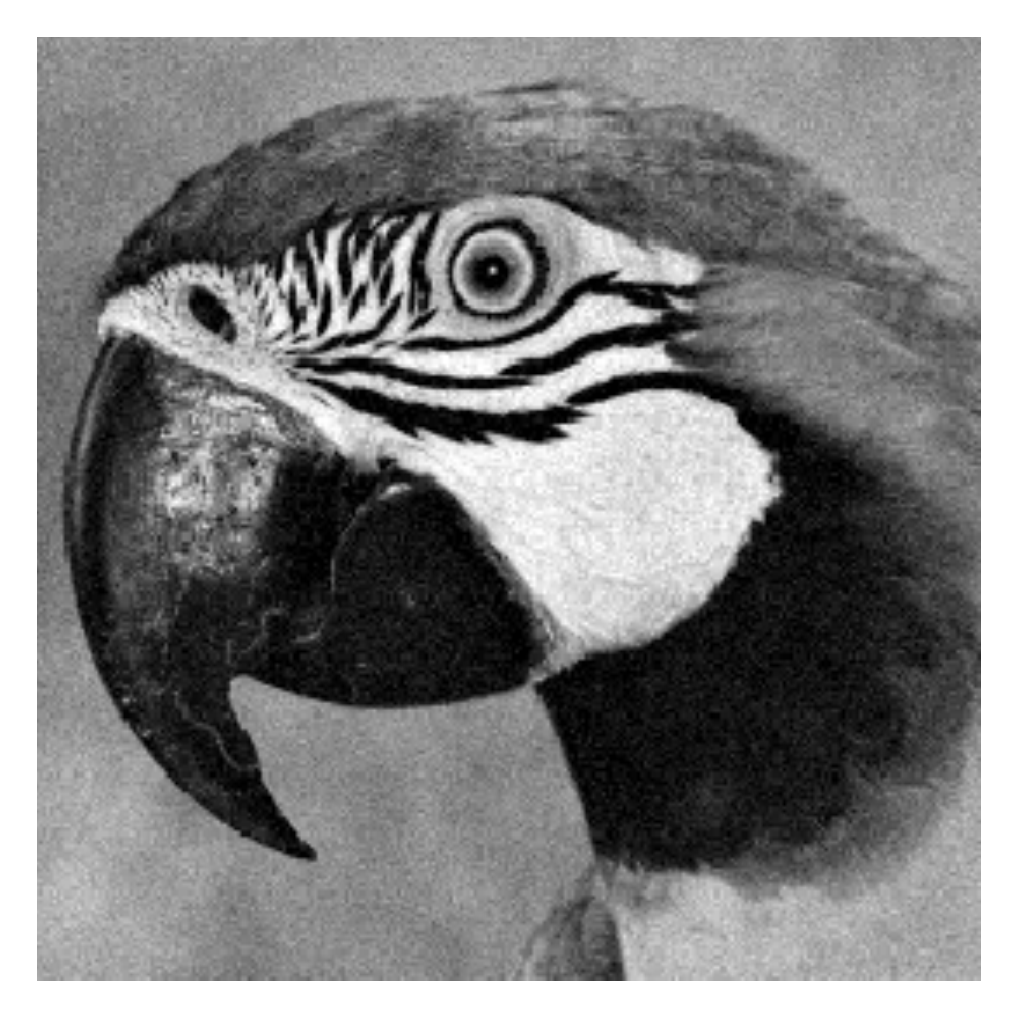} \\[-0.5ex]
  {\small Noisy training image, \\ $\sigma = 10$ (max  level)}
\end{subfigure}%
\begin{subfigure}{.245\textwidth}
  \centering
  \includegraphics[width=1\linewidth]{./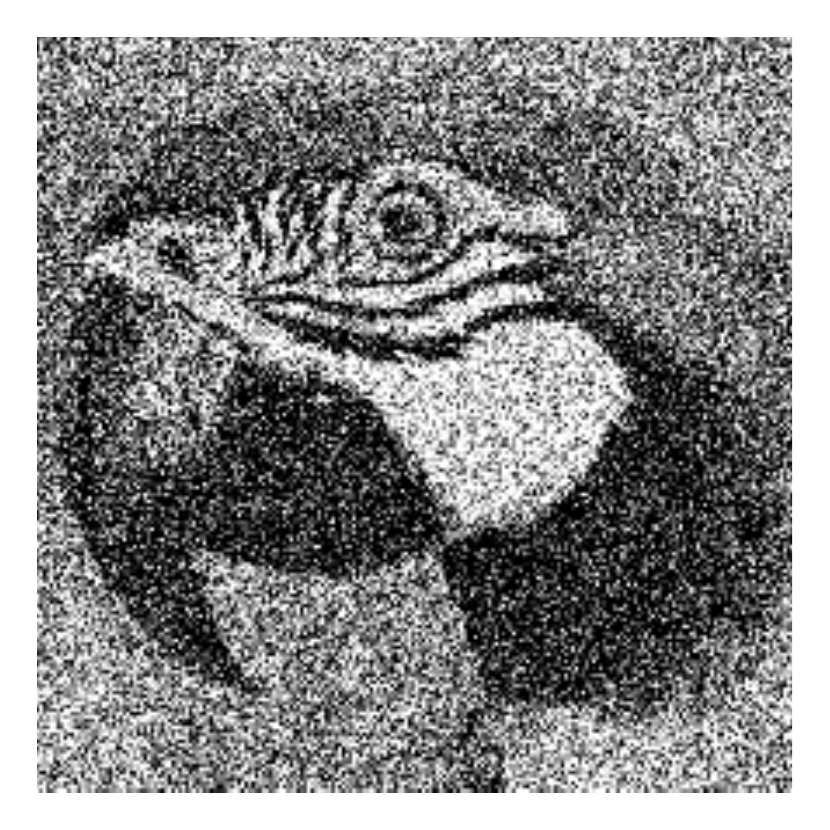} \\[-0.5ex]
  {\small Noisy test image, \\ $\sigma = 90$}
\end{subfigure}
\begin{subfigure}{.245\textwidth}
  \centering
  \includegraphics[width=1\linewidth]{./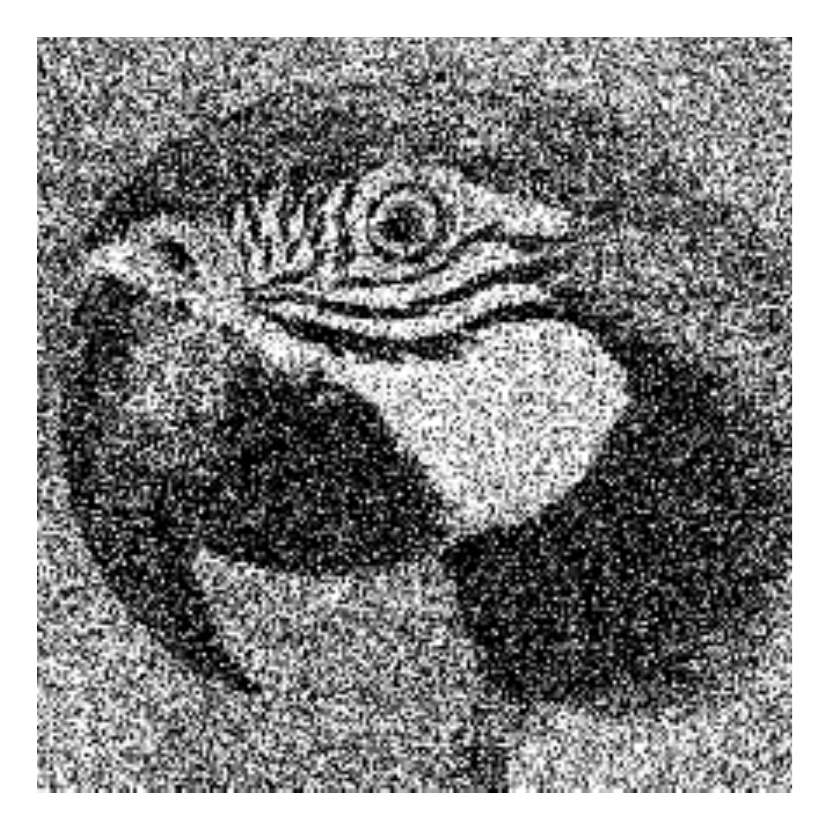} \\[-0.5ex]
  {\small Test image, denoised \\ by CNN}
\end{subfigure}
\begin{subfigure}{.245\textwidth}
  \centering
  \includegraphics[width=1\linewidth]{./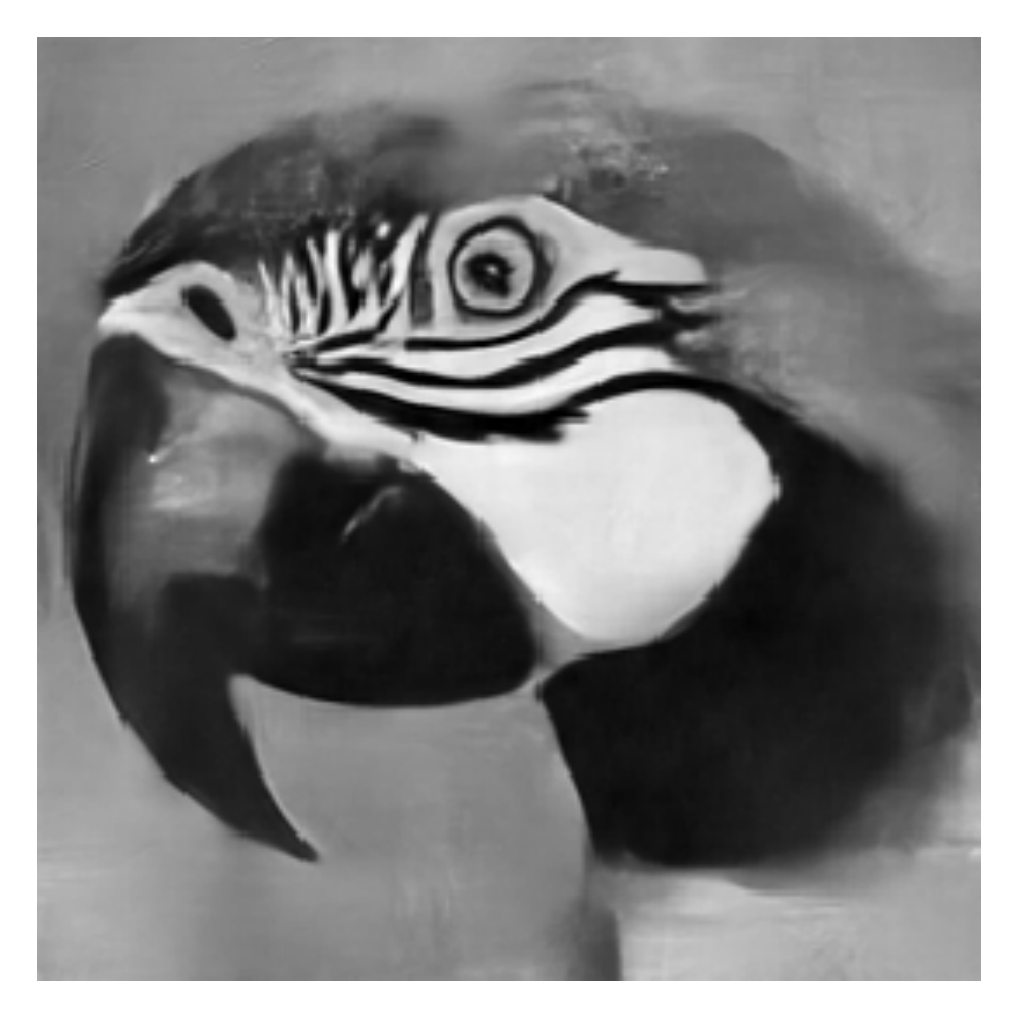} \\[-0.5ex]
  {\small Test image, denoised \\ by BF-CNN}
\end{subfigure}
\caption{Denoising of an example natural image by a CNN and its bias-free counterpart (BF-CNN), both trained over noise levels in the range $\sigma \in [0,10]$ (image intensities are in the range $[0, 255]$). The CNN performs poorly at high noise levels ($\sigma=90$, far beyond the training range), whereas BF-CNN performs at state-of-the-art levels. The CNN used for this example is DnCNN~\citep{zhang2017beyond}; using alternative architectures yields similar results (see Section~\ref{sec:generalization}).
} 
\label{fig:per_exp}
\end{figure}

 We assume a measurement model in which images are corrupted by additive noise: $y = x + n$, where $x\in \mathbb{R}^N$ is the original image, containing $N$ pixels, $n$ is an image of  i.i.d. samples of Gaussian noise with variance $\sigma^2$, and $y$ is the noisy observation.
The denoising problem consists of finding a function $f:\mathbb{R}^{N} \rightarrow \mathbb{R}^N$, that provides a good estimate of the original image, $x$. Commonly, one minimizes the mean squared error :  $f = \arg\min_g E|| x - g(y) ||^2 $, where the expectation is taken over some distribution over images, $x$, as well as over the distribution of noise realizations. In deep learning, the denoising function $g$ is parameterized by the weights of the network, so the optimization is over these parameters. If the noise standard deviation, $\sigma$, is unknown, the expectation must also be taken over a distribution of $\sigma$. This problem is often called {\em blind denoising} in the literature. In this work, we study the generalization performance of CNNs \emph{across} noise levels $\sigma$, i.e. when they are tested on noise levels not included in the training set.


Feedforward neural networks with rectified linear units (ReLUs) are piecewise affine: for a given activation pattern of the ReLUs, the effect of the network on the input is a cascade of linear transformations (convolutional or fully connected layers, $W_k$), additive constants ($b_k$), and pointwise multiplications by a binary mask corresponding to the fixed activation pattern ($R$). Since each of these is affine, the entire cascade implements a single affine transformation. For a fixed noisy input image $y \in \mathbb{R}^N$ with $N$ pixels, the function $f:\mathbb{R}^{N} \rightarrow \mathbb{R}^N$ computed by a denoising neural network may be written
\begin{align}
\label{eq:local_affine}
    f(y) = W_L R ( W_{L-1} ... R (W_1y + b_1)+... b_{L-1}) + b_L= A_y y + b_y,
\end{align}
where $A_y \in \mathbb{R}^{N \times N}$ is the Jacobian of $f(\cdot)$ evaluated at input $y$, and $b_y\in \mathbb{R}^{N}$ represents the \emph{net bias}. The subscripts on $A_y$ and $b_y$ serve as a reminder that both depend on the ReLU activation patterns, which in turn depend on the input vector $y$. 

Based on~\eqref{eq:local_affine} we can perform a first-order decomposition of the error or \emph{residual} of the neural network for a specific input: $y-f(y) = (I - A_y)y - b_y$. Figure~\ref{fig:bias} shows the magnitude of the residual and the constant, which is equal to the net bias $b_y$, for a range of noise levels.  Over the training range, the net bias is small, implying that the linear term is responsible for most of the denoising (see Figures \ref{fig:bias_decomp_dncnn} and \ref{fig:bias_decomp_others} for a visualization of both components). However, when the network is evaluated at noise levels outside of the training range, the norm of the bias increases dramatically, and the residual is significantly smaller than the noise, suggesting a form of  overfitting. Indeed, network performance generalizes very poorly to noise levels outside the training range. This is illustrated for an example image in Figure~\ref{fig:per_exp}, and demonstrated through extensive experiments in Section~\ref{sec:generalization}.

\begin{figure}[]
\centering
\begin{subfigure}{.245\textwidth}
  \centering
  \includegraphics[width=1\linewidth]{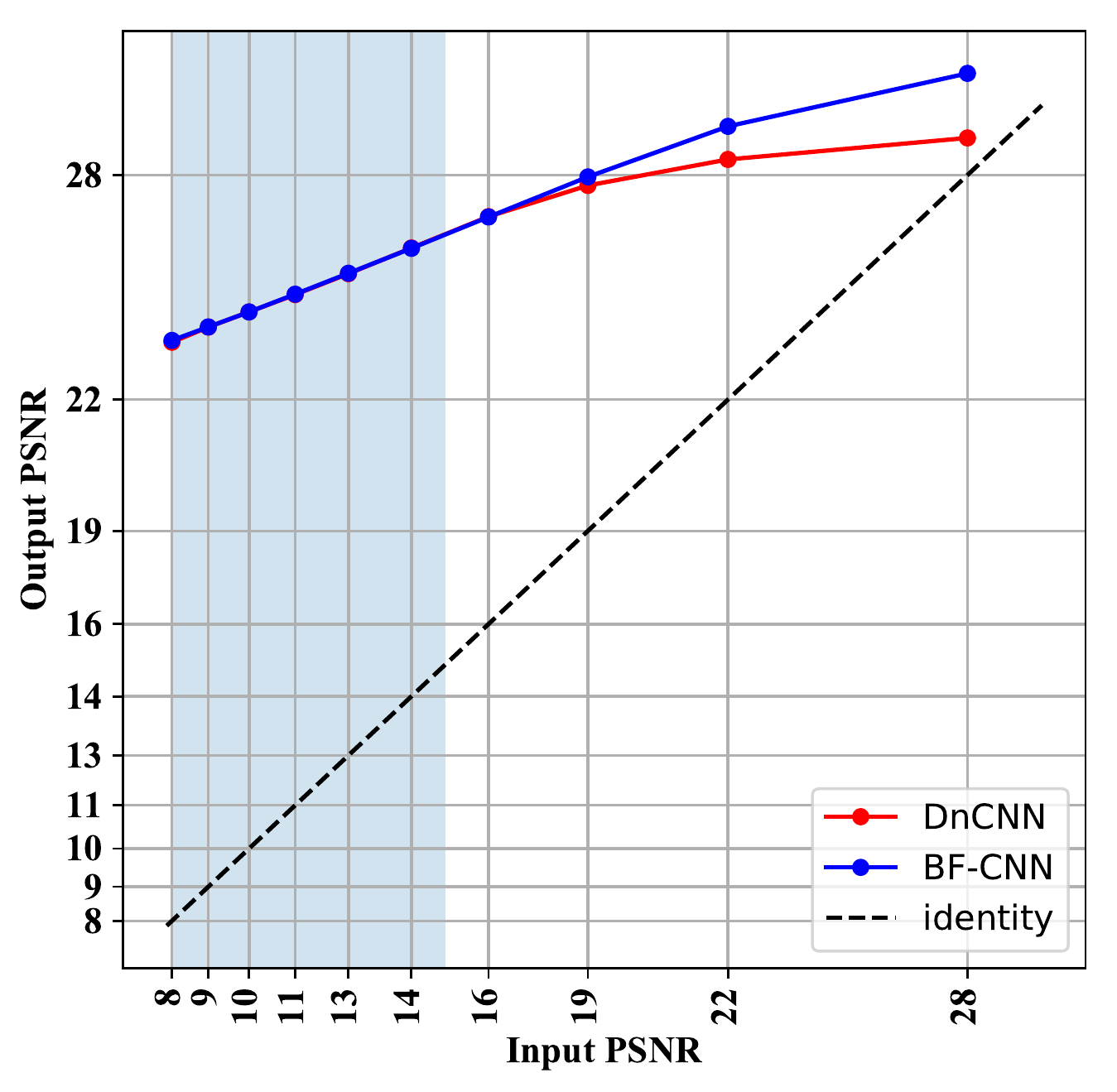}
  \label{fig:perf-a}
\end{subfigure}%
\begin{subfigure}{.245\textwidth}
  \centering
  \includegraphics[width=1\linewidth]{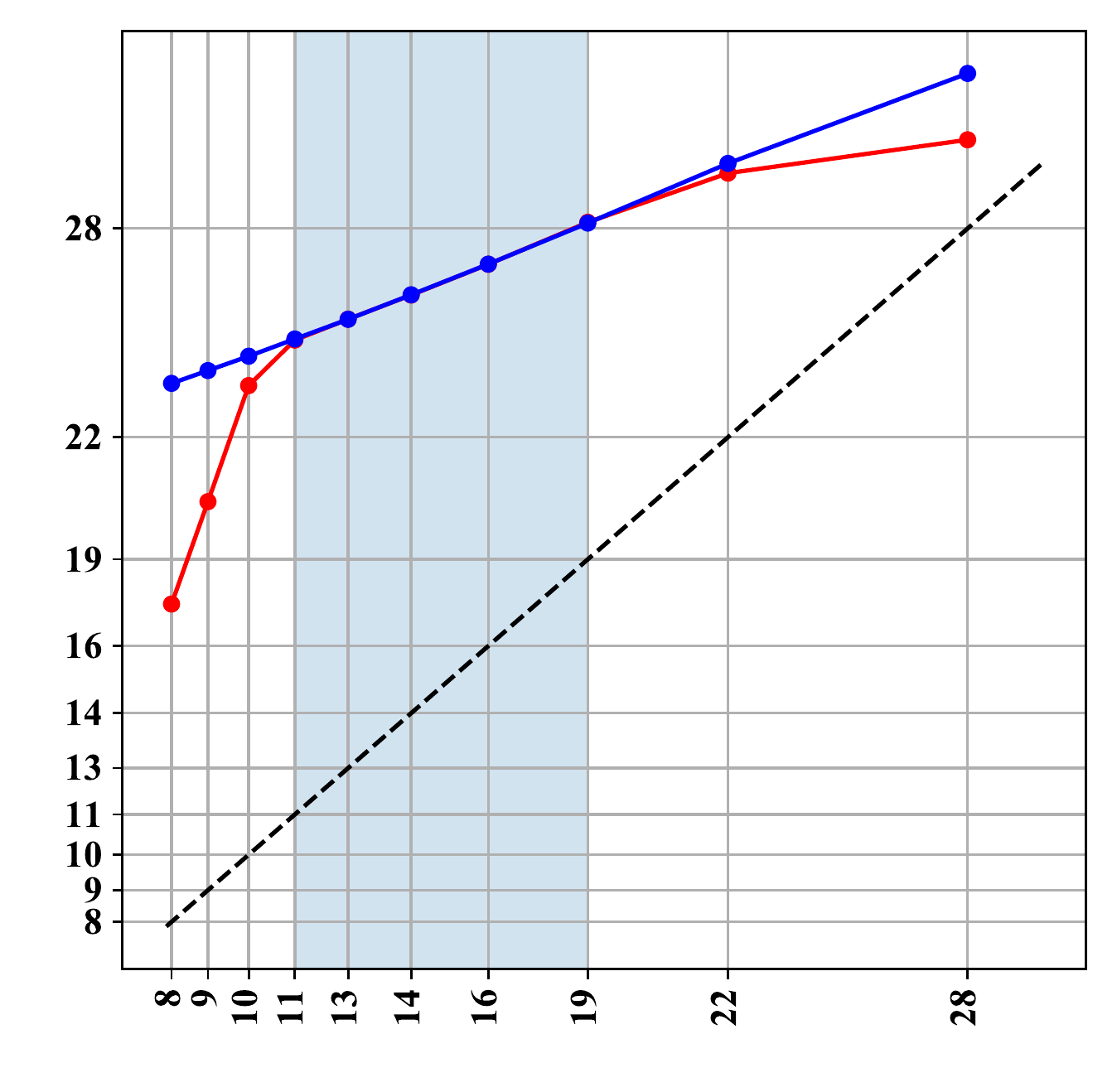}
  \label{fig:perf-b}
\end{subfigure}
\begin{subfigure}{.245\textwidth}
  \centering
  \includegraphics[width=1\linewidth]{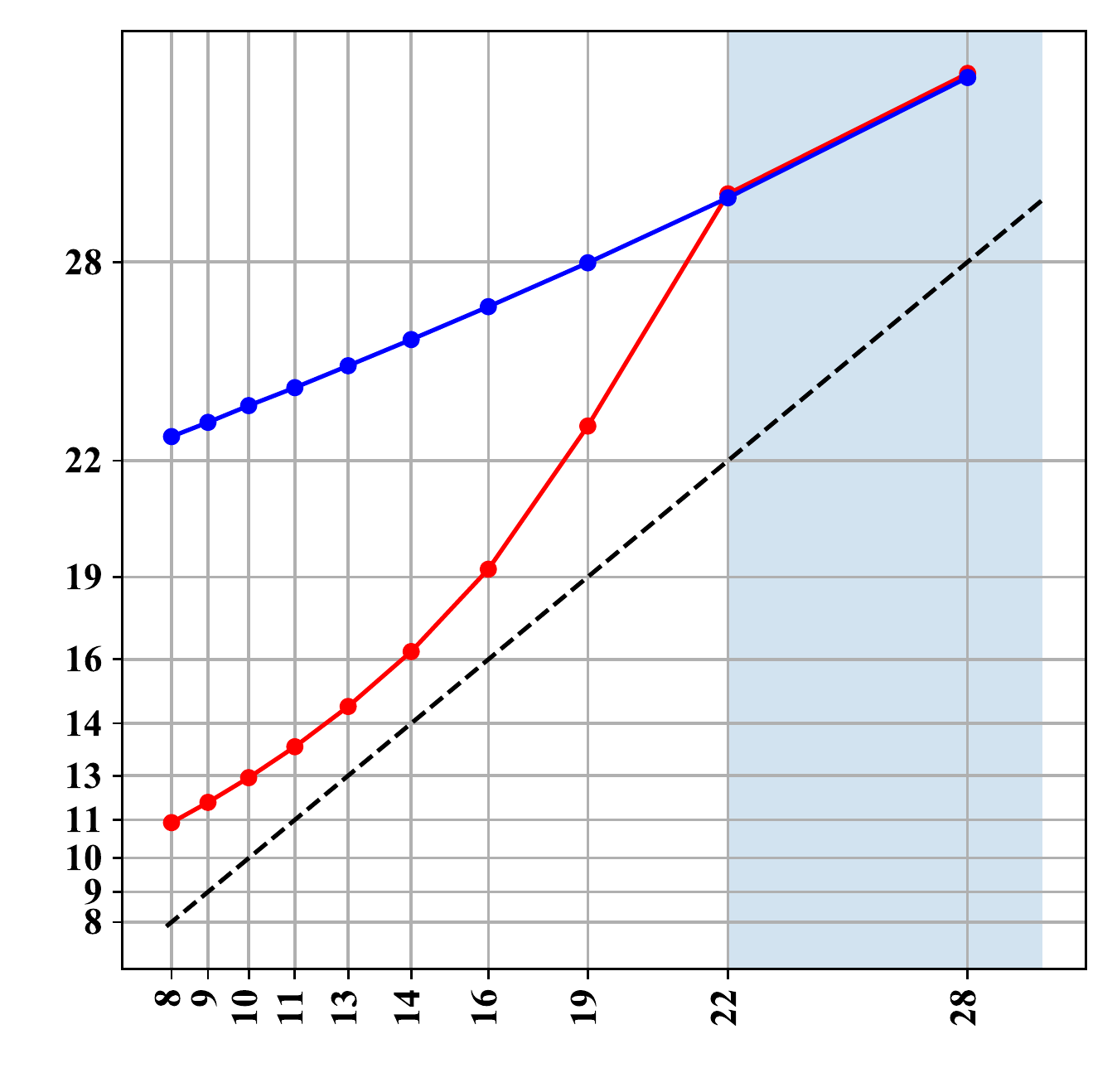}
  \label{fig:perf-c}
\end{subfigure}
\begin{subfigure}{.245\textwidth}
  \centering
  \includegraphics[width=1\linewidth]{./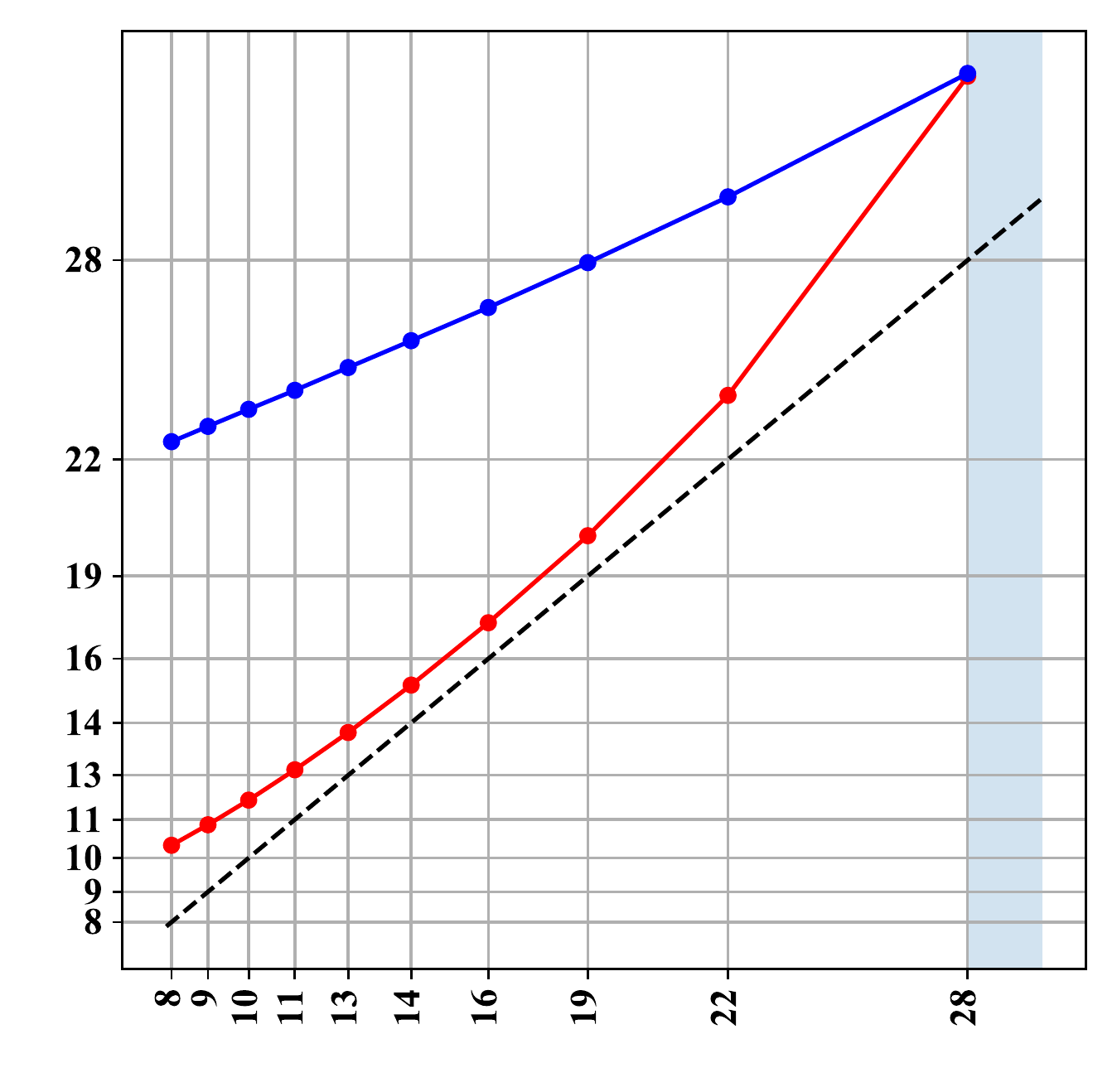}
  \label{fig:perf-d}
\end{subfigure}
\vspace*{-2ex}
\caption{Comparison of the performance of a CNN and a BF-CNN with the same architecture for the experimental design described in Section~\ref{sec:generalization}. The performance is quantified by the PSNR of the denoised image as a function of the input PSNR. Both networks are trained over a fixed ranges of noise levels indicated by a blue background. In all cases, the performance of BF-CNN generalizes robustly beyond the training range, while that of the CNN degrades significantly. The CNN used for this example is DnCNN~\citep{zhang2017beyond}; using alternative architectures yields similar results (see Figures~\ref{fig:others_psnr} and~\ref{fig:others_ssim}). 
}
\label{fig:gen_per}
\end{figure}


\section{Proposed Methodology: Bias-Free Networks}

Section \ref{sec:local_affine} shows that CNNs overfit to the noise levels present in the training set, and that this is associated with wild fluctuations of the net bias $b_y$.
This suggests that the overfitting might be ameliorated by removing additive (bias) terms from every stage of the network, resulting in a \emph{bias-free} CNN (BF-CNN). Note that bias terms are also removed from the batch-normalization used during training. This simple change in the architecture has an interesting consequence. If the CNN has ReLU activations the denoising map is locally homogeneous, and consequently \emph{invariant to scaling}: rescaling the input by a constant value simply rescales the output by the same amount, just as it would for a linear system.
\begin{lemma} \label{lemma:scaling_invariance}
Let $f_{\operatorname{BF}}: \mathbb{R}^{N} \rightarrow \mathbb{R}^N$ be a feedforward neural network with ReLU activation functions and no additive constant terms in any layer. For any input $y\in \mathbb{R}$ and any nonnegative constant $\alpha$,
\begin{align}
    f_{\operatorname{BF}}(\alpha y) = \alpha f_{\operatorname{BF}}(y).
\end{align}
\end{lemma}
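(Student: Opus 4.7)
The plan is to prove the lemma by induction on the depth $L$ of the network, exploiting the fact that each building block of a bias-free ReLU network is positively homogeneous of degree $1$. Two elementary facts suffice: any linear map $W$ satisfies $W(\alpha y) = \alpha W(y)$, and the componentwise ReLU nonlinearity, which I will denote $\rho(z)_i = \max(0, z_i)$, satisfies $\rho(\alpha z) = \alpha \rho(z)$ whenever $\alpha \geq 0$, since $\max(0, \alpha z_i) = \alpha \max(0, z_i)$ for nonnegative $\alpha$.

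First I would set up notation matching \eqref{eq:local_affine}, writing $z_0(y) = y$, $z_k(y) = \rho(W_k z_{k-1}(y))$ for $k = 1, \dots, L-1$ (with all additive constants removed), and $f_{\operatorname{BF}}(y) = W_L z_{L-1}(y)$. The inductive claim is that $z_k(\alpha y) = \alpha z_k(y)$ for every $k$ and every $\alpha \geq 0$. The base case $k = 0$ is immediate. For the inductive step, assuming $z_{k-1}(\alpha y) = \alpha z_{k-1}(y)$,
\begin{align}
z_k(\alpha y) = \rho(W_k z_{k-1}(\alpha y)) = \rho(\alpha W_k z_{k-1}(y)) = \alpha \rho(W_k z_{k-1}(y)) = \alpha z_k(y),
\end{align}
using linearity of $W_k$ and positive homogeneity of $\rho$. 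Applying the same reasoning at the output layer gives $f_{\operatorname{BF}}(\alpha y) = W_L z_{L-1}(\alpha y) = \alpha W_L z_{L-1}(y) = \alpha f_{\operatorname{BF}}(y)$.

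An alternative, arguably more transparent, route uses the local affine decomposition in \eqref{eq:local_affine} directly. Removing all additive constants forces $b_y = 0$ for every input, so $f_{\operatorname{BF}}(y) = A_y y$. The key observation is that the diagonal binary mask $R$ in each layer depends only on the signs of its pre-activations; since multiplication by $\alpha \geq 0$ preserves signs, the entire activation pattern at input $\alpha y$ coincides with that at input $y$, and therefore $A_{\alpha y} = A_y$. This immediately yields $f_{\operatorname{BF}}(\alpha y) = A_{\alpha y}(\alpha y) = \alpha A_y y = \alpha f_{\operatorname{BF}}(y)$.

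There is no serious obstacle; the only subtlety is the boundary case $\alpha = 0$ (together with pre-activations exactly equal to zero), where the sign-based activation pattern is ambiguous. This ambiguity is harmless because both conventions for $\rho$ at the origin yield $0$ when the entire input is scaled to zero, so $f_{\operatorname{BF}}(0) = 0 = 0 \cdot f_{\operatorname{BF}}(y)$ follows either way. The main conceptual point worth emphasizing is the necessity of $\alpha \geq 0$: the argument breaks for negative $\alpha$ because ReLU is not odd, which is precisely why the scaling property — and the downstream interpretations developed in the paper — applies only to nonnegative rescalings.
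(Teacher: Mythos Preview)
Your proposal is correct and your first argument is essentially identical to the paper's proof: the paper also writes $f_{\operatorname{BF}}(y)=W_L\mathcal{R}(W_{L-1}\cdots\mathcal{R}(W_1 y))$, notes that $\mathcal{R}(\alpha z)=\alpha\mathcal{R}(z)$ for $\alpha\geq 0$, and pulls $\alpha$ through the composition in one line, which is exactly your induction collapsed into a single chain of equalities. Your second route via $A_{\alpha y}=A_y$ and the discussion of the $\alpha=0$ boundary case are pleasant additions but not present in the paper.
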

\begin{proof}
We can write the action of a bias-free neural network with $L$ layers in terms of the weight matrix $W_{i}$, $1\leq i \leq  L$, of each layer and a rectifying operator $\mathcal{R}$, which sets to zero any negative entries in its input. Multiplying by a nonnegative constant does not change the sign of the entries of a vector, so for any $z$ with the right dimension and any $\alpha > 0$ $\mathcal{R}(\alpha z) =\alpha \mathcal{R}( z)$, which implies
\begin{align}
    f_{\operatorname{BF}}(\alpha y) = W_L \mathcal{R}( W_{L-1} \cdots \mathcal{R}(W_1 \alpha y)) = \alpha W_L \mathcal{R}( W_{L-1} \cdots \mathcal{R}(W_1 y))= \alpha f_{\operatorname{BF}}(y).
\end{align}
\end{proof}
Note that networks with nonzero net bias are not scaling invariant because scaling the input may change the activation pattern of the ReLUs. Scaling invariance is intuitively desireable for a denoising method operating on natural images; a rescaled image is still an image. Note that Lemma \ref{lemma:scaling_invariance} holds for networks with skip connections where the feature maps are concatenated or added, because both of these operations are linear.

In the following sections we demonstrate that removing all additive terms in CNN architectures has two important consequences: (1) the networks gain the ability to generalize to noise levels not encountered during training (as illustrated by Figure~\ref{fig:per_exp} the improvement is striking), and (2) the denoising mechanism can be analyzed locally via linear-algebraic tools that reveal intriguing ties to more traditional denoising methodology such as nonlinear filtering and sparsity-based techniques.

\section{Bias-Free Networks Generalize Across Noise Levels}
\label{sec:generalization}

\def\nsp{\hspace*{-.07in}}
\begin{figure}[ht]
\def\f1ht{1in}
\centering 
\begin{tabular}{ >{\centering\arraybackslash}m{.2in} >{\centering\arraybackslash}m{.85in} >{\centering\arraybackslash}m{.85in} >{\centering\arraybackslash}m{.85in} >{\centering\arraybackslash}m{.85in} >{\centering\arraybackslash}m{.85in}}
\centering 
    $\sigma$ & Noisy  & Denoised  & Pixel 1 & Pixel 2 & Pixel 3\\
    $10$ &
  \nsp\includegraphics[height=\f1ht]{./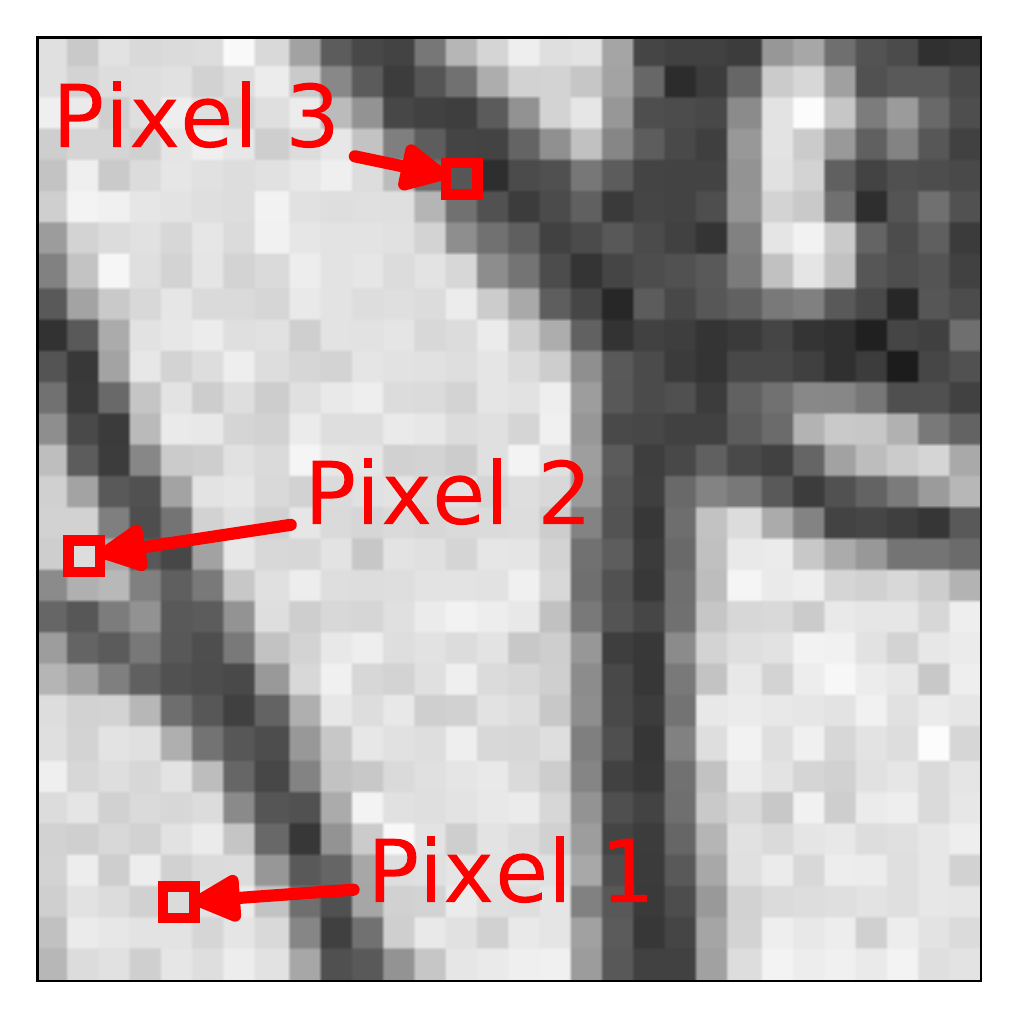}\nsp &
  \nsp\includegraphics[height=\f1ht]{./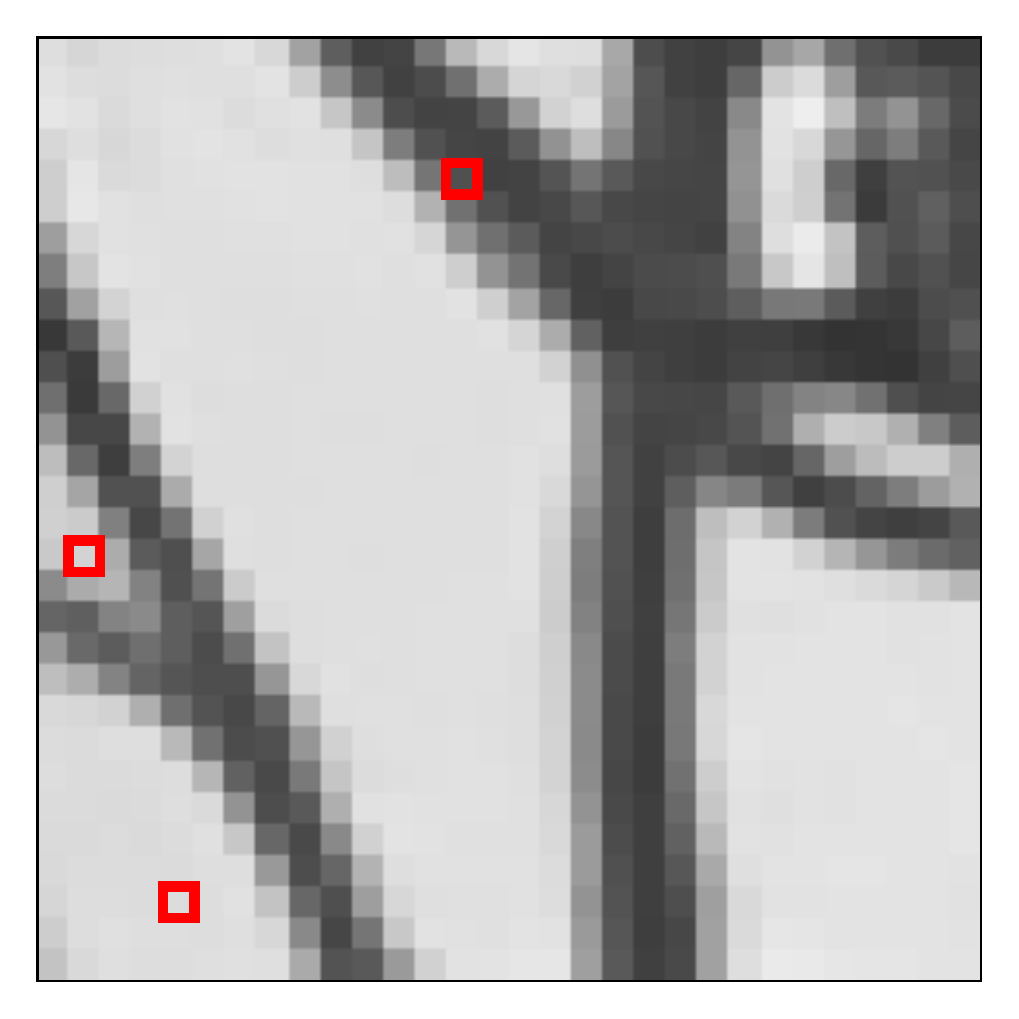}\nsp &
  \nsp\includegraphics[height=\f1ht]{./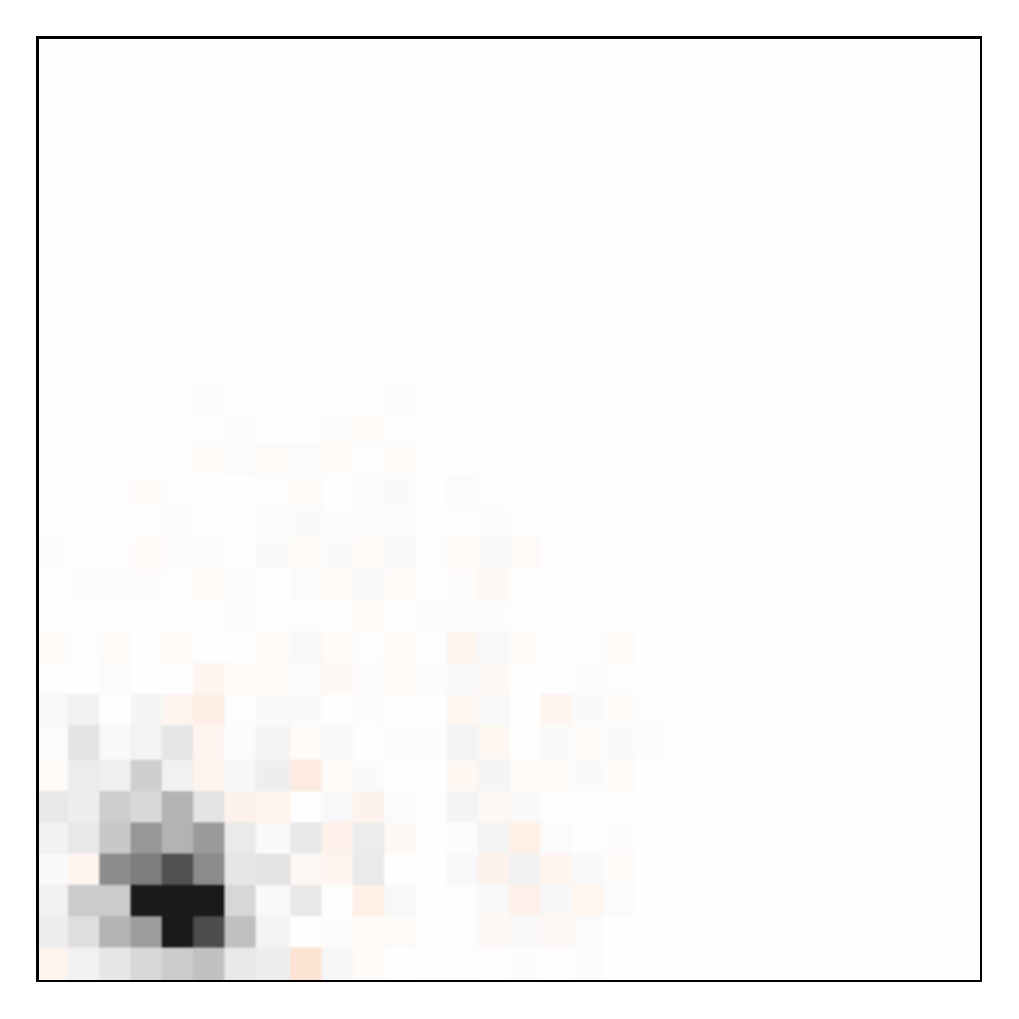}\nsp &
  \nsp\includegraphics[height=\f1ht]{./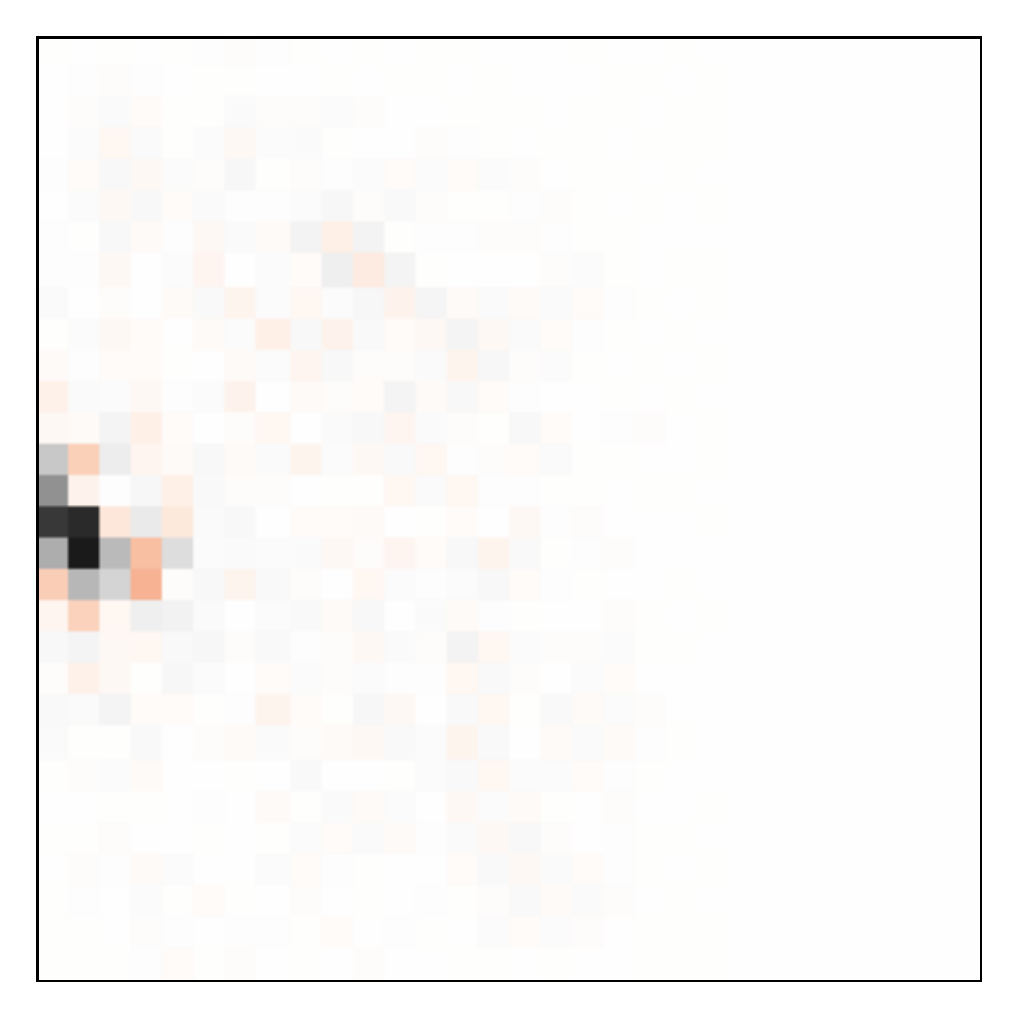}\nsp &
  \nsp\includegraphics[height=\f1ht]{./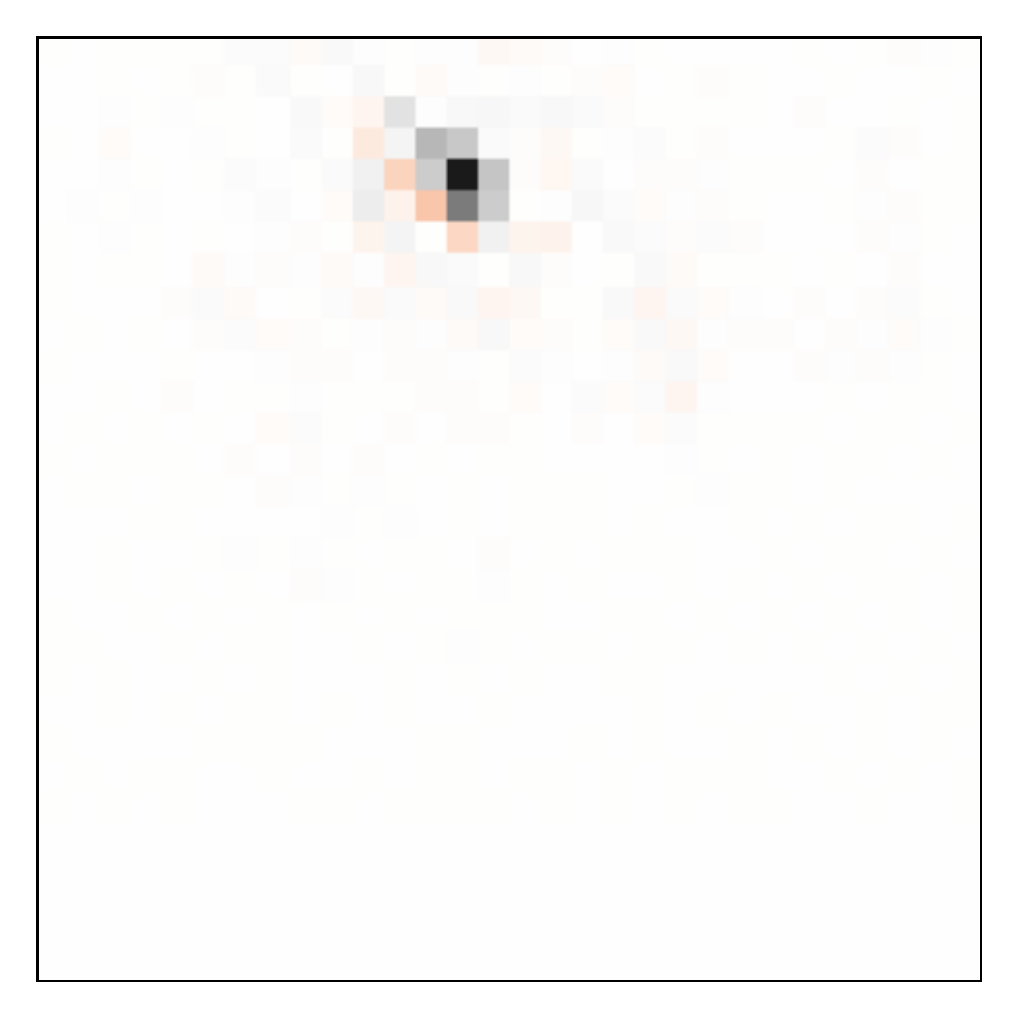}\nsp\\
    $30$ &
    \nsp\includegraphics[height=\f1ht]{./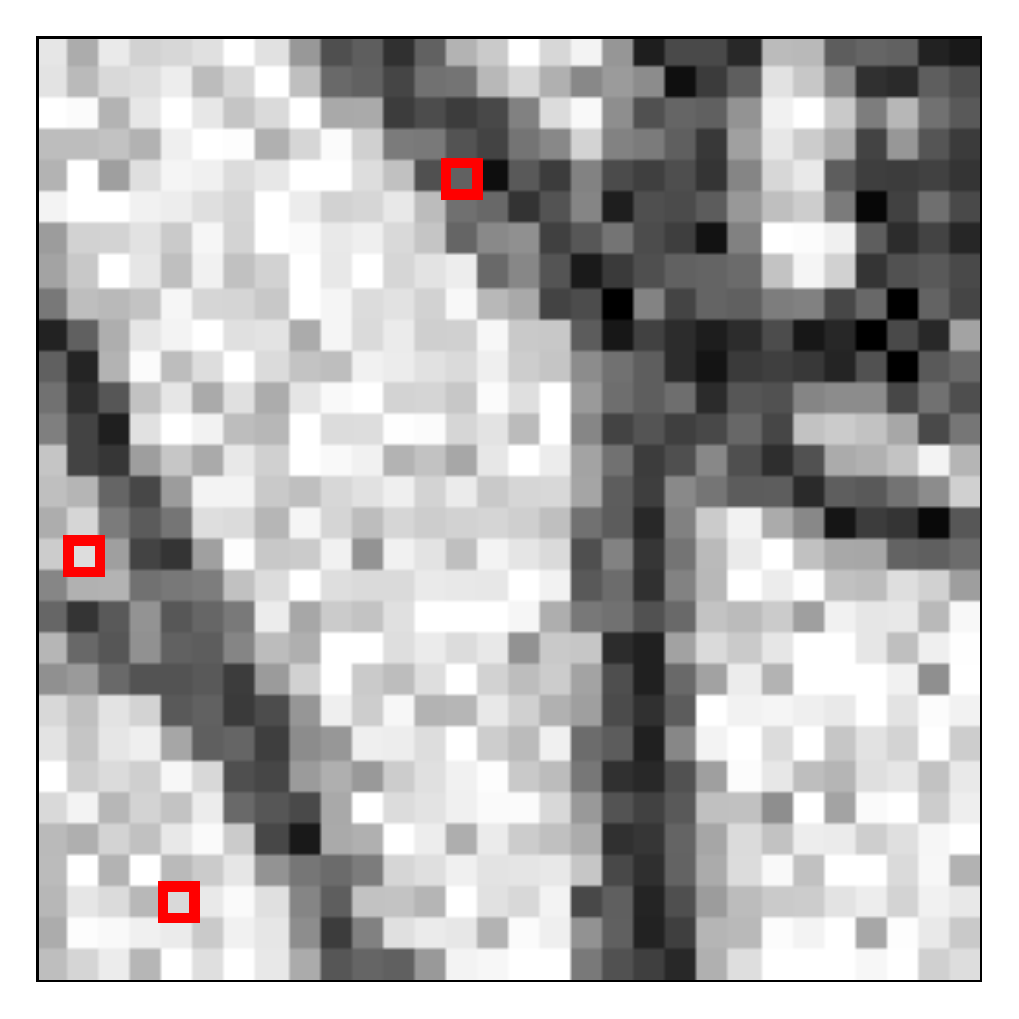}\nsp &
  \nsp\includegraphics[height=\f1ht]{./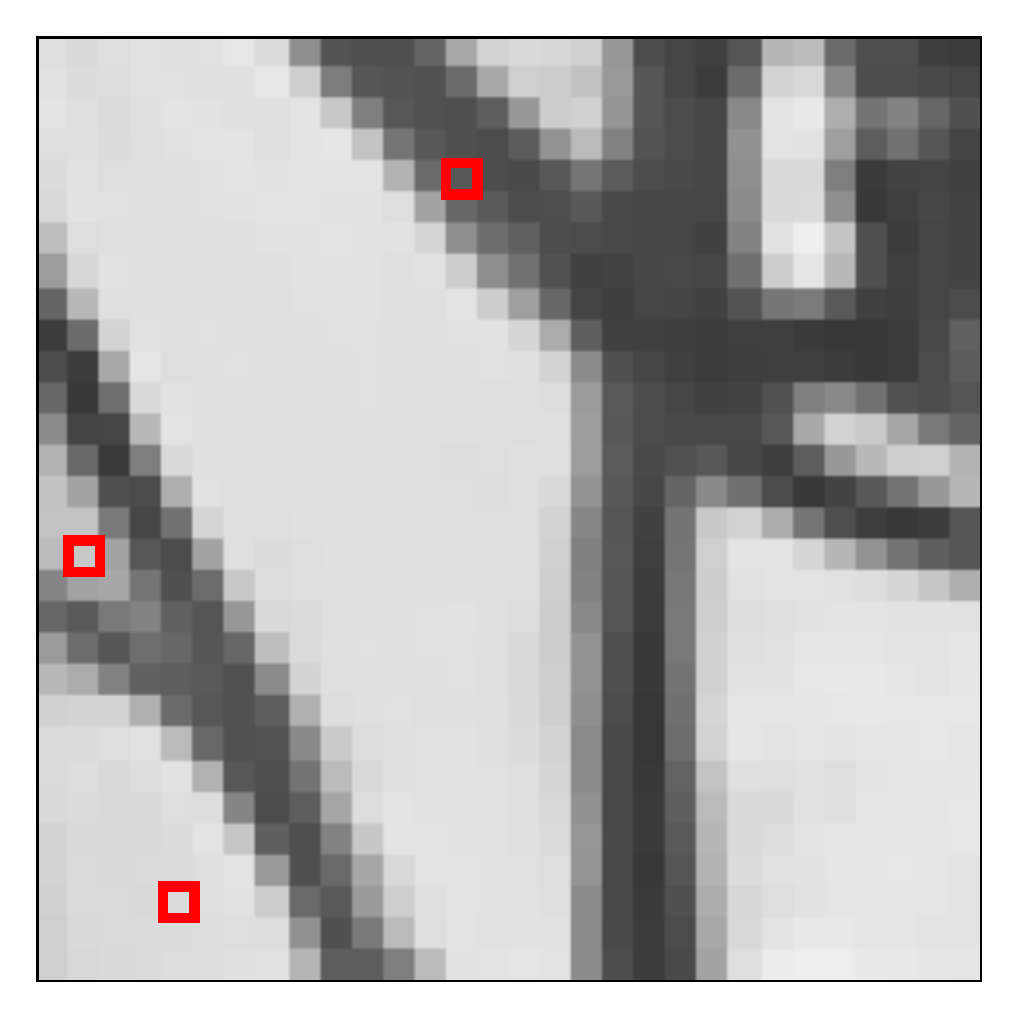}\nsp &
  \nsp\includegraphics[height=\f1ht]{./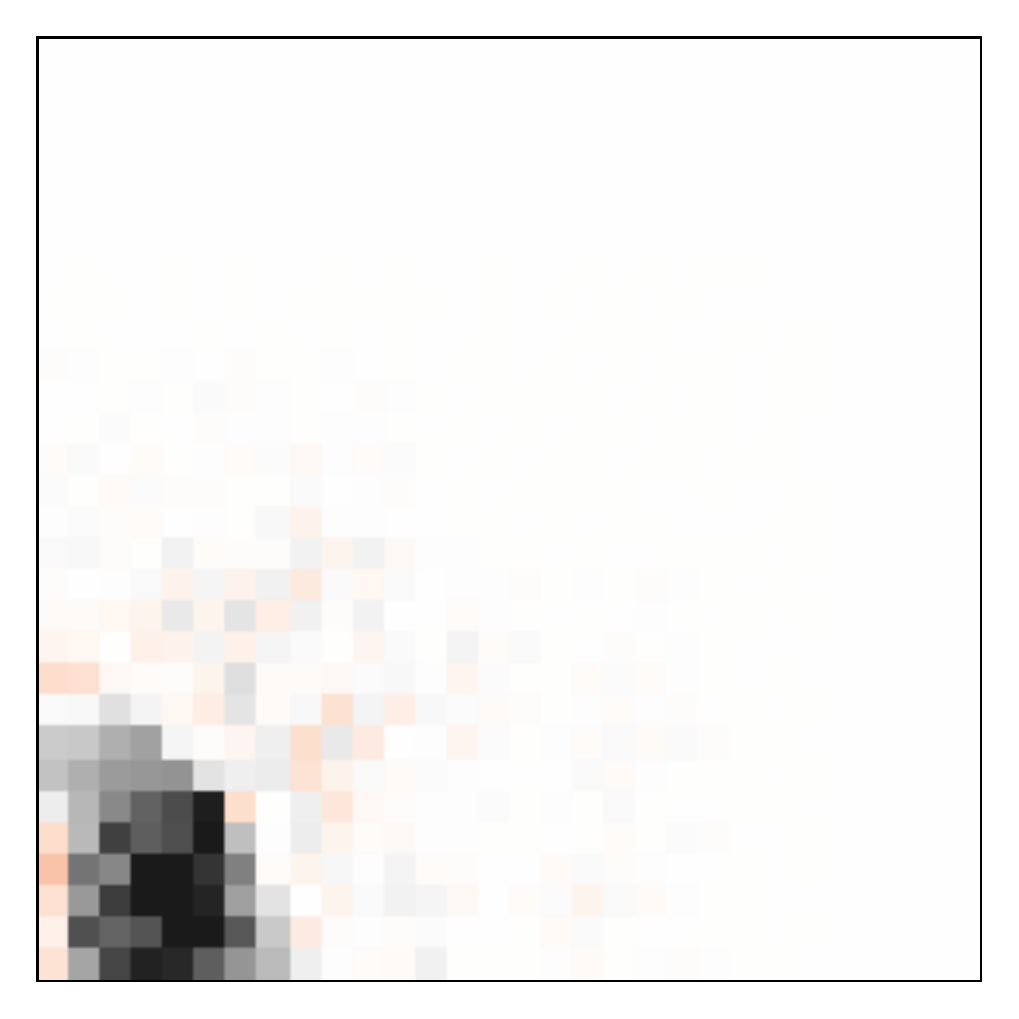}\nsp &
  \nsp\includegraphics[height=\f1ht]{./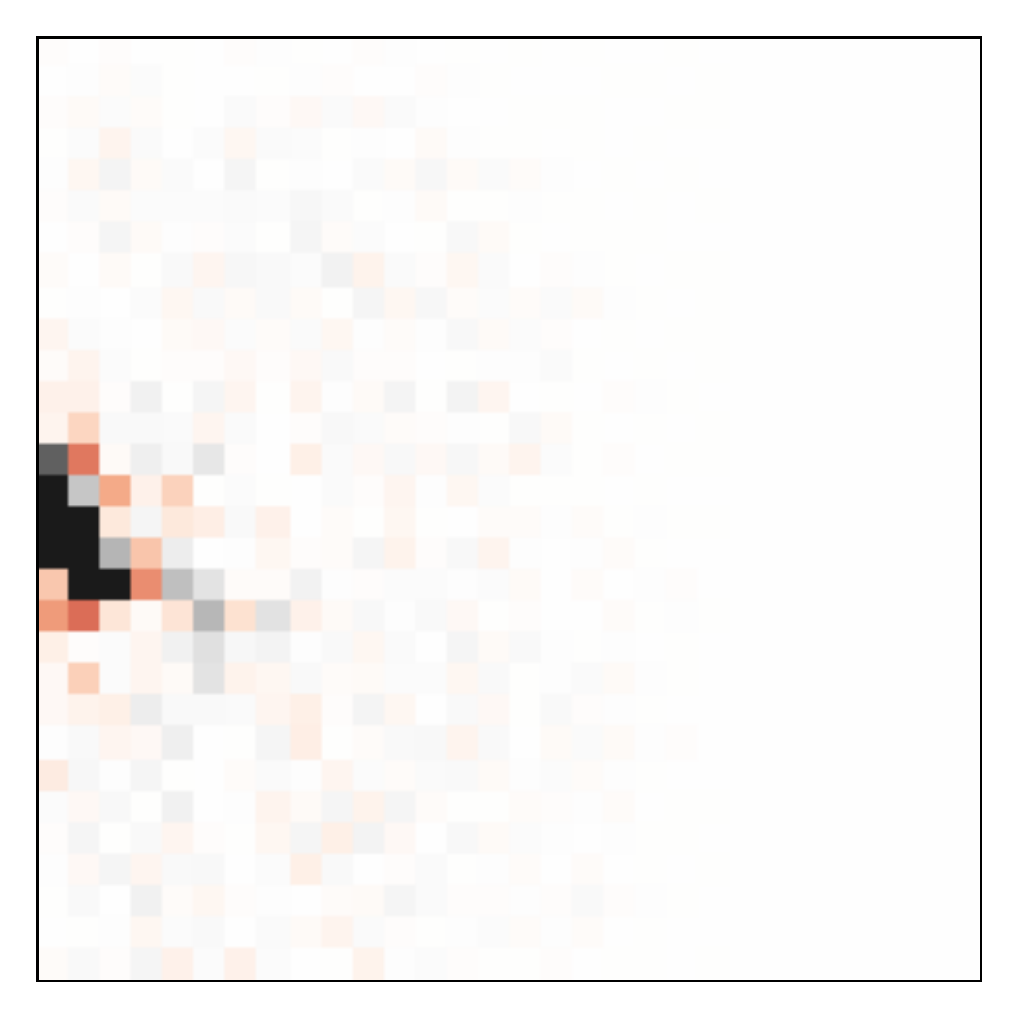}\nsp &
  \nsp\includegraphics[height=\f1ht]{./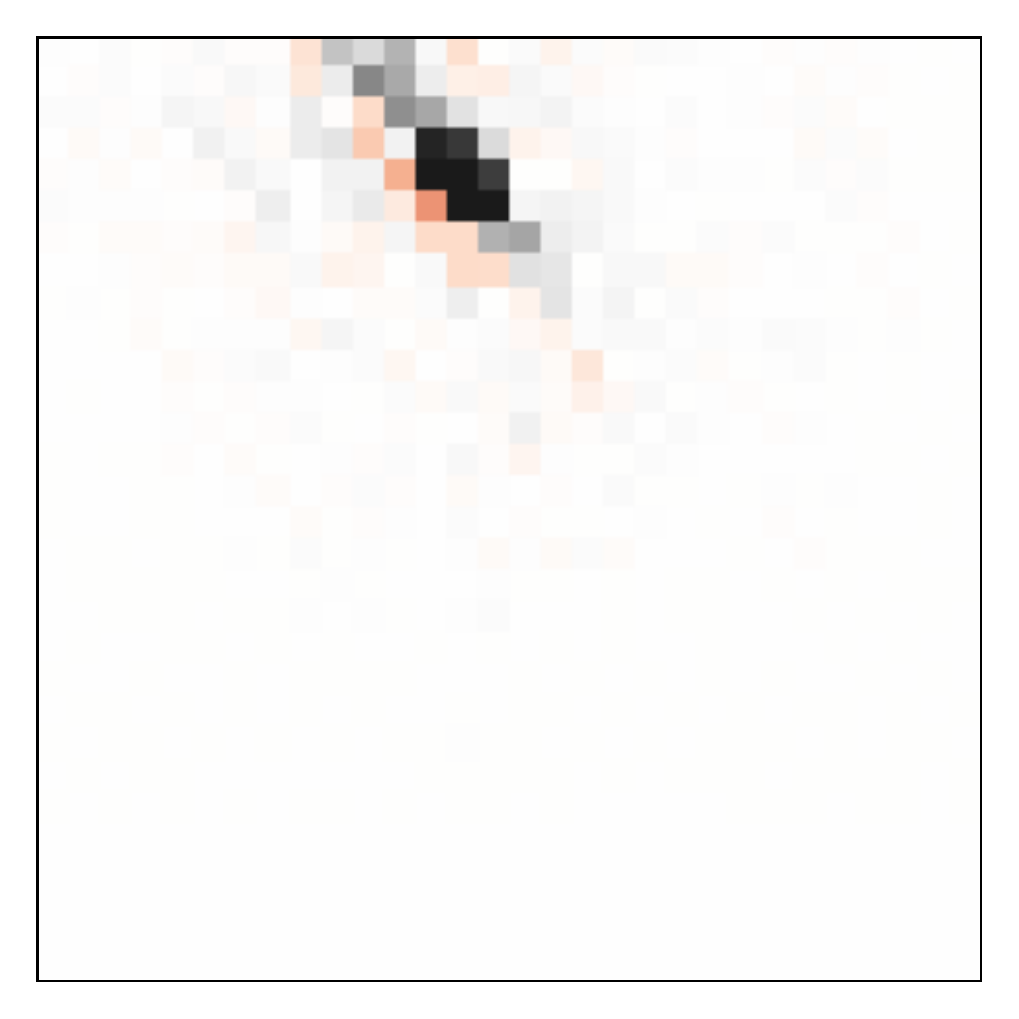}\nsp\\
    $100$ &
  \nsp\includegraphics[height=\f1ht]{./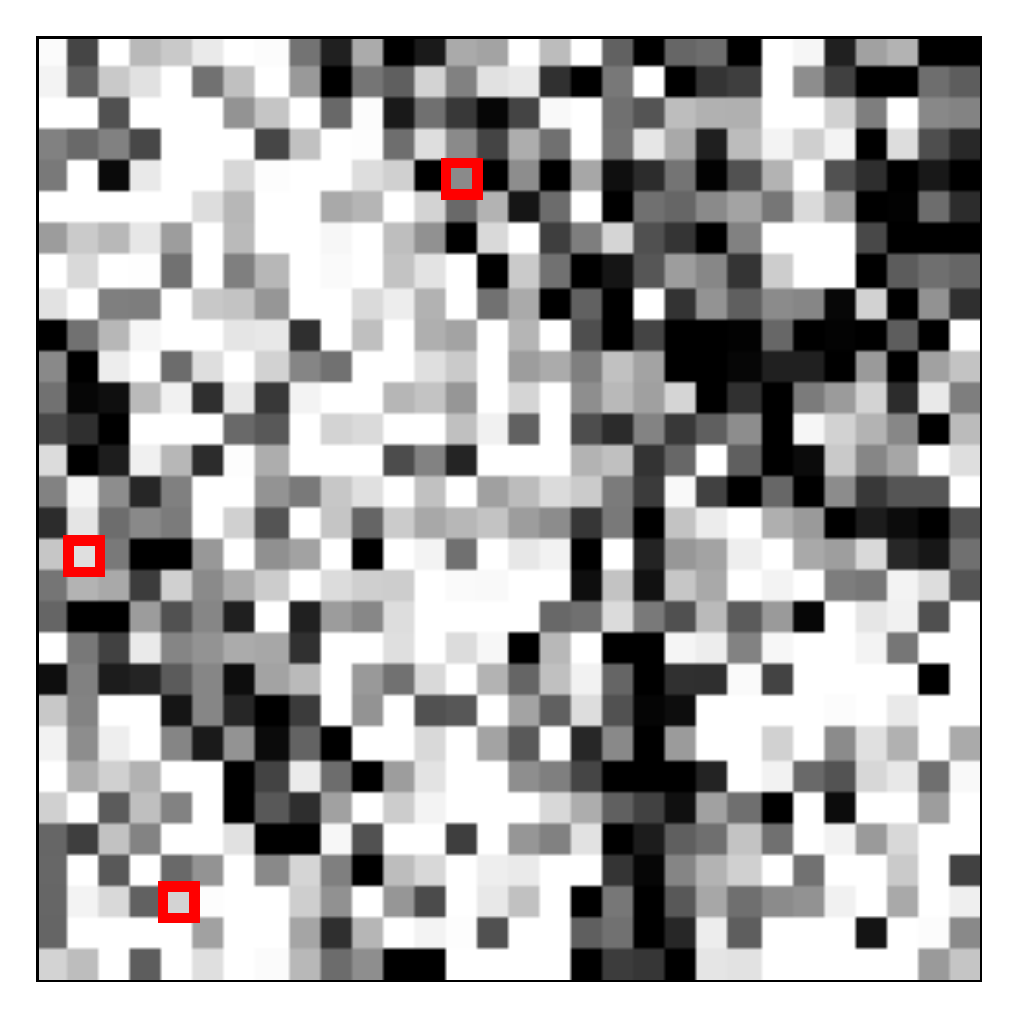}\nsp &
  \nsp\includegraphics[height=\f1ht]{./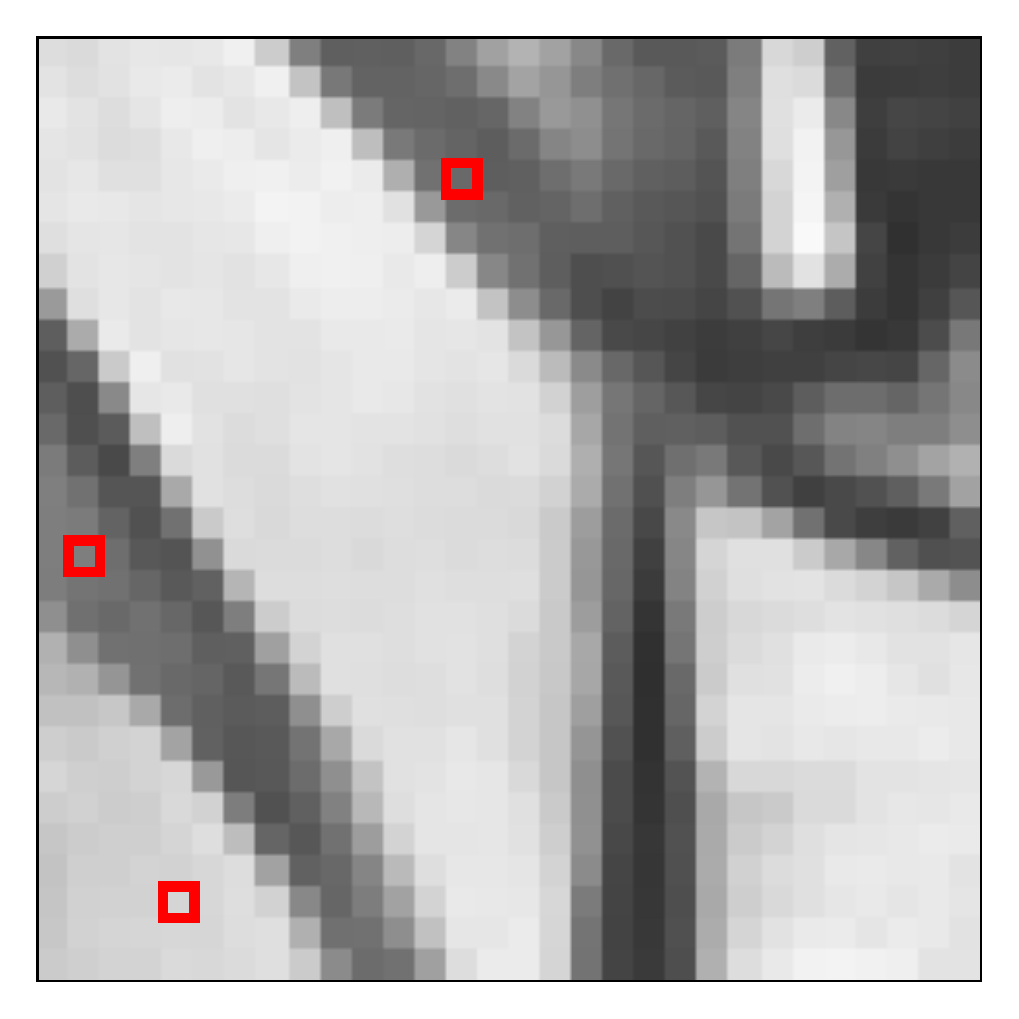}\nsp &
  \nsp\includegraphics[height=\f1ht]{./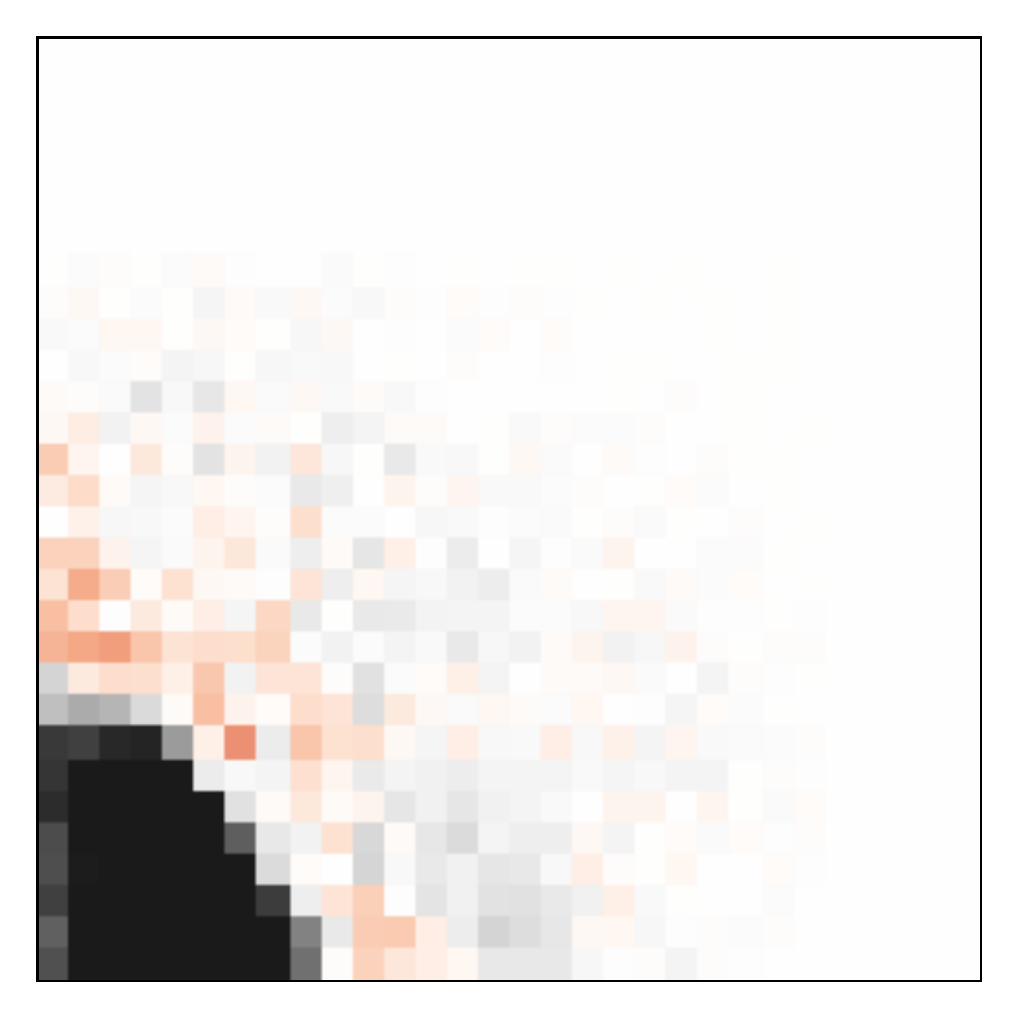}\nsp &
  \nsp\includegraphics[height=\f1ht]{./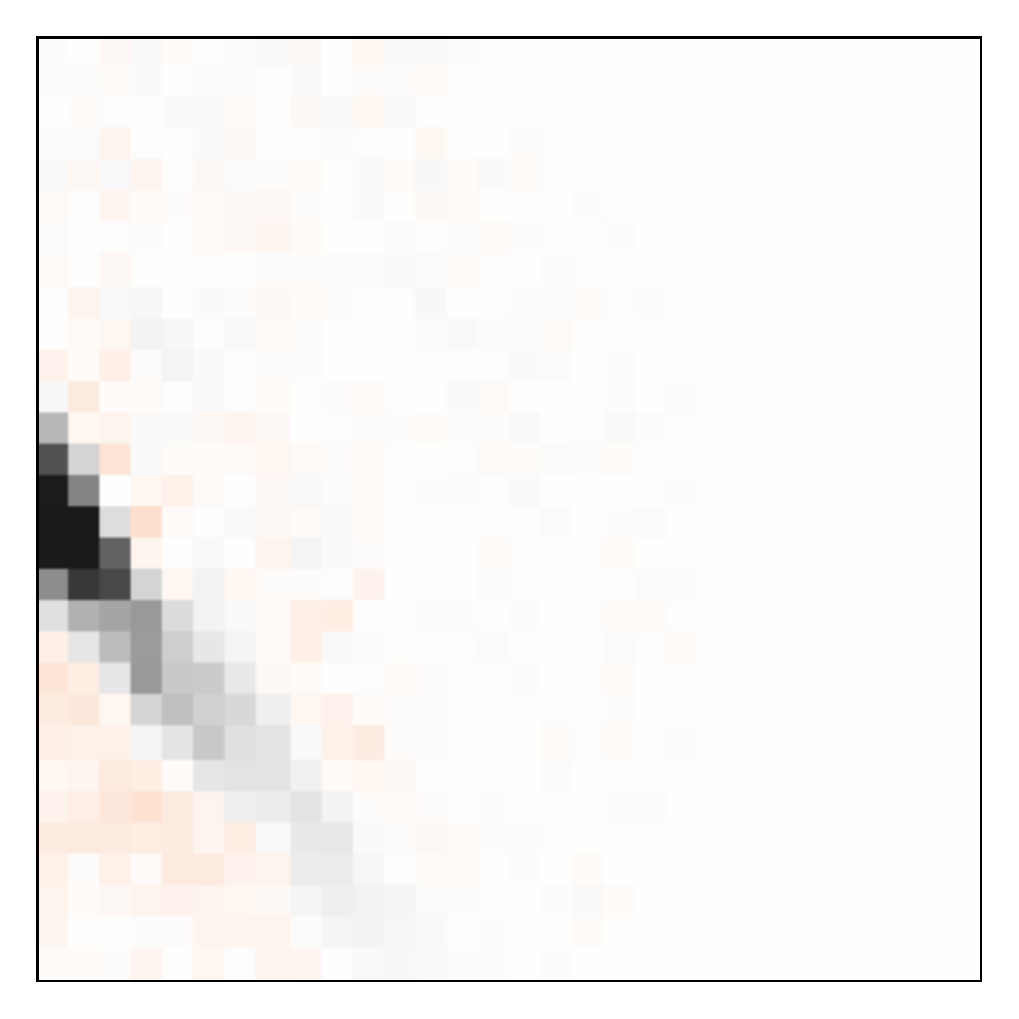}\nsp &
  \nsp\includegraphics[height=\f1ht]{./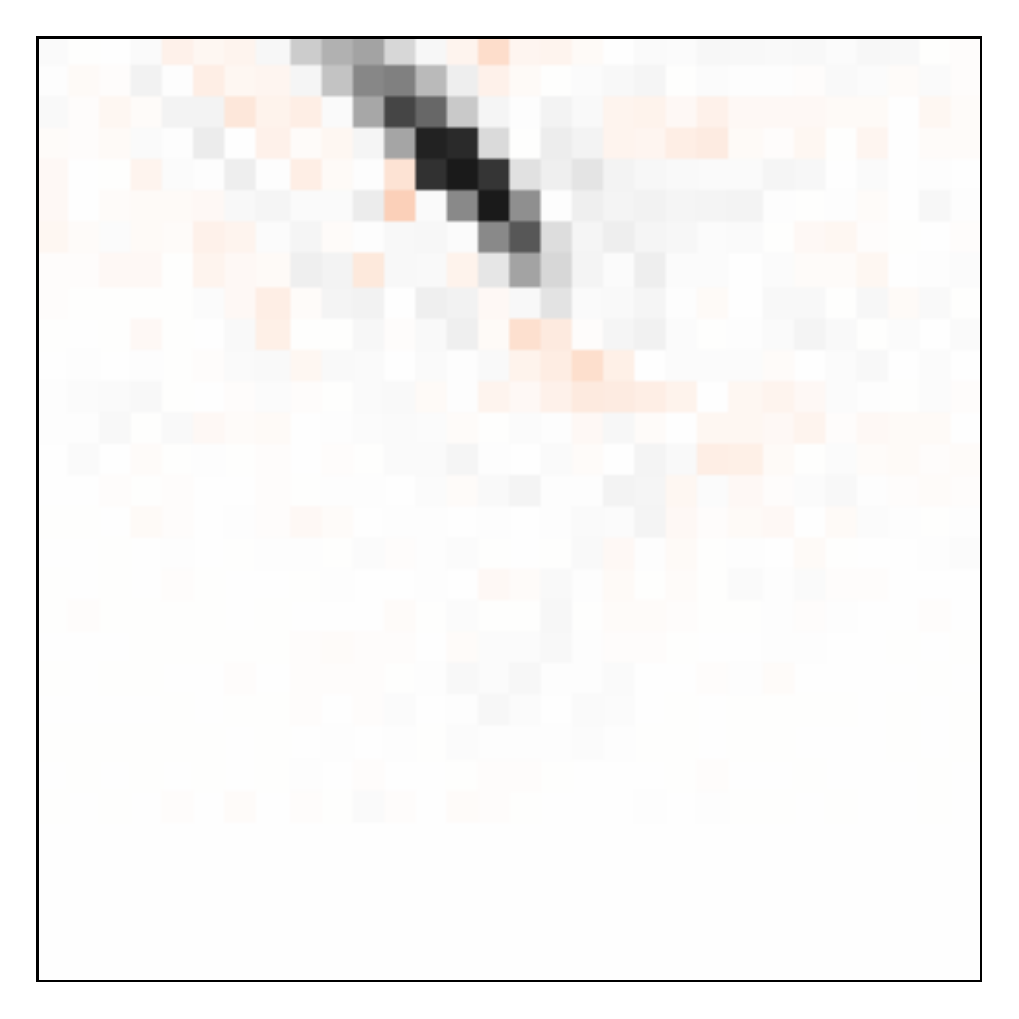}\nsp\\
  \end{tabular}\\[-0.5ex]
\caption{Visualization of the linear weighting functions (rows of $A_y$ in ~\eqref{eq:local_linear_rep}) of a BF-CNN for three example pixels of an input image, and three levels of noise. The images in the three rightmost columns show the weighting functions used to compute each of the indicated pixels (red squares). All weighting functions sum to one, and thus compute a local average (note that some weights are negative, indicated in red). Their shapes vary substantially, and are adapted to the underlying image content. As the noise level $\sigma$ increases, the spatial extent of the weight functions increases in order to average out the noise, while respecting boundaries between different regions in the image, which results in dramatically different functions for each pixel. The CNN used for this example is DnCNN~\citep{zhang2017beyond}; using alternative architectures yields similar results (see Figure~\ref{fig:filter_branch_others}).}
\label{fig:filter_branch}
\end{figure}

In order to evaluate the effect of removing the net bias in denoising CNNs, we compare several state-of-the-art architectures to their bias-free counterparts, which are exactly the same except for the absence of any additive constants within the networks (note that this includes the batch-normalization additive parameter). These architectures include popular features of existing neural-network techniques in image processing: recurrence, multiscale filters, and skip connections. More specifically, we examine the following models (see Section~\ref{sec:architectures} for additional details):
\begin{itemize}
\item DnCNN~\citep{zhang2017beyond}: A feedforward CNN with $20$ convolutional layers, each consisting of $3 \times 3$ filters, $64$ channels, batch normalization~\citep{ioffe2015batch}, a ReLU nonlinearity, and a skip connection from the initial layer to the final layer. 
\item Recurrent CNN: A recurrent architecture inspired by \cite{DURR}  where the basic module is a CNN with 5 layers, $3\times 3$ filters and $64$ channels in the intermediate layers. The order of the recurrence is 4. 

\item UNet~\citep{ronneberger2015unet}: A multiscale architecture with 9 convolutional layers and skip connections between the different scales.

\item Simplified DenseNet: CNN with skip connections inspired by the DenseNet architecture~\citep{huang2017densnet,zhang2018rdnir}. 
\end{itemize}

We train each network to denoise images corrupted by i.i.d. Gaussian noise over a range of standard deviations (the \emph{training range} of the network). We then evaluate the network for noise levels that are both within and beyond the training range. Our experiments are carried out on $180 \times 180$ natural images from the Berkeley Segmentation Dataset~\citep{MartinFTM01} to be consistent with previous results \citep{schmidt2014shrinkage,chen2017trainable,zhang2017beyond}. Additional details about the dataset and training procedure are provided in Section~\ref{sec:data_training}.
Figures~\ref{fig:gen_per}, \ref{fig:others_psnr} and~\ref{fig:others_ssim} show our results. For a wide range of different training ranges, and for all architectures, we observe the same phenomenon: the performance of CNNs is good over the training range, but degrades dramatically at new noise levels; in stark contrast, the corresponding BF-CNNs provide strong denoising performance over noise levels outside the training range. This holds for both PSNR and the more perceptually-meaningful Structural Similarity Index~\citep{wang2004ssim} (see Figure~\ref{fig:others_ssim}). Figure~\ref{fig:per_exp} shows an example image, demonstrating visually the striking difference in generalization performance between a CNN and its corresponding BF-CNN. Our results provide strong evidence that removing net bias in CNN architectures results in effective generalization to noise levels out of the training range.  




\section{Revealing the Denoising Mechanisms Learned by BF-CNNs}
\label{sec:interpretability}

In this section we perform a local analysis of BF-CNN networks, which reveals the underlying denoising mechanisms learned from the data. A bias-free network is strictly linear, and its net action can be expressed as
\begin{align}
\label{eq:local_linear_rep}
    f_{\operatorname{BF}}(y) = W_L R (W_{L-1} ... R(W_1 y))= A_y y,
\end{align}
where $A_y$ is the Jacobian of $f_{\operatorname{BF}}(\cdot)$ evaluated at $y$. The Jacobian at a fixed input provides a local characterization of the denoising map. In order to study the map we perform a linear-algebraic analysis of the Jacobian. Our approach is similar in spirit to visualization approaches-- proposed in the context of image classification-- that differentiate neural-network functions with respect to their input (e.g.~\cite{simonyan2013deep,montavon2017explaining}). 

\subsection{Nonlinear adaptive filtering}
\label{sec:filters}
\begin{figure}
\centering
\begin{subfigure}{.395\textwidth}
  \centering
  \includegraphics[width=1\linewidth]{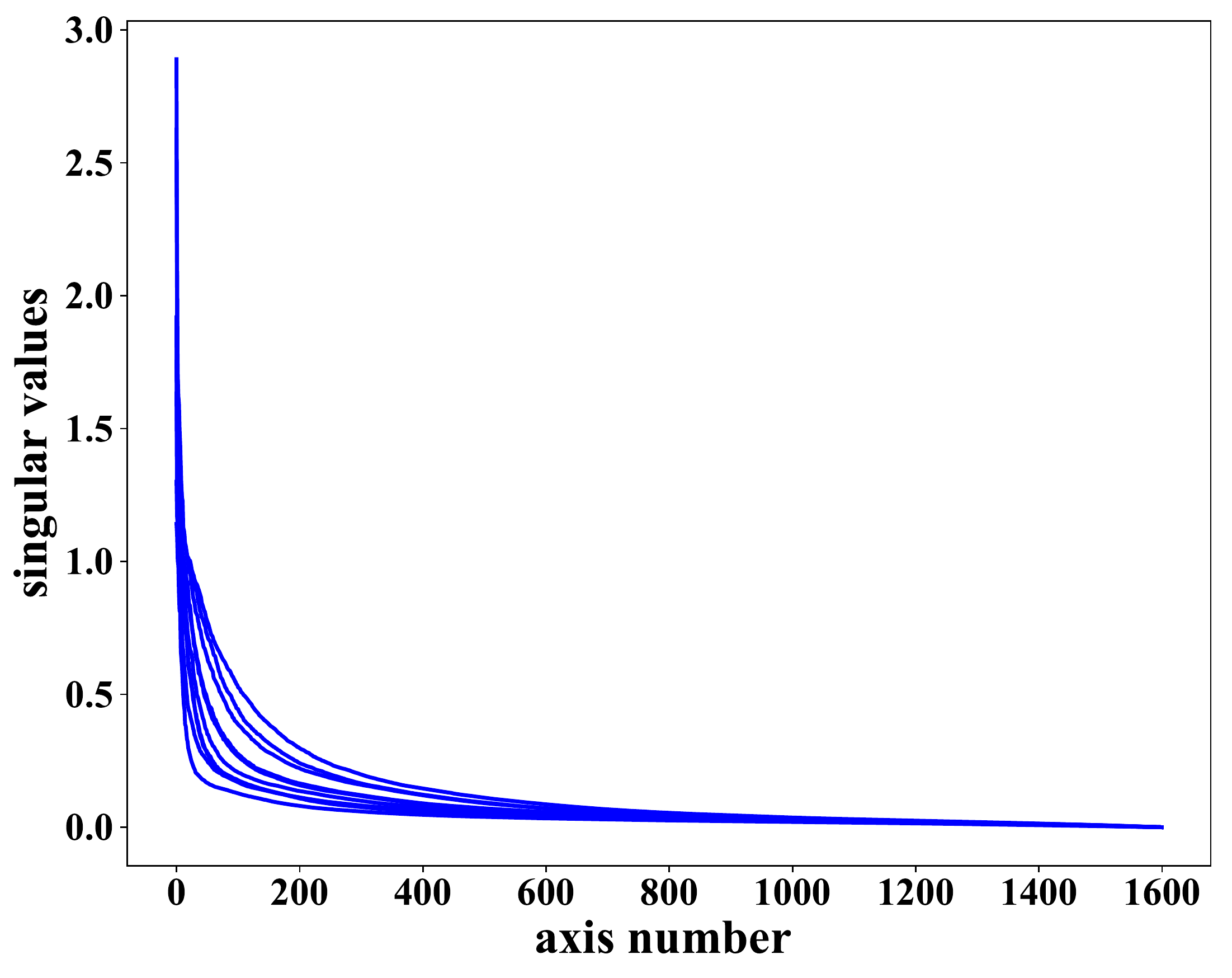}
  \caption{}
  \label{fig:svd-a}
\end{subfigure}%
\centering
\begin{subfigure}{.18\textwidth}
  \centering
  \includegraphics[width=1\linewidth]{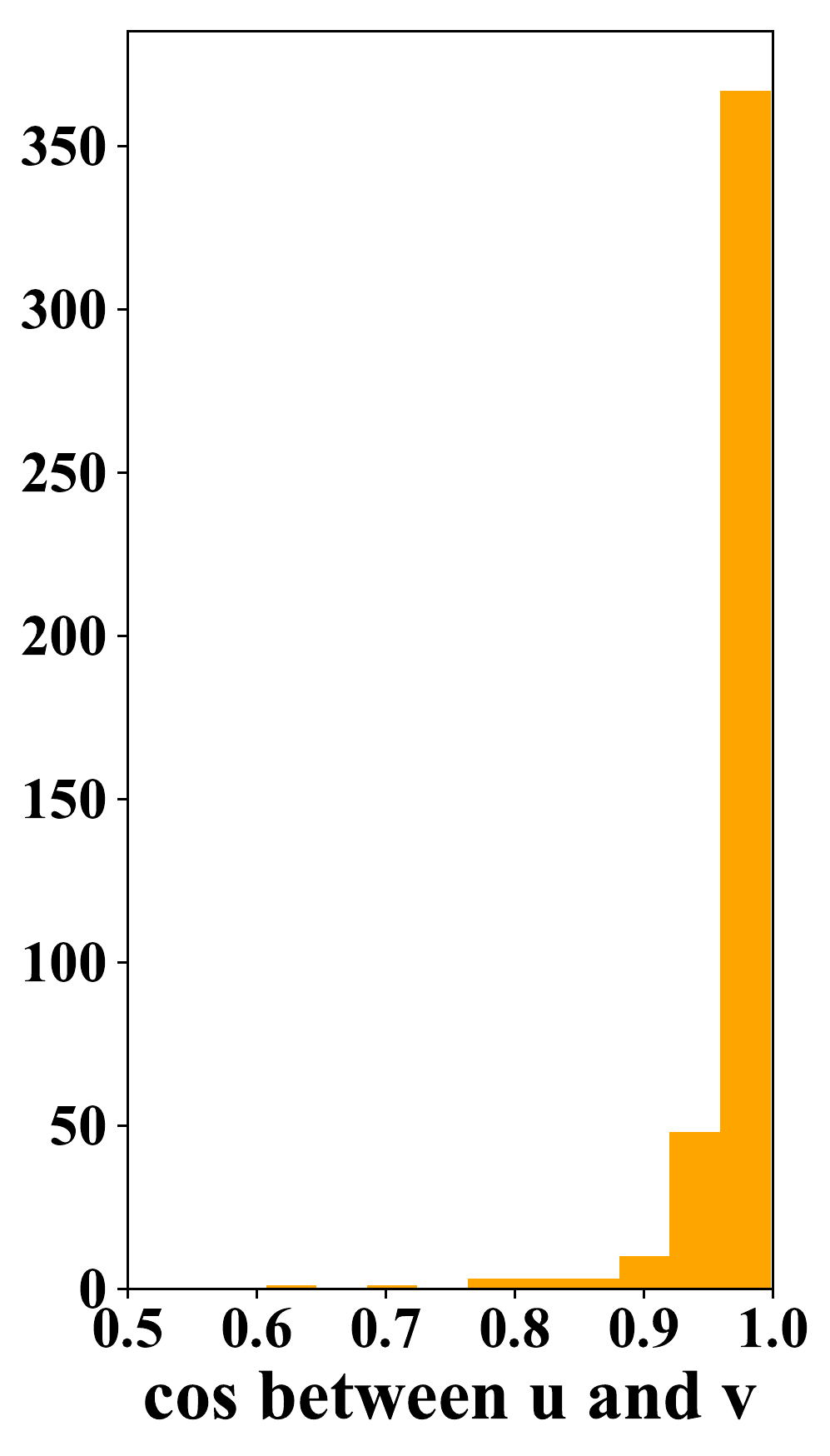}
  \caption{}
  \label{fig:svd-b}
\end{subfigure}
\begin{subfigure}{.395\textwidth}
  \centering
  \includegraphics[width=1\linewidth]{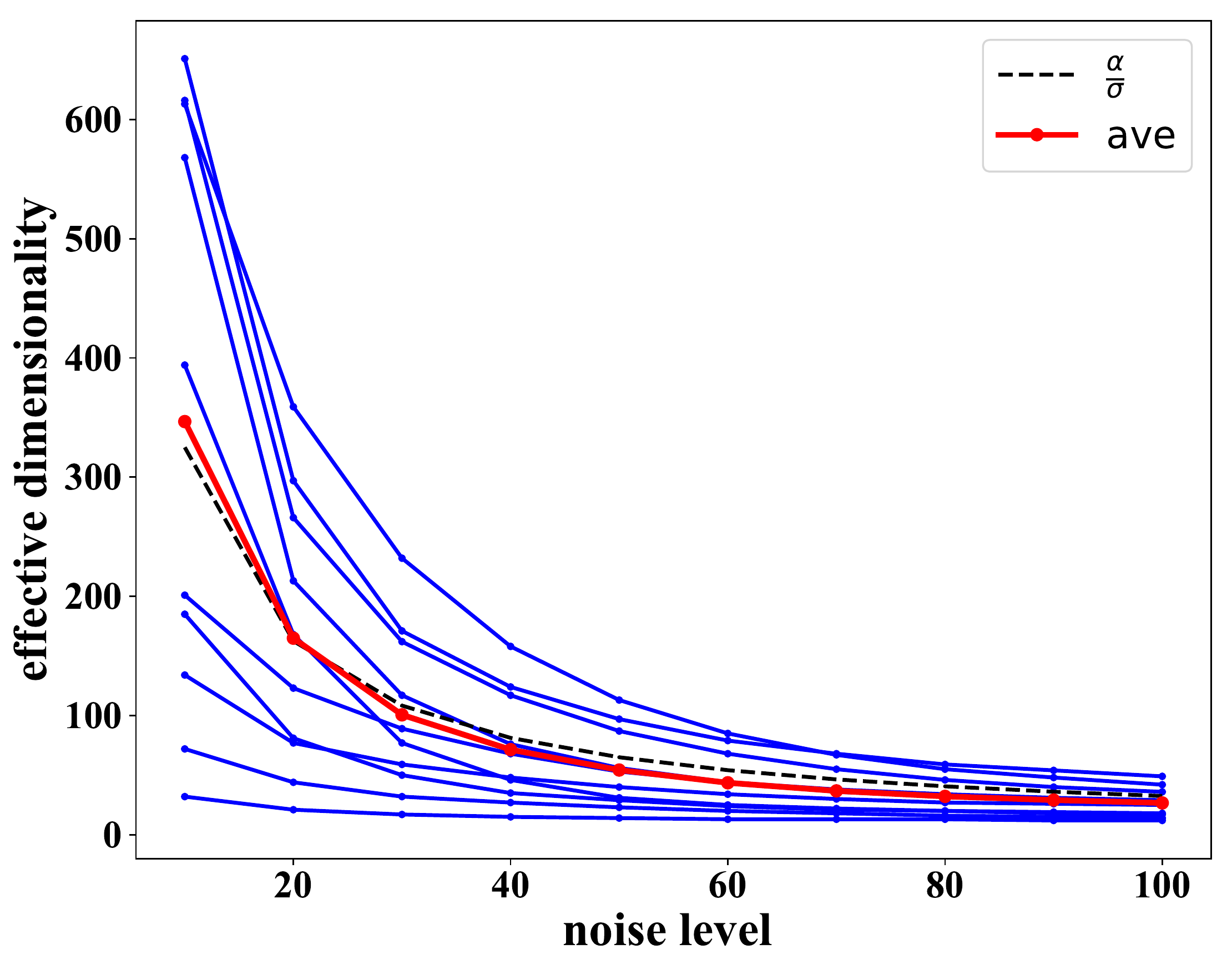}
  \caption{}
  \label{fig:svd-c}
\end{subfigure}%
\caption{Analysis of the SVD of the Jacobian of a BF-CNN for ten natural images, corrupted by noise of standard deviation $\sigma=50$. {\bf (a)} Singular value distributions. For all images, a large proportion of the values are near zero, indicating (approximately) a projection onto a subspace (the \emph{signal subspace}). 
{\bf (b)} Histogram of dot products (cosine of angle) between the left and right singular vectors that lie within the signal subspaces. {\bf (c)} Effective dimensionality of the signal subspaces (computed as sum of squared singular values) as a function of noise level. For comparison, the total dimensionality of the space is $1600$ ($40 \times 40$ pixels). Average dimensionality (red curve) falls approximately as the inverse of $\sigma$ (dashed curve). The CNN used for this example is DnCNN~\citep{zhang2017beyond}; using alternative architectures yields similar results (see Figure~\ref{fig:sing_values_others}).
}
\label{fig:singular}
\end{figure}

The linear representation of the denoising map given by~\eqref{eq:local_linear_rep} implies that the $i$th pixel of the output image is computed as an inner product between the $i$th row of $A_y$, denoted $a_y(i)$, and the input image:
\begin{align}
    f_{\text{BF}}(y)(i) = \sum_{j=1}^{N}A_y(i,j)y(j) = a_y(i)^T y.
\end{align}
The vectors $a_y(i)$ can be interpreted as \emph{adaptive filters} that produce an estimate of the denoised pixel via a weighted average of noisy pixels. Examination of these filters reveals their diversity, and their relationship to the underlying image content: they are adapted to the local features of the noisy image, averaging over homogeneous regions of the image without blurring across edges. This is shown for two separate examples and a range of noise levels in Figures~\ref{fig:filter_branch}, \ref{fig:filter_branch_others}, \ref{fig:filter_bfdncnn} and \ref{fig:filter_others} for the architectures described in Section~\ref{sec:generalization}. We observe that the equivalent filters of all architectures adapt to image structure.

Classical Wiener filtering~\citep{wiener1950extrapolation} denoises images by computing a local average dependent on the noise level. As the noise level increases, the averaging is carried out over a larger region. As illustrated by Figures~\ref{fig:filter_branch}, \ref{fig:filter_branch_others}, \ref{fig:filter_bfdncnn} and \ref{fig:filter_others}, the equivalent filters of BF-CNNs also display this behavior. The crucial difference is that the filters are adaptive. The BF-CNNs learn such filters implicitly from the data, in the spirit of modern nonlinear spatially-varying filtering techniques designed to preserve fine-scale details such as edges (e.g. \cite{tomasi1998bilateral}, see also \cite{milanfar2012tour} for a comprehensive review, and \cite{choi2018fast} for a recent learning-based approach). 

\subsection{Projection onto adaptive low-dimensional subspaces}
\label{sec:svd}
The local linear structure of a BF-CNN facilitates analysis of its functional capabilities via the singular value decomposition (SVD). For a given input $y$, we compute the SVD of the Jacobian matrix: $A_y = U S V^T$, with $U$ and $V$ orthogonal matrices, and $S$ a diagonal matrix. We can decompose the effect of the network on its input in terms of the left singular vectors $\{U_1, U_2 \ldots, U_N\}$ (columns of $U$), the singular values $\{s_1, s_2 \ldots, s_N\}$ (diagonal elements of $S$), and the right singular vectors $\{V_1, V_2, \ldots V_N\}$ (columns of $V$):
\begin{align}
    f_{\text{BF}}(y) = A_y y = USV^Ty = \sum_{i=1}^{N} s_{i} (V_i^T y) U_i.
\end{align}
The output is a linear combination of the left singular vectors, each weighted by the projection of the input onto the corresponding right singular vector, and scaled by the corresponding singular value.

\begin{figure}[]
\centering
\def\fsz{0.132\linewidth} \def\nsp{\hspace*{-0.015\linewidth}}
\begin{tabular}{c@{\hfil}ccccccccc}
  \nsp\includegraphics[width=\fsz]{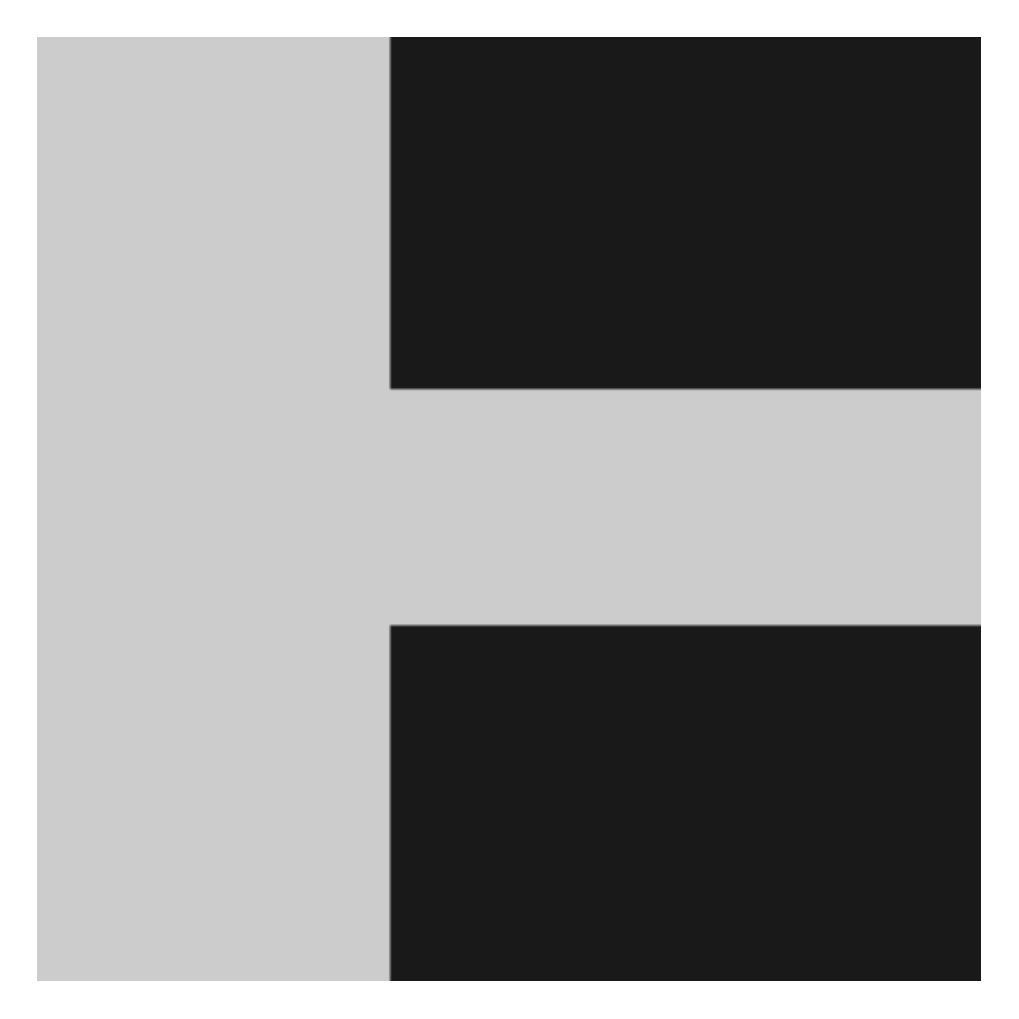}\nsp & 
  \hspace*{0.025\linewidth} &
  \nsp\includegraphics[width=\fsz]{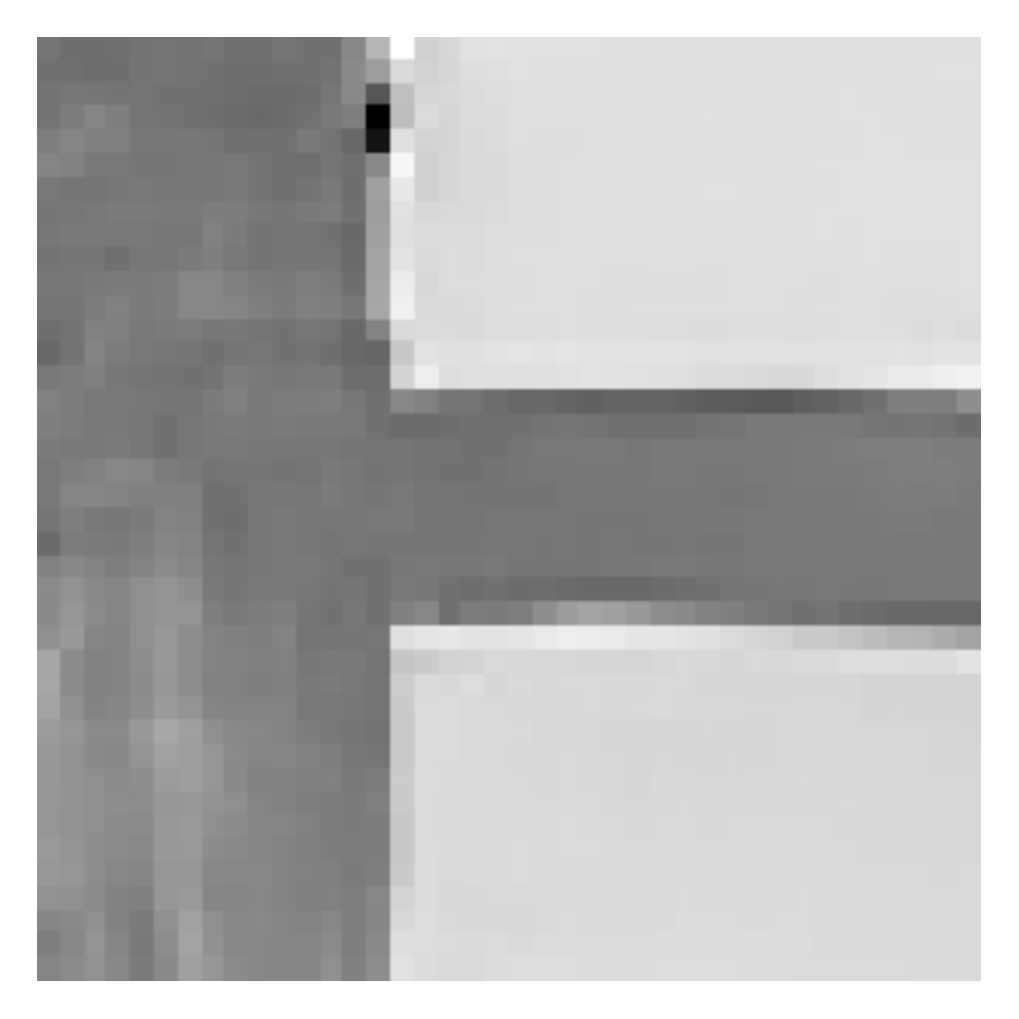}\nsp &
  \nsp\includegraphics[width=\fsz]{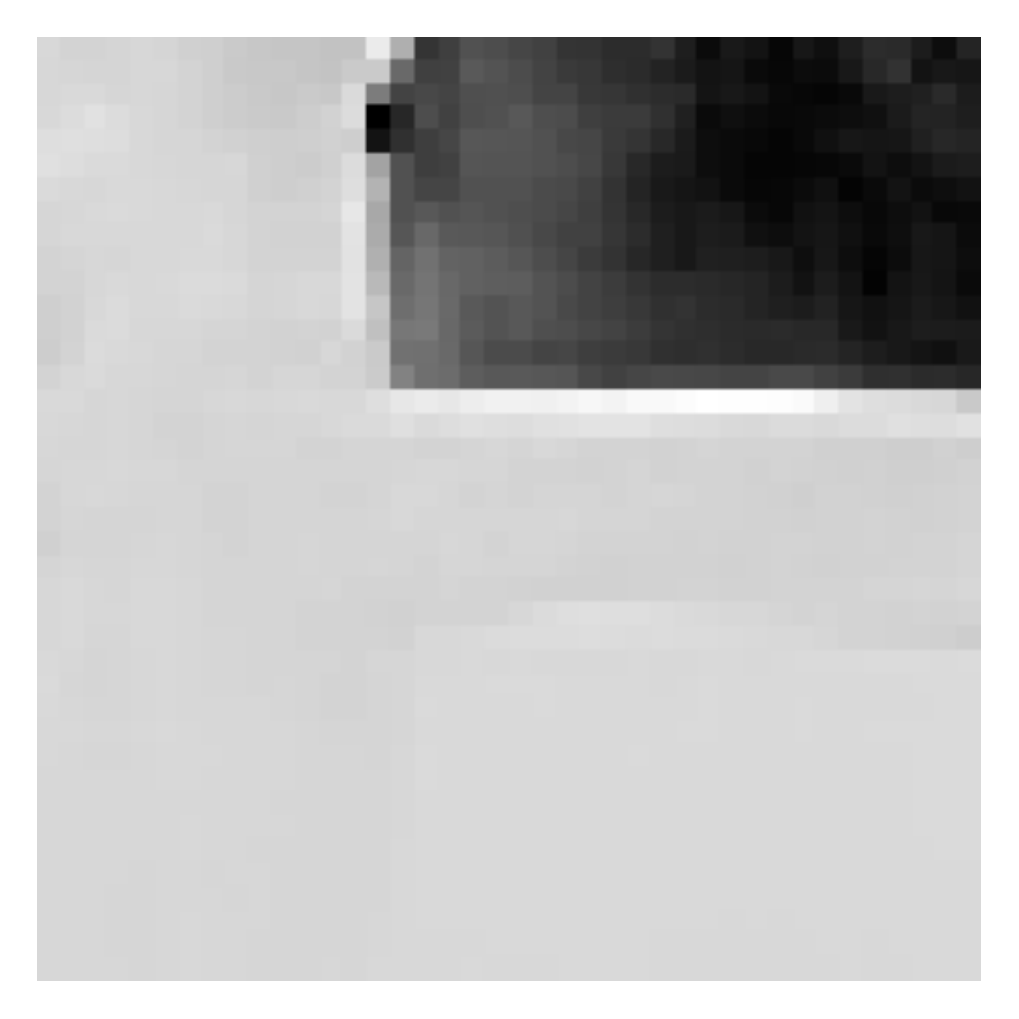}\nsp &
  \nsp\includegraphics[width=\fsz]{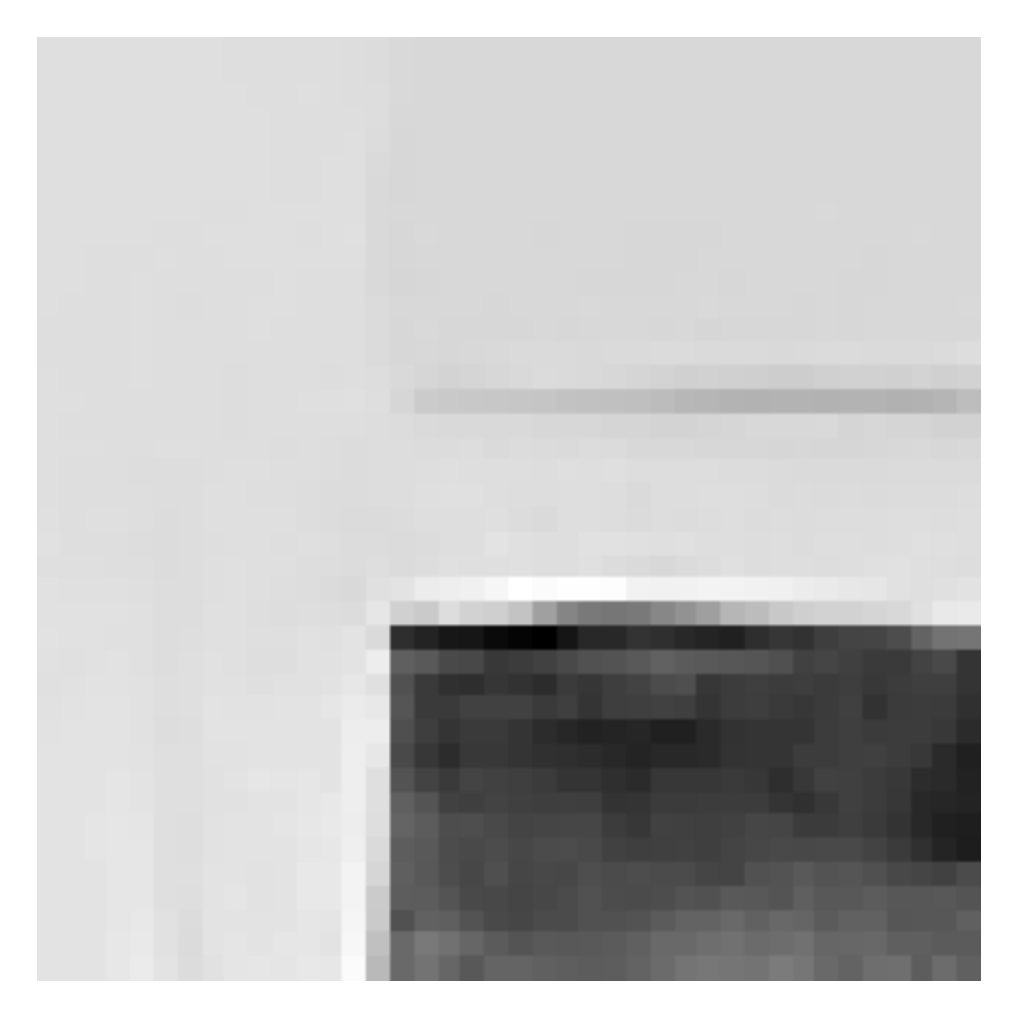}\nsp &
  \nsp\includegraphics[width=0.05\linewidth]{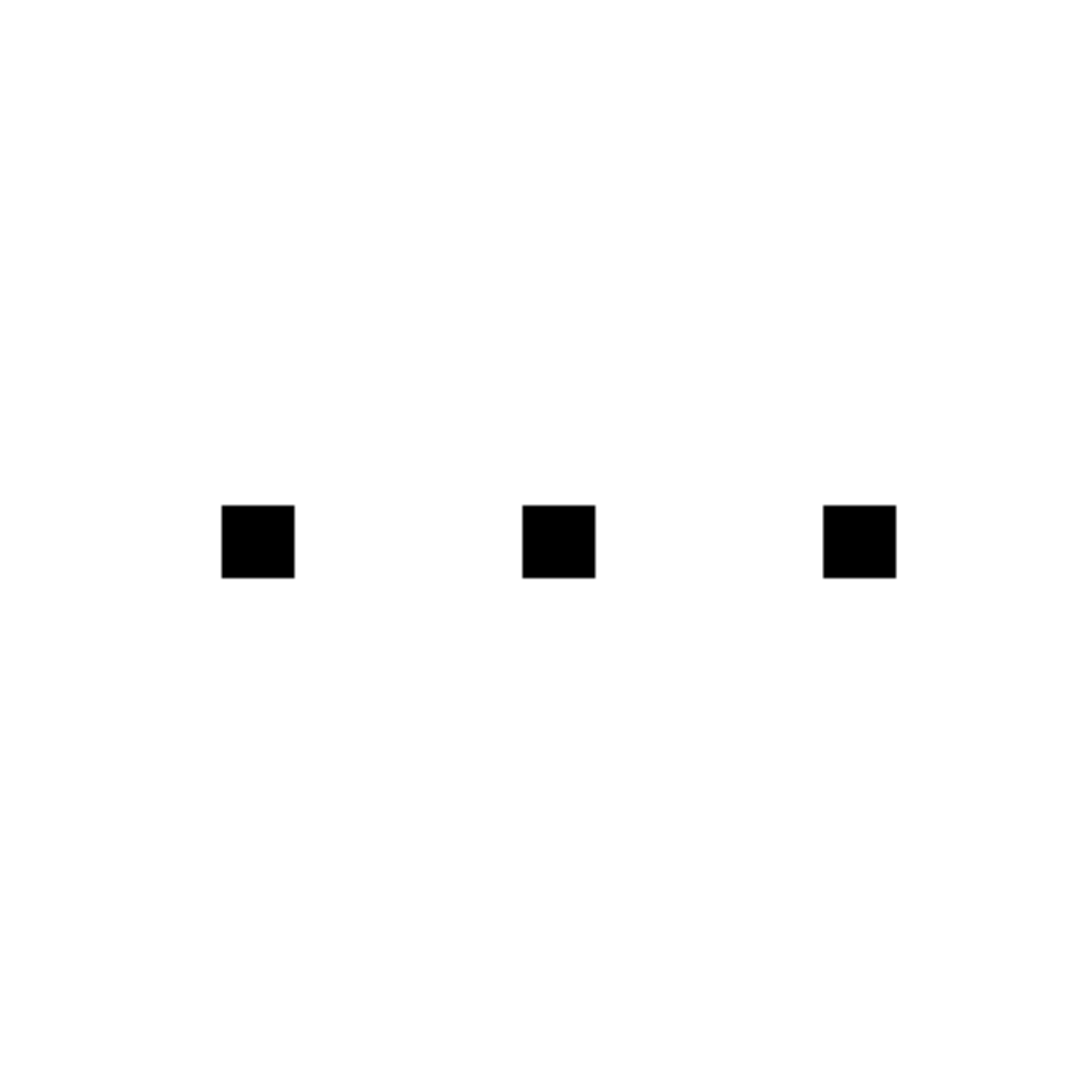}\nsp &
  \nsp\includegraphics[width=\fsz]{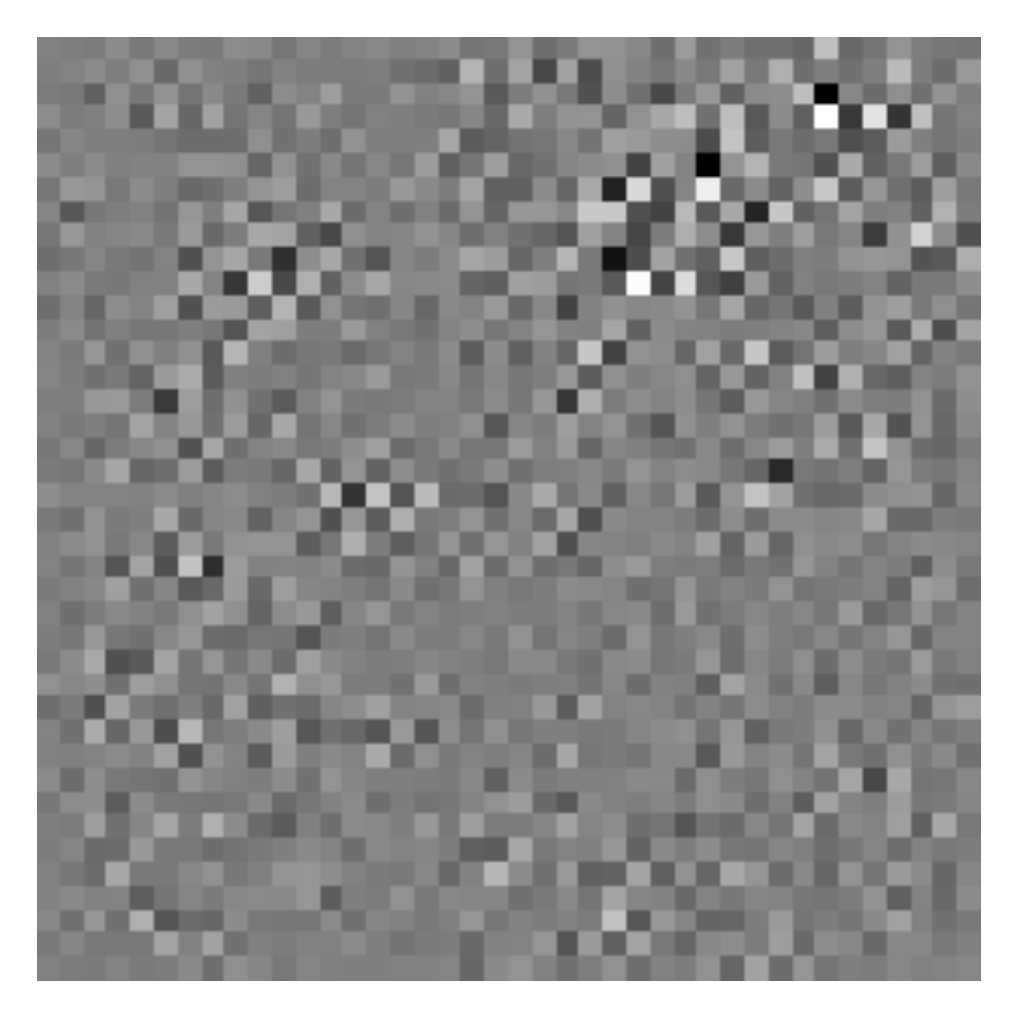}\nsp &
  \nsp\includegraphics[width=\fsz]{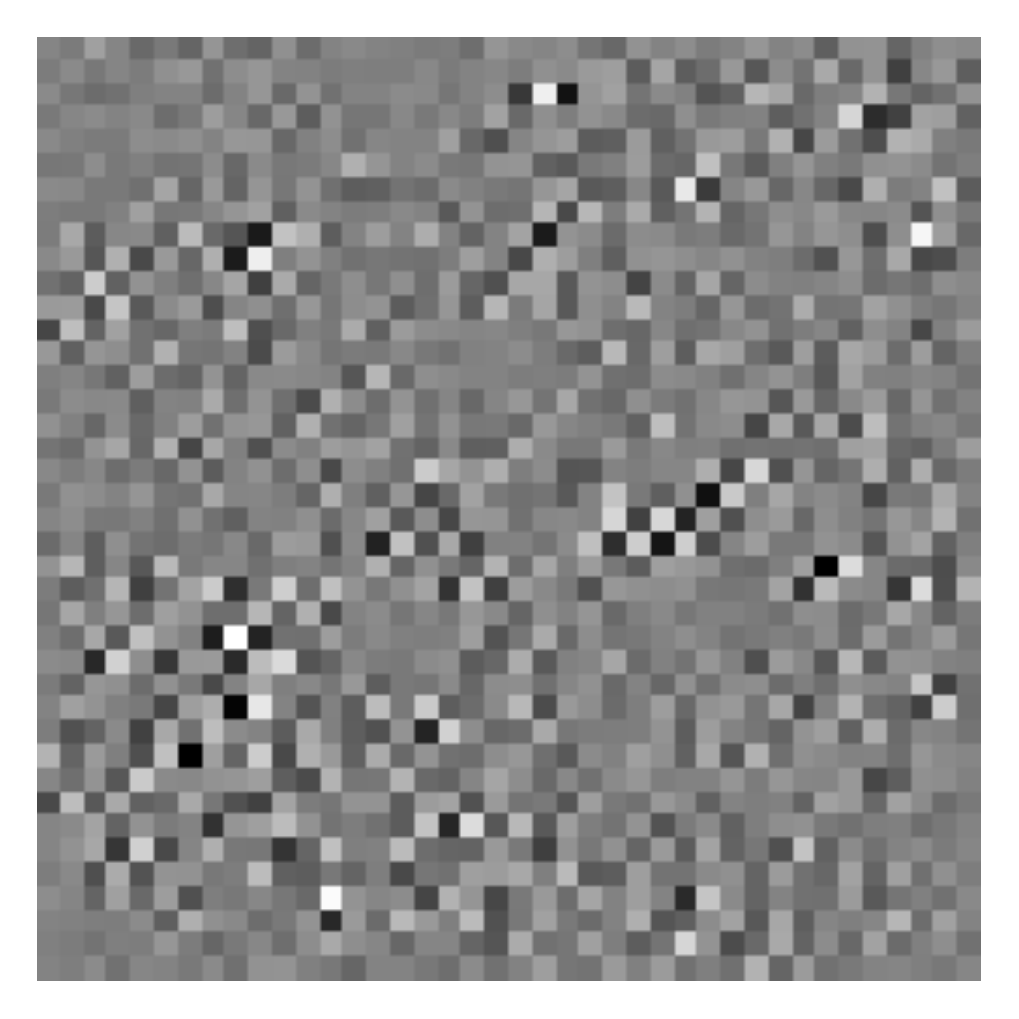}\nsp &
  \nsp\includegraphics[width=\fsz]{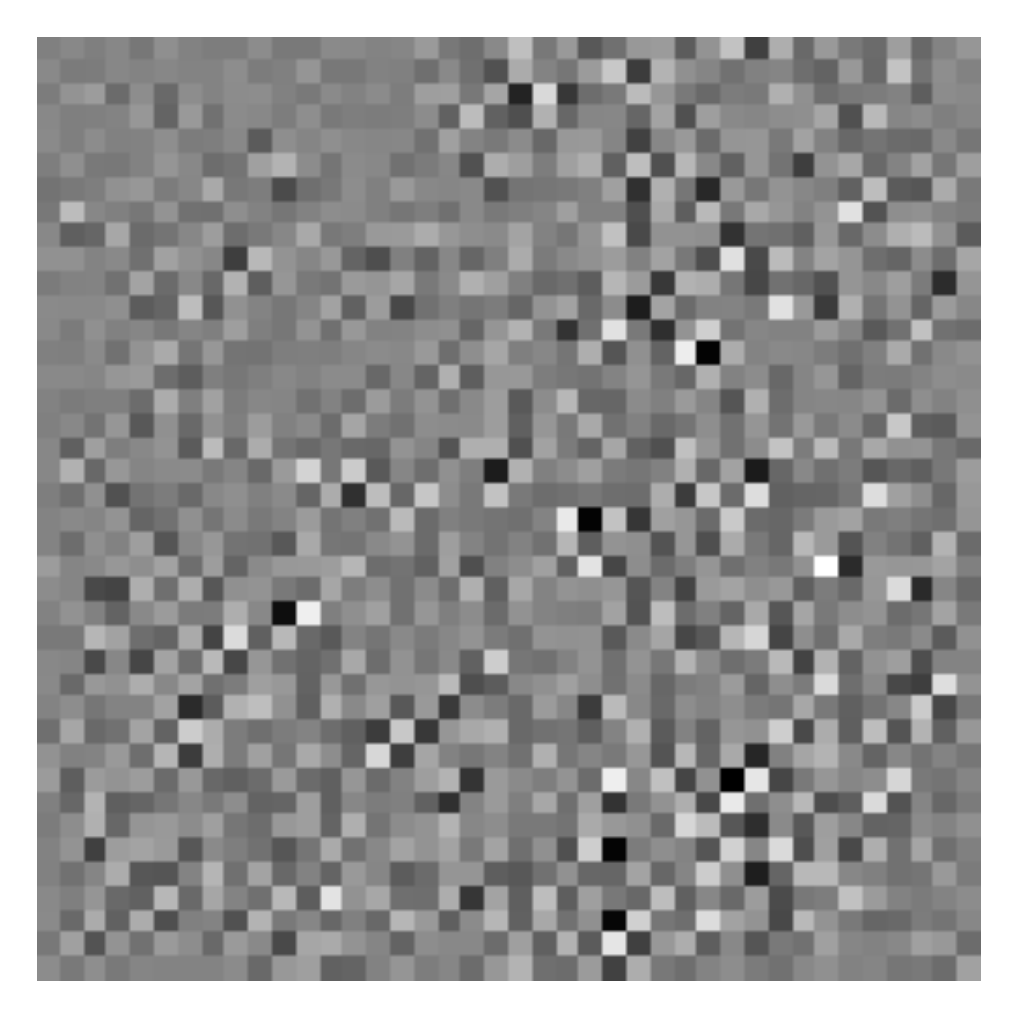}\nsp \\
  \nsp\includegraphics[width=\fsz]{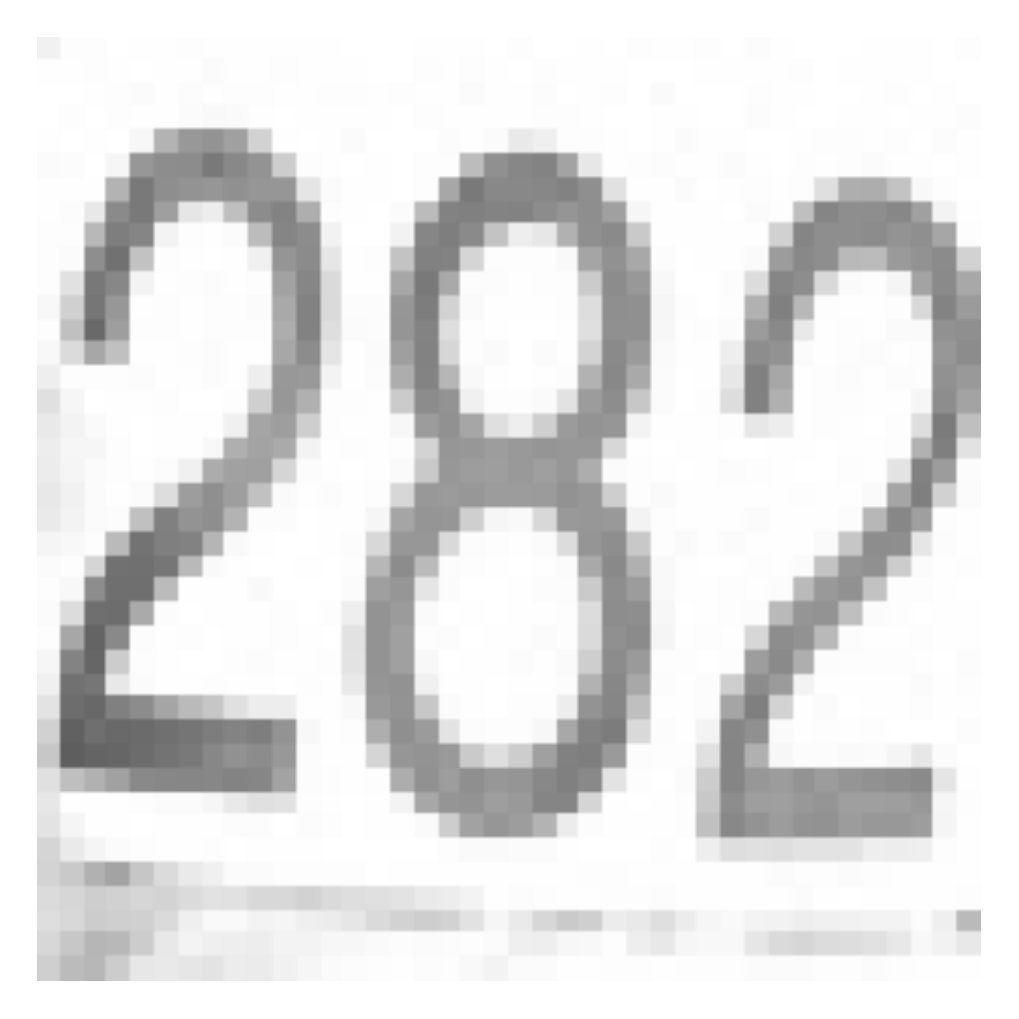}\nsp & 
  \hspace*{0.025\linewidth} &
  \nsp\includegraphics[width=\fsz]{./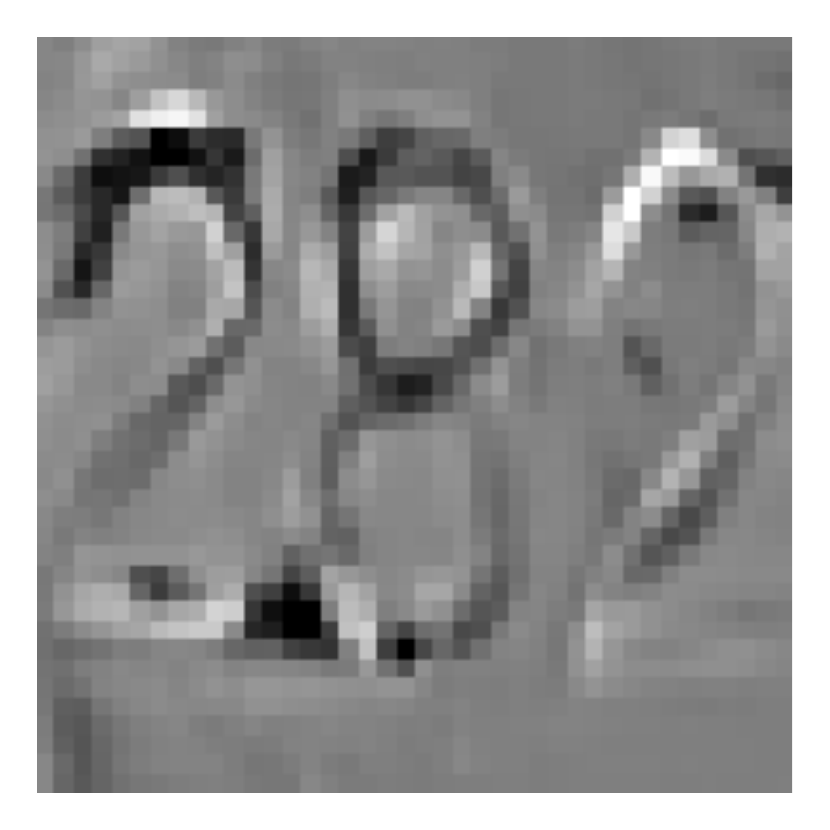}\nsp &
  \nsp\includegraphics[width=\fsz]{./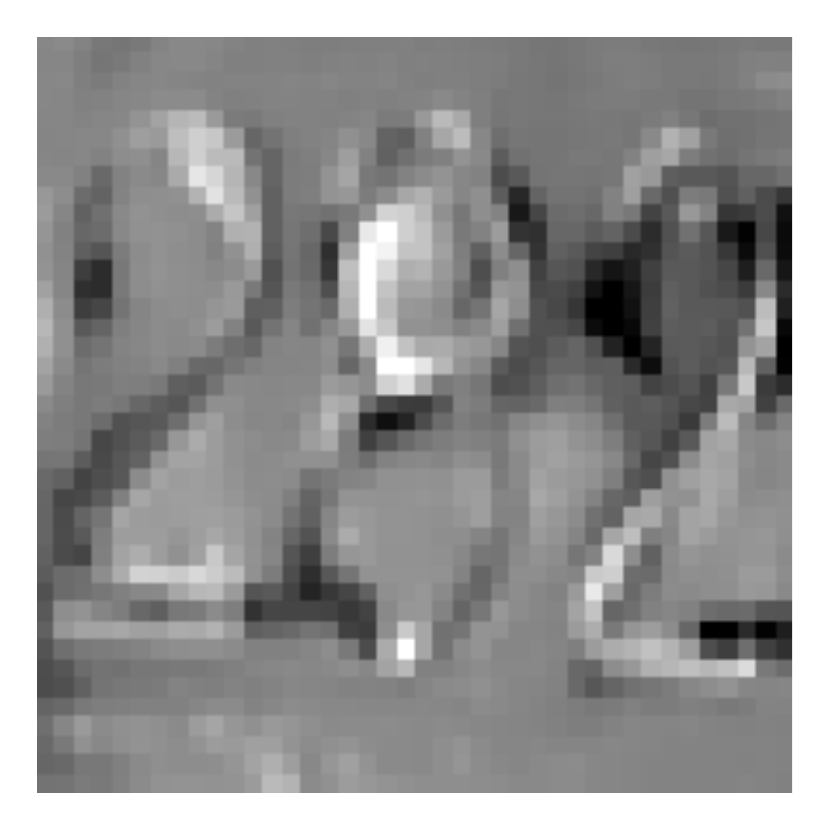}\nsp &
  \nsp\includegraphics[width=\fsz]{./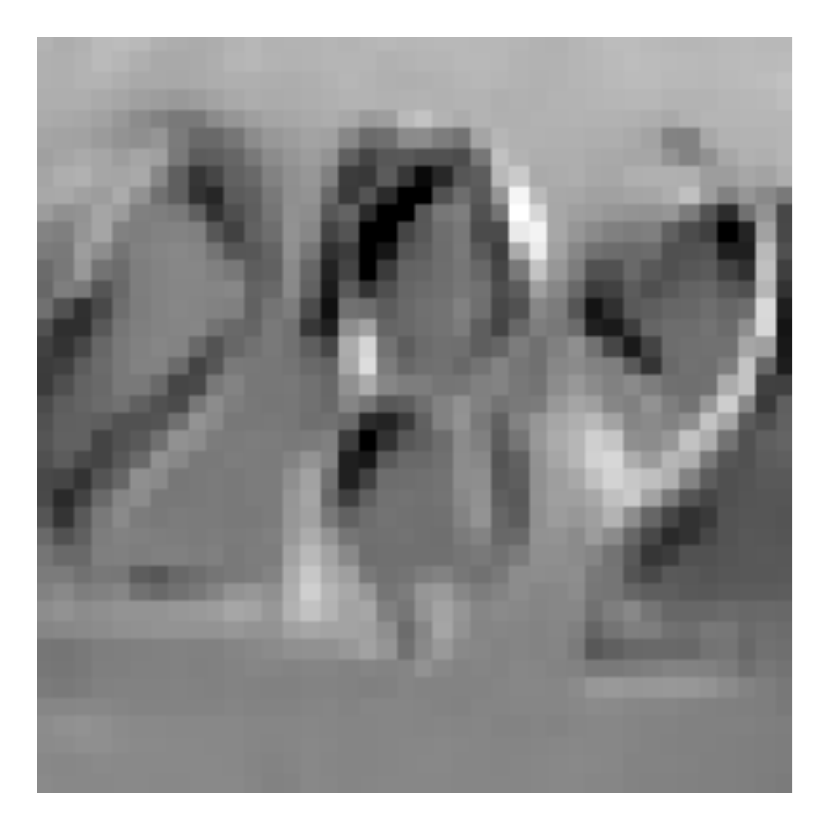}\nsp &
 \nsp\includegraphics[width=0.05\linewidth]{figs/sig_space4.pdf}\nsp &
  \nsp\includegraphics[width=\fsz]{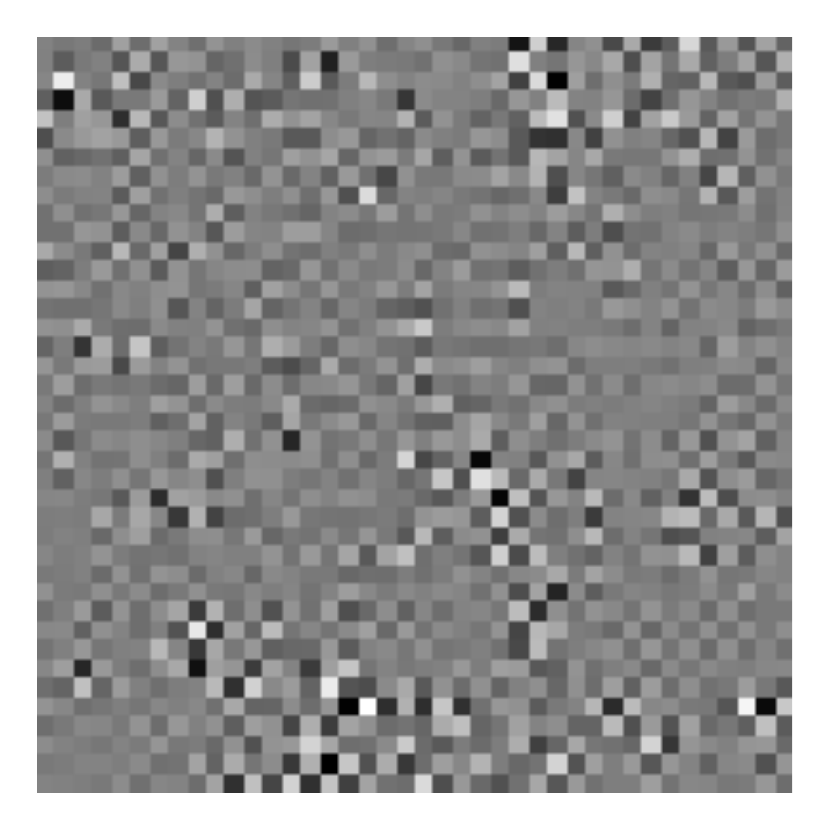}\nsp &
  \nsp\includegraphics[width=\fsz]{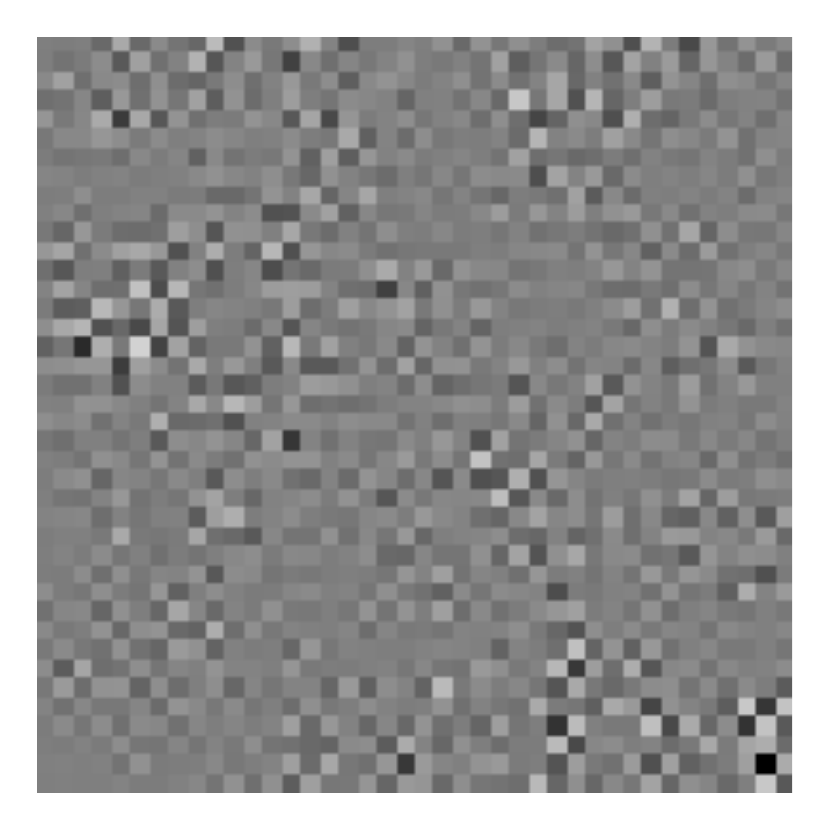}\nsp &
  \nsp\includegraphics[width=\fsz]{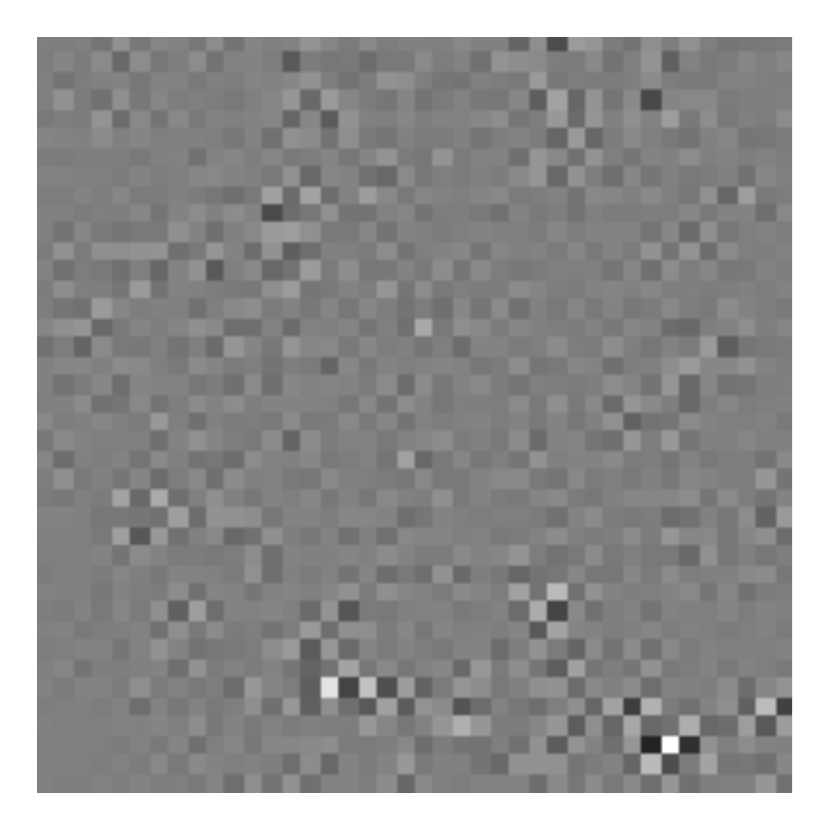}\nsp
\end{tabular}\\[-0.5ex]
\caption{Visualization of left singular vectors of the Jacobian of a BF-CNN, evaluated on two different images (top and bottom rows), corrupted by noise with standard deviation $\sigma = 50$. The left column shows original (clean) images. The next three columns show singular vectors corresponding to non-negligible singular values. The vectors capture features from the clean image. The last three columns on the right show singular vectors corresponding to singular values that are almost equal to zero. These vectors are noisy and unstructured. The CNN used for this example is DnCNN~\citep{zhang2017beyond}; using alternative architectures yields similar results (see Figure~\ref{fig:singular_vec_others}).} 
\label{fig:sig_subspace}
\end{figure}

Analyzing the SVD of a BF-CNN on a set of ten natural images reveals that most singular values are very close to zero (Figure~\ref{fig:svd-a}). The network is thus discarding all but a very low-dimensional portion of the input image. We also observe that the left and right singular vectors corresponding to the singular values with non-negligible amplitudes are approximately the same (Figure \ref{fig:svd-b}). This means that the Jacobian is (approximately) symmetric, and we can interpret the action of the network as projecting the noisy signal onto a low-dimensional subspace, as is done in wavelet thresholding schemes. 
This is confirmed by visualizing the singular vectors as images (Figure~\ref{fig:sig_subspace}).  
The singular vectors corresponding to non-negligible singular values are seen to capture features of the input image; those corresponding to near-zero singular values are unstructured. The BF-CNN therefore implements an approximate projection onto an adaptive \emph{signal subspace} that preserves image structure, while suppressing the noise. 

 We can define an "effective dimensionality" of the signal subspace as $d := \sum_{i = 1}^N s_i^2$, the amount of variance captured by applying the linear map to an $N$-dimensional Gaussian noise vector with variance $\sigma^2$, normalized by the noise variance. 
%
The remaining variance equals
\begin{align*}
    E_n ||A_y n||^2 = E_n||U_y S_y V^T_y n||^2 = E_n||S_y n||^2 =  E_n \sum_{i=1}^N s_i^2 n_i^2 =
    \sum_{i=1}^N s_i^2 E_n (n_i^2) 
    \approx  \sigma^2 \sum_{i = 1}^N s_i^2,
\end{align*}
where $E_n$ indicates expectation over noise $n$, so that $d= E_n ||A_y n||^2/\sigma^2 = \sum_{i = 1}^N s_i^2$.

\begin{figure}[]
\centering
\begin{subfigure}{.4\textwidth}
  \centering
  \includegraphics[width=.9\linewidth]{./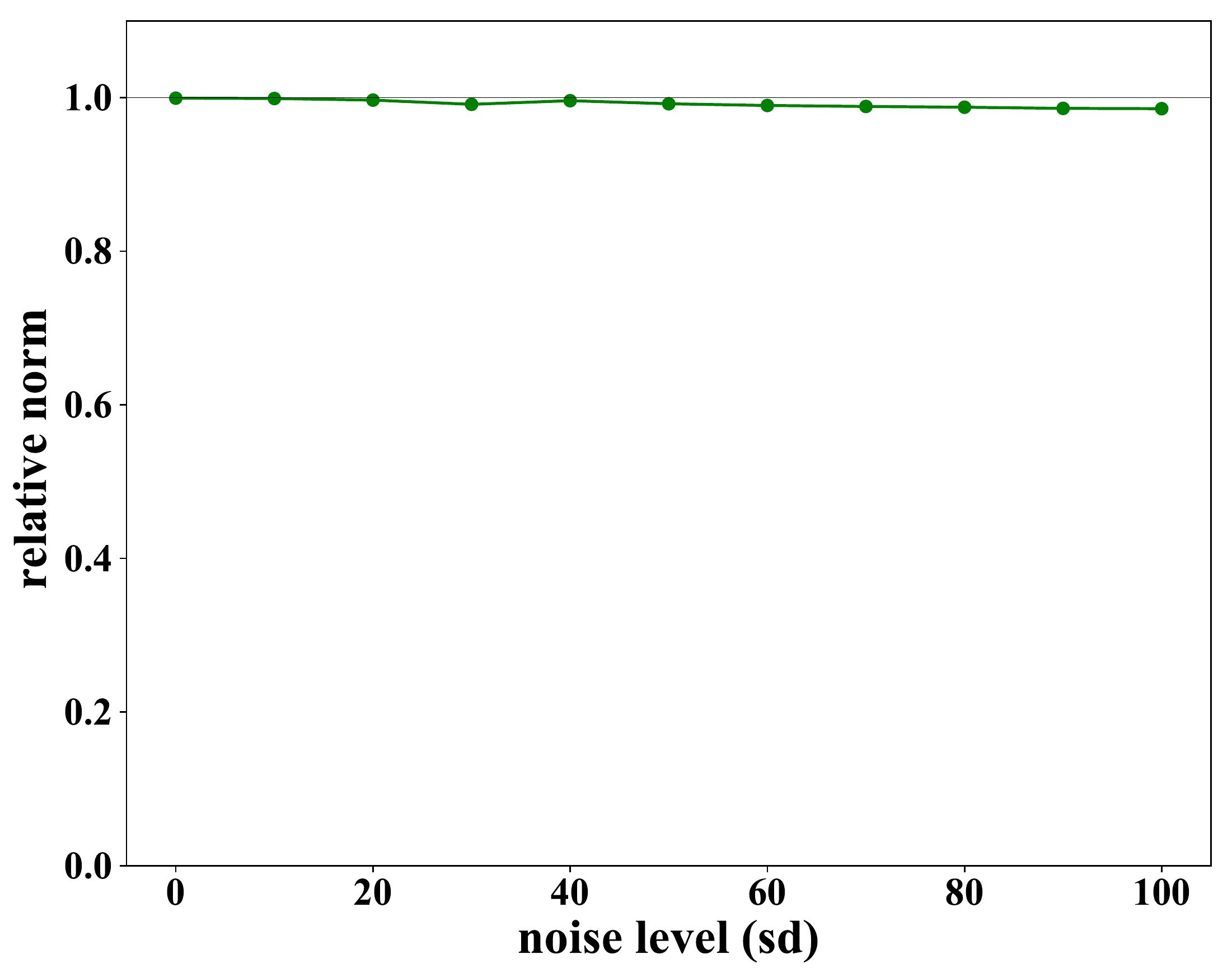}
\end{subfigure}%
\begin{subfigure}{.4\textwidth}
  \centering
  \includegraphics[width=.9\linewidth]{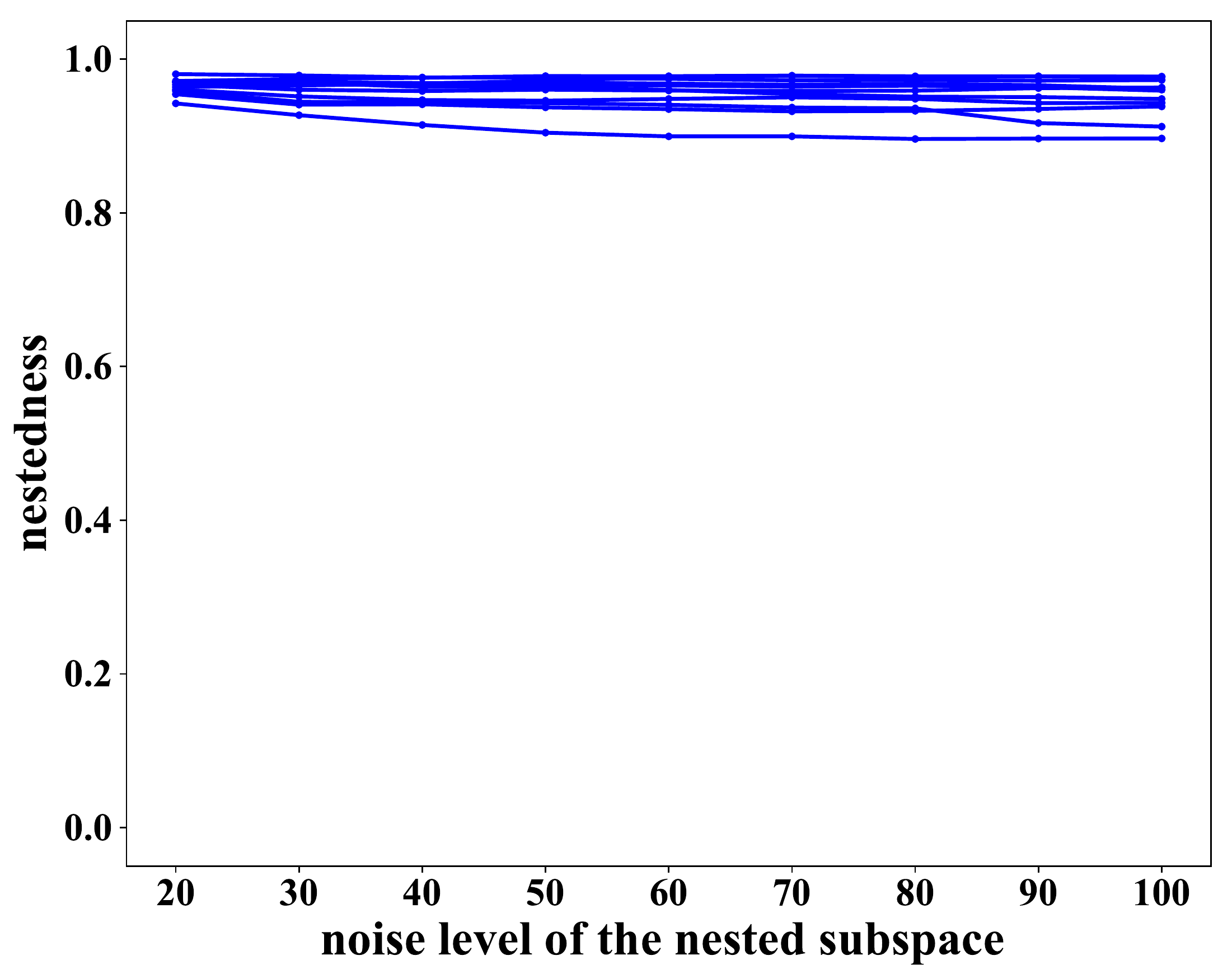}
\end{subfigure}
\caption{Signal subspace properties. \textbf{Left:} Signal subspace, computed from Jacobian of a BF-CNN evaluated at a particular noise level, contains the clean image. Specifically, the fraction of squared $\ell_2$ norm preserved by  projection onto the subspace is nearly one as $\sigma$ grows from 10 to 100 (relative to the image pixels, which lie in the range $[0,255]$). Results are averaged over 50 example clean images. \textbf{Right:} Signal subspaces at different noise levels are nested.  The subspace axes for a higher noise level lie largely within the subspace obtained for the lowest noise level ($\sigma = 10$), as measured by the sum of squares of their projected norms. Results are shown for 10 example clean images. }
\label{fig:norms-nested}
\end{figure}

When we examine the preserved signal subspace, we find that the clean image lies almost completely within it. For inputs of the form $y:=x+n$ (where $x$ is the clean image and $n$ the noise), we find that the subspace spanned by the singular vectors up to dimension $d$ contains $x$ almost entirely, in the sense that projecting $x$ onto the subspace preserves most of its energy. This holds for the whole range of noise levels over which the network is trained (Figure \ref{fig:norms-nested}). 
    
We also find that for any given clean image, the effective dimensionality of the signal subspace ($d$) decreases systematically with noise level (Figure \ref{fig:svd-c}). At lower noise levels the network detects a richer set of image features, and constructs a larger signal subspace to capture and preserve them.
Empirically, we found that (on average) $d$ is approximately proportional to $\frac{1}{\sigma}$ (see dashed line in Figure \ref{fig:svd-c}). 
These signal subspaces are nested: the subspaces corresponding to lower noise levels contain more than 95\% of the subspace axes corresponding to higher noise levels (Figure \ref{fig:norms-nested}). 

Finally, we note that this behavior of the signal subspace dimensionality, combined with the fact that it contains the clean image, explains the observed denoising performance across different noise levels (Figure~\ref{fig:gen_per}).  Specifically, if we assume $d \approx \alpha/\sigma$, the mean squared error is proportional to $\sigma$:
\begin{align}
{\rm MSE} & = E_n ||A_y(x+n) - x||^2 \nonumber \\
  & \approx E_n || A_y n ||^2 \nonumber \\
  & \approx \sigma^2 d \nonumber \\
  & \approx \alpha \, \sigma
\end{align}
Note that this result runs contrary to the intuitive expectation that MSE should be proportional to the noise variance, which would be the case if the denoiser operated by projecting onto a fixed subspace.
The scaling of MSE with the square root of the noise variance implies that the PSNR of the denoised image should be a linear function of the input PSNR, with a slope of $1/2$, consistent with the empirical results shown in Figure~\ref{fig:gen_per}.  Note that this behavior holds even when the networks are trained only on  modest levels of noise (e.g., $\sigma \in [0,10]$). 

\comment{
\begin{figure}[]
\centering
\includegraphics[width = .8\linewidth]{proj_norms.png}
\caption{\small{Does the clean image lie in the subspace? At each noise level, the subspaces of 50 random noisy images are found. The corresponding clean images are projected onto the subspaces and the norm of the projection relative to the norm of the clean image is calculated and averaged across all 50 images. }}
\label{fig:sig_norms}
\end{figure}
}

\section{Discussion}


In this work, we show that removing constant terms from CNN architectures ensures strong generalization across noise levels, and also provides interpretability of the denoising method via linear-algebra techniques. We provide insights into the relationship between bias and generalization through a set of observations. Theoretically, we argue that if the denoising network operates by projecting the noisy observation onto a linear space of “clean” images, then that space should include all rescalings of those images, and thus, the origin. This property can be guaranteed by eliminating bias from the network. Empirically, in networks that allow bias, the net bias of the trained network is quite small within the training range. However, outside the training range the net bias grows dramatically resulting in poor performance, which suggests that the bias may be the cause of the failure to generalize. In addition, when we remove bias from the architecture, we preserve performance within the training range, but achieve near-perfect generalization, even to noise levels more than 10x those in the training range. These observations do not fully elucidate how our network achieves its remarkable generalization- only that bias prevents that generalization, and its removal allows it.

It is of interest to examine whether bias removal can facilitate generalization in noise distributions beyond Gaussian, as well as other image-processing tasks, such as image restoration and image compression. We have trained bias-free networks on uniform noise and found that they generalize outside the training range. In fact, bias-free networks trained for Gaussian noise generalize well when tested on uniform noise (Figures \ref{fig:gen_per_uniform} and  \ref{fig:others_psnr_gaussian_uniform}). In addition, we have applied our methodology to image restoration (simultaneous deblurring and denoising). Preliminary results indicate that bias-free networks generalize across noise levels for a fixed blur level, whereas networks with bias do not (Figure \ref{fig:gen_restoration}). An interesting question for future research is whether it is possible to achieve generalization across blur levels. Our initial results indicate that removing bias is not sufficient to achieve this. 

Finally, our linear-algebraic analysis uncovers interesting aspects of the denoising map, but these interpretations are very local: small changes in the input image change the activation patterns of the network, resulting in a change in the corresponding linear mapping. Extending the analysis to reveal global characteristics of the neural-network functionality is a challenging direction for future research. 


\bibliography{bibli.bib}
\bibliographystyle{iclr2020_conference}

\appendix

\section{Description of denoising architectures}
\label{sec:architectures}
In this section we describe the denoising architectures used for our computational experiments in more detail. 

\subsection{DnCNN}

We implement BF-DnCNN based on the architecture of the Denoising CNN (DnCNN)~\citep{zhang2017beyond}. DnCNN consists of $20$ convolutional layers, each consisting of $3 \times 3$ filters and $64$ channels, batch normalization~\citep{ioffe2015batch}, and a ReLU nonlinearity. It has a skip connection from the initial layer to the final layer, which has no nonlinear units. To construct a bias-free DnCNN (BF-DnCNN) we remove all sources of additive bias, including the mean parameter of the batch-normalization in every layer (note however that the scaling parameter is preserved).

\subsection{Recurrent CNN}
\label{sec:recursive_framework}
Inspired by \cite{DURR}, we consider a recurrent framework that produces a denoised image estimate of the form $\hat{x}_t = f(\hat{x}_{t-1}, y_{\text{noisy}} )$, at time $t$ where $f$ is a neural network. We use a 5-layer fully convolutional network with $3 \times 3$ filters in all layers and $64$ channels in each intermediate layer to implement $f$. We initialize the denoised estimate as the noisy image, i.e $\hat{x}_0 := y_{\text{noisy}}$. For the version of the network with net bias, we add trainable additive constants to every filter in all but the last layer. During training, we run the recurrence for a maximum of $T$ times, sampling $T$ uniformly at random from $\{1, 2, 3, 4\}$ for each mini-batch. At test time we fix $T=4$.


\subsection{UNet}
\label{sec:unet}
Our UNet model~\citep{ronneberger2015unet} has the following layers:
\begin{enumerate}
    \item \emph{conv1} - Takes in input image and maps to $32$ channels with $5 \times 5$ convolutional kernels.
    \item \emph{conv2} - Input: $32$ channels. Output: $32$ channels. $3 \times 3$ convolutional kernels. 
    \item \emph{conv3} -  Input: $32$ channels. Output: $64$ channels. $3 \times 3$ convolutional kernels with stride 2.
    \item \emph{conv4}-  Input: $64$ channels. Output: $64$ channels. $3 \times 3$ convolutional kernels.
    \item \emph{conv5}-  Input: $64$ channels. Output: $64$ channels. $3 \times 3$ convolutional kernels with dilation factor of 2.
    \item \emph{conv6}-  Input: $64$ channels. Output: $64$ channels. $3 \times 3$ convolutional kernels with dilation factor of 4.
    \item \emph{conv7}-  Transpose Convolution layer. Input: $64$ channels. Output: $64$ channels. $4 \times 4$ filters with stride $2$.
    \item \emph{conv8}-  Input: $96$ channels. Output: $64$ channels. $3 \times 3$ convolutional kernels. The input to this layer is the concatenation of the outputs of layer \emph{conv7} and \emph{conv2}.
    \item \emph{conv9}-  Input: $32$ channels. Output: $1$ channels. $5 \times 5$ convolutional kernels.
\end{enumerate}

The structure is the same as in~\cite{DURR}, but without recurrence. For the version with bias, we add trainable additive constants to all the layers other than \emph{conv9}. This configuration of UNet assumes even width and height, so we remove one row or column from images in with odd height or width.

\subsection{Simplified DenseNet}
\label{sec:densenet}
Our simplified version of the DenseNet architecture~\citep{huang2017densnet} has $4$ blocks in total. Each block is a fully convolutional $5$-layer CNN with $3 \times 3$ filters and $64$ channels in the intermediate layers with ReLU nonlinearity. The first three blocks have an output layer with $64$ channels while the last block has an output layer with only one channel. The output of the $i^{th}$ block is concatenated with the input noisy image and then fed to the $(i+1)^{th}$ block, so the last three blocks have $65$ input channels. In the version of the network with bias, we add trainable additive parameters to all the layers except for the last layer in the final block.

\section{Datasets and training procedure}
\label{sec:data_training}

Our experiments are carried out on $180 \times 180$ natural images from the Berkeley Segmentation Dataset~\citep{MartinFTM01}.  We use  a training set of $400$ images. The training set is augmented via downsampling, random flips, and random rotations of patches in these images~\citep{zhang2017beyond}. A test set containing $68$ images is used for evaluation. We train the DnCNN and it's bias free model on patches of size $50 \times 50$, which yields a total of 541,600 clean training patches. For the remaining architectures, we use patches of size $128 \times 128$ for a total of 22,400 training patches.  

We train DnCNN and its bias-free counterpart using the Adam Optimizer~\citep{kingma2014adam} over $70$ epochs with an initial learning rate of $10^{-3}$ and a decay factor of $0.5$ at the $50^{th}$ and $60^{th}$ epochs, with no early stopping. We train the other models using the Adam optimizer with an initial learning rate of $10^{-3}$ and train for $50$ epochs with a learning rate schedule which decreases by a factor of $0.25$ if the validation PSNR decreases from one epoch to the next. We use early stopping and select the model with the best validation PSNR. 

\section{Additional results}

In this section we report additional results of our computational experiments:

\begin{itemize}
    \item Figure~\ref{fig:bias_plots_others} shows the first-order analysis of the residual of the different architectures described in Section~\ref{sec:architectures}, except for DnCNN which is shown in Figure~\ref{fig:bias}. 
    \item Figures~\ref{fig:bias_decomp_dncnn} and~\ref{fig:bias_decomp_others} visualize the linear and net bias terms in the first-order decomposition of an example image at different noise levels.
    \item Figure~\ref{fig:others_psnr} shows the PSNR results for the experiments described in Section~\ref{sec:generalization}.
    \item Figure~\ref{fig:others_ssim} shows the SSIM results for the experiments described in Section~\ref{sec:generalization}.
    \item Figures~\ref{fig:filter_branch_others}, \ref{fig:filter_bfdncnn} and \ref{fig:filter_others} show the equivalent filters at several pixels of two example images for different architectures (see Section~\ref{sec:filters}).
    \item Figure~\ref{fig:singular_vec_others} shows the singular vectors of the Jacobian of different BF-CNNs (see Section~\ref{sec:svd}).
    \item Figure~\ref{fig:sing_values_others} shows the singular values of the Jacobian of different BF-CNNs (see Section~\ref{sec:svd}).
    \item Figure~\ref{fig:gen_per_uniform} and \ref{fig:others_psnr_gaussian_uniform} shows that networks trained on noise samples drawn from Gaussian distribution with $0$ mean generalizes to noise drawn from uniform distribution with $0$ mean during test time. Experiments follow the procedure described in  Section~\ref{sec:generalization} except that the networks are evaluated on a different noise distribution during the test time. 
    \item Figure~\ref{fig:gen_restoration} shows the application of BF-CNN and CNN to the task of image restoration, where the image is corrupted with both noise and blur at the same time. We show that BF-CNNs can generalize outside the training range for noise levels for a fixed blur level, but do not outperform CNN when generalizing to unseen blur levels.
    
\end{itemize}

 
\def\nsp{\hspace*{0in}}
\begin{figure}
\def\f1ht{0.95in}
\centering 
\begin{tabular}{
>{\centering\arraybackslash}m{0.07\linewidth}>{\centering\arraybackslash}m{0.23\linewidth}>{\centering\arraybackslash}m{0.23\linewidth}>{\centering\arraybackslash}m{0.23\linewidth}}
  \scriptsize{R-CNN } & \nsp\includegraphics[height=\f1ht]{./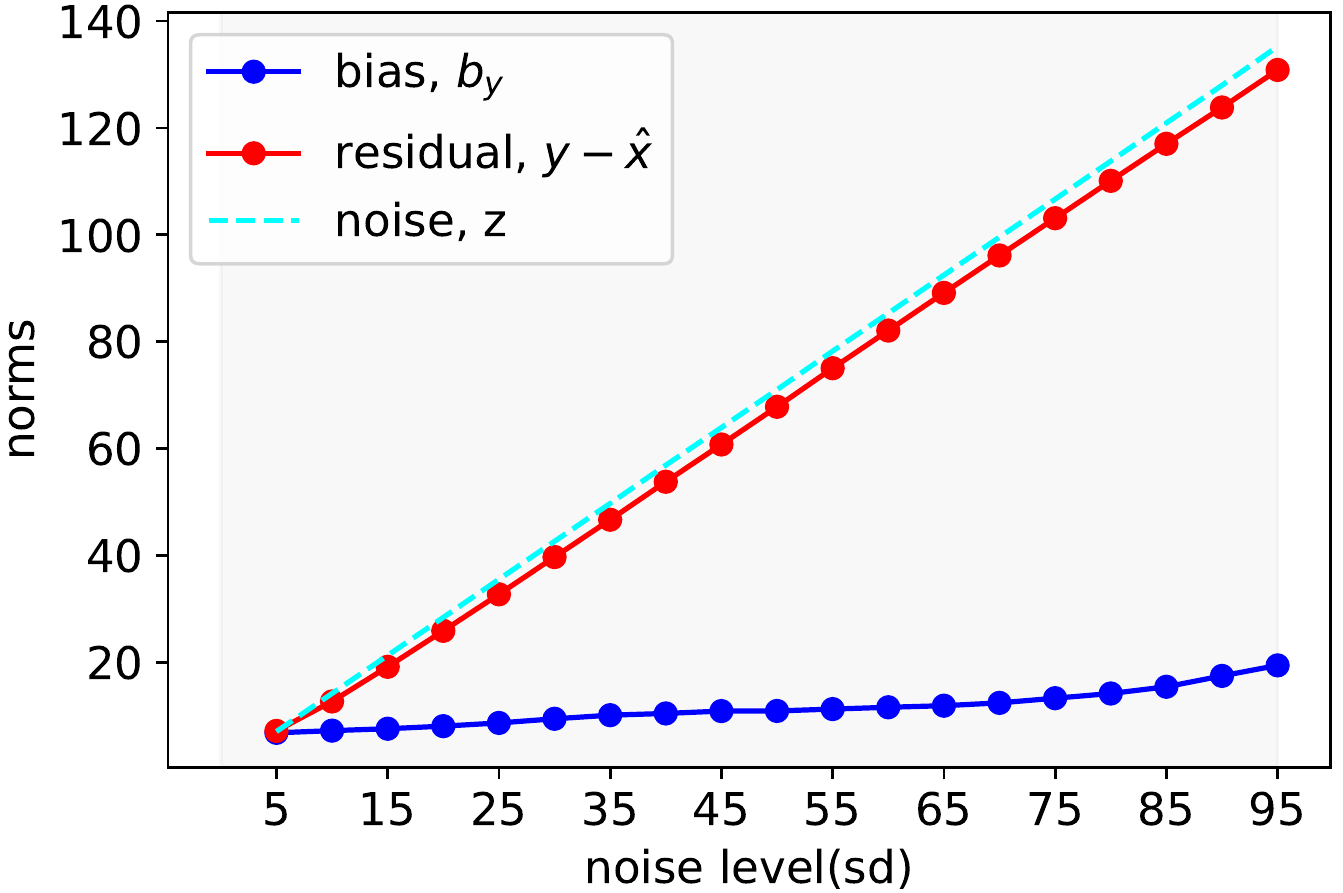}\nsp &
  \nsp\includegraphics[height=\f1ht]{./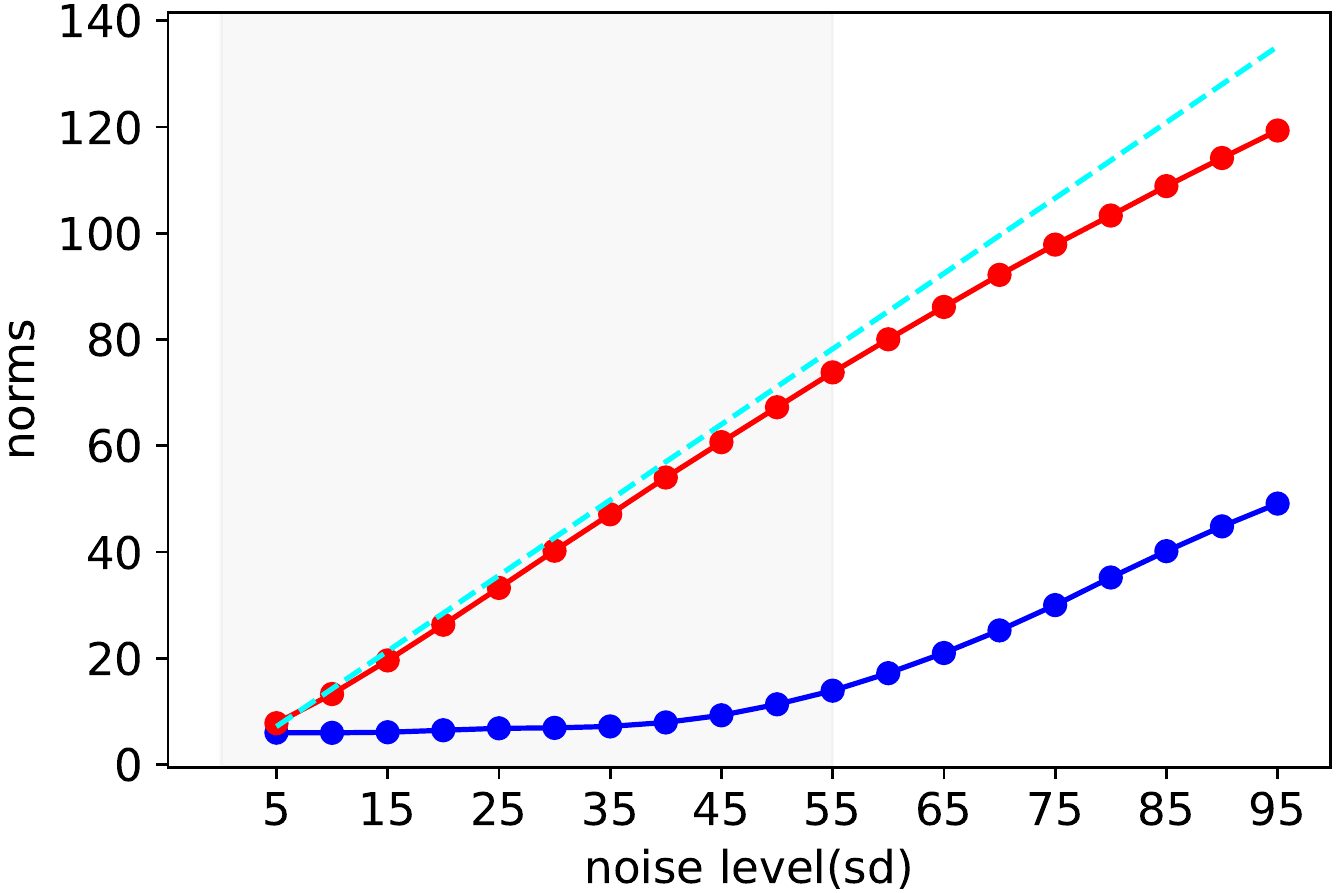}\nsp &
  \nsp\includegraphics[height=\f1ht]{./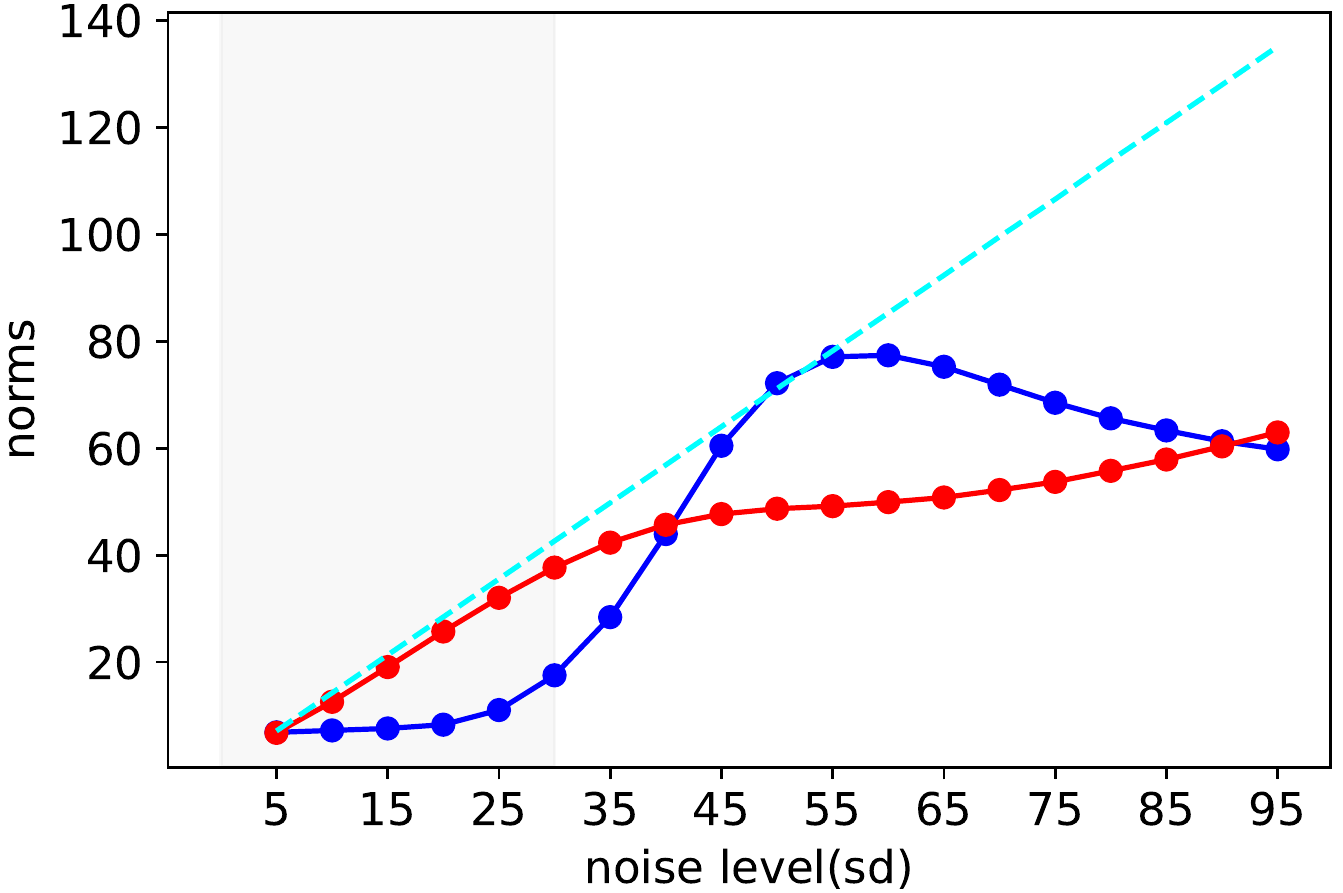} \\
  
    \scriptsize{UNet} & 
    \nsp\includegraphics[height=\f1ht]{./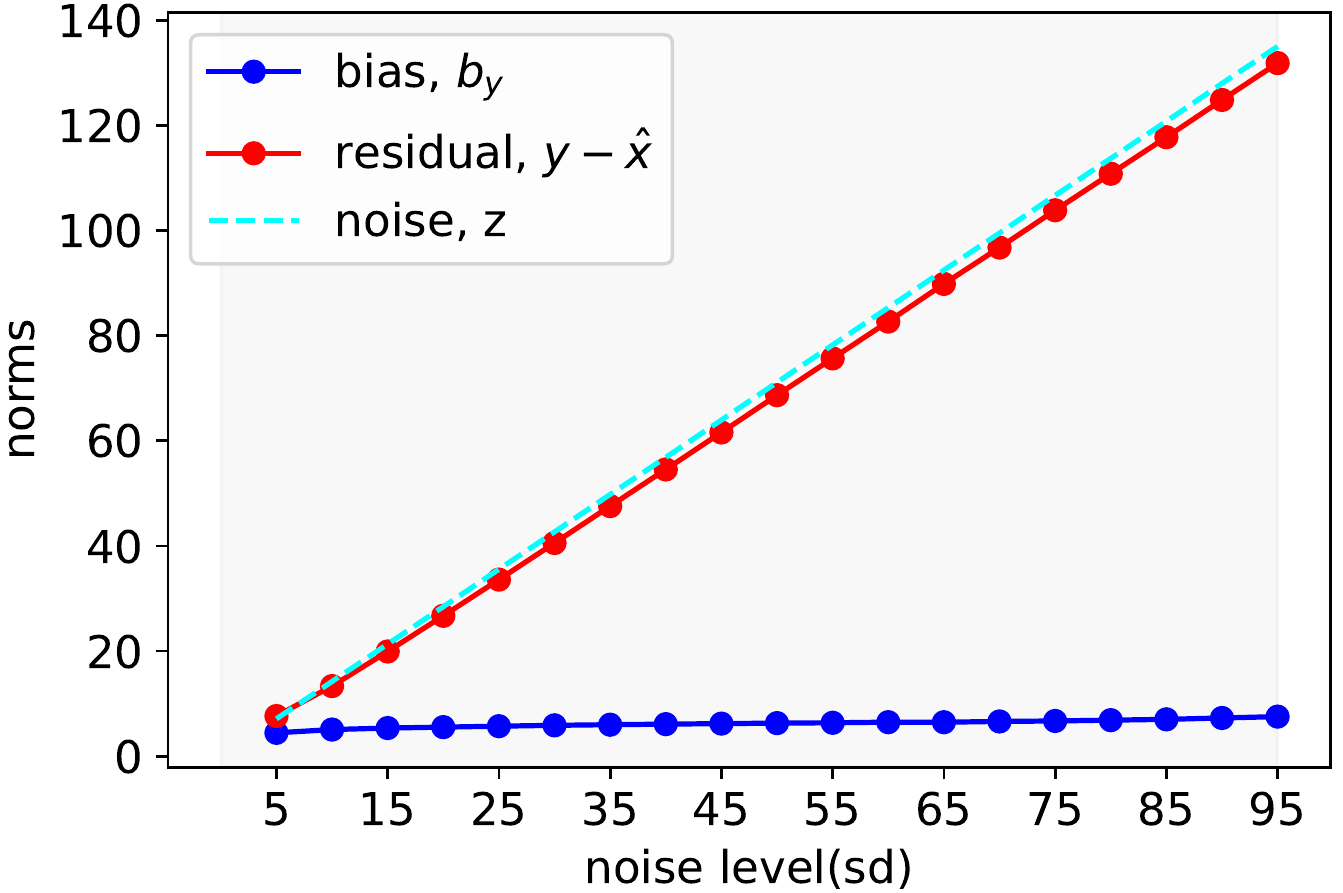}\nsp &
  \nsp\includegraphics[height=\f1ht]{./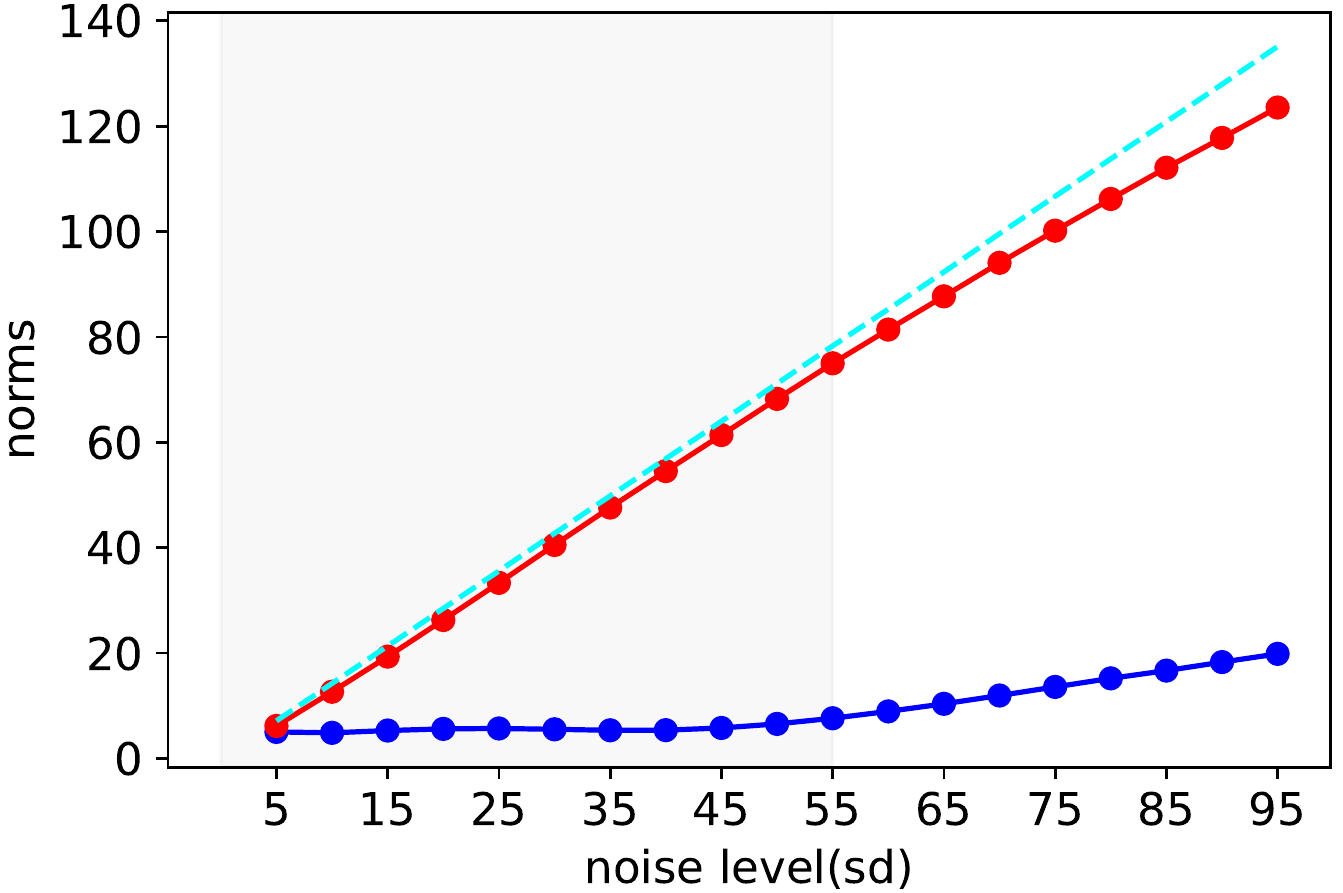}\nsp &
  \nsp\includegraphics[height=\f1ht]{./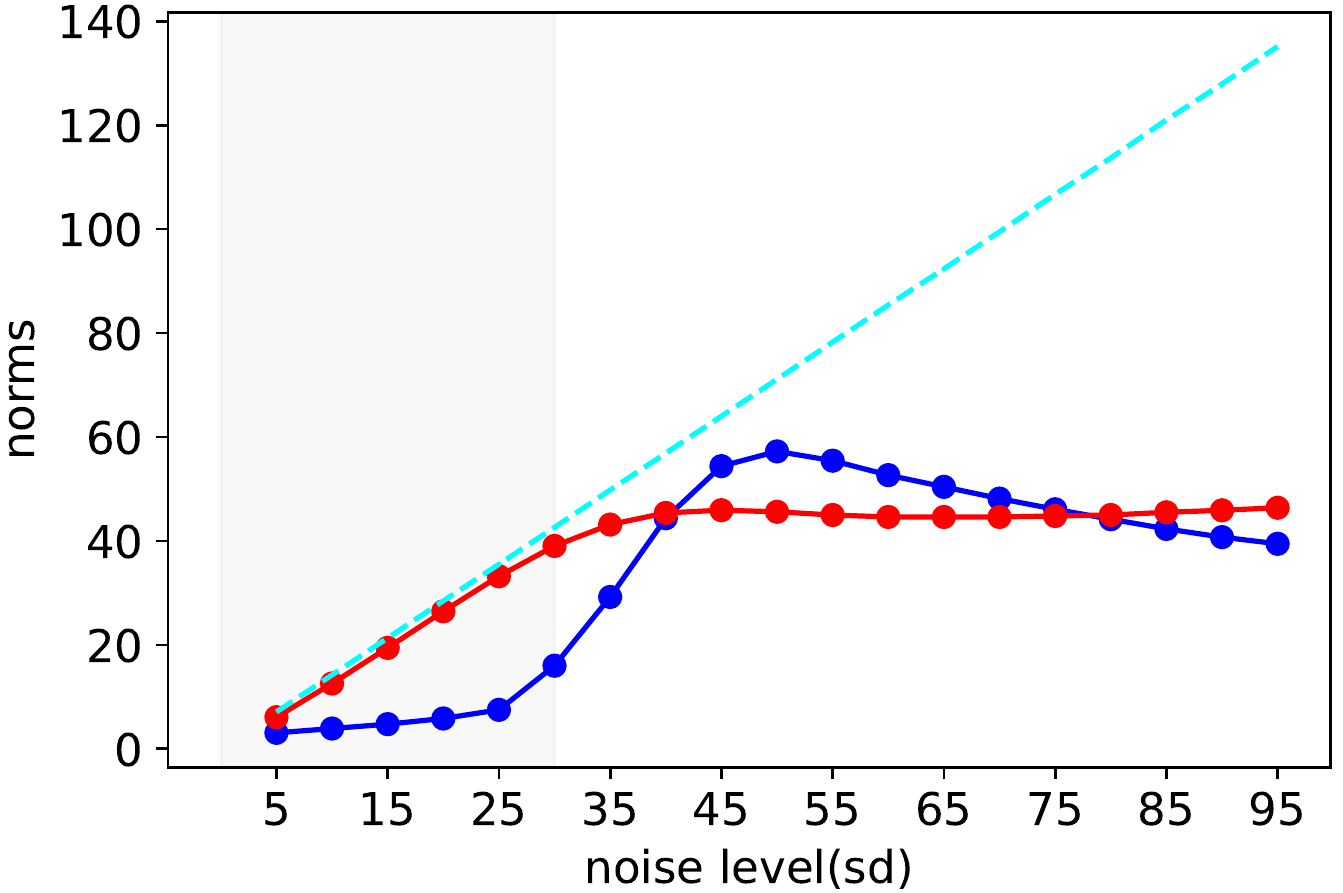} \\

     \scriptsize{DenseNet} & 
    \nsp\includegraphics[height=\f1ht]{./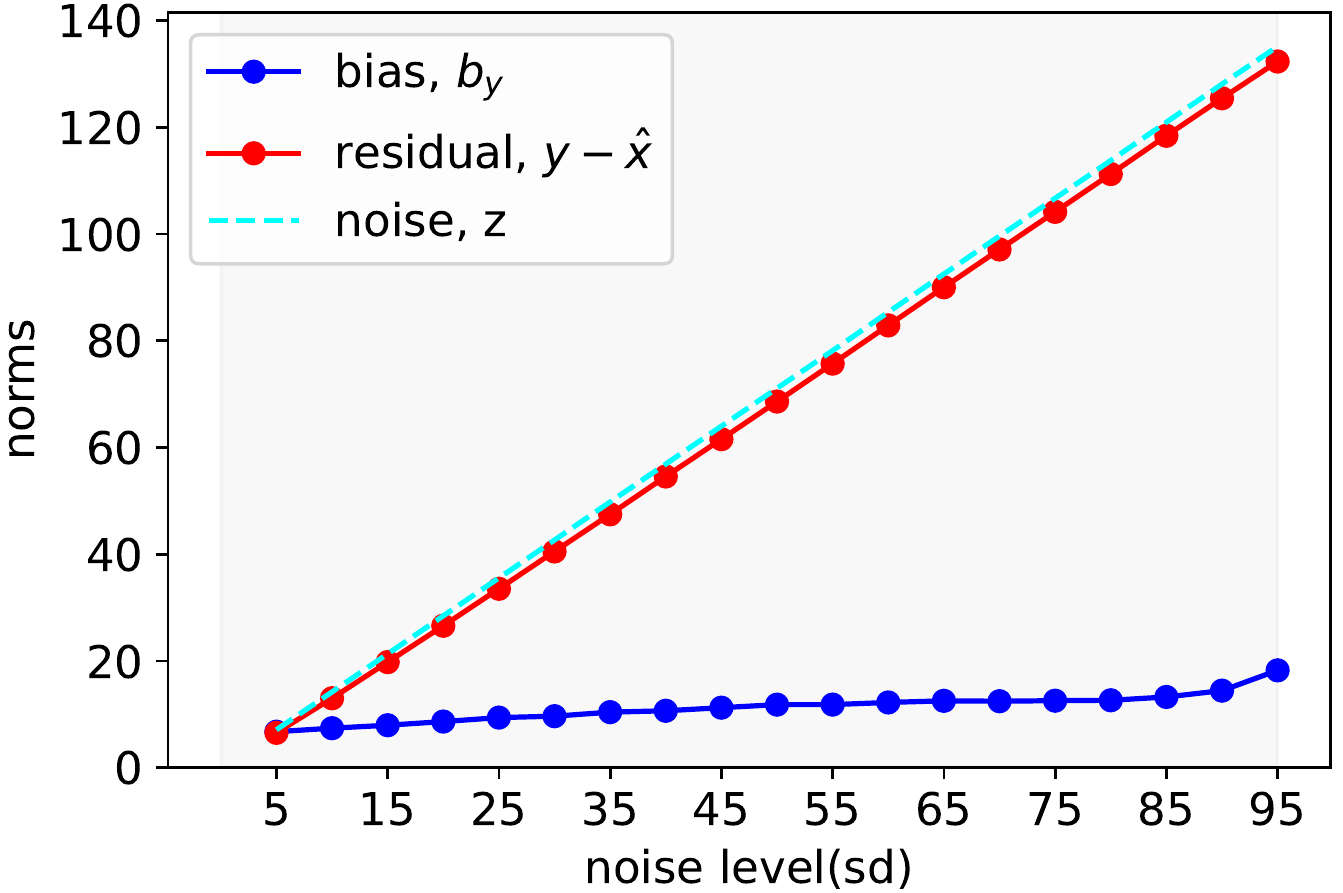}\nsp &
  \nsp\includegraphics[height=\f1ht]{./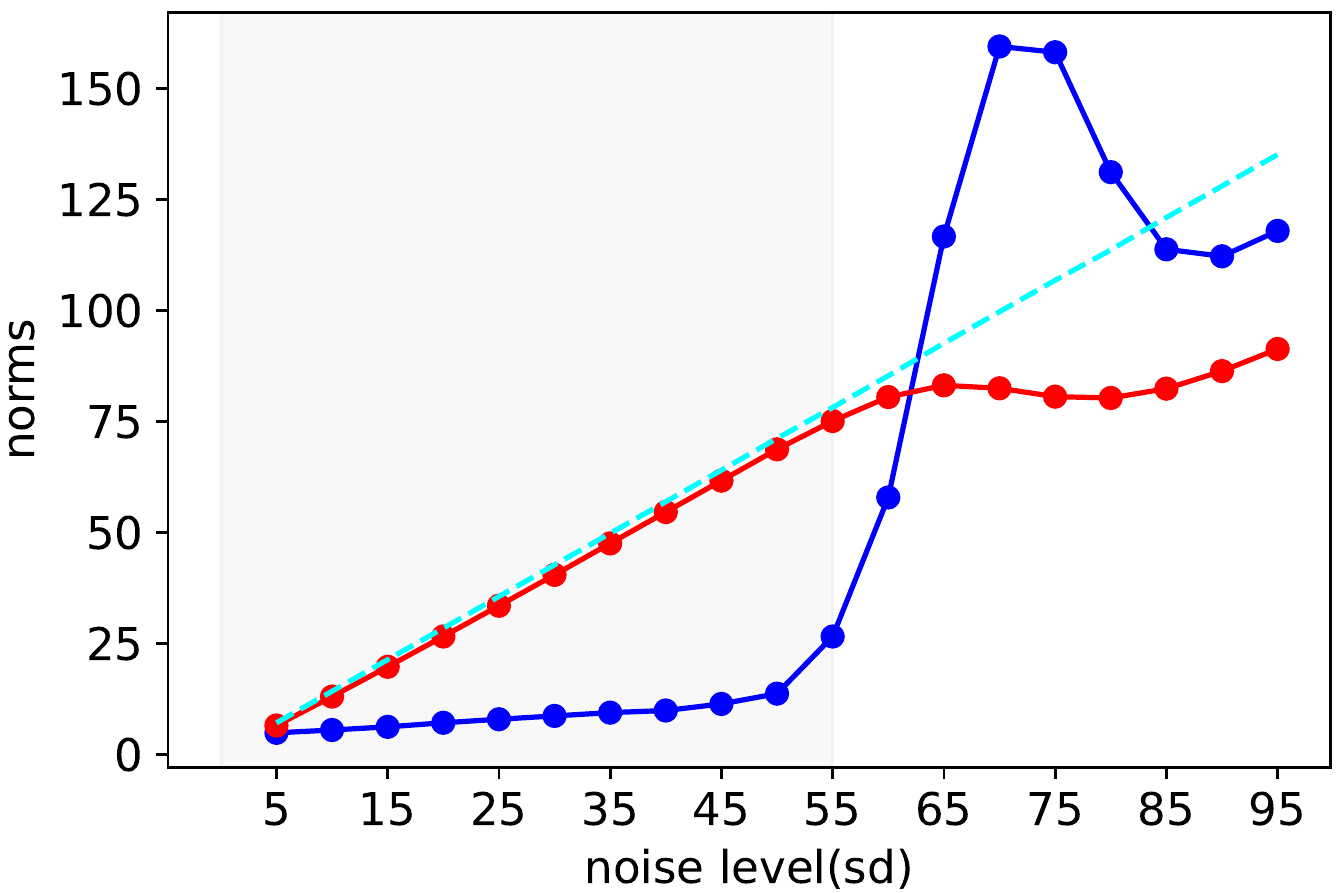}\nsp &
  \nsp\includegraphics[height=\f1ht]{./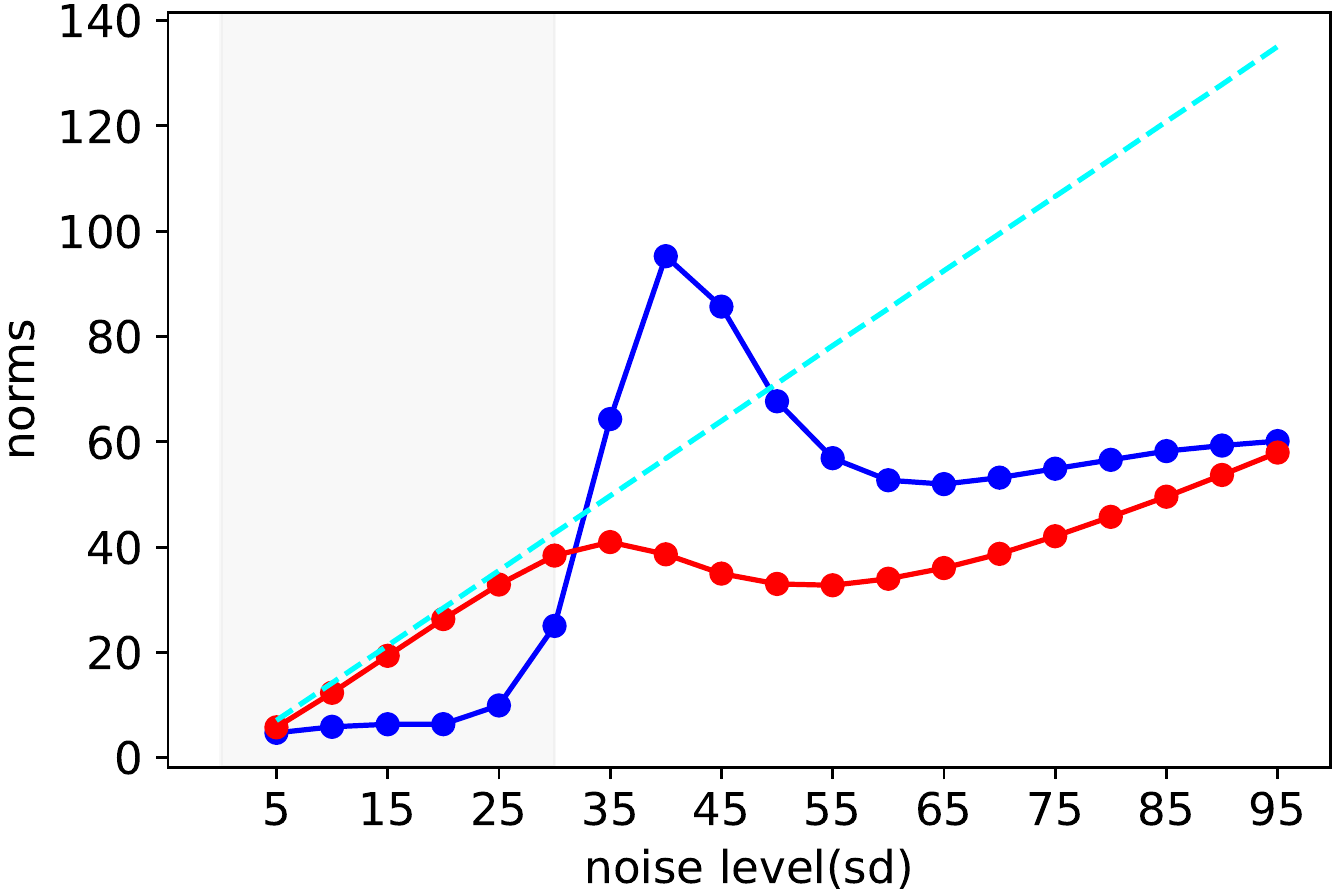}
  \end{tabular}\\[-0.5ex]
\caption{First-order analysis of the residual of Recurrent-CNN (Section \ref{sec:recursive_framework}), UNet (Section \ref{sec:unet}) and DenseNet (Section \ref{sec:densenet}) as a function of noise level. The plots show the magnitudes of the residual and the net bias averaged over 68 images in Set68 test set of Berkeley Segmentation Dataset \citep{MartinFTM01} for networks trained over different training ranges. The range of noises used for training is highlighted in gray. {(left)} When the network is trained over the full range of noise levels ($\sigma \in [0,100]$) the net bias is small, growing slightly as the noise increases. {(middle and right)} When the network is trained over the a smaller range ($\sigma\in [0,55]$ and  $\sigma\in [0,30]$), the net bias grows explosively for noise levels outside the training range. This coincides with the dramatic drop in performance due to overfitting, reflected in the difference between the residual and the true noise. }
\label{fig:bias_plots_others}
\end{figure}



\def\nsp{\hspace*{0in}}
\begin{figure}
\def\f1ht{0.97in}
\def\g1ht{1.0in}
\centering 
\begin{tabular}{
>{\centering\arraybackslash}m{0.07\linewidth}>{\centering\arraybackslash}m{0.20\linewidth}>{\centering\arraybackslash}m{0.20\linewidth}>{\centering\arraybackslash}m{0.20\linewidth}>{\centering\arraybackslash}m{0.20\linewidth}}
&\footnotesize{Noisy Input ($y$)} \vspace{0.1cm} & \footnotesize{Denoised ($f(y)$)} \vspace{0.1cm} & \footnotesize{Linear Part ($A_y y$)} & \footnotesize{Net Bias ($b_y$)} \vspace{0.1cm} \\
  \scriptsize{$\sigma = 10$} & \nsp\includegraphics[height=\f1ht]{./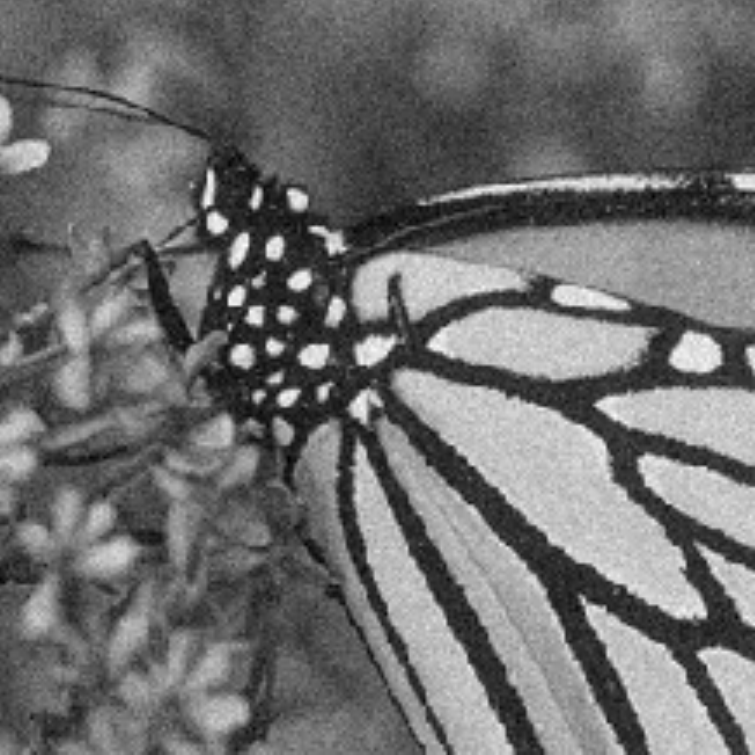}\nsp &
  \nsp\includegraphics[height=\f1ht]{./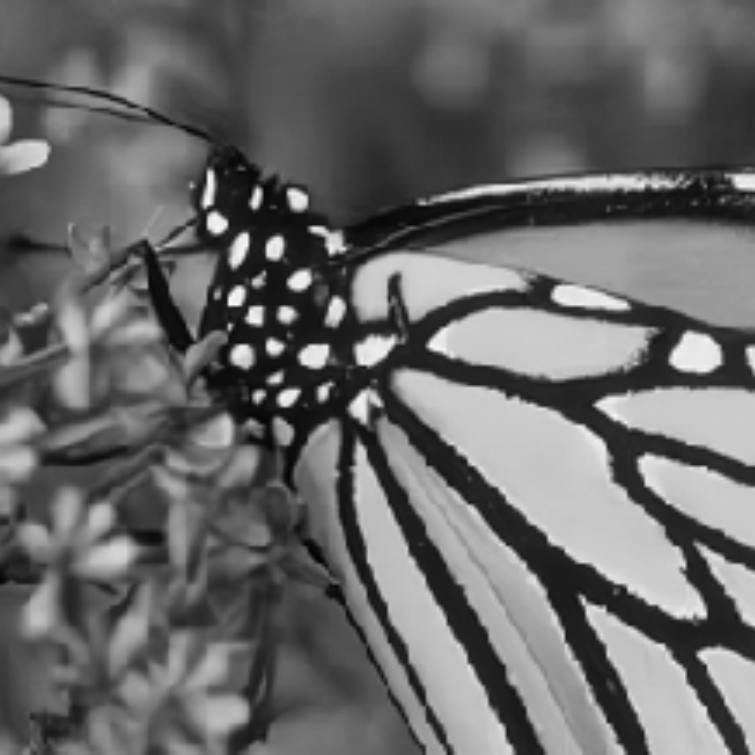}\nsp &
  \nsp\includegraphics[height=\g1ht]{./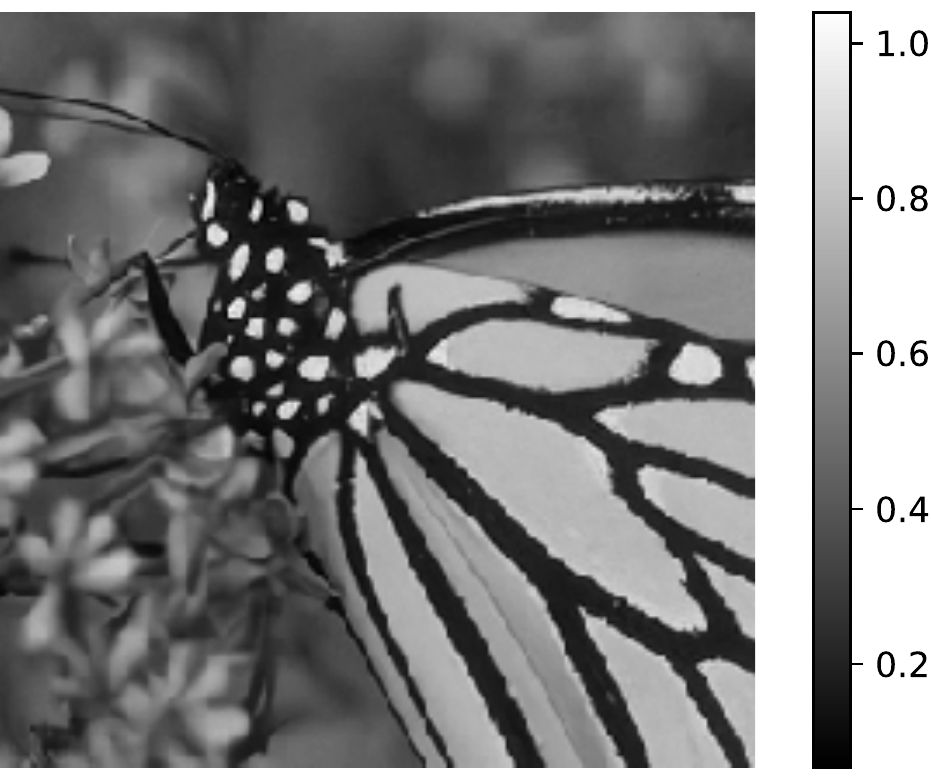} &
  \includegraphics[height=\g1ht]{./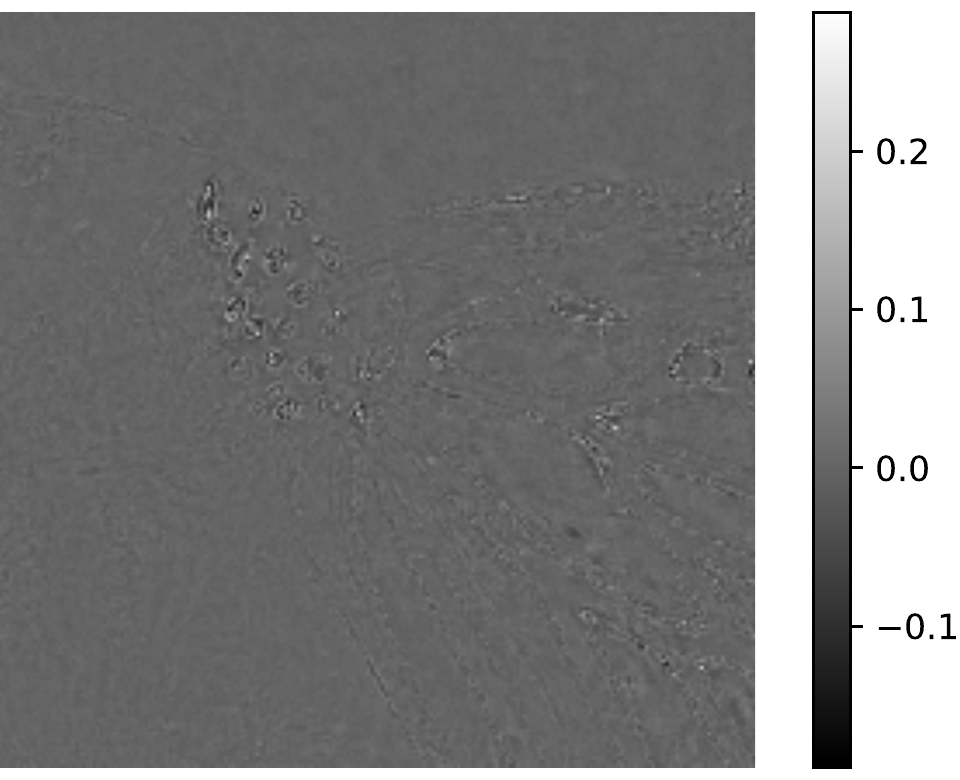}\nsp \\
  
    \scriptsize{$\sigma = 30$} & \nsp\includegraphics[height=\f1ht]{./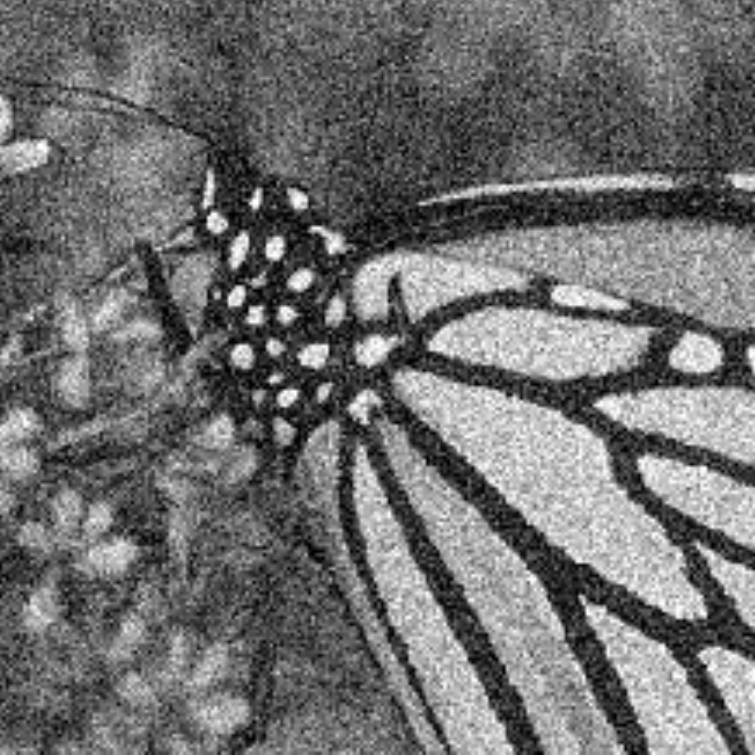}\nsp &
  \nsp\includegraphics[height=\f1ht]{./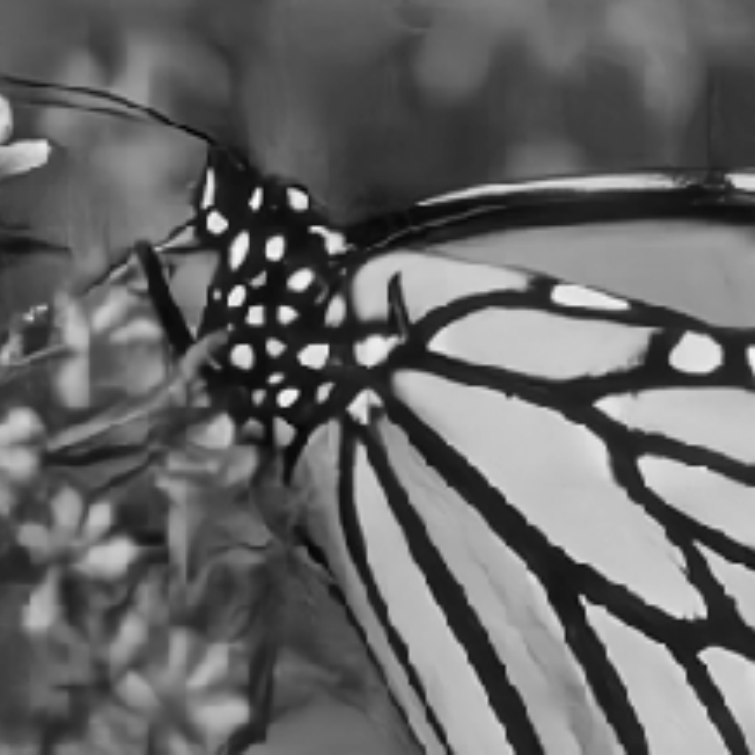}\nsp &
  \nsp\includegraphics[height=\g1ht]{./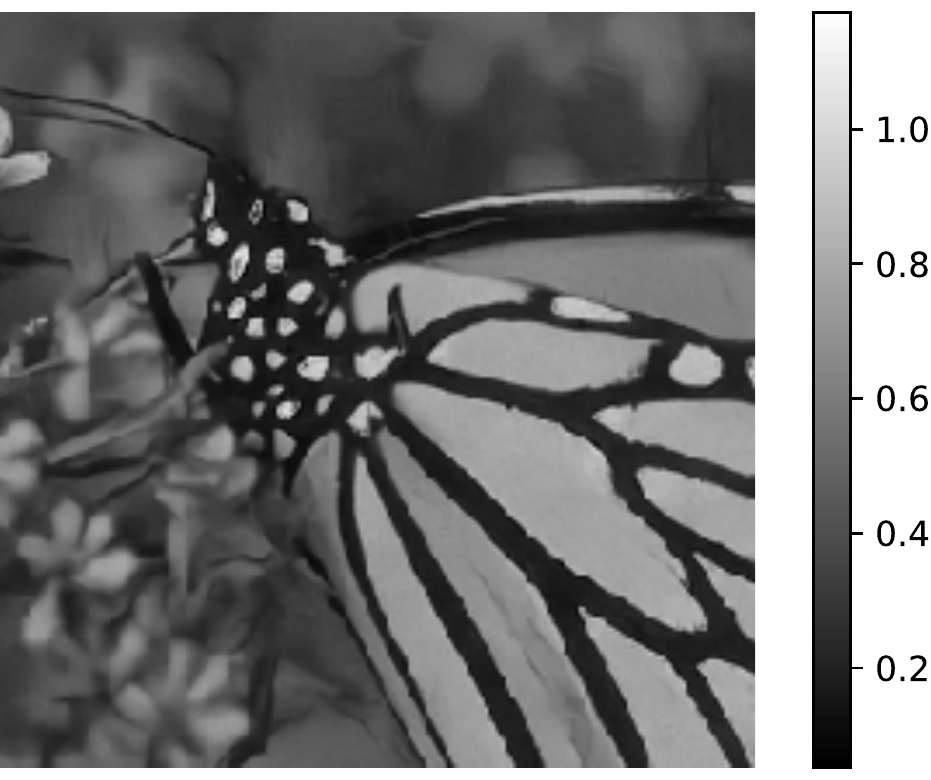} &
  \includegraphics[height=\g1ht]{./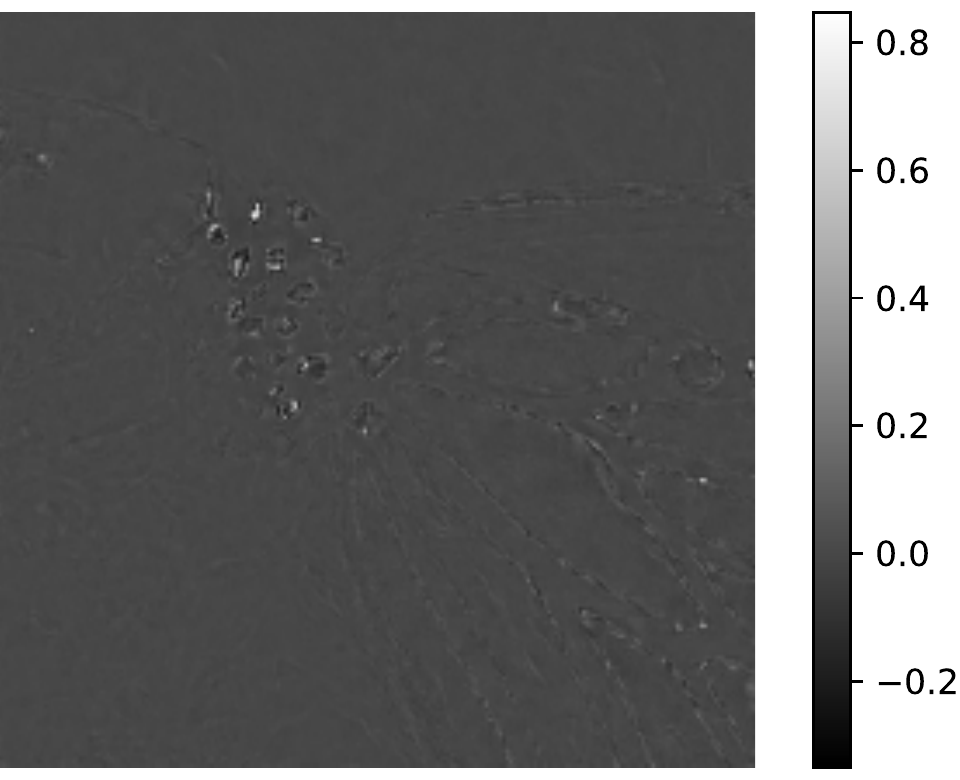}\nsp \\
  
    \scriptsize{$\sigma = 50$} & \nsp\includegraphics[height=\f1ht]{./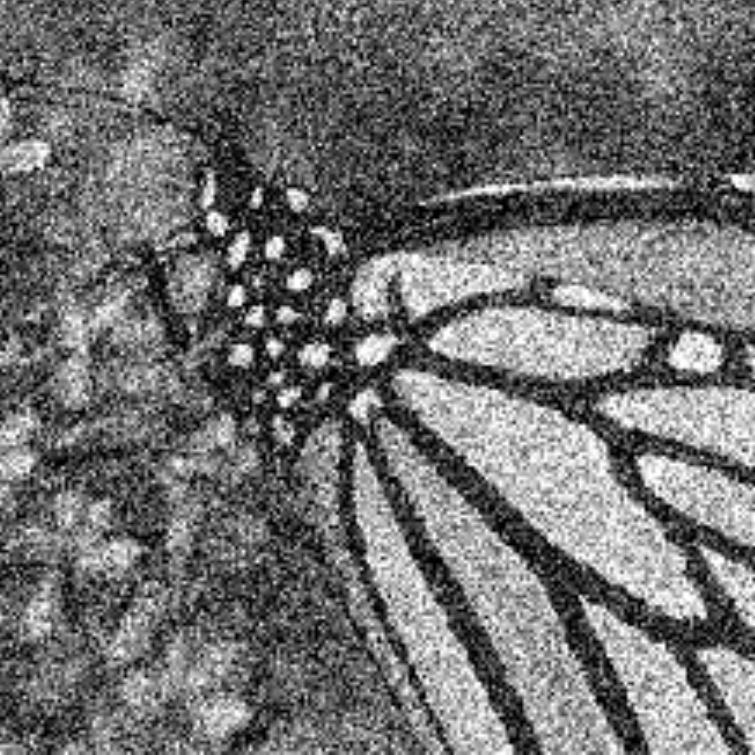}\nsp &
  \nsp\includegraphics[height=\f1ht]{./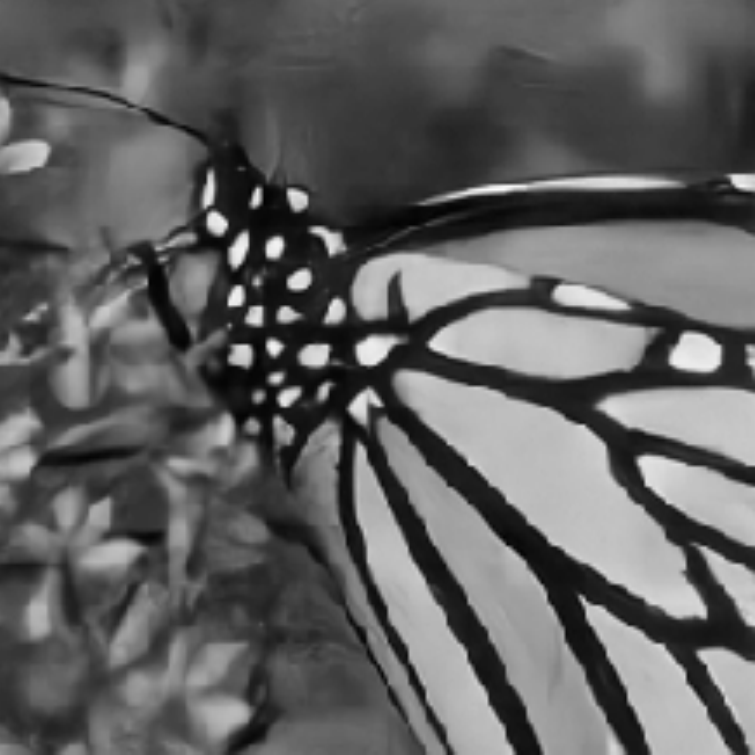}\nsp &
  \nsp\includegraphics[height=\g1ht]{./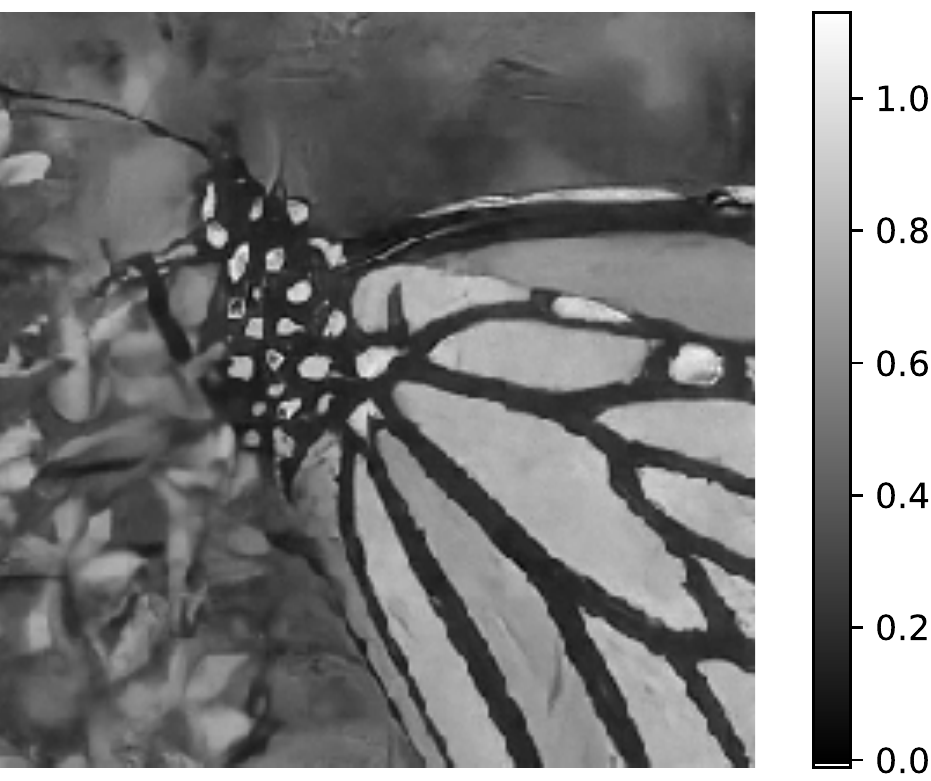} &
  \includegraphics[height=\g1ht]{./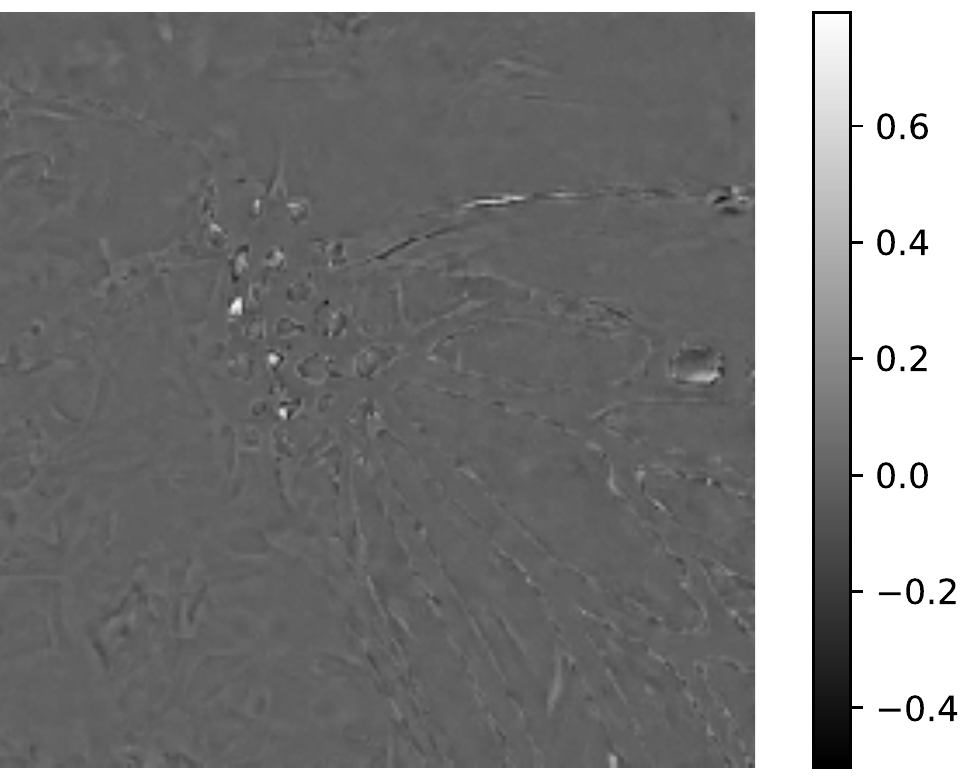}\nsp \\
  
    \scriptsize{$\sigma = 70^*$} & \nsp\includegraphics[height=\f1ht]{./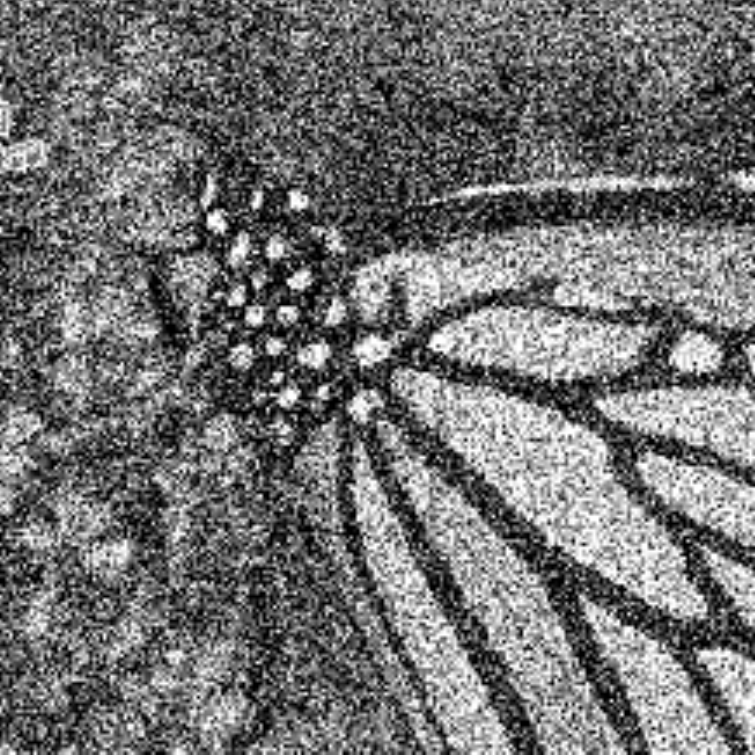}\nsp &
  \nsp\includegraphics[height=\f1ht]{./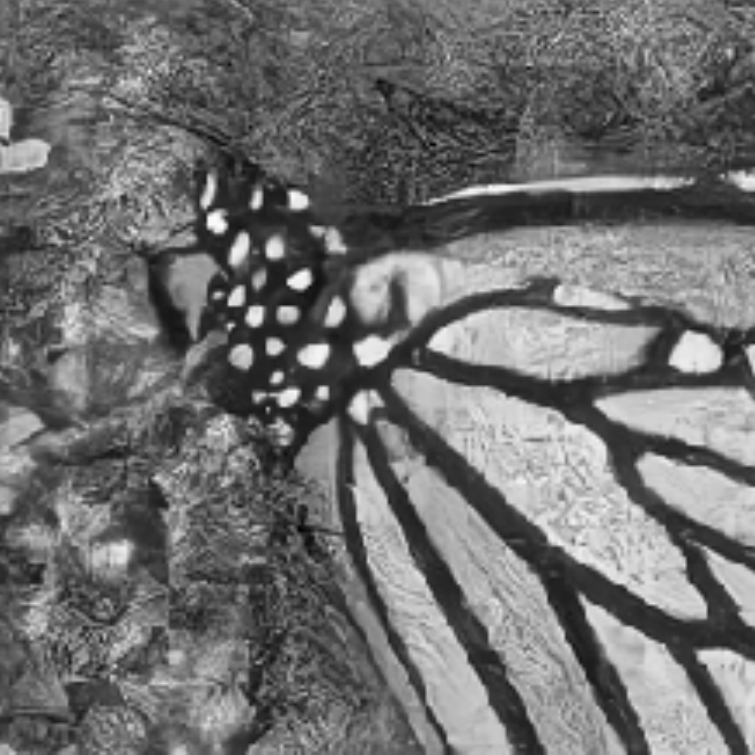}\nsp &
  \nsp\includegraphics[height=\g1ht]{./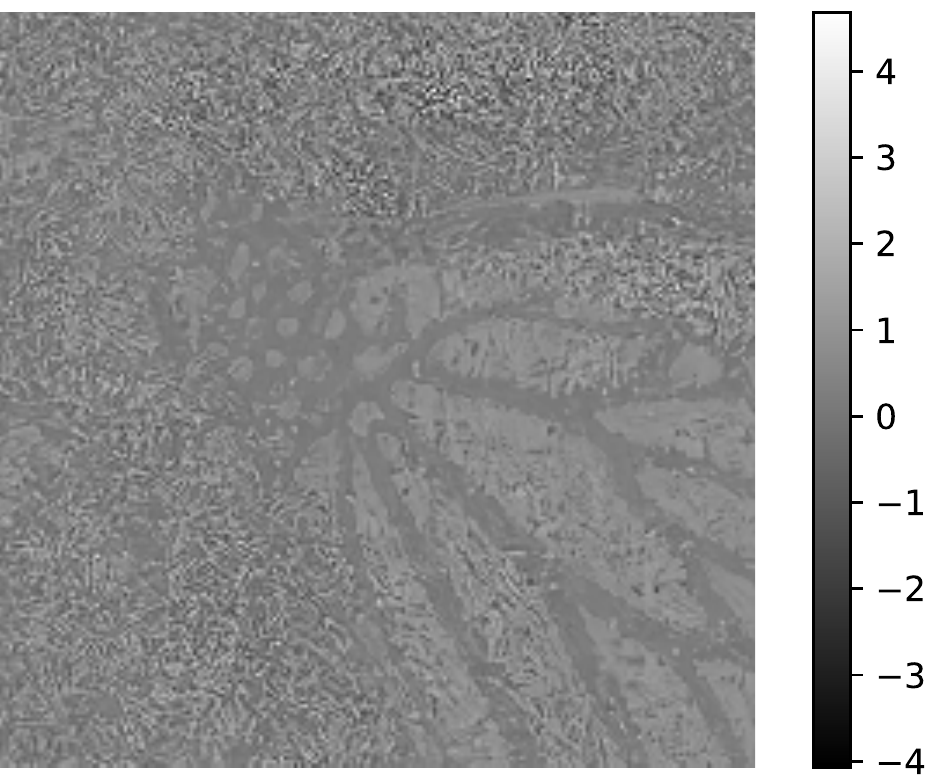} &
  \includegraphics[height=\g1ht]{./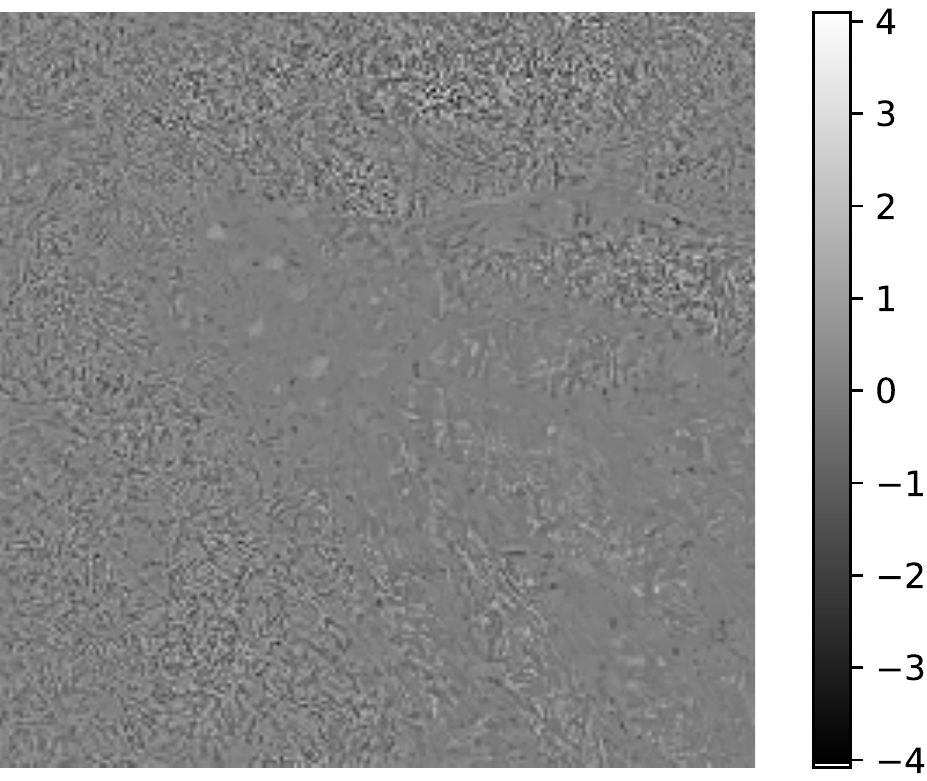}\nsp
  \end{tabular}\\[-0.5ex]
\caption{Visualization of the decomposition of output of DnCNN trained for noise range $[0, 55]$ into linear part and net bias. The noise level $\sigma = 70$ (highlighted by $*$) is outside the training range. Over the training range, the net bias is small, and the linear part is responsible for most of the denoising effort. However, when the network is evaluated out of the training range, the contribution of the bias increases dramatically, which coincides with a significant drop in denoising performance. }
\label{fig:bias_decomp_dncnn}
\end{figure}

\def\nsp{\hspace*{0in}}
\begin{figure}
\def\f1ht{0.97in}
\def\g1ht{1.0in}
\centering 
\begin{tabular}{
>{\centering\arraybackslash}m{0.07\linewidth}>{\centering\arraybackslash}m{0.20\linewidth}>{\centering\arraybackslash}m{0.20\linewidth}>{\centering\arraybackslash}m{0.20\linewidth}>{\centering\arraybackslash}m{0.20\linewidth}}
&\footnotesize{Noisy Input ($y$)} \vspace{0.1cm} & \footnotesize{Denoised ($f(y)$)} \vspace{0.1cm} & \footnotesize{Linear Part ($A_y y$)} & \footnotesize{Net
Bias ($b_y$)} \vspace{0.1cm} \\
  \scriptsize{R-CNN $\sigma = 10$} & \nsp\includegraphics[height=\f1ht]{./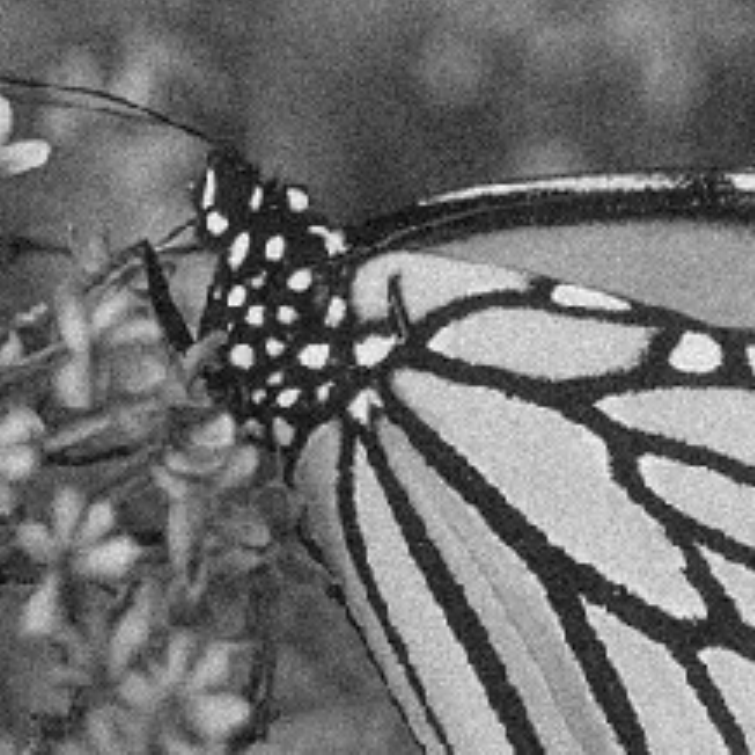}\nsp &
  \nsp\includegraphics[height=\f1ht]{./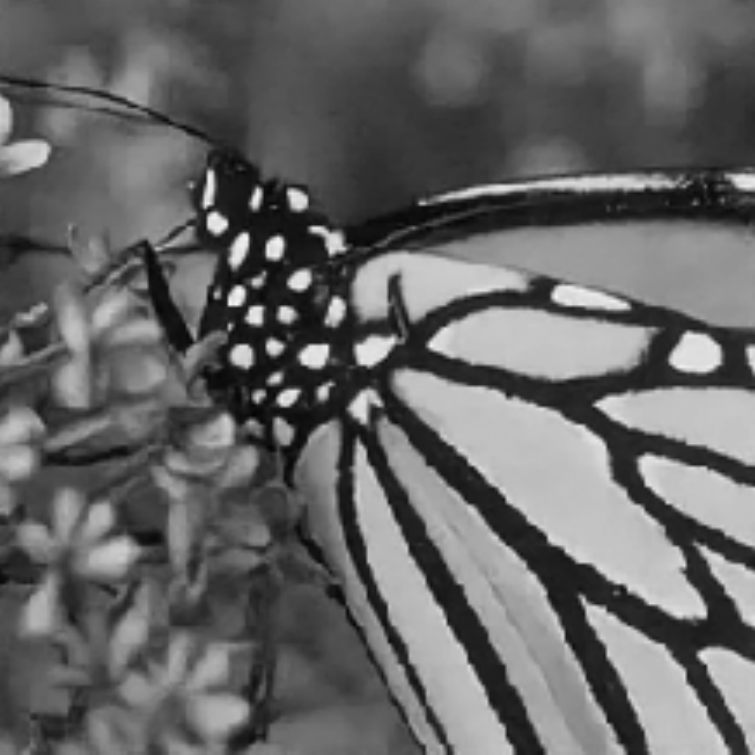}\nsp &
  \nsp\includegraphics[height=\g1ht]{./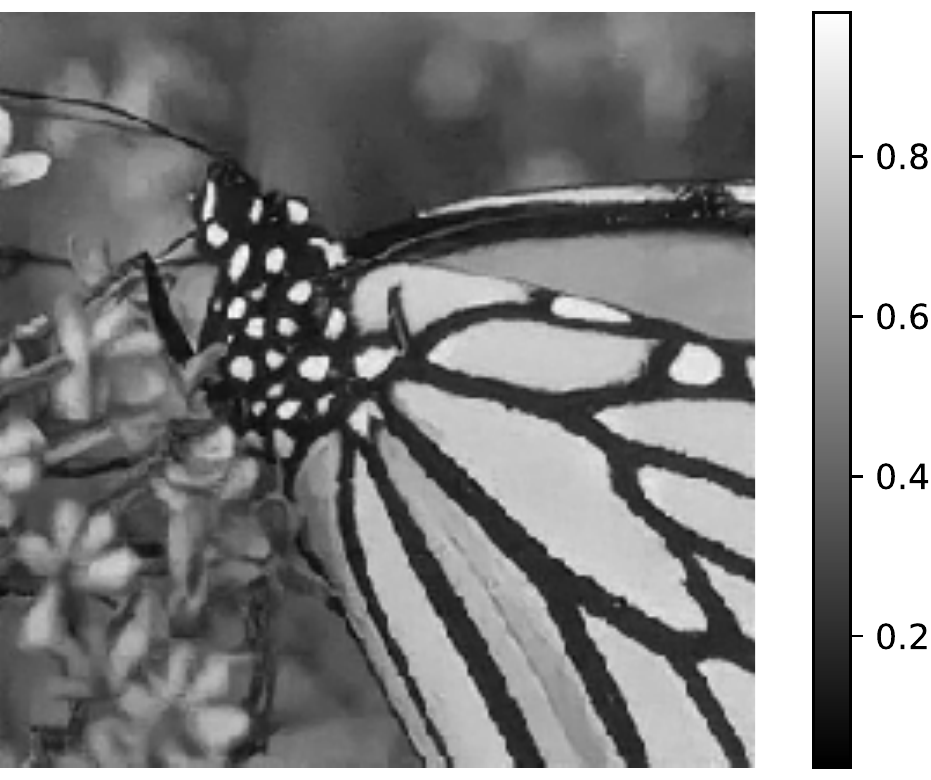} &
  \includegraphics[height=\g1ht]{./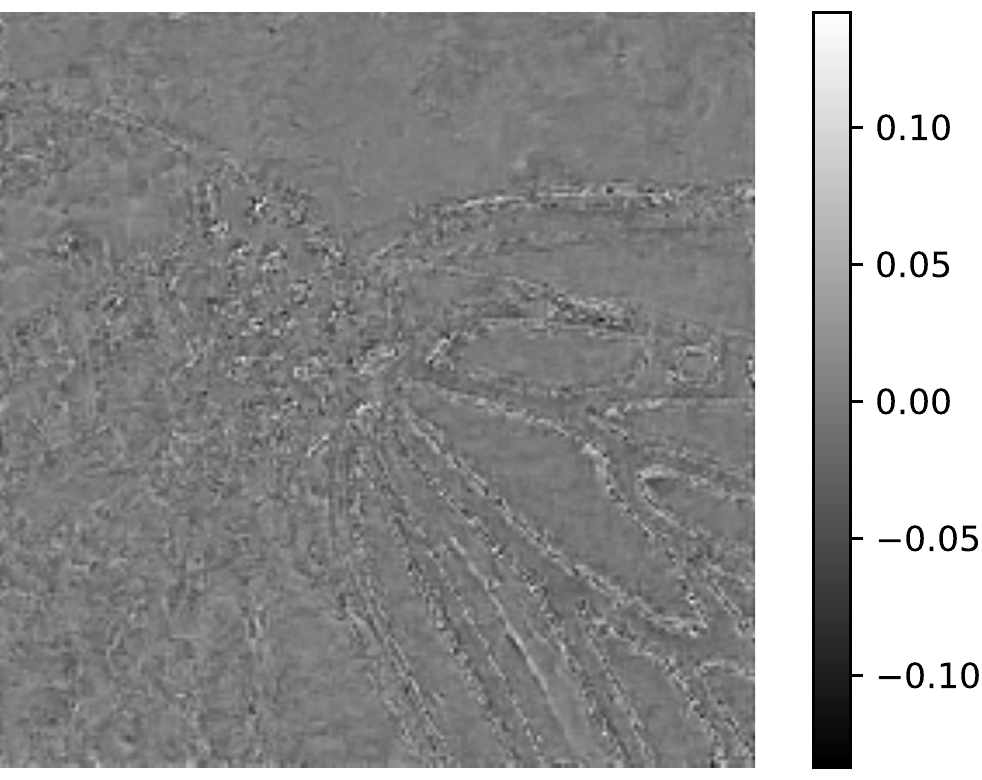}\nsp \\
  
    \scriptsize{R-CNN $\sigma = 90^*$} & \nsp\includegraphics[height=\f1ht]{./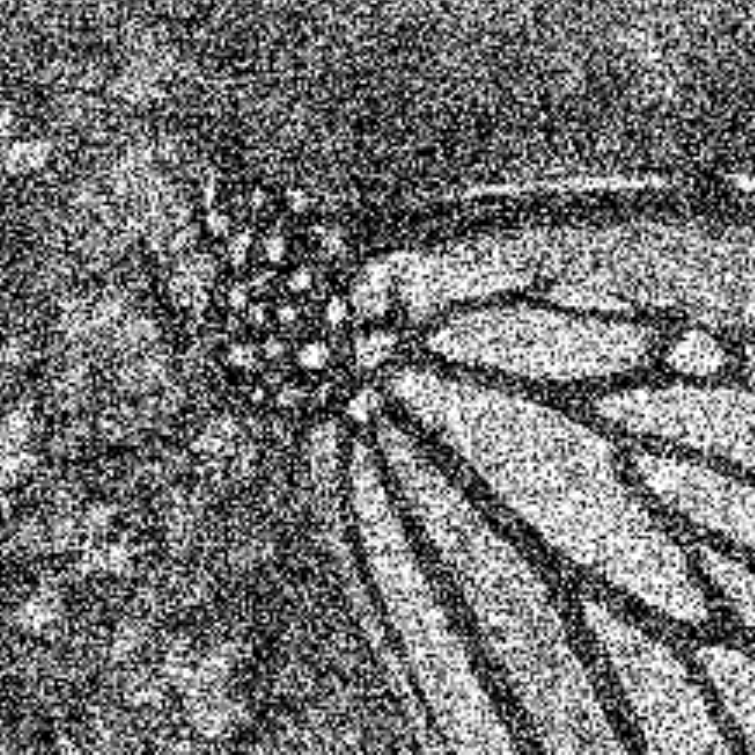}\nsp &
  \nsp\includegraphics[height=\f1ht]{./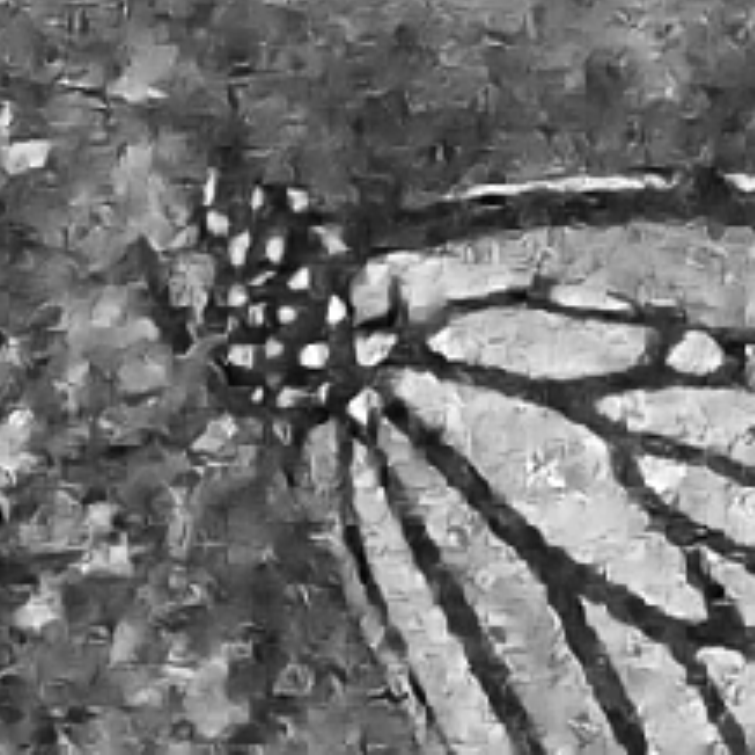}\nsp &
  \nsp\includegraphics[height=\g1ht]{./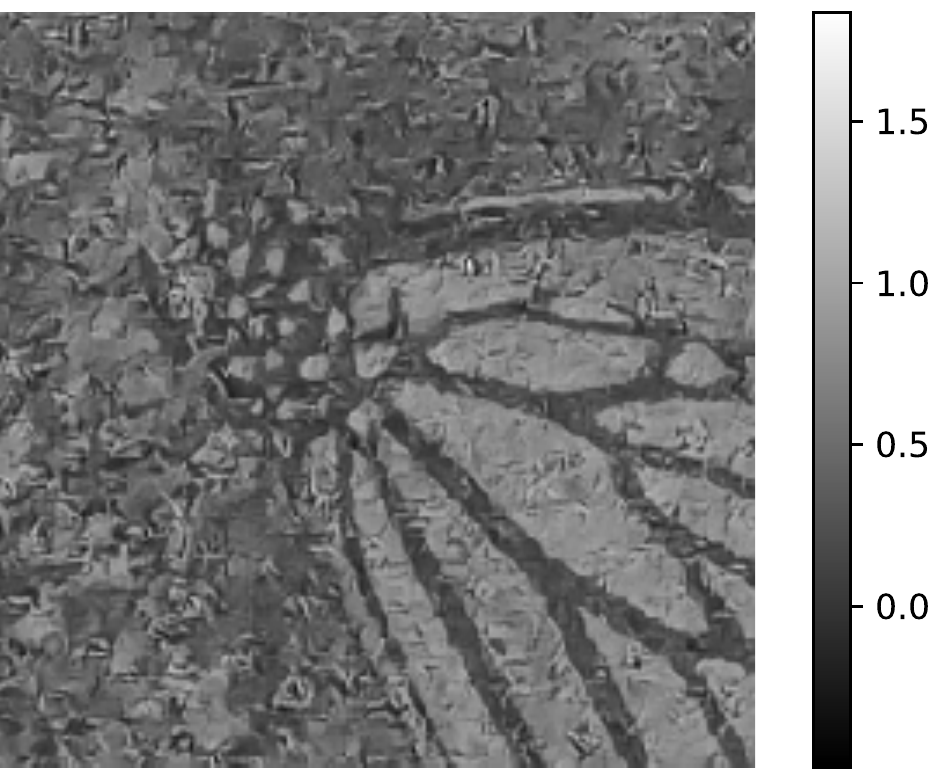} &
  \includegraphics[height=\g1ht]{./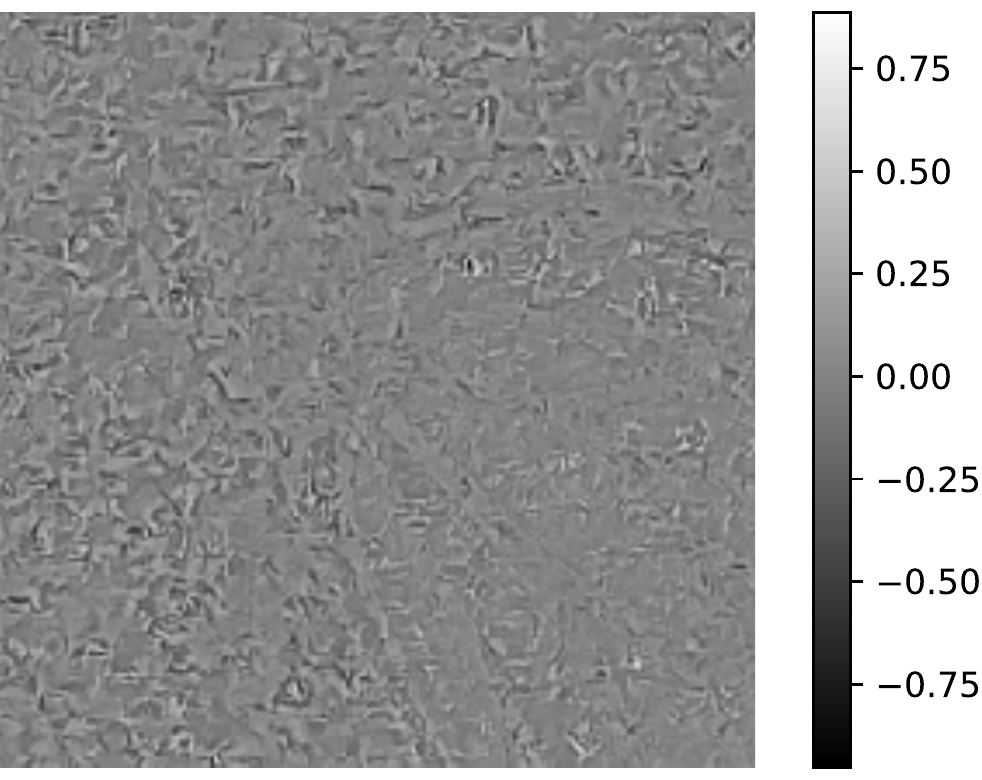}\nsp \\
  
    \scriptsize{UNet $\sigma = 10$} & \nsp\includegraphics[height=\f1ht]{./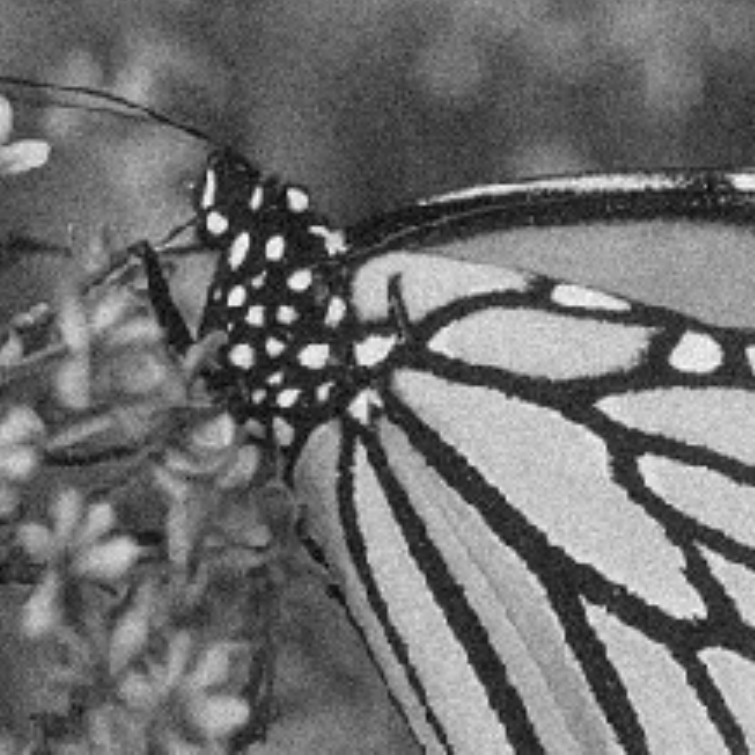}\nsp &
  \nsp\includegraphics[height=\f1ht]{./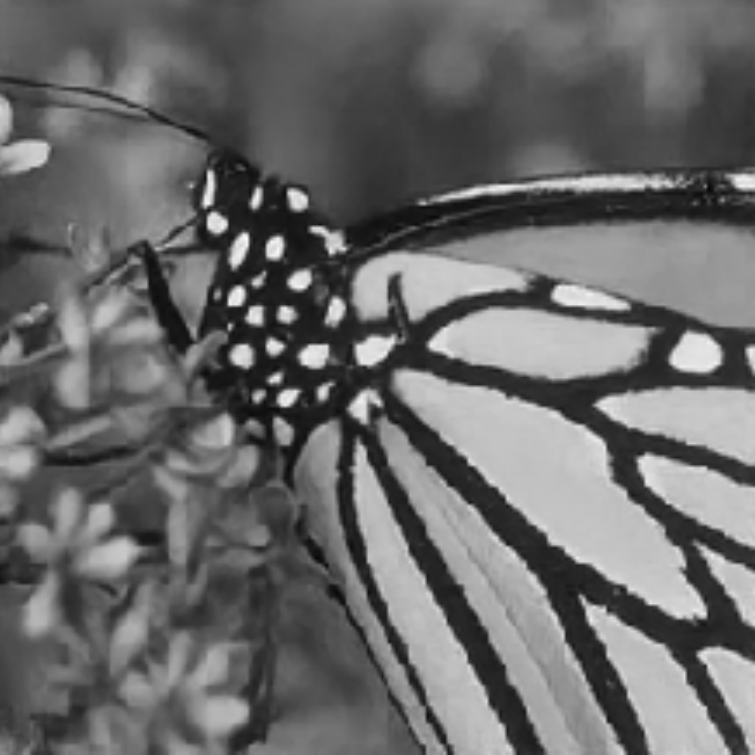}\nsp &
  \nsp\includegraphics[height=\g1ht]{./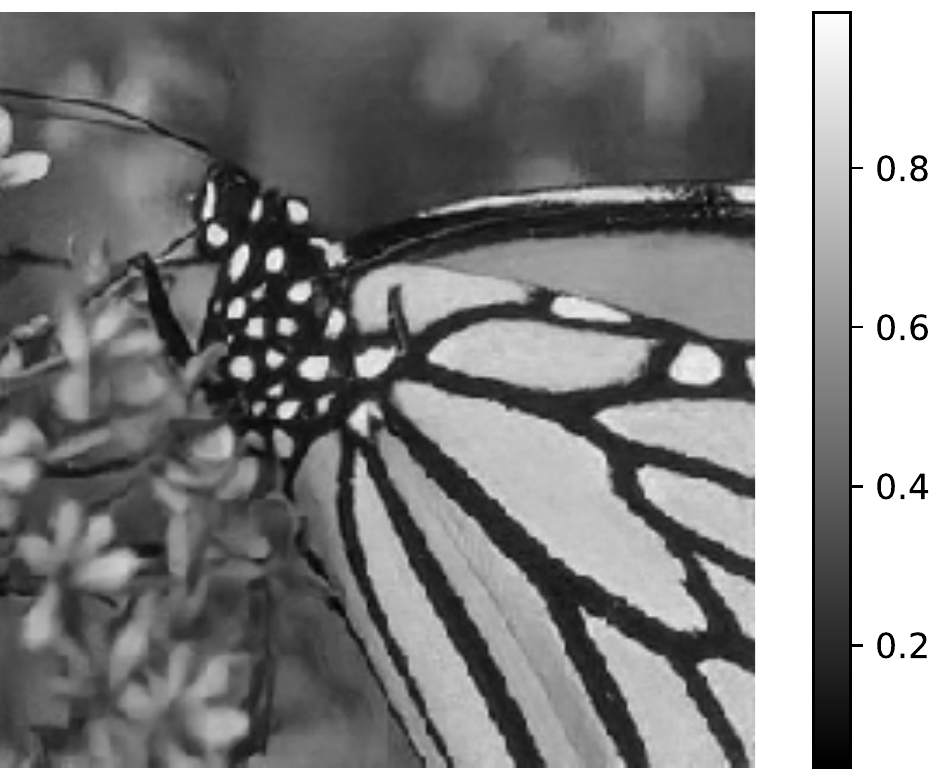} &
  \includegraphics[height=\g1ht]{./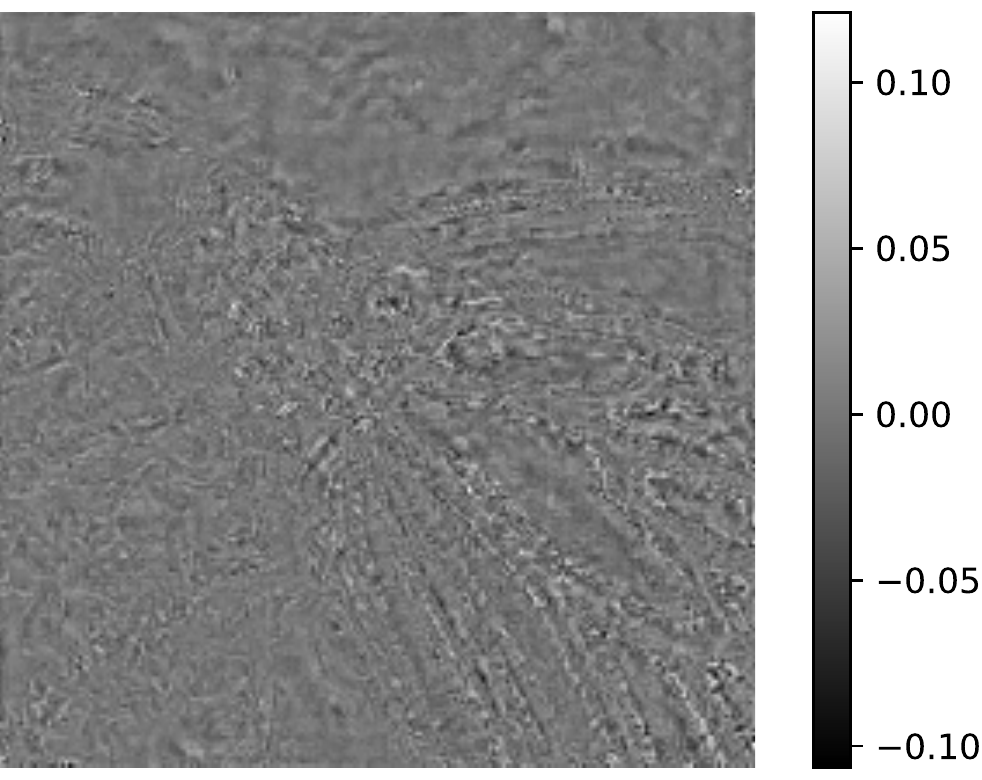}\nsp \\
  
    \scriptsize{UNet $\sigma = 90^*$} & \nsp\includegraphics[height=\f1ht]{./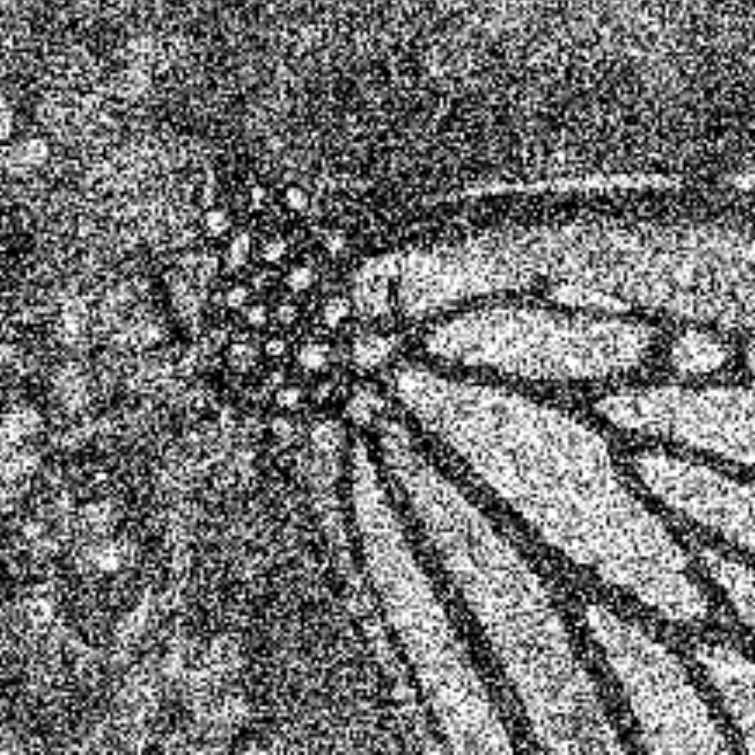}\nsp &
  \nsp\includegraphics[height=\f1ht]{./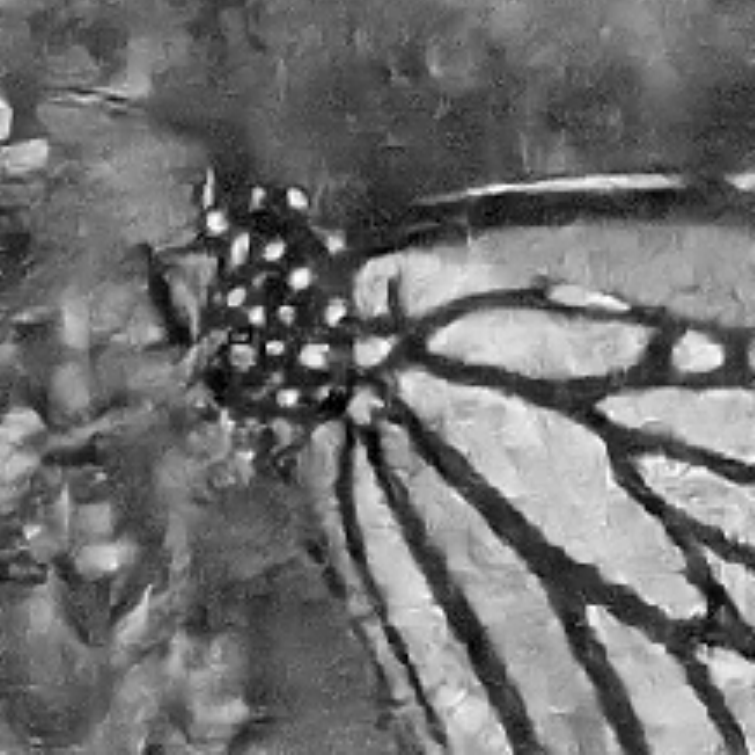}\nsp &
  \nsp\includegraphics[height=\g1ht]{./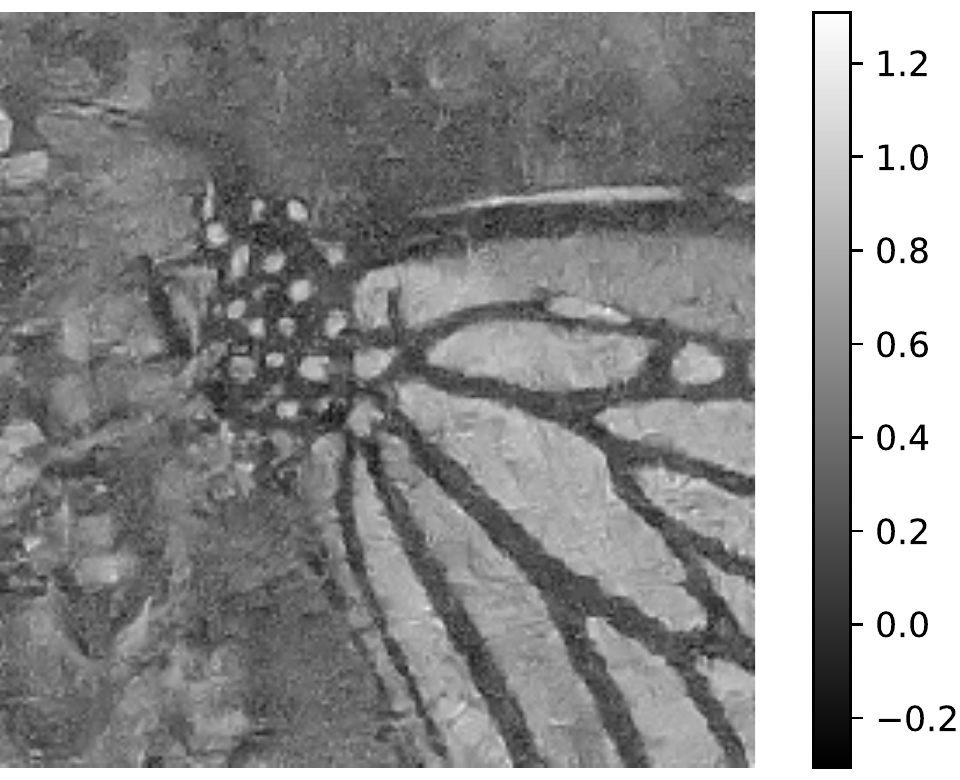} &
  \includegraphics[height=\g1ht]{./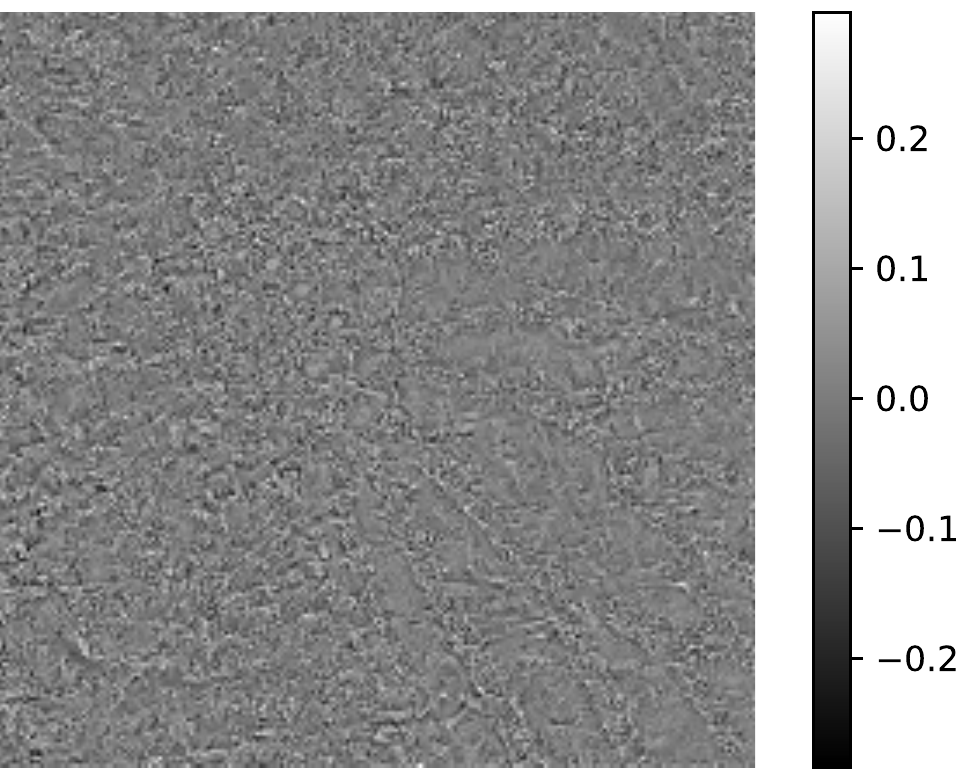}\nsp \\
  
    \scriptsize{DenseNet $\sigma = 10$} & \nsp\includegraphics[height=\f1ht]{./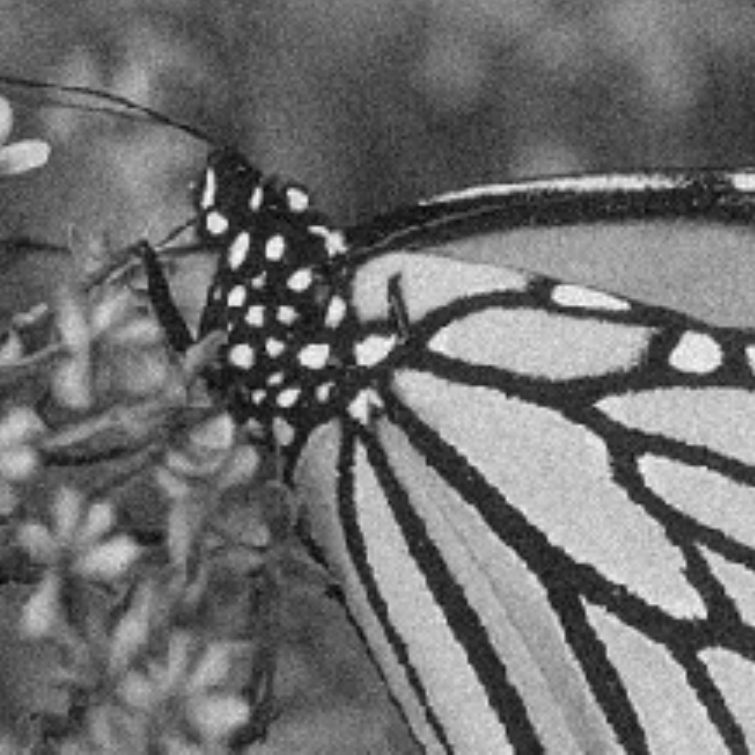}\nsp &
  \nsp\includegraphics[height=\f1ht]{./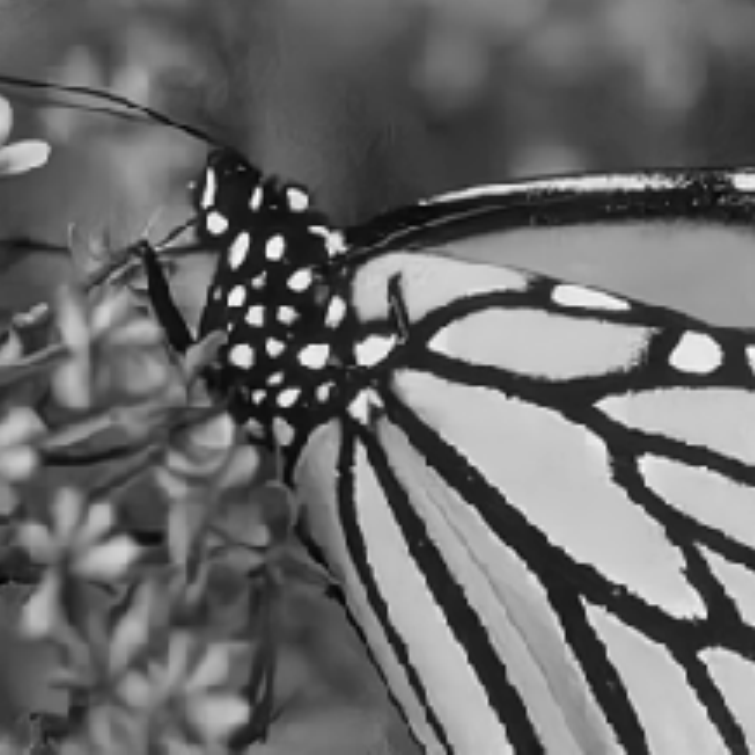}\nsp &
  \nsp\includegraphics[height=\g1ht]{./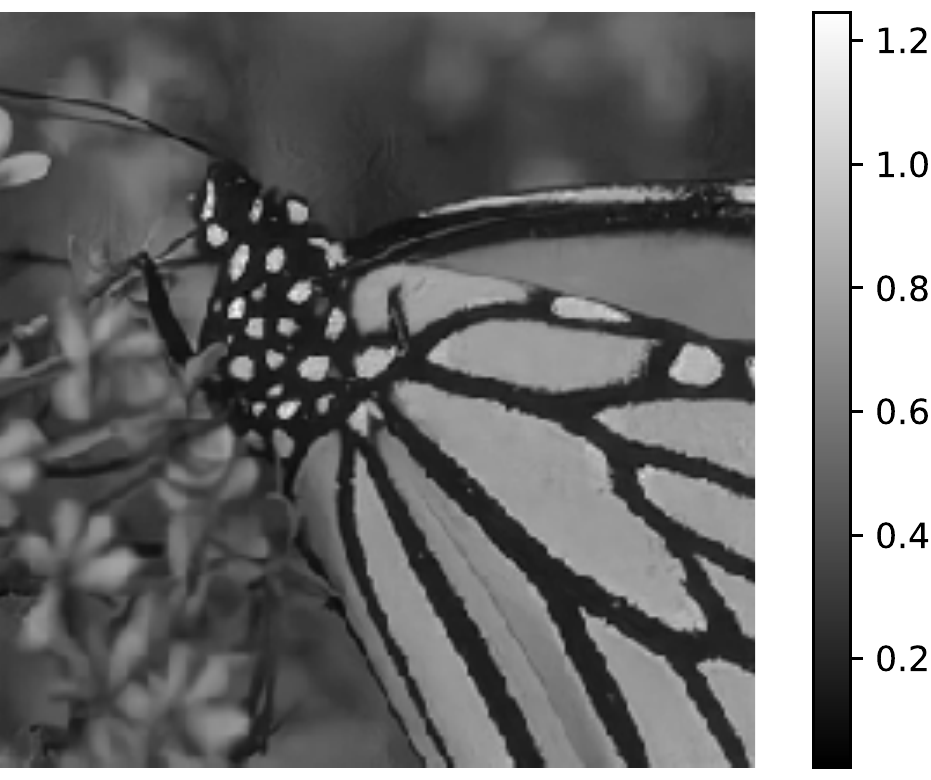} &
  \includegraphics[height=\g1ht]{./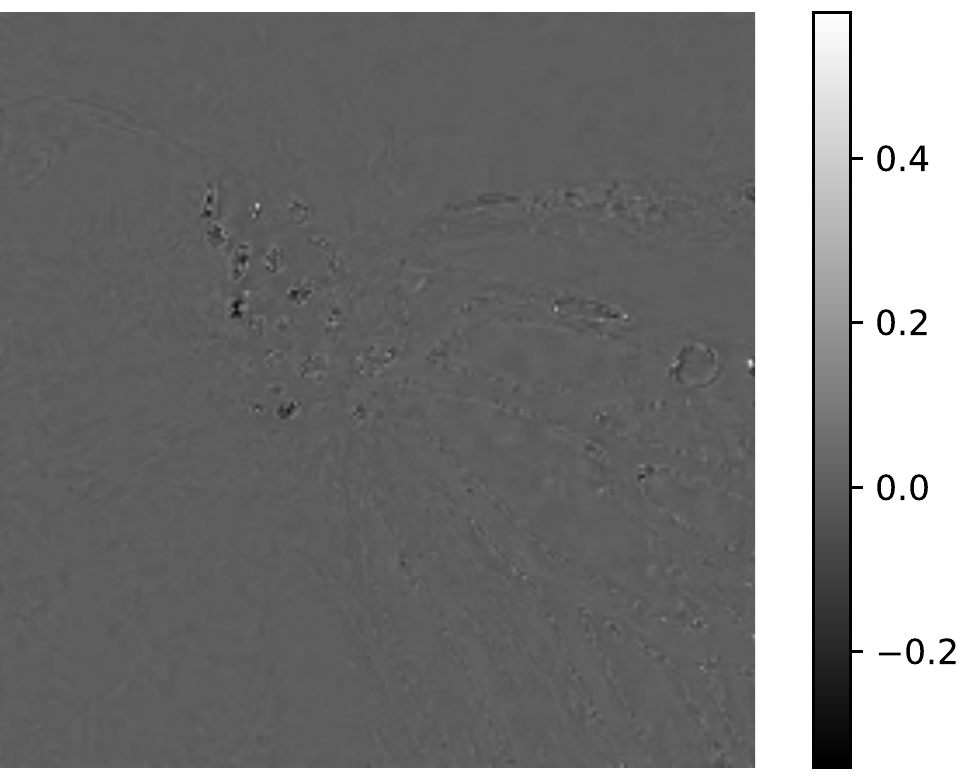}\nsp \\
  
    \scriptsize{DenseNet $\sigma = 90^*$} & \nsp\includegraphics[height=\f1ht]{./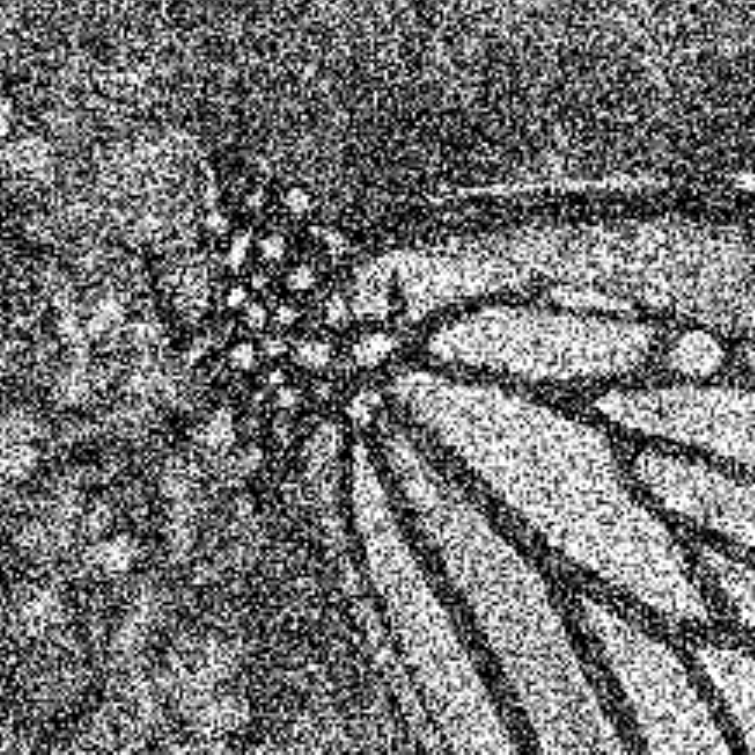}\nsp &
  \nsp\includegraphics[height=\f1ht]{./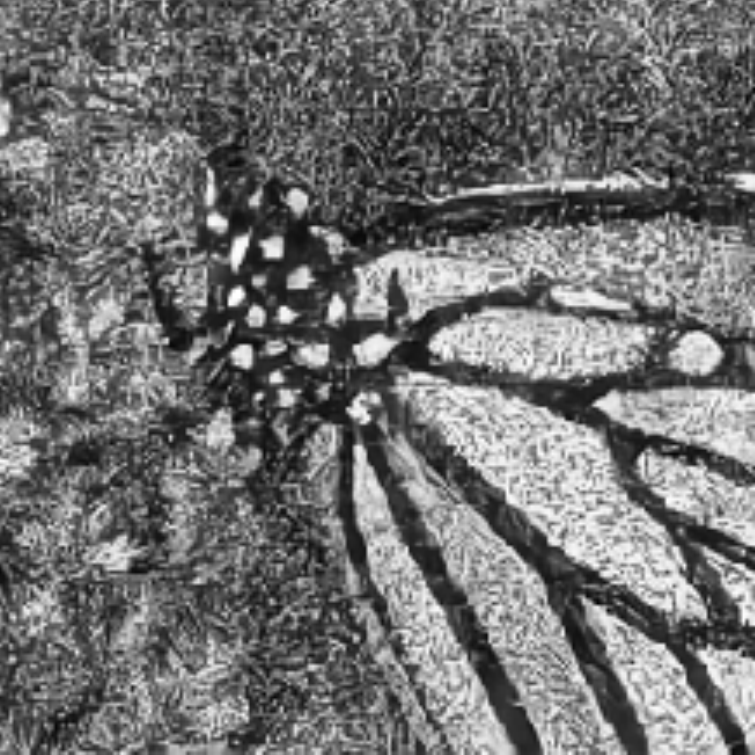}\nsp &
  \nsp\includegraphics[height=\g1ht]{./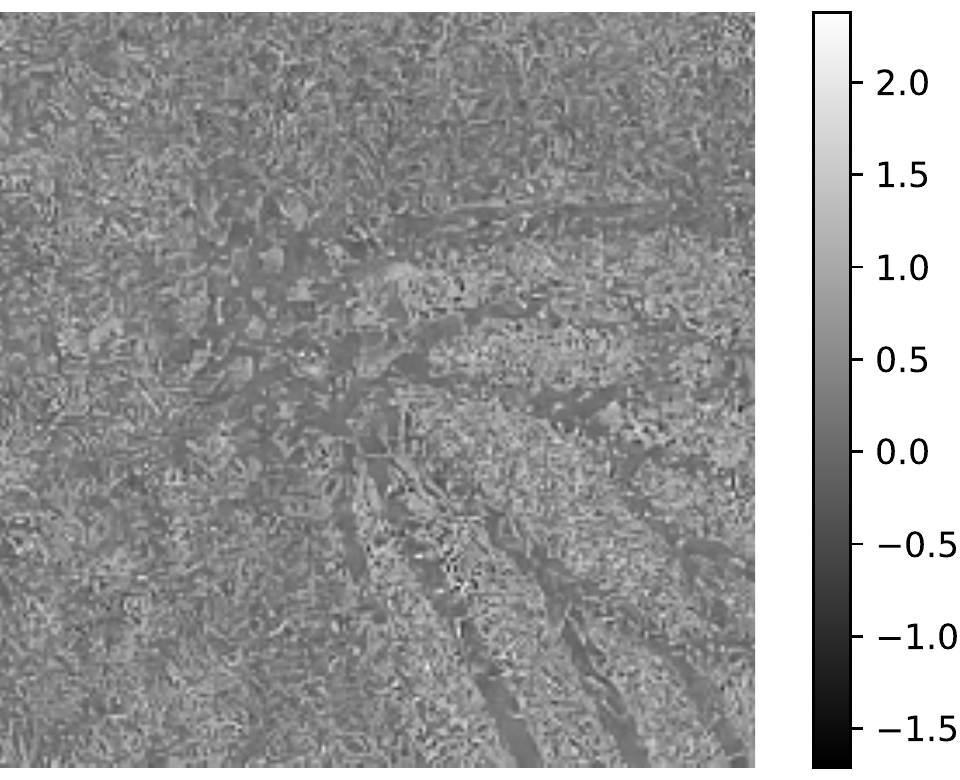} &
  \includegraphics[height=\g1ht]{./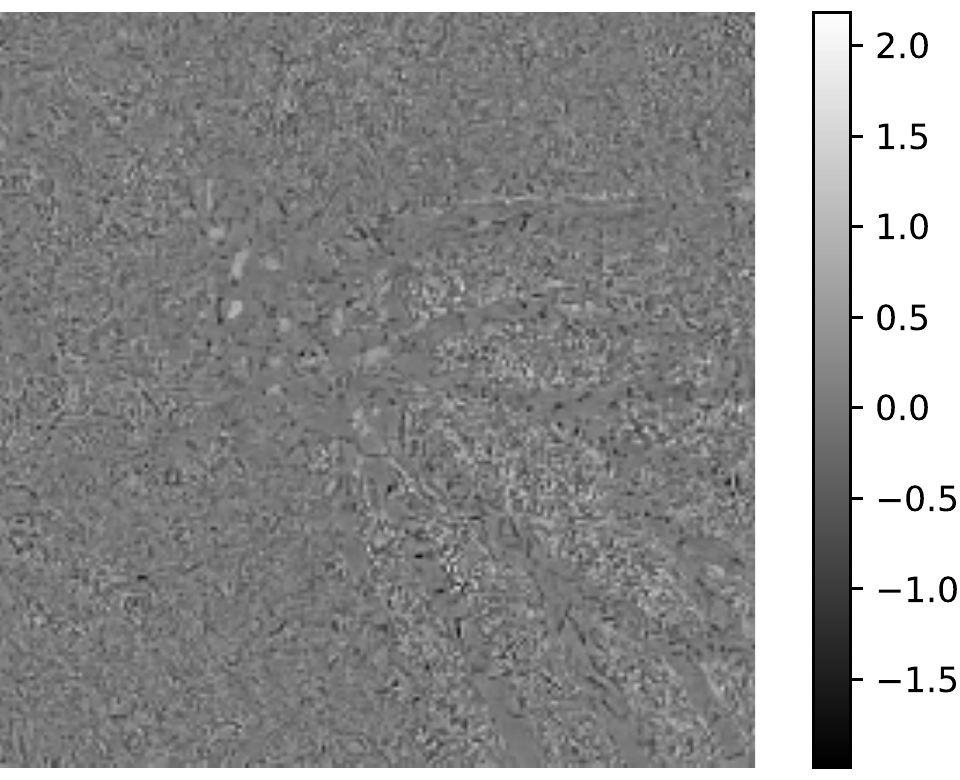}\nsp \\
  \end{tabular}\\[-0.5ex]
\caption{Visualization of the decomposition of output of Recurrent-CNN (Section \ref{sec:recursive_framework}, UNet (Section \ref{sec:unet}) and DenseNet (Section \ref{sec:densenet}) trained for noise range $[0, 55]$ into linear part and net bias. The noise level $\sigma = 90$ (highlighted by $*$) is outside the training range. Over the training range, the net bias is small, and the linear part is responsible for most of the denoising effort. However, when the network is evaluated out of the training range, the contribution of the bias increases dramatically, which coincides with a significant drop in denoising performance. }
\label{fig:bias_decomp_others}
\end{figure}



\begin{figure}[t]
    \centering
    \begin{subfigure}[b]{0.23\textwidth}
        \centering
        \includegraphics[width=\textwidth]{./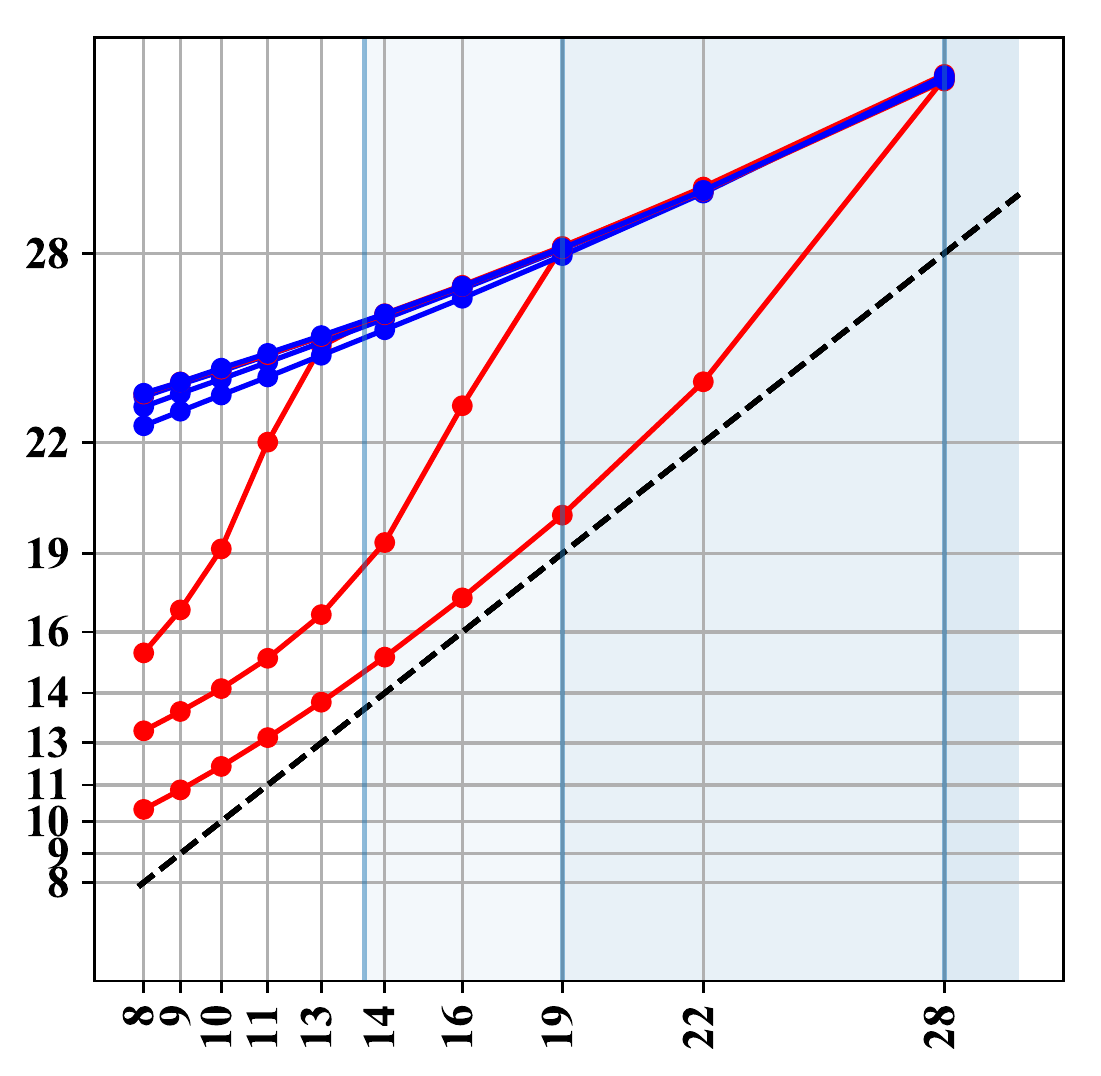}
        \caption{}
        \label{fig:dncnn_psnr_vs_psnr_all}
    \end{subfigure}
    \hfill
        \begin{subfigure}[b]{0.23\textwidth}
        \centering
        \includegraphics[width=\textwidth]{./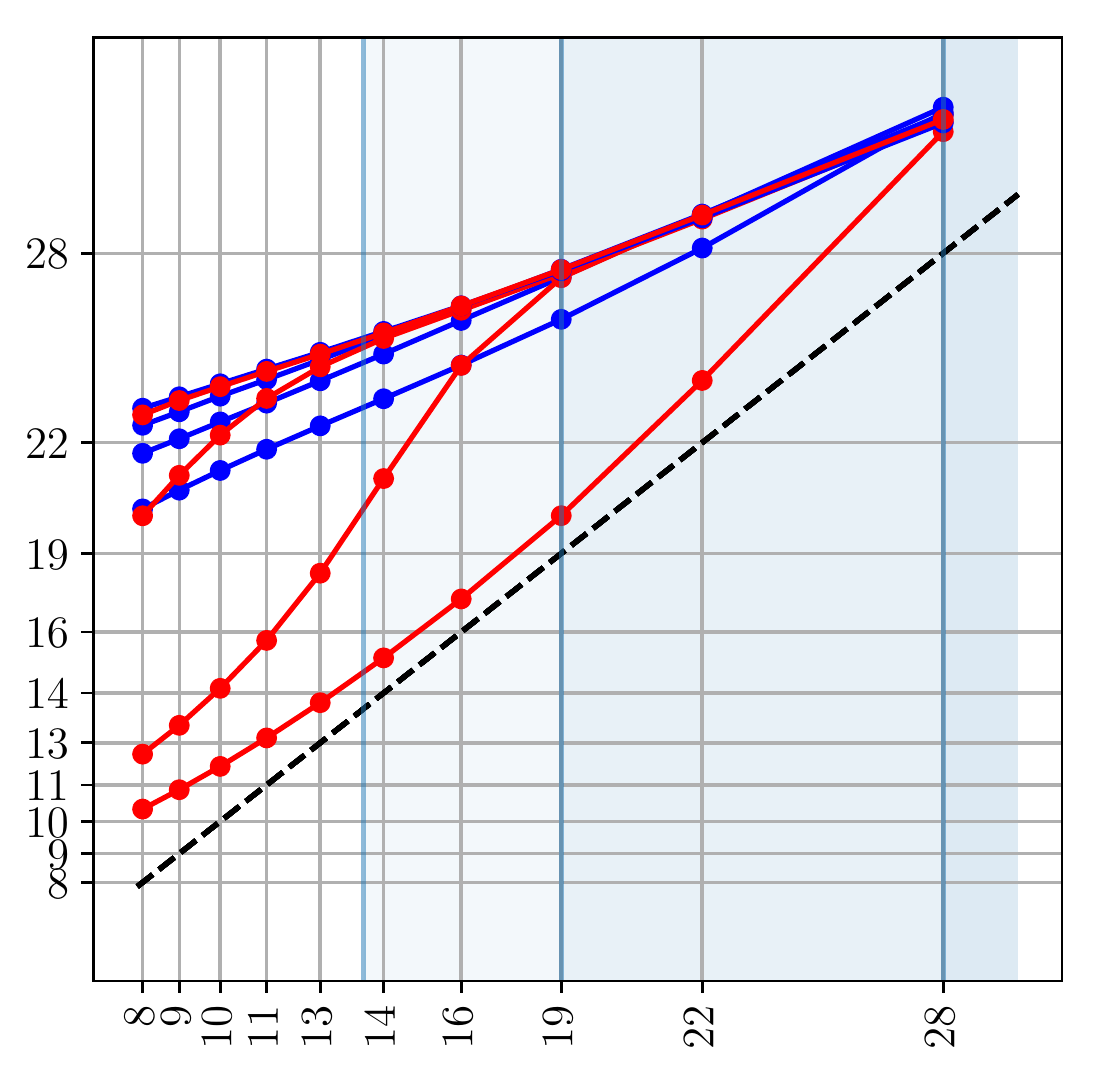}
        \caption{}
        \label{fig:durr}
    \end{subfigure}
    \hfill
    \begin{subfigure}[b]{0.23\textwidth}  
        \centering 
        \includegraphics[width=\textwidth]{./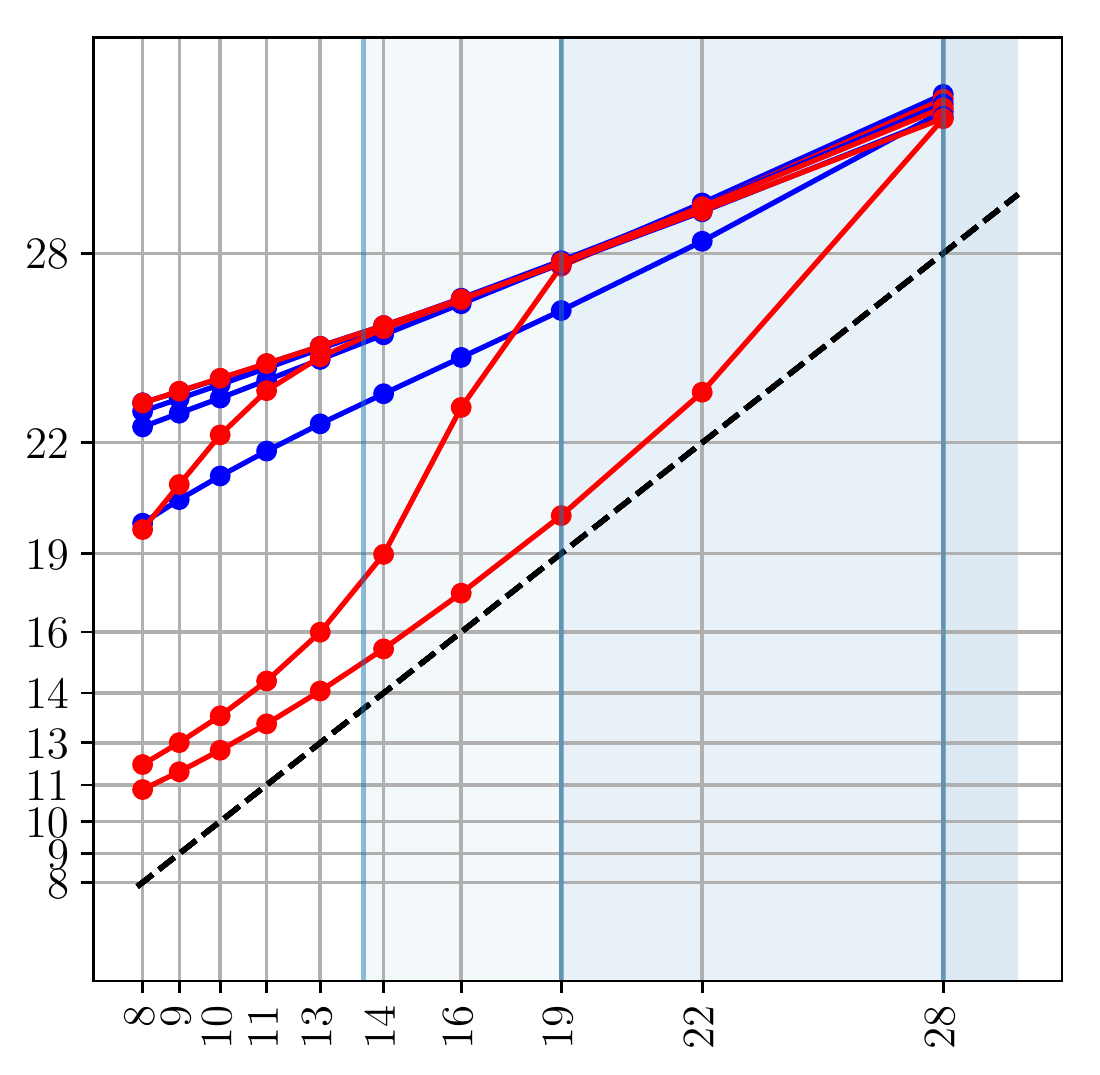}
        \caption{\small }
        \label{fig:unet}
    \end{subfigure}
    \hfill
    \begin{subfigure}[b]{0.23\textwidth}
        \centering
        \includegraphics[width=\textwidth]{./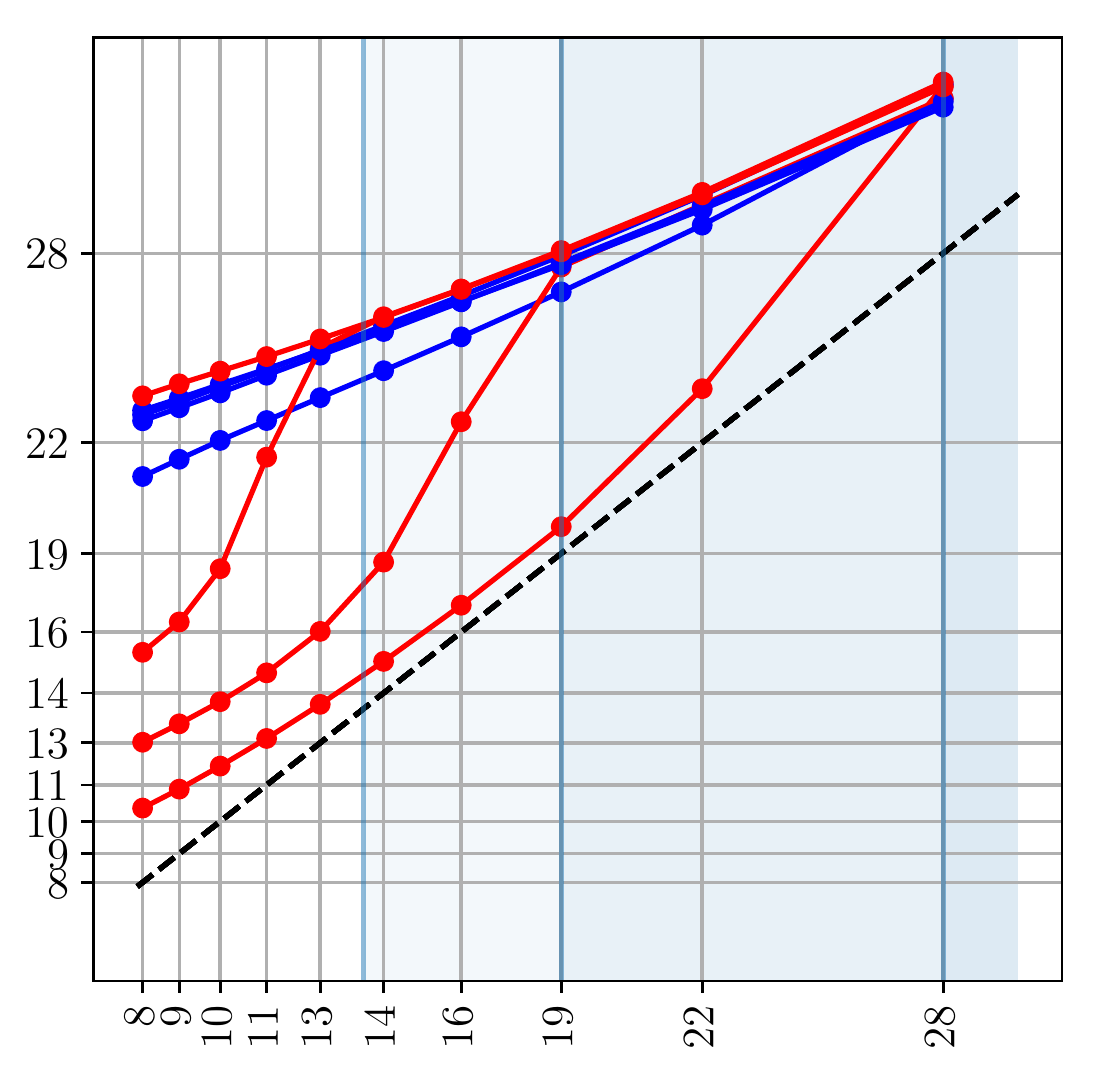}
        \caption{}
        \label{fig:skips}
    \end{subfigure}
    \caption{Comparisons of architectures with (red curves) and without (blue curves) a net bias for the experimental design described in Section~\ref{sec:generalization}. The performance is quantified by the PSNR of the denoised image as a function of the input PSNR of the noisy image. All the architectures with bias perform poorly out of their training range, whereas the bias-free versions all achieve excellent generalization across noise levels. {\bf (a)} Deep Convolutional Neural Network, DnCNN \citep{zhang2017beyond}. {\bf (b)} Recurrent architecture inspired by DURR~\citep{DURR}.
    {\bf (c)} Multiscale architecture inspired by the UNet~\citep{ronneberger2015unet}. {\bf (d)} Architecture with multiple skip connections inspired by the DenseNet~\citep{huang2017densnet}.
    }
    \label{fig:others_psnr}
\end{figure}

\begin{figure}[t]
    \centering
    \begin{subfigure}[b]{0.24\textwidth}
        \centering
        \includegraphics[width=\textwidth]{./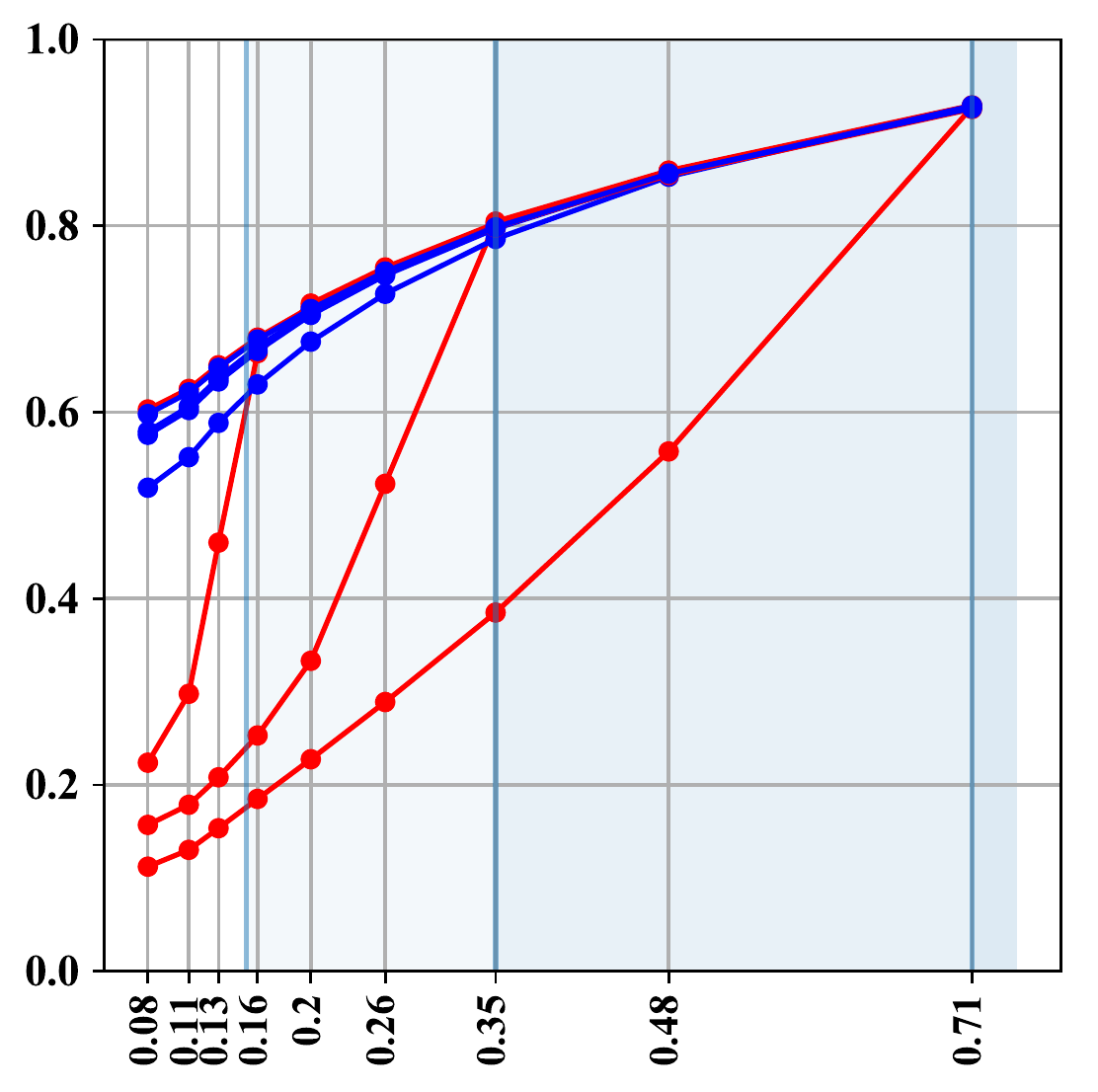}
        \caption{}
        \label{fig:gen_per_ssim}
    \end{subfigure}
    \hfill
    \begin{subfigure}[b]{0.24\textwidth}  
        \centering 
        \includegraphics[width=\textwidth]{./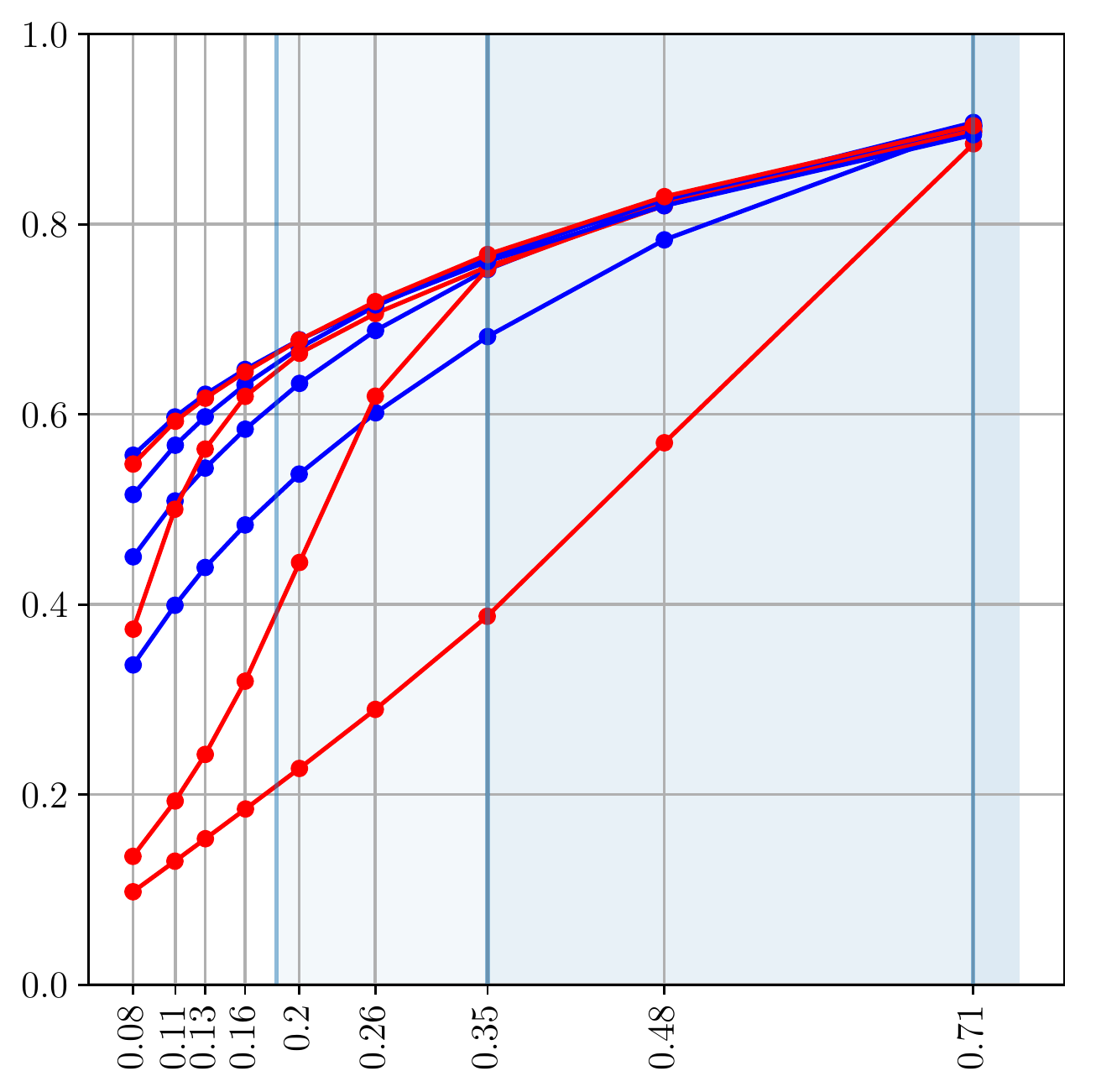}
        \caption{\small}
        \label{fig:unet_ssim}
    \end{subfigure}
    \hfill
    \begin{subfigure}[b]{0.24\textwidth}  
        \centering 
        \includegraphics[width=\textwidth]{./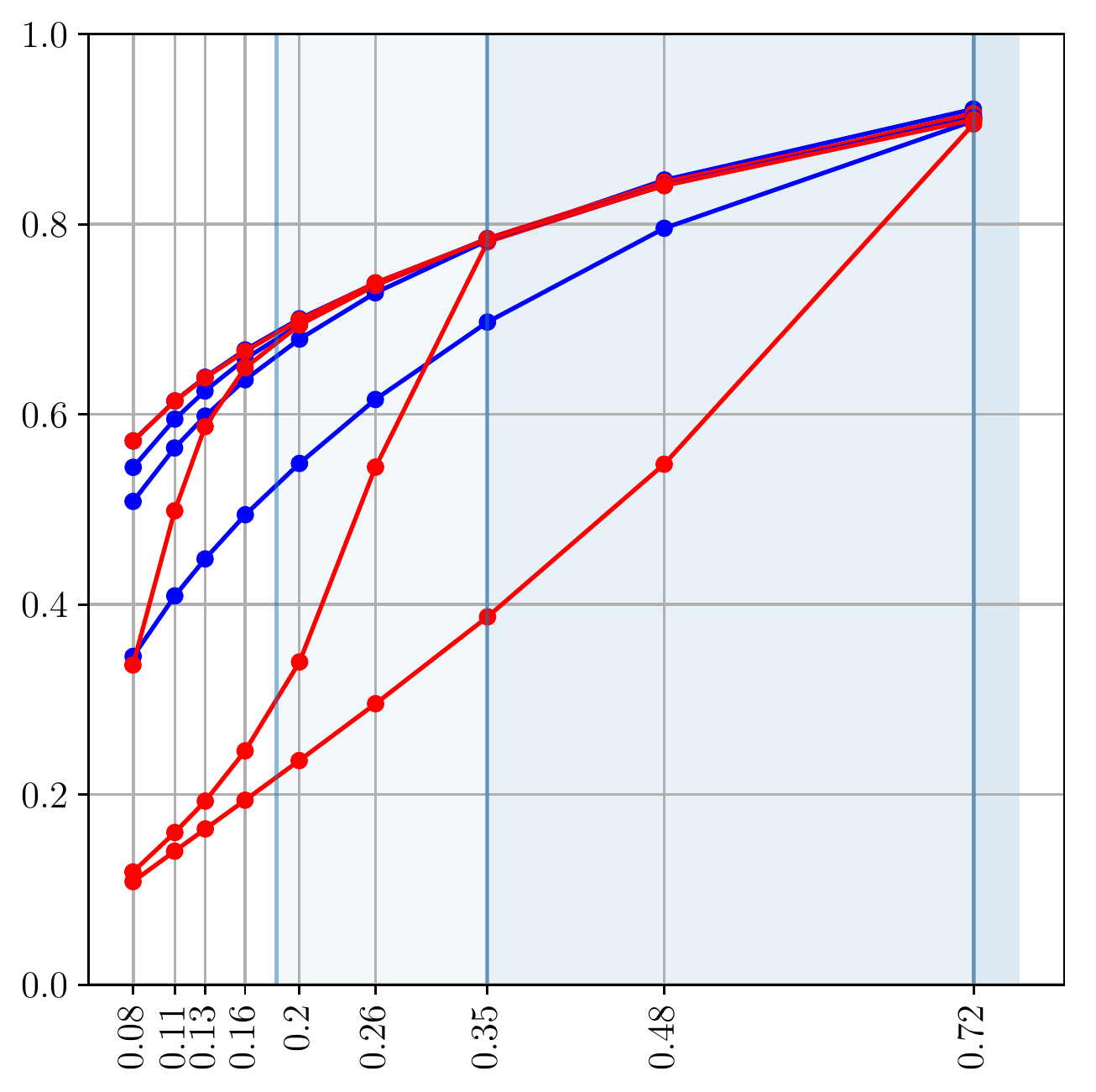}
        \caption{\small}
        \label{fig:unet_ssim}
    \end{subfigure}
    \hfill
    \begin{subfigure}[b]{0.24\textwidth}
        \centering
        \includegraphics[width=\textwidth]{./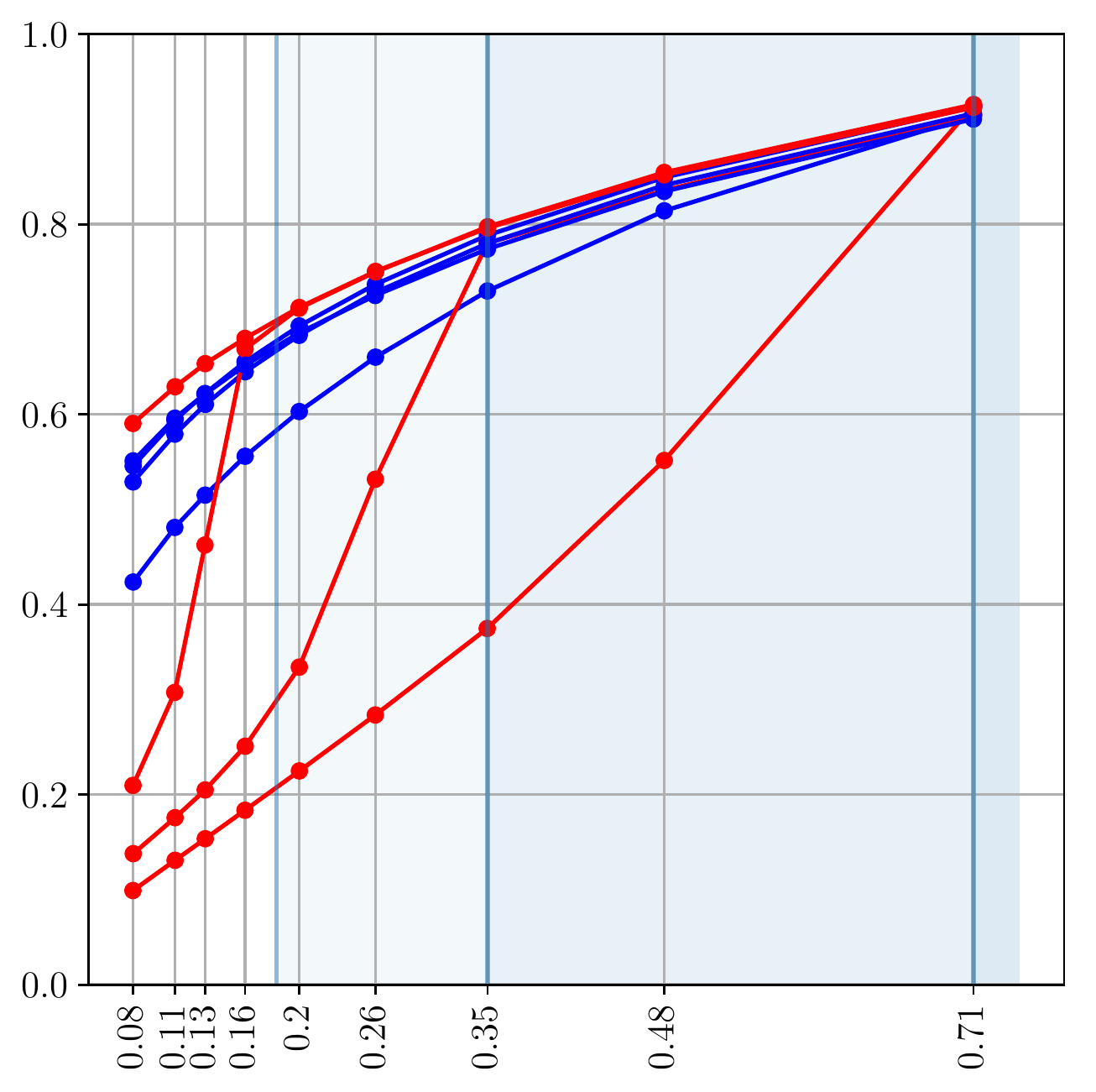}
        \caption{}
        \label{fig:skips_ssim}
    \end{subfigure}
    \caption{Comparisons of architectures with (red curves) and without (blue curves) a net bias for the experimental design described in Section~\ref{sec:generalization}. The performance is quantified by the SSIM of the denoised image as a function of the input SSIM of the noisy image. All the architectures with bias perform poorly out of their training range, whereas the bias-free versions all achieve excellent generalization across noise levels. {\bf (a)} Deep Convolutional Neural Network, DnCNN \citep{zhang2017beyond}. {\bf (b)} Recurrent architecture inspired by DURR~\citep{DURR}.
    {\bf (c)} Multiscale architecture inspired by the UNet~\citep{ronneberger2015unet}. {\bf (d)} Architecture with multiple skip connections inspired by the DenseNet~\citep{huang2017densnet}.
    }
    \label{fig:others_ssim}
\end{figure}

\begin{figure}[t]
\def\f1ht{1.0in}
\centering 
\begin{tabular}{
>{\centering\arraybackslash}m{0.02\linewidth}>{\centering\arraybackslash}m{0.16\linewidth}>{\centering\arraybackslash}m{0.16\linewidth}>{\centering\arraybackslash}m{0.16\linewidth}>{\centering\arraybackslash}m{0.16\linewidth}>{\centering\arraybackslash}m{0.16\linewidth}}
\scriptsize{$\sigma$}\vspace{0.1cm}&\footnotesize{Noisy Input($y$)} \vspace{0.1cm} & \footnotesize{Denoised ($f(y)=A_yy)$} \vspace{0.1cm} & \footnotesize{Pixel $1$} & \footnotesize{Pixel $2$} & \footnotesize{Pixel $3$} \vspace{0.1cm} \\
  \scriptsize{$5$} & \nsp\includegraphics[height=\f1ht]{./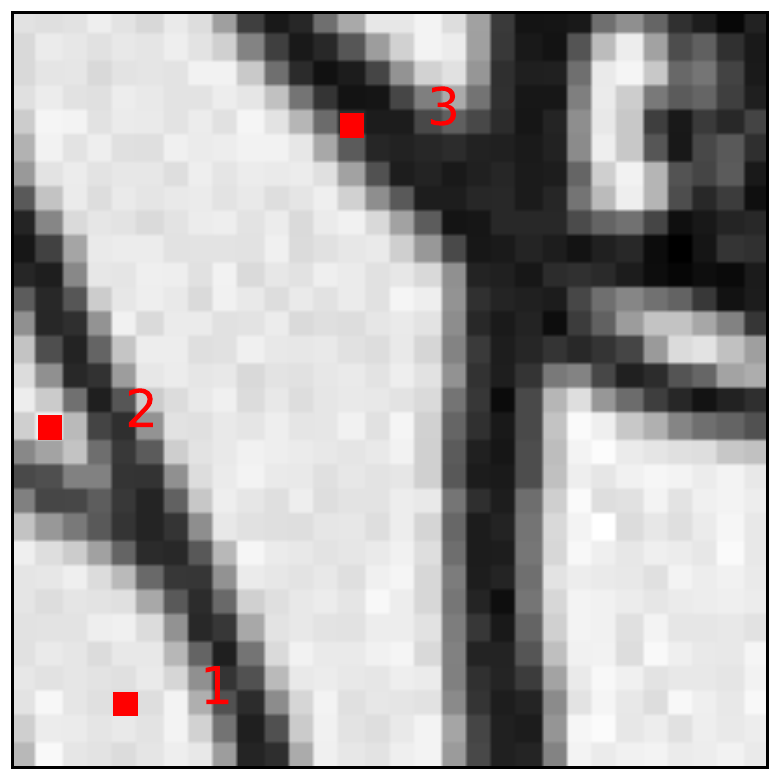}\nsp &
  \nsp\includegraphics[height=\f1ht]{./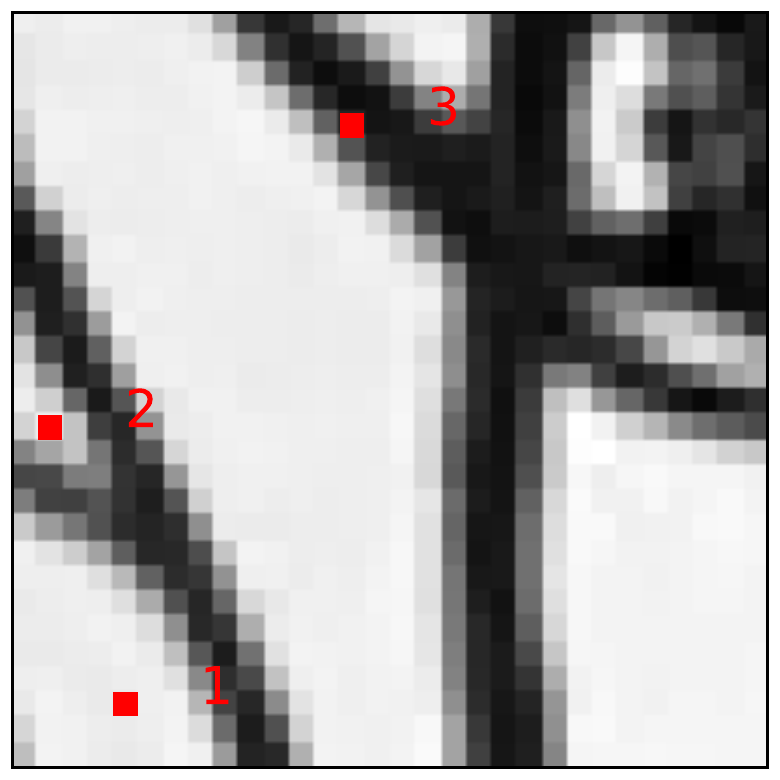}\nsp &
  \nsp\includegraphics[height=\f1ht]{./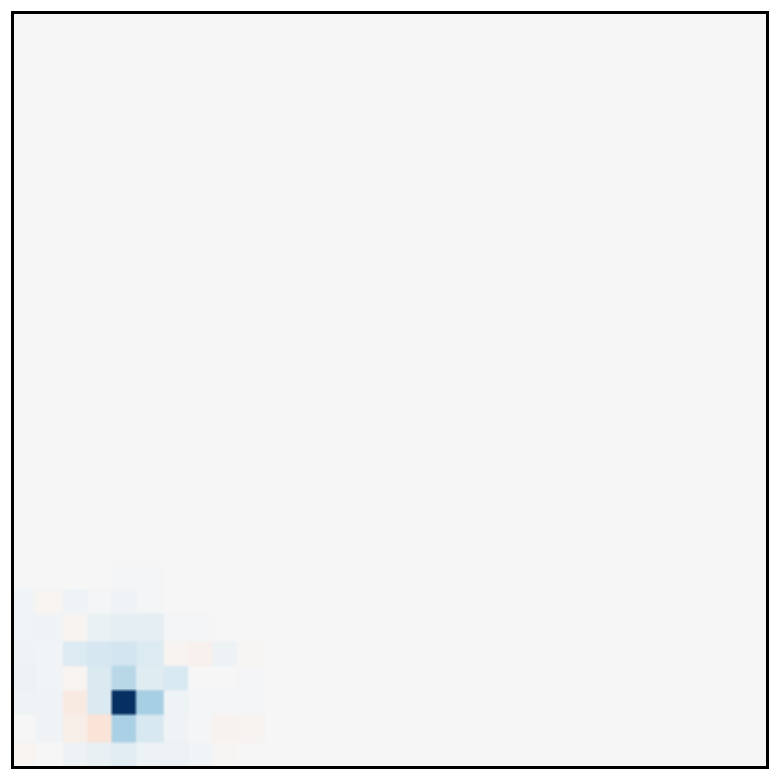}\nsp &
  \nsp\includegraphics[height=\f1ht]{./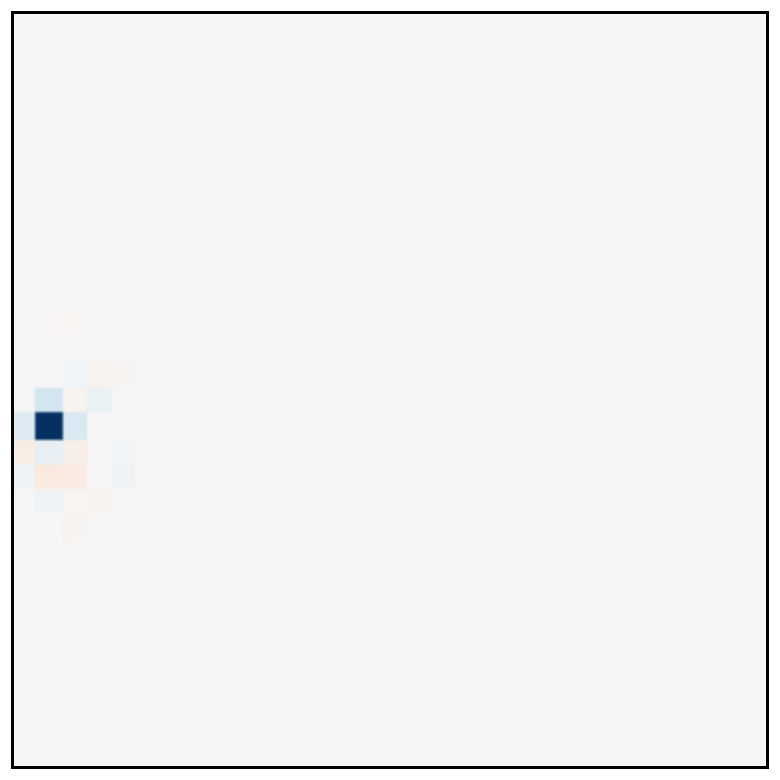}\nsp
  &
  \nsp\includegraphics[height=\f1ht]{./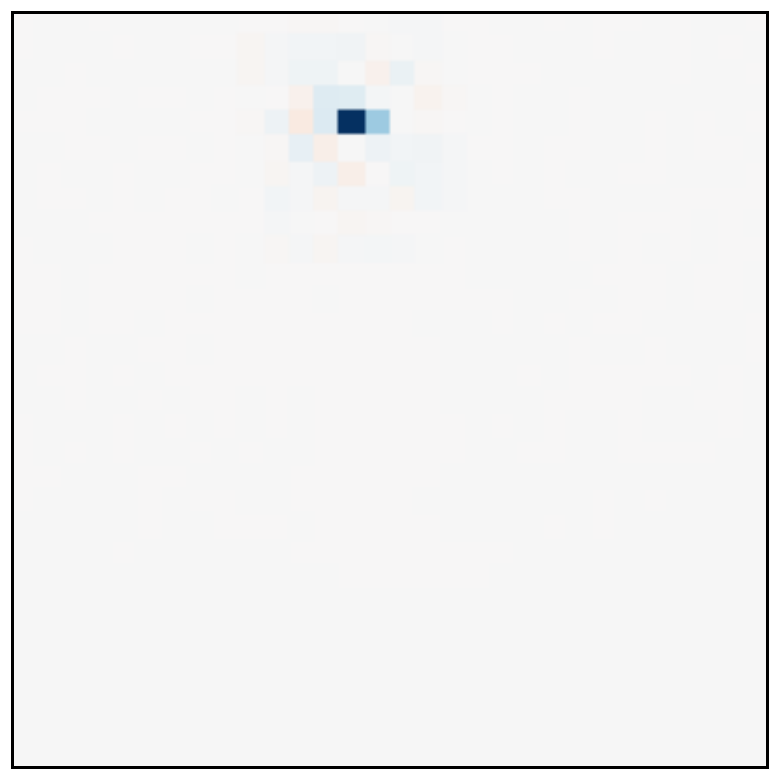}\nsp\\ 
  
  \scriptsize{$55$} &
  \nsp\includegraphics[height=\f1ht]{./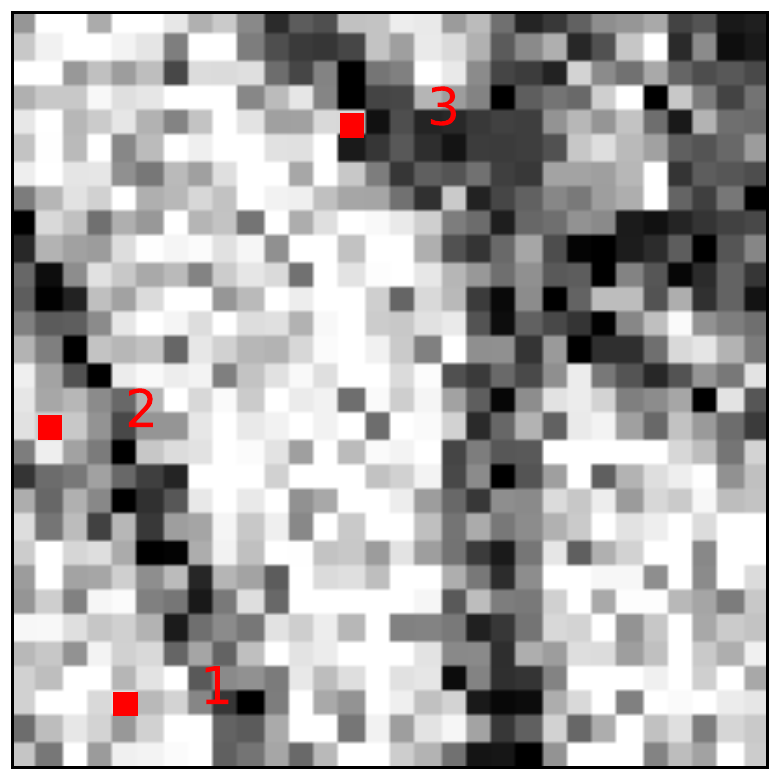}\nsp &
  \nsp\includegraphics[height=\f1ht]{./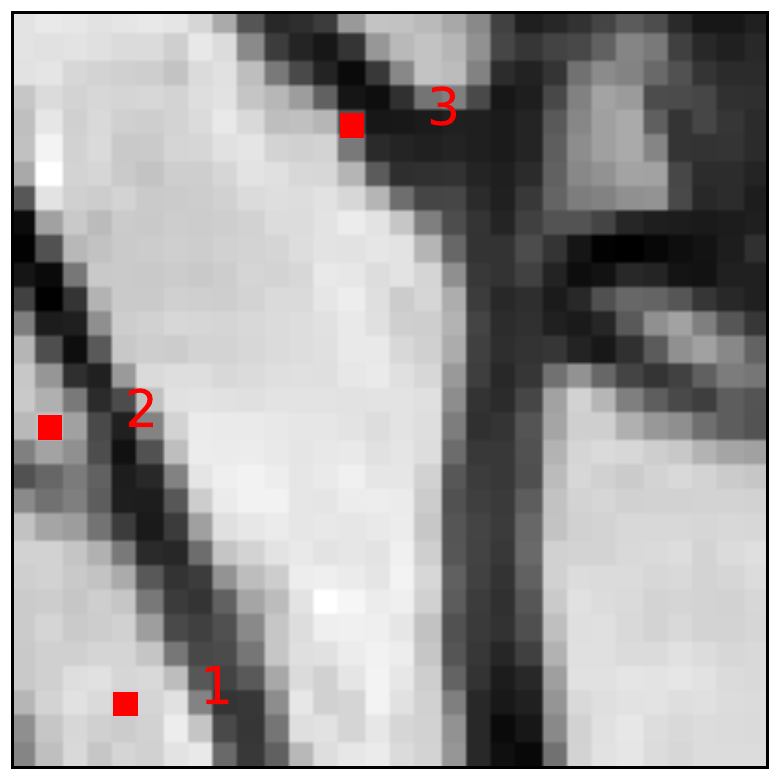}\nsp &
  \nsp\includegraphics[height=\f1ht]{./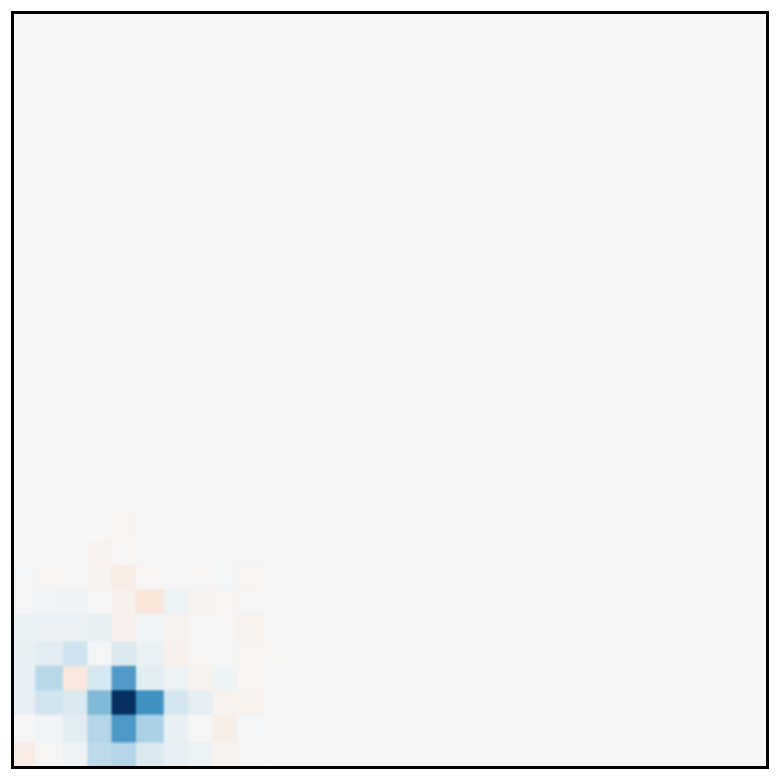}\nsp &
  \nsp\includegraphics[height=\f1ht]{./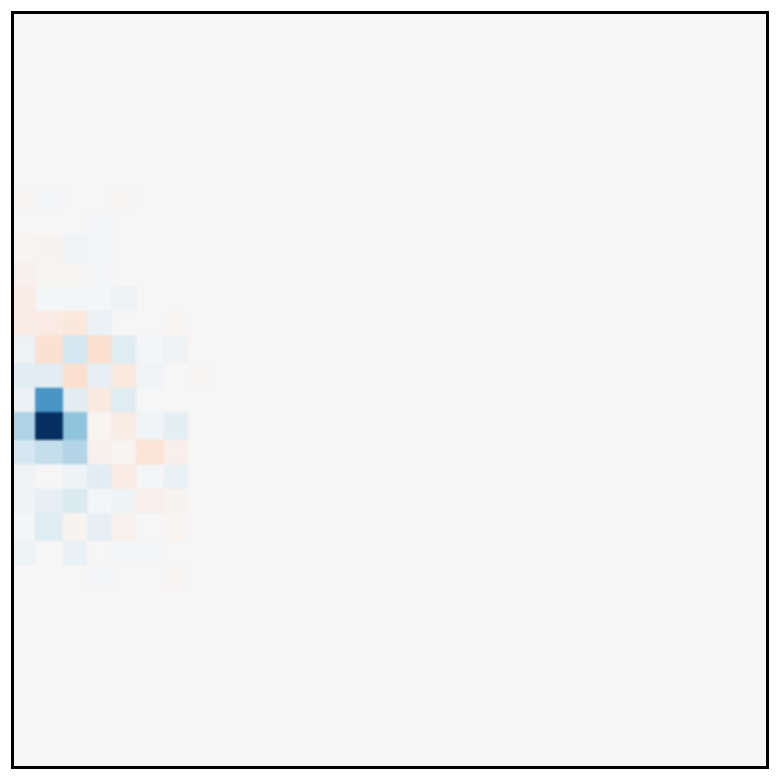}\nsp
  &
  \nsp\includegraphics[height=\f1ht]{./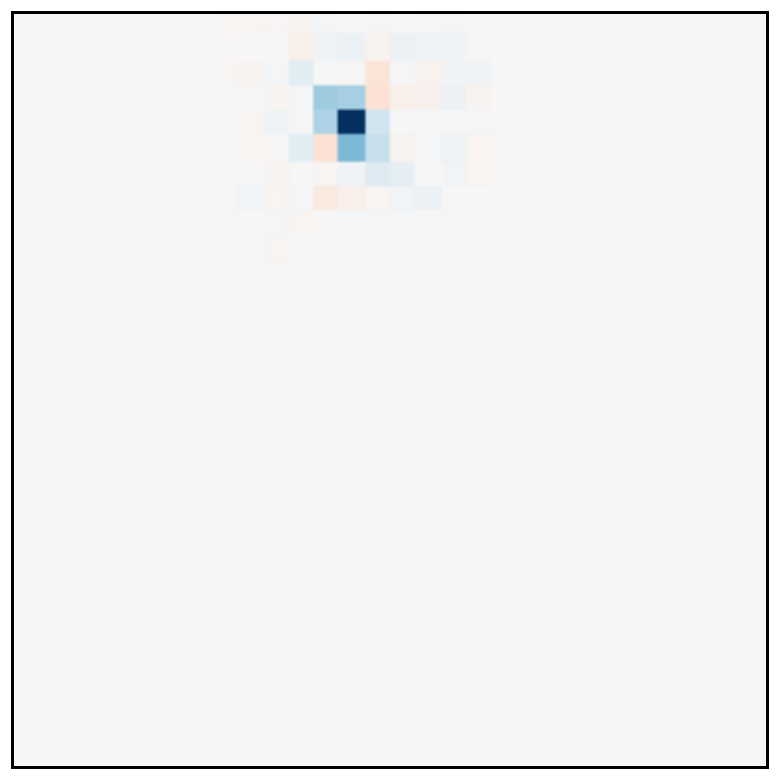}\nsp\\
  
   \scriptsize{$5$} & \nsp\includegraphics[height=\f1ht]{./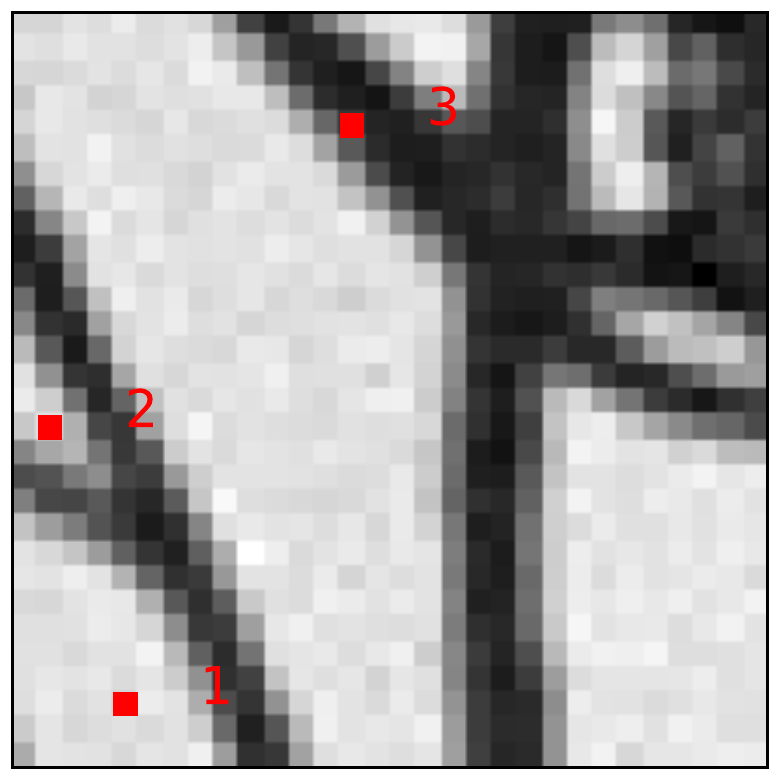}\nsp &
  \nsp\includegraphics[height=\f1ht]{./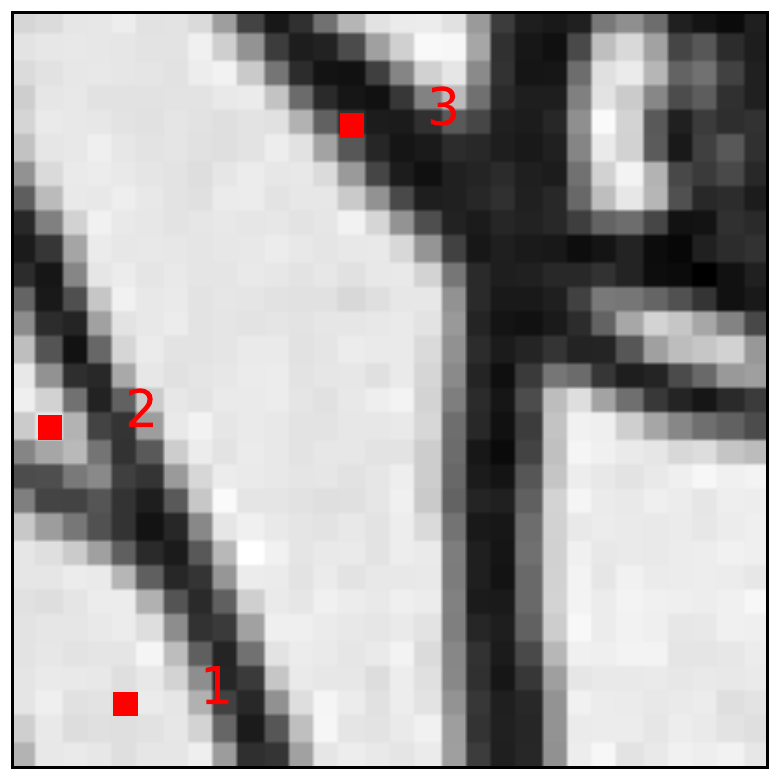}\nsp &
  \nsp\includegraphics[height=\f1ht]{./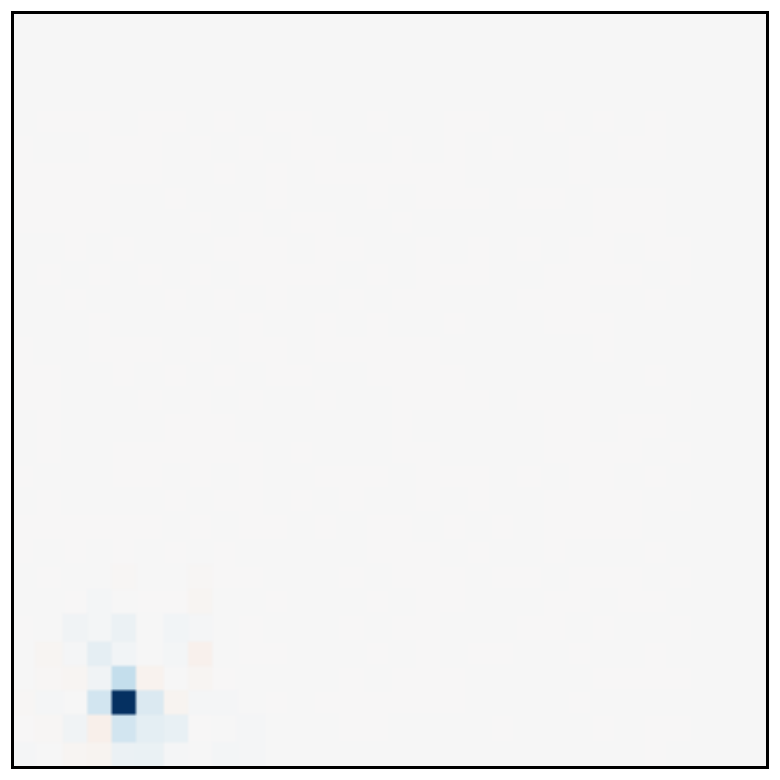}\nsp &
  \nsp\includegraphics[height=\f1ht]{./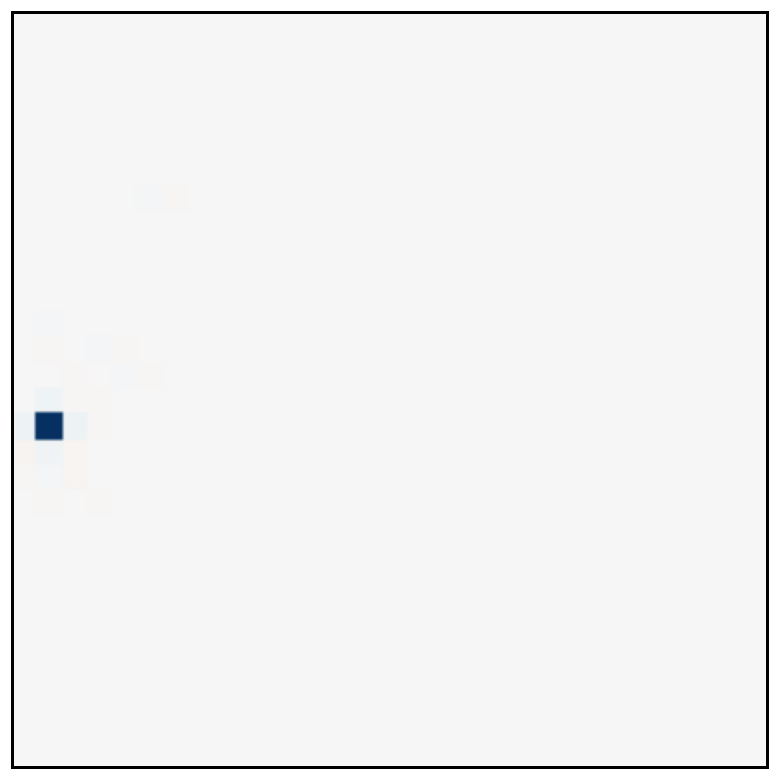}\nsp
  &
  \nsp\includegraphics[height=\f1ht]{./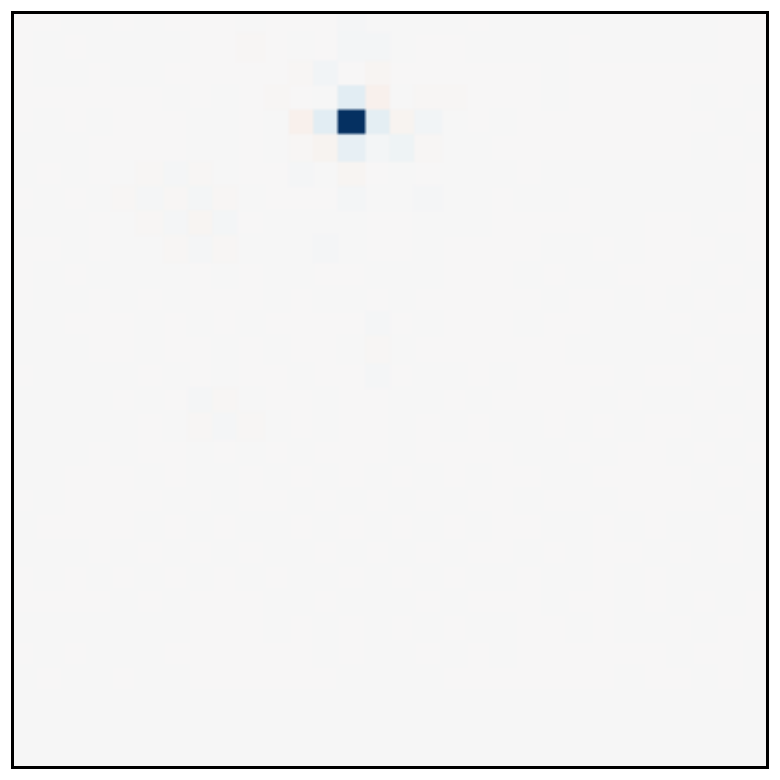}\nsp\\ 
  
  \scriptsize{$55$} &
  \nsp\includegraphics[height=\f1ht]{./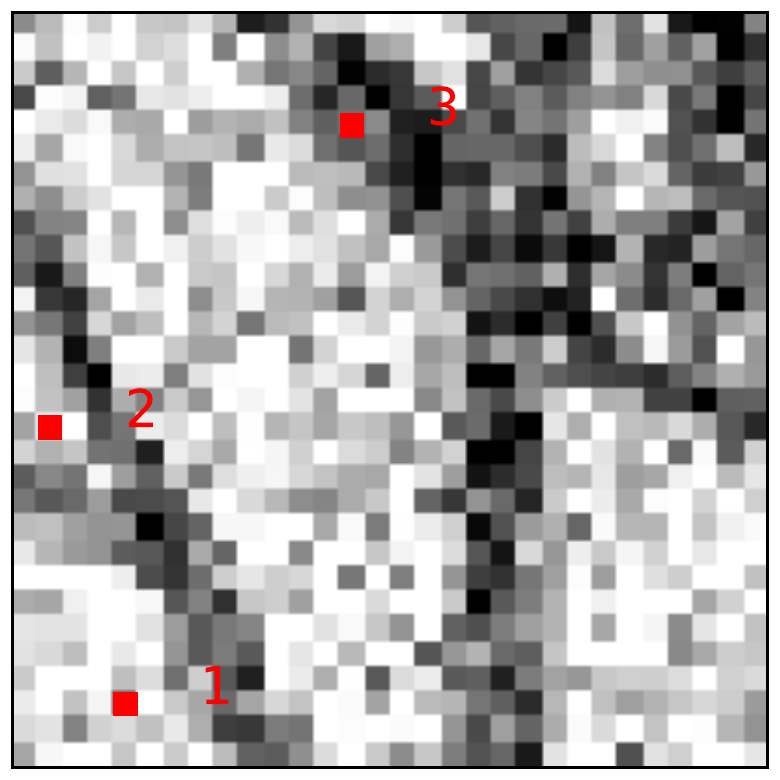}\nsp &
  \nsp\includegraphics[height=\f1ht]{./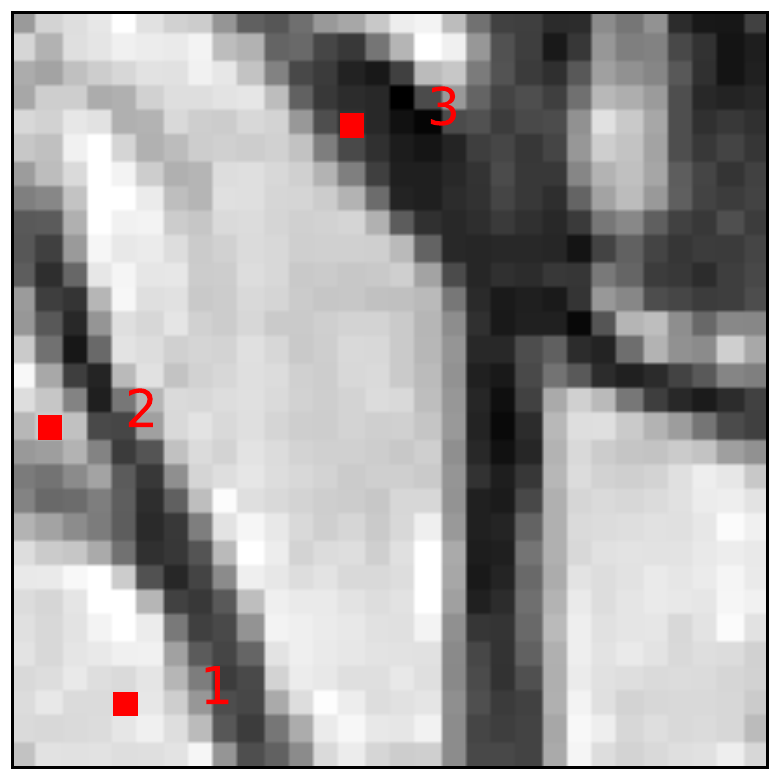}\nsp &
  \nsp\includegraphics[height=\f1ht]{./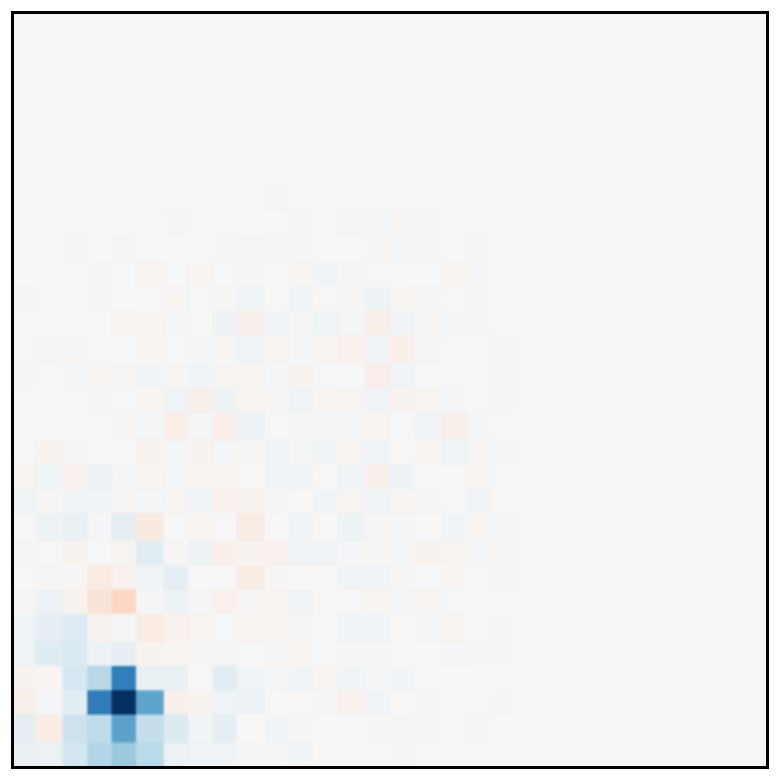}\nsp &
  \nsp\includegraphics[height=\f1ht]{./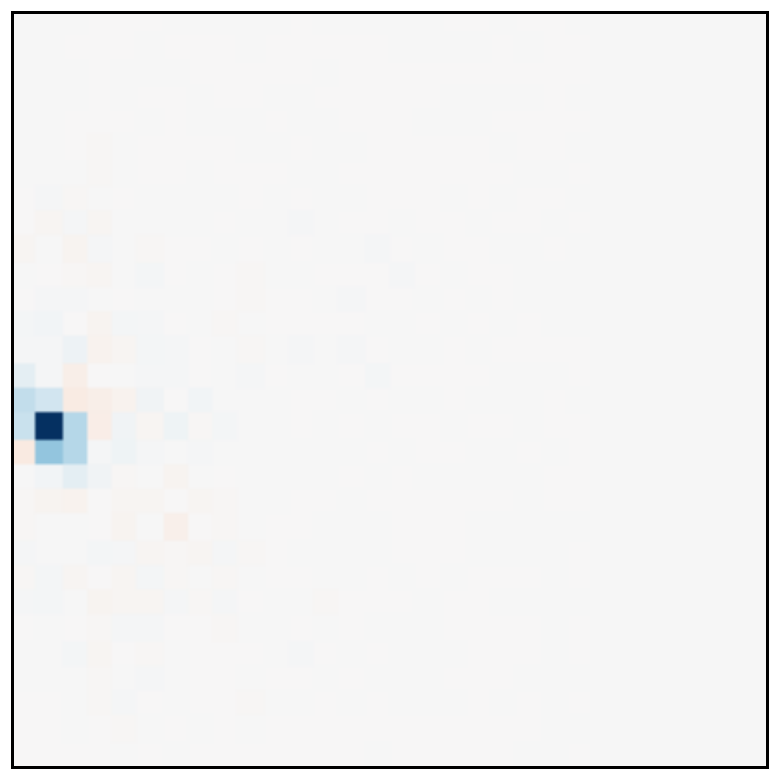}\nsp
  &
  \nsp\includegraphics[height=\f1ht]{./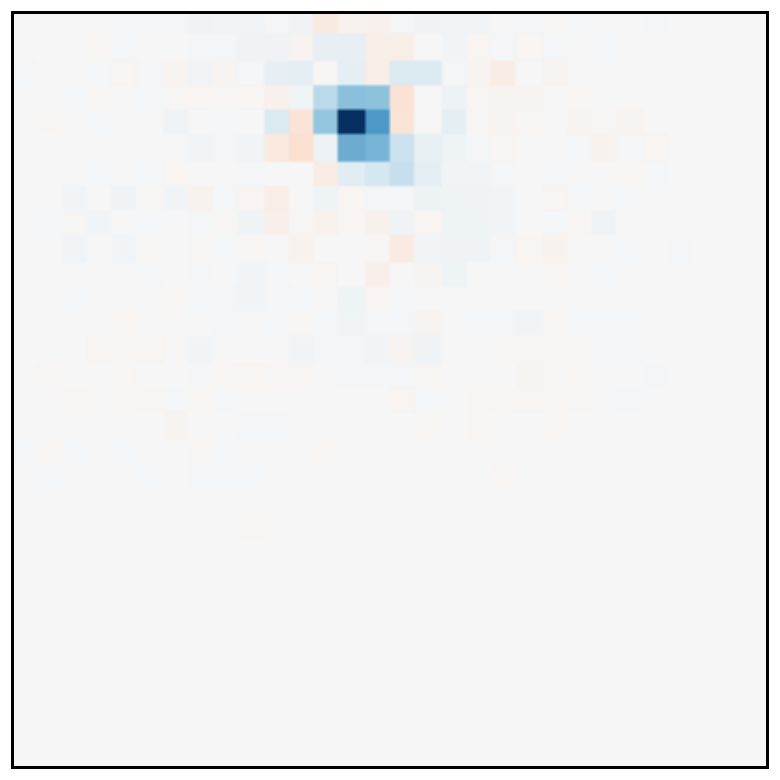}\nsp\\
  
   \scriptsize{$5$} & \nsp\includegraphics[height=\f1ht]{./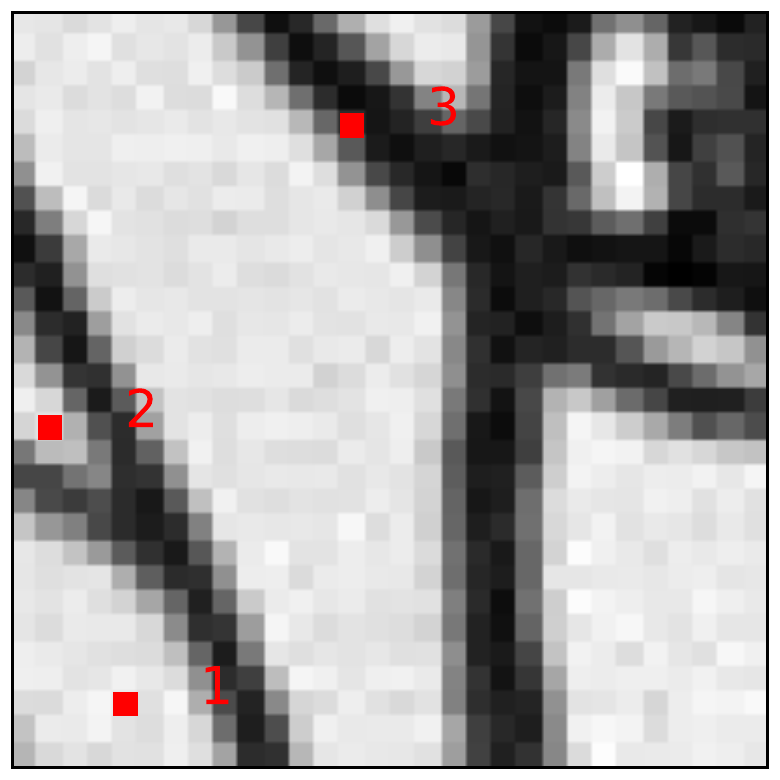}\nsp &
  \nsp\includegraphics[height=\f1ht]{./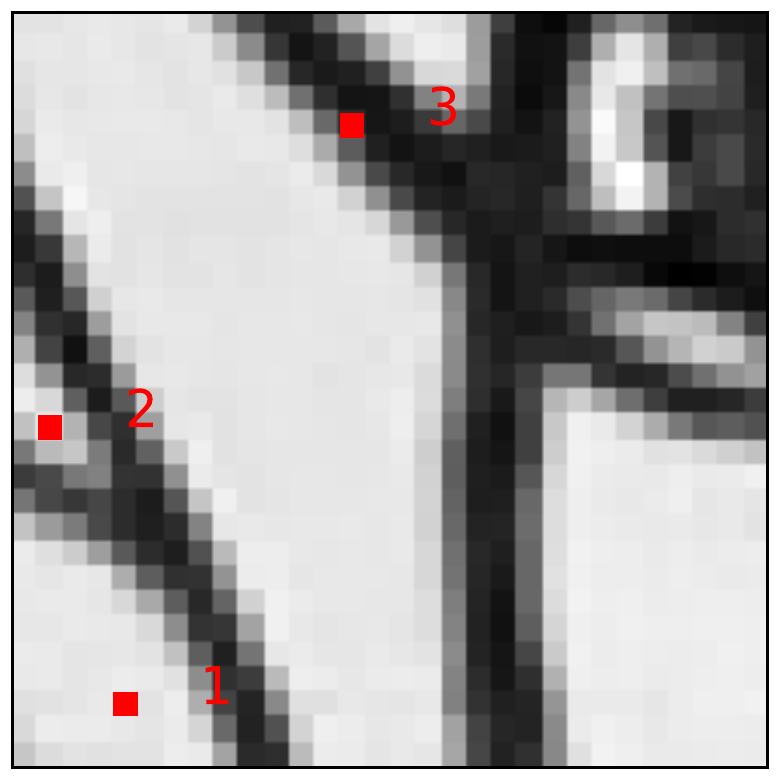}\nsp &
  \nsp\includegraphics[height=\f1ht]{./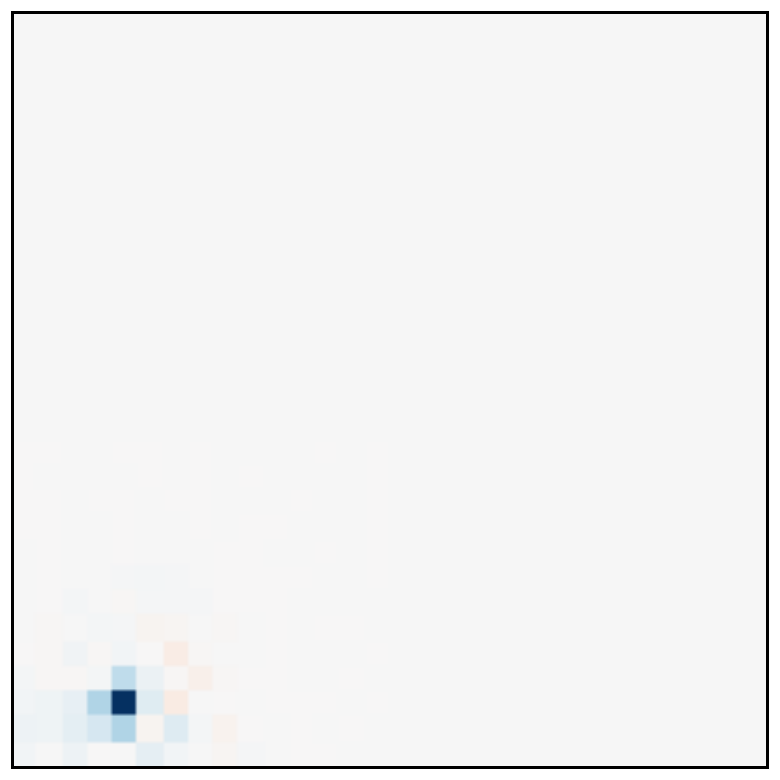}\nsp &
  \nsp\includegraphics[height=\f1ht]{./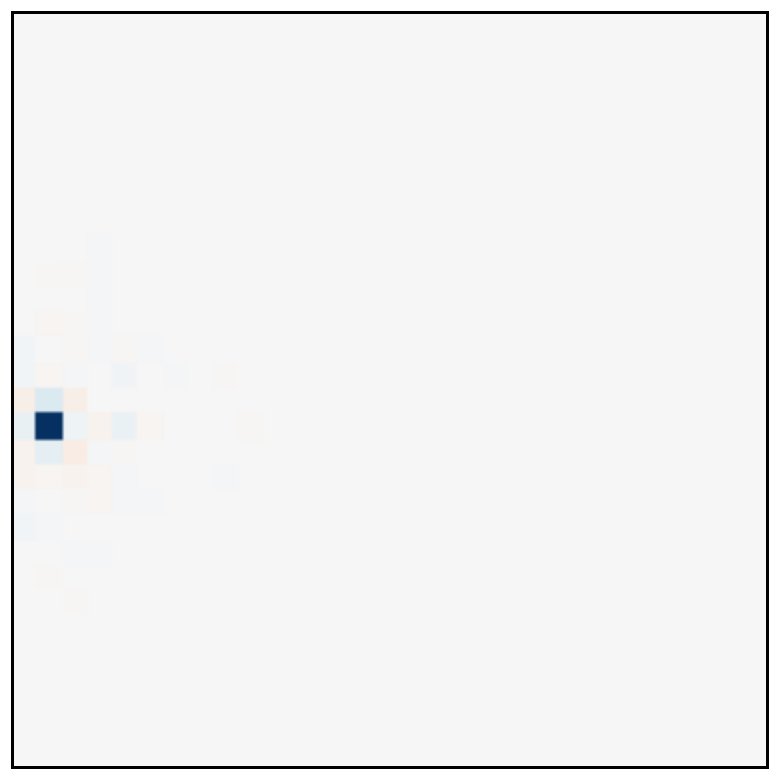}\nsp
  &
  \nsp\includegraphics[height=\f1ht]{./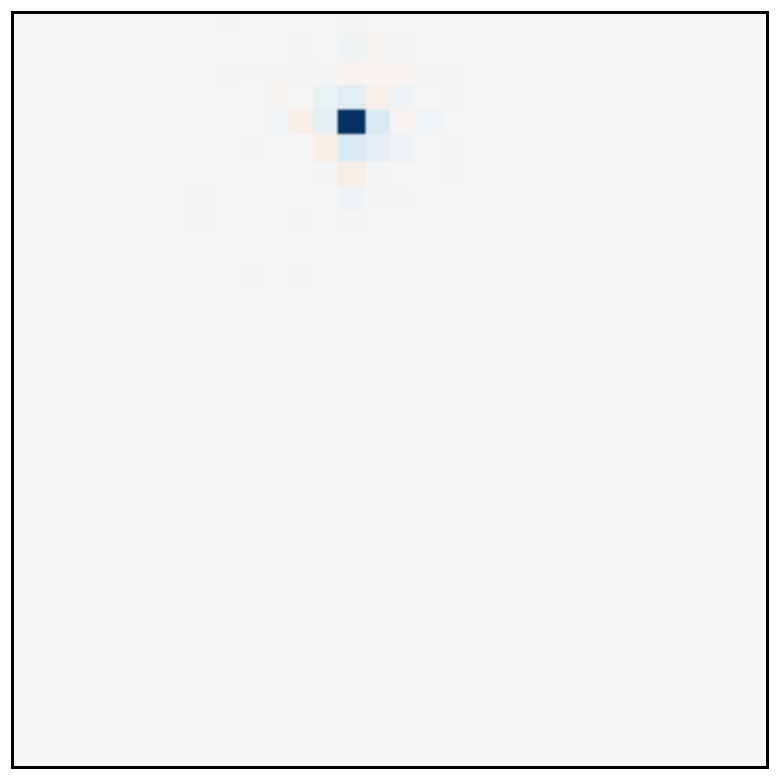}\nsp\\ 
  
  \scriptsize{$55$} &
  \nsp\includegraphics[height=\f1ht]{./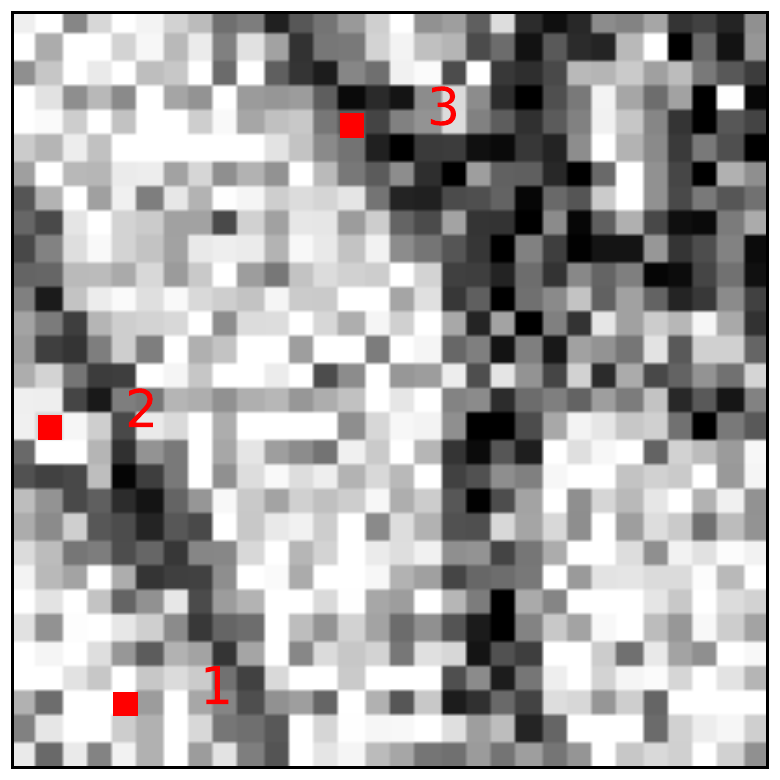}\nsp &
  \nsp\includegraphics[height=\f1ht]{./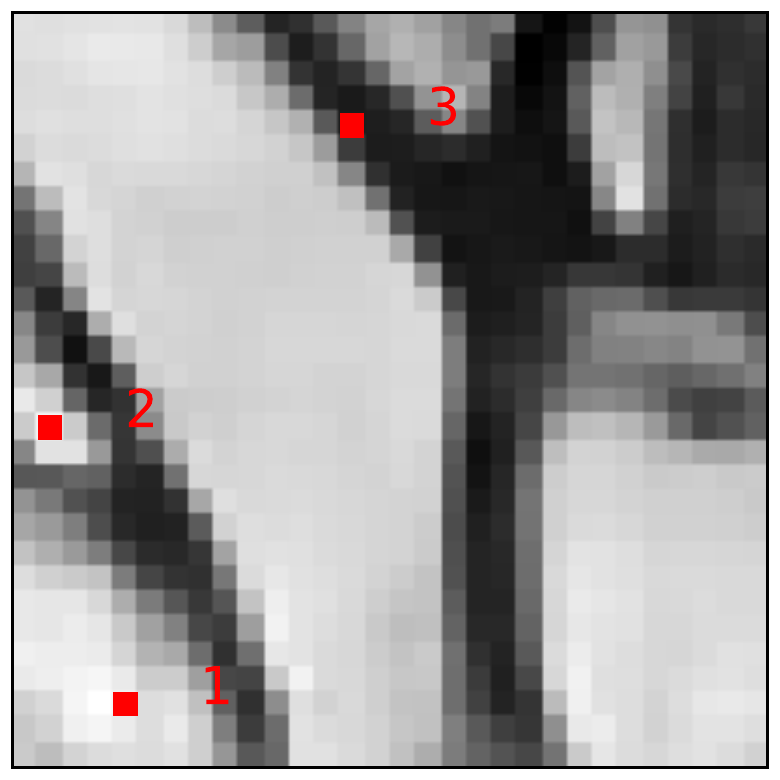}\nsp &
  \nsp\includegraphics[height=\f1ht]{./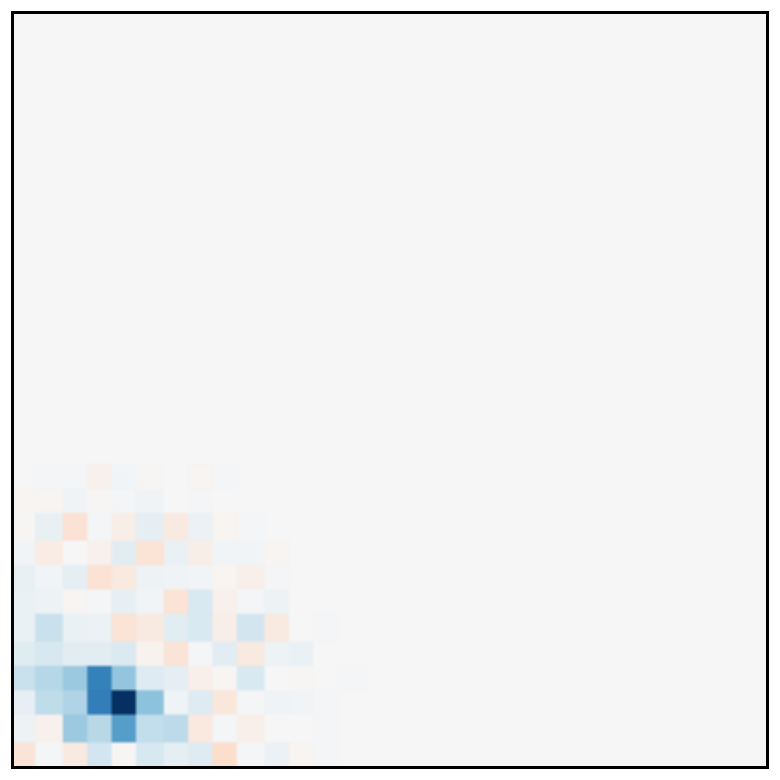}\nsp &
  \nsp\includegraphics[height=\f1ht]{./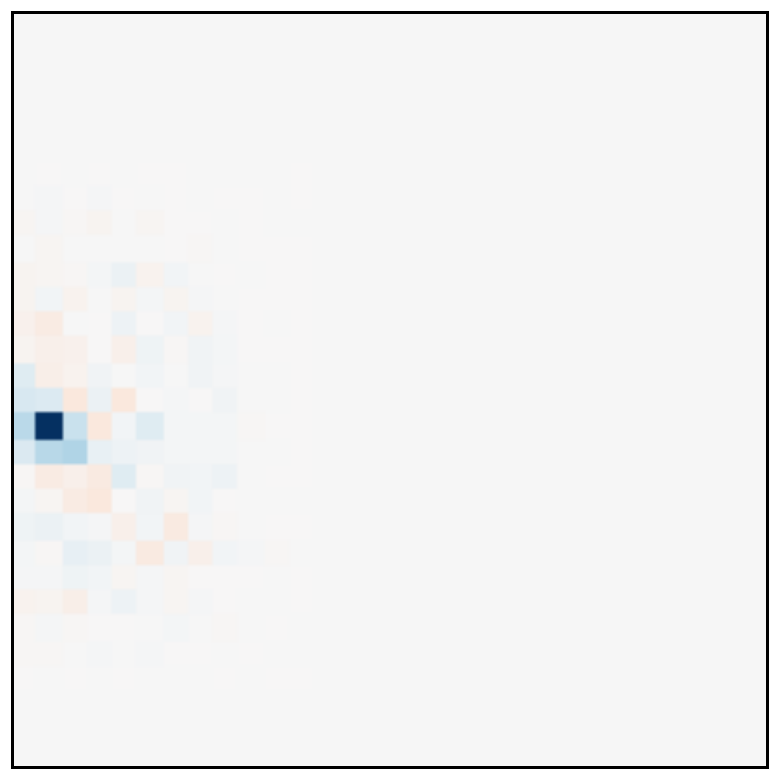}\nsp
  &
  \nsp\includegraphics[height=\f1ht]{./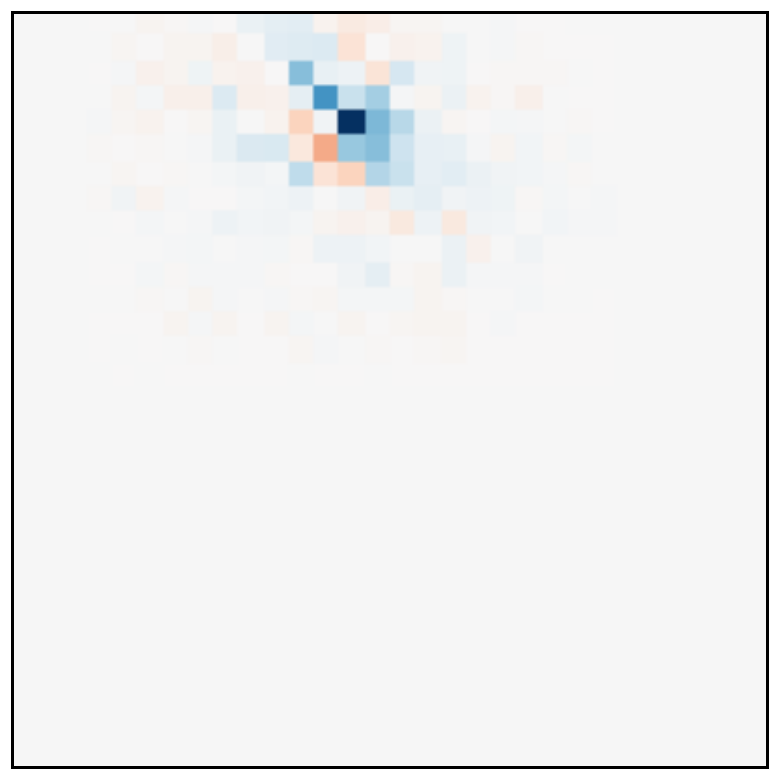}\nsp\\
  \end{tabular}\\[-0.5ex]
\caption{Visualization of the linear weighting functions (rows of $A_y$) of Bias-Free Recurrent-CNN (top $2$ rows) (Section \ref{sec:recursive_framework}), Bias-Free UNet (next $2$ rows) (Section \ref{sec:unet}) and Bias-Free DenseNet (bottom $2$ rows) (Section \ref{sec:densenet}) for three example pixels of a noisy input image (left). The next image is the denoised output. The three images on the right show the linear weighting functions corresponding to each of the indicated pixels (red squares). All weighting functions sum to one, and thus compute a local average (although some weights are negative, indicated in red). Their shapes vary substantially, and are adapted to the underlying image content. Each row corresponds to a noisy input with increasing $\sigma$ and the filters adapt by averaging over a larger region.}
\label{fig:filter_branch_others}
\end{figure}

\def\nsp{\hspace*{0in}}
\begin{figure}[t]
\def\f1ht{1.0in}
\hspace{-0.5cm} 
\begin{tabular}{
>{\centering\arraybackslash}m{0.02\linewidth}>{\centering\arraybackslash}m{0.16\linewidth}>{\centering\arraybackslash}m{0.16\linewidth}>{\centering\arraybackslash}m{0.16\linewidth}>{\centering\arraybackslash}m{0.16\linewidth}>{\centering\arraybackslash}m{0.16\linewidth}}
\scriptsize{$\sigma$}\vspace{0.1cm} & \footnotesize{Noisy Input($y$)} \vspace{0.1cm} & \footnotesize{Denoised ($f(y)=A_yy)$} \vspace{0.1cm} & \footnotesize{Pixel $1$} & \footnotesize{Pixel $2$} & \footnotesize{Pixel $3$} \vspace{0.1cm} \\
  \scriptsize{$5$} & \nsp\includegraphics[height=\f1ht]{./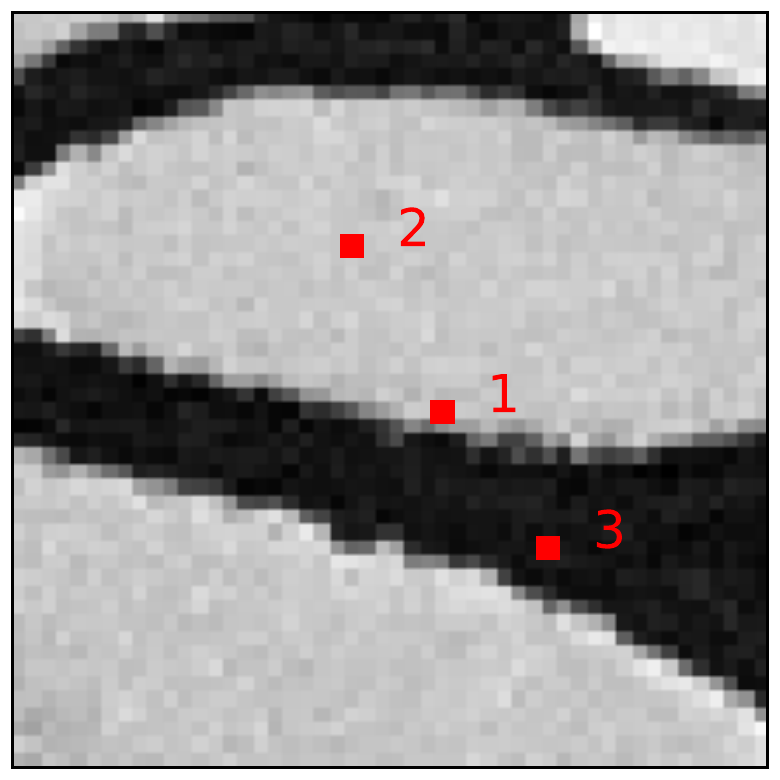}\nsp &
  \nsp\includegraphics[height=\f1ht]{./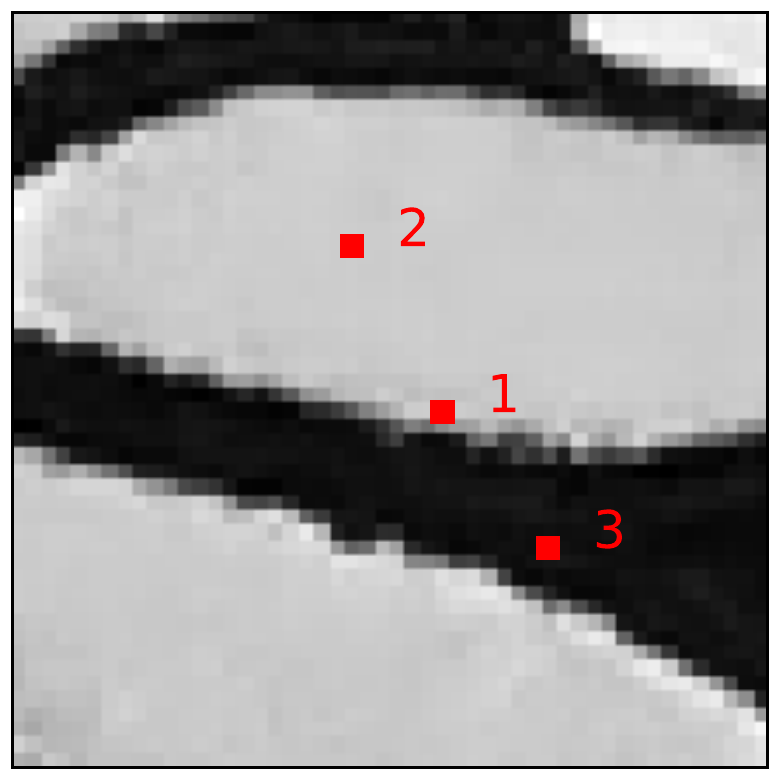}\nsp &
  \nsp\includegraphics[height=\f1ht]{./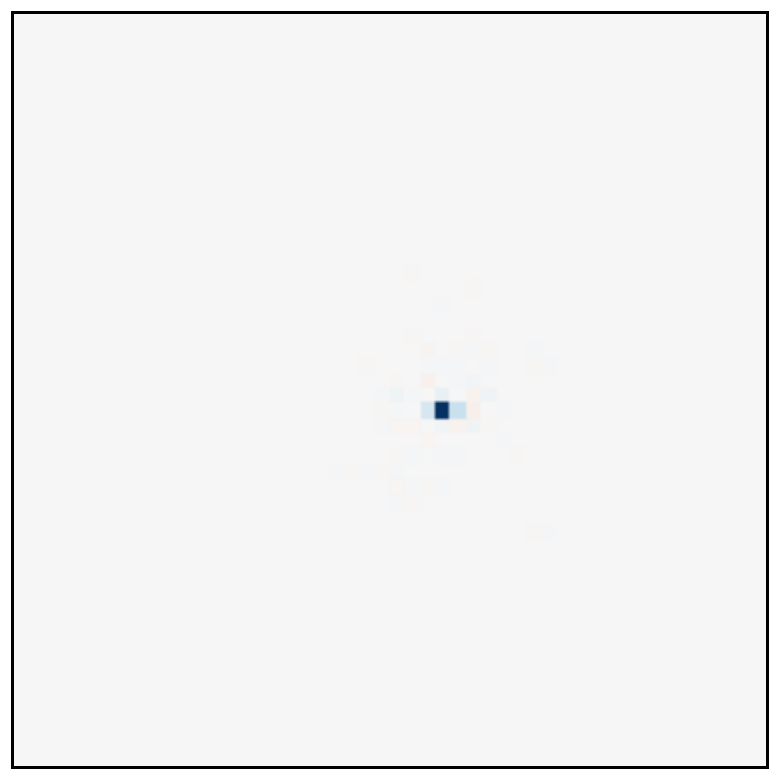}\nsp &
  \nsp\includegraphics[height=\f1ht]{./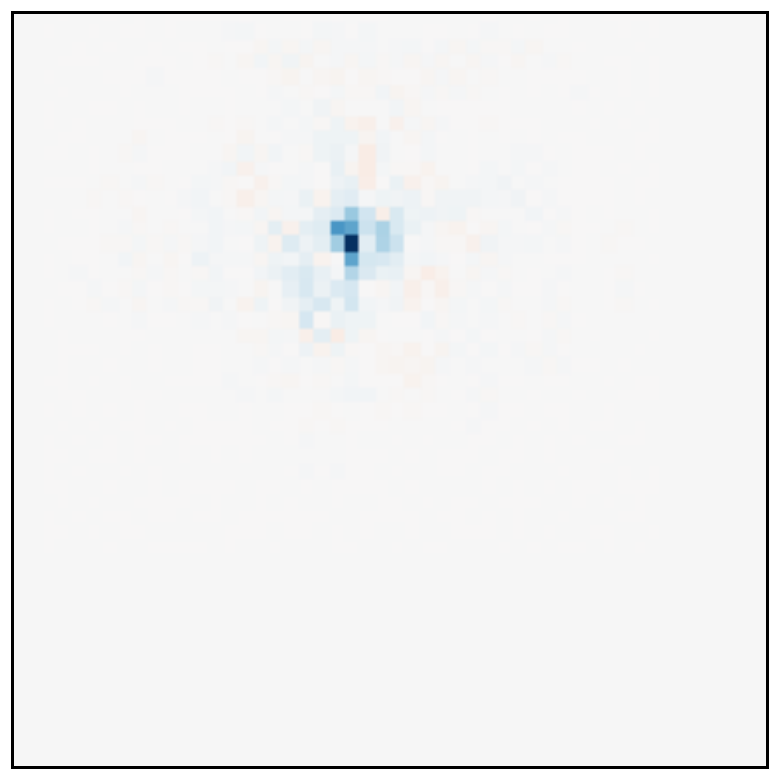}\nsp
  &
  \nsp\includegraphics[height=\f1ht]{./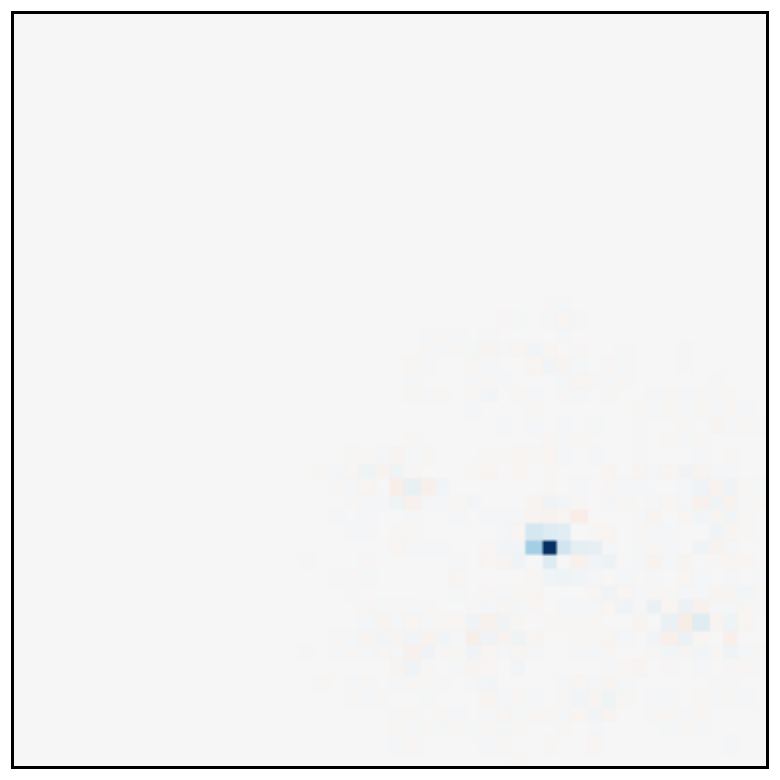}\nsp\\ 
  
  \scriptsize{$15$} &\nsp\includegraphics[height=\f1ht]{./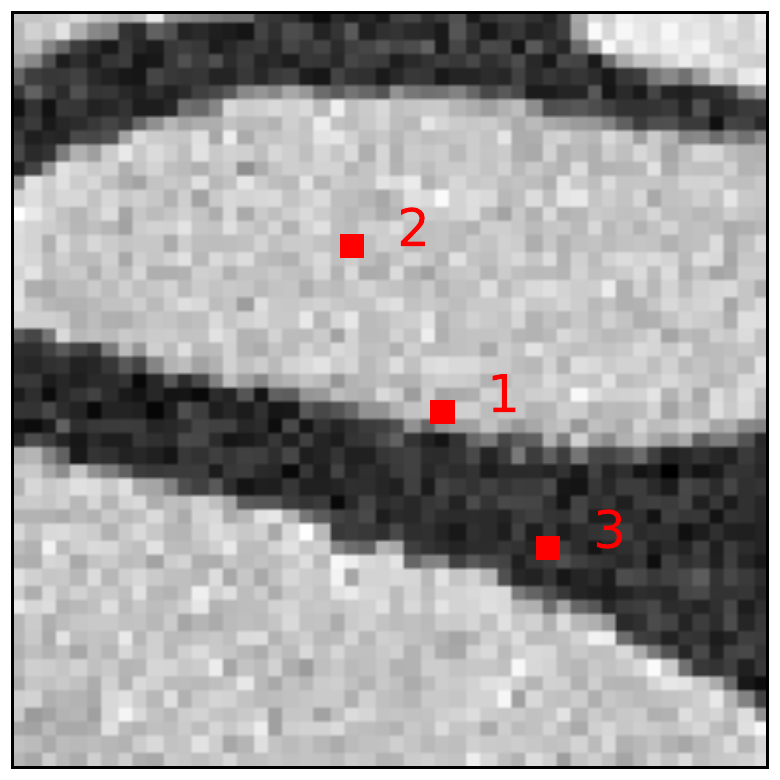}\nsp &
  \nsp\includegraphics[height=\f1ht]{./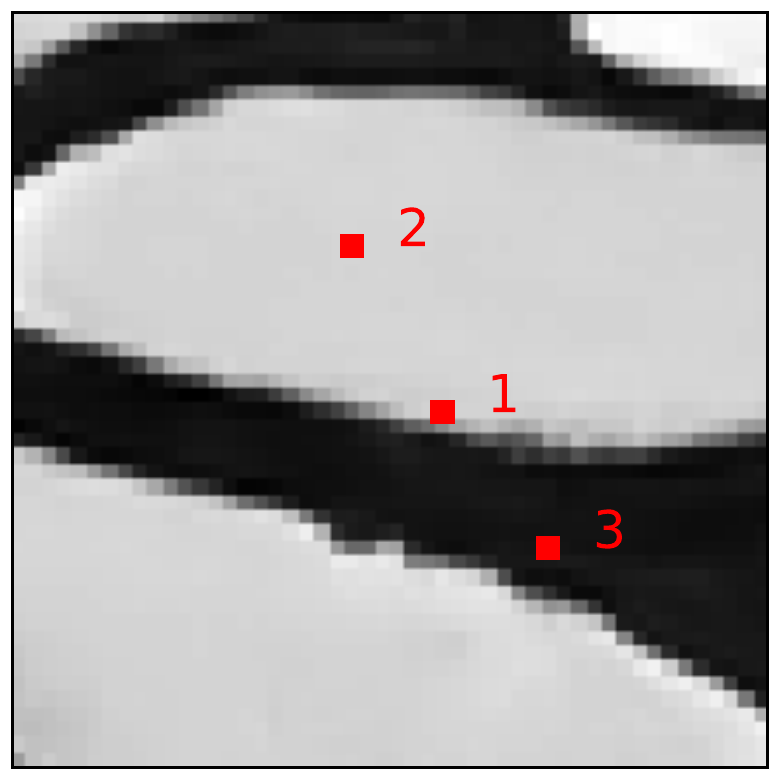}\nsp &
  \nsp\includegraphics[height=\f1ht]{./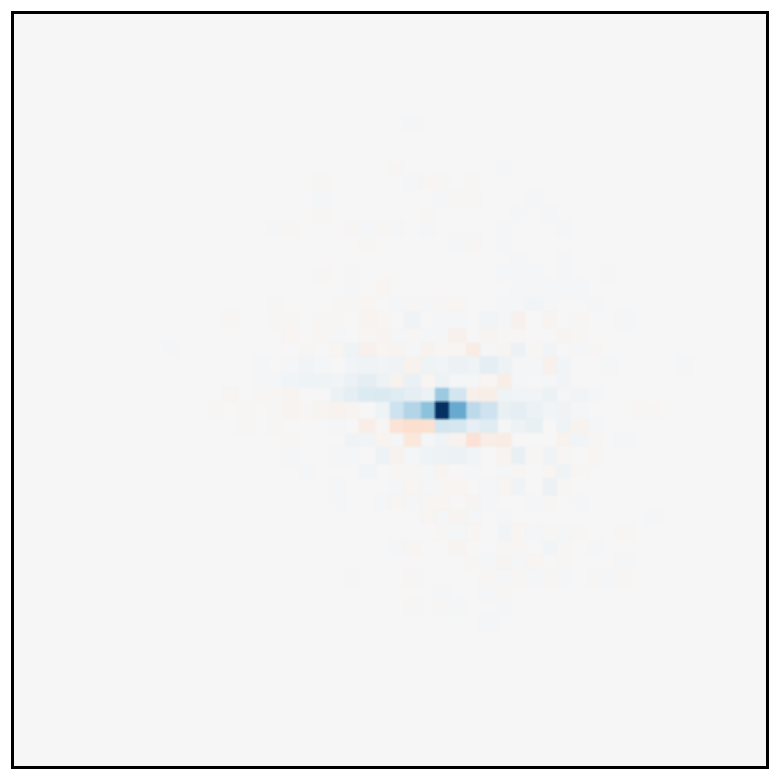}\nsp &
  \nsp\includegraphics[height=\f1ht]{./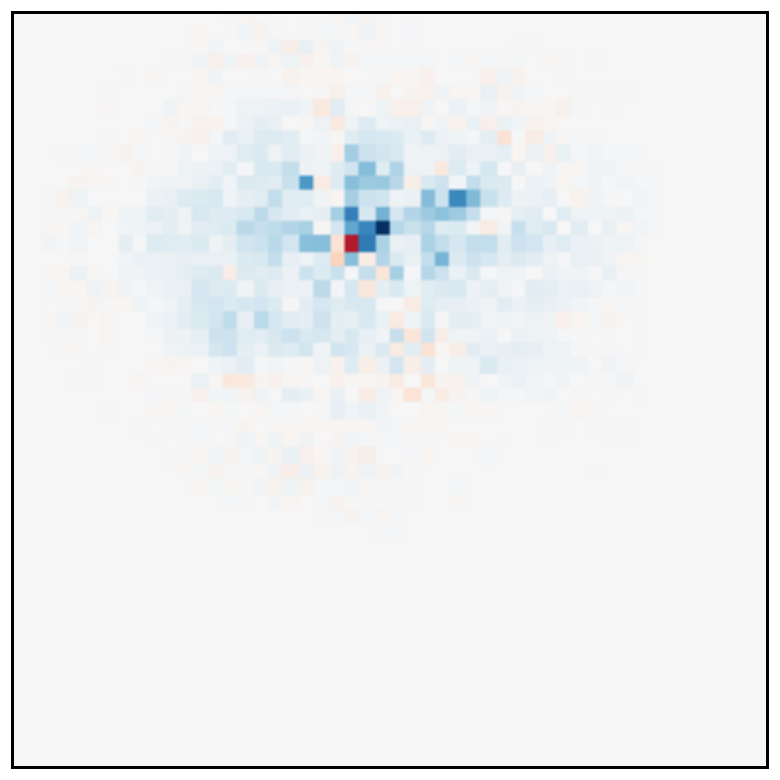}\nsp
  &
  \nsp\includegraphics[height=\f1ht]{./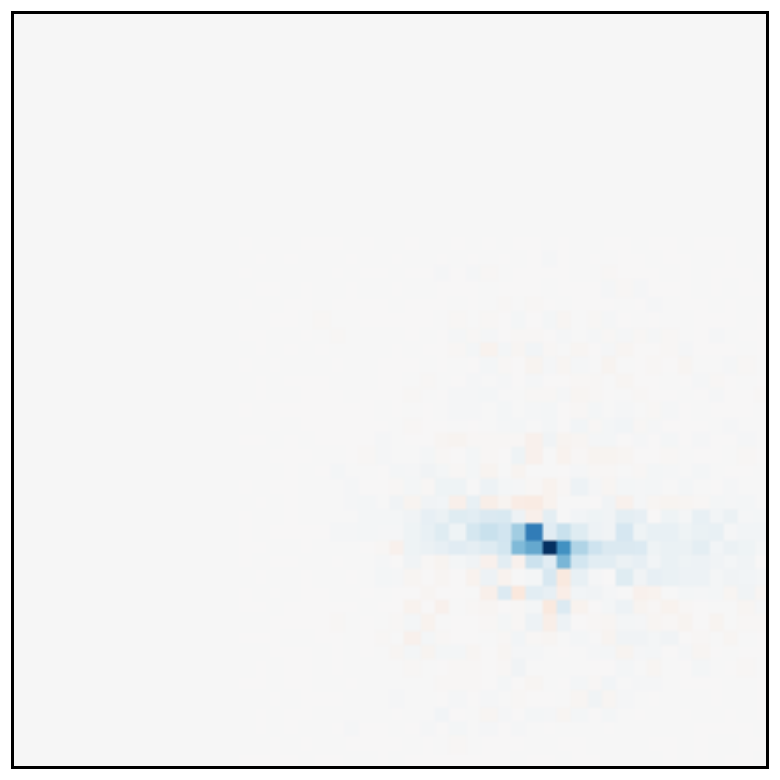}\nsp\\
  \scriptsize{$35$}&
    \nsp\includegraphics[height=\f1ht]{./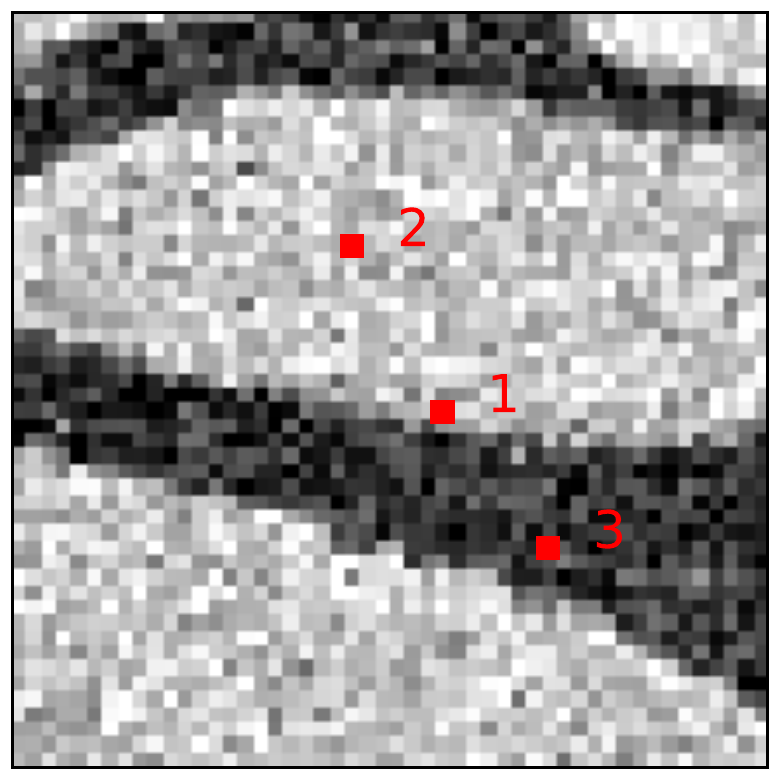}\nsp &
  \nsp\includegraphics[height=\f1ht]{./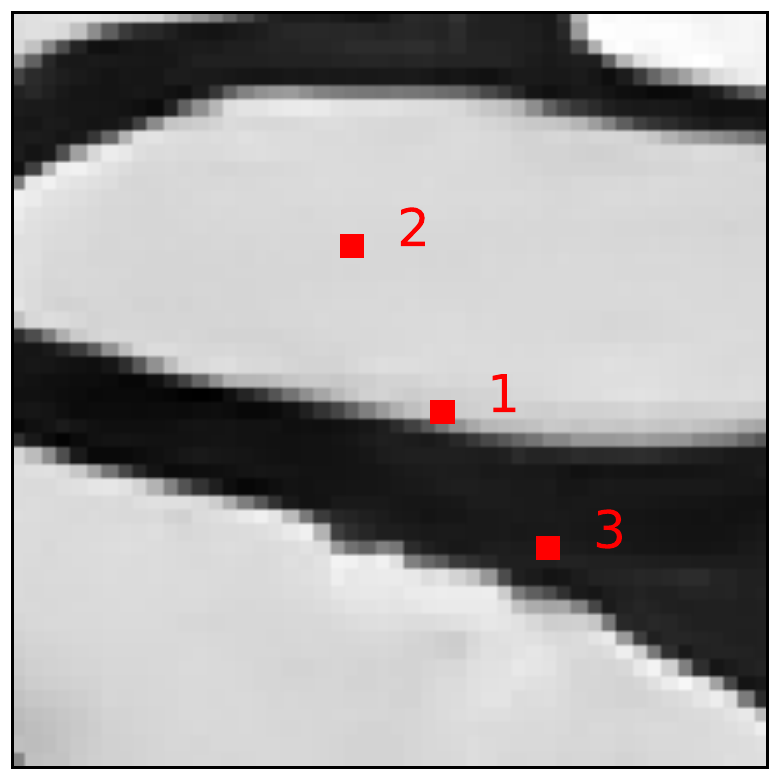}\nsp &
  \nsp\includegraphics[height=\f1ht]{./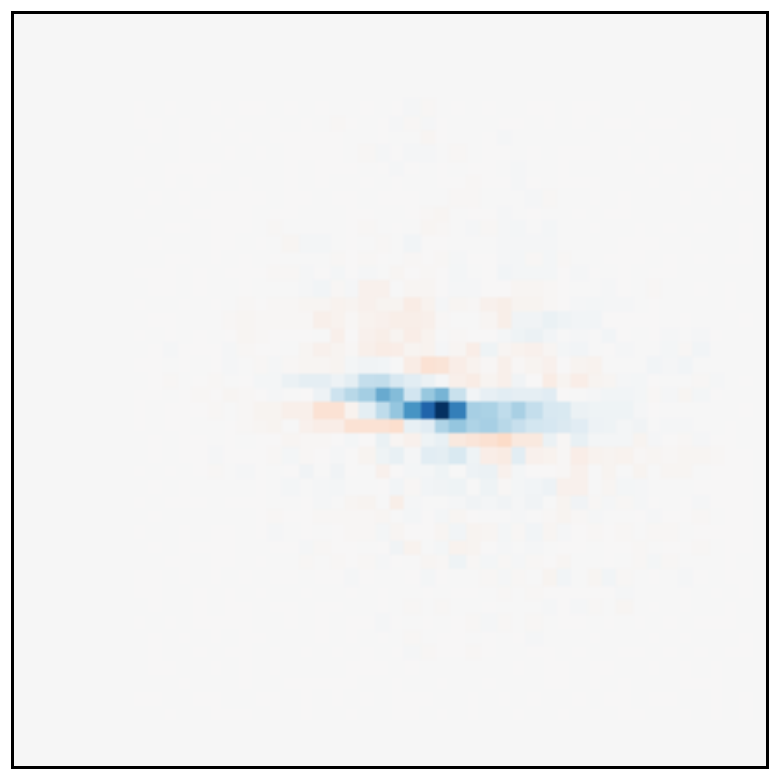}\nsp &
  \nsp\includegraphics[height=\f1ht]{./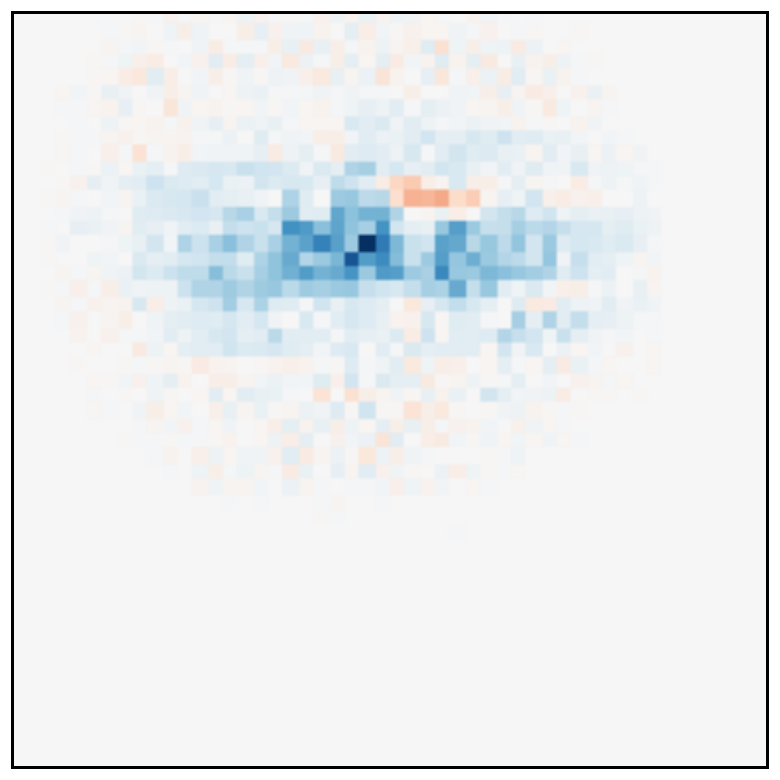}\nsp
  &
  \nsp\includegraphics[height=\f1ht]{./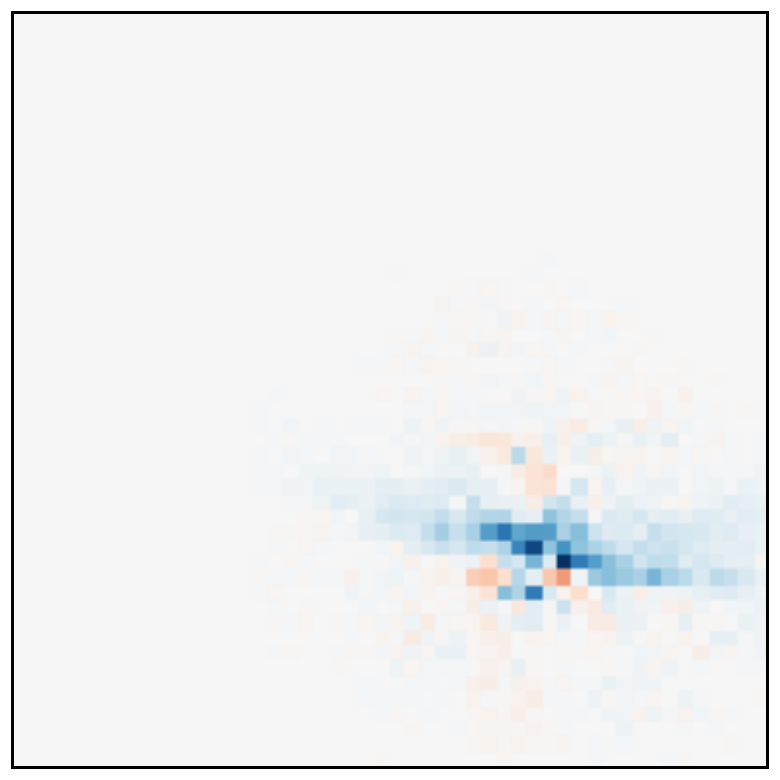}\nsp\\
  \scriptsize{$55$} &
  \nsp\includegraphics[height=\f1ht]{./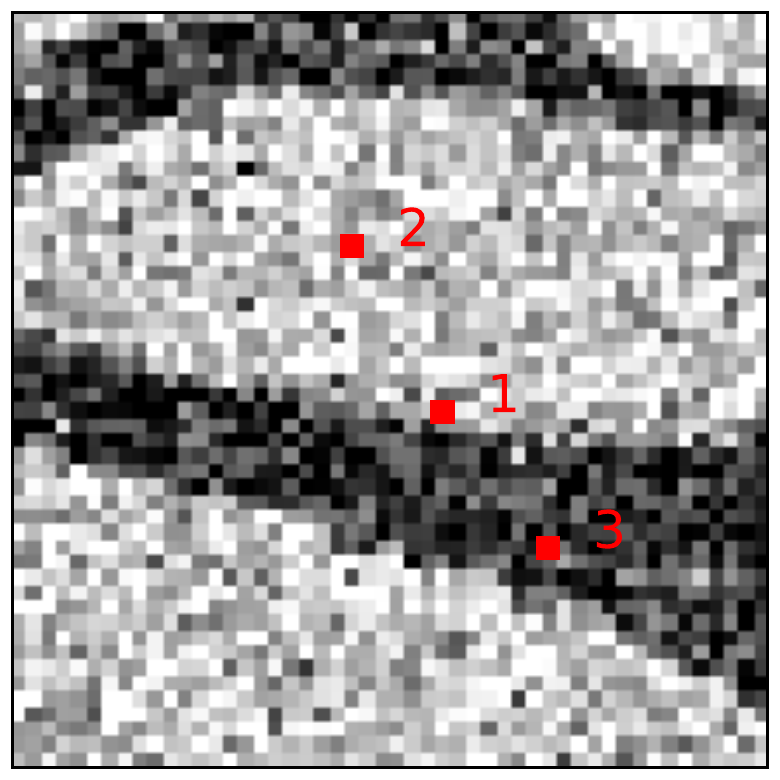}\nsp &
  \nsp\includegraphics[height=\f1ht]{./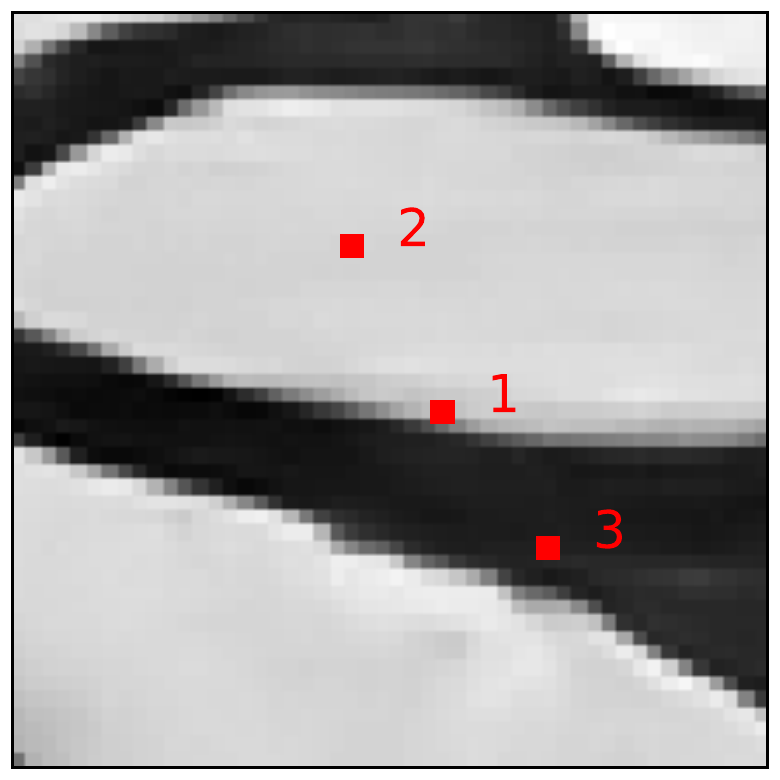}\nsp &
  \nsp\includegraphics[height=\f1ht]{./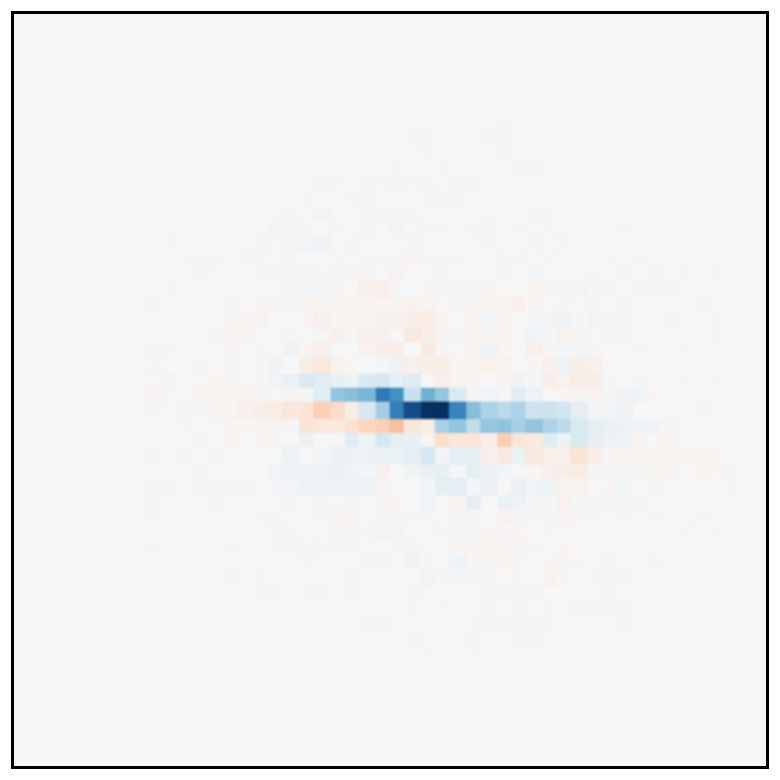}\nsp &
  \nsp\includegraphics[height=\f1ht]{./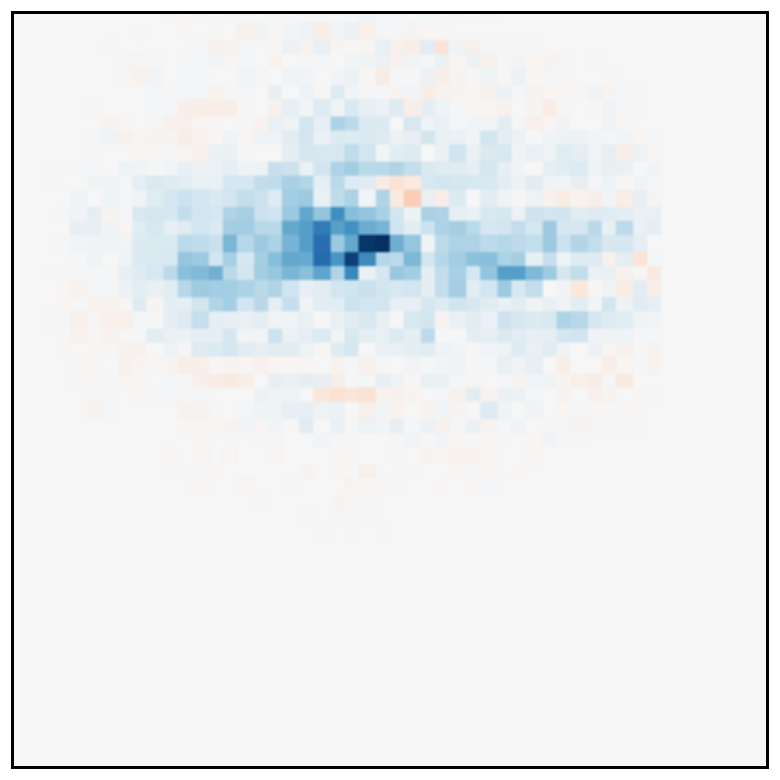}\nsp
  &
  \nsp\includegraphics[height=\f1ht]{./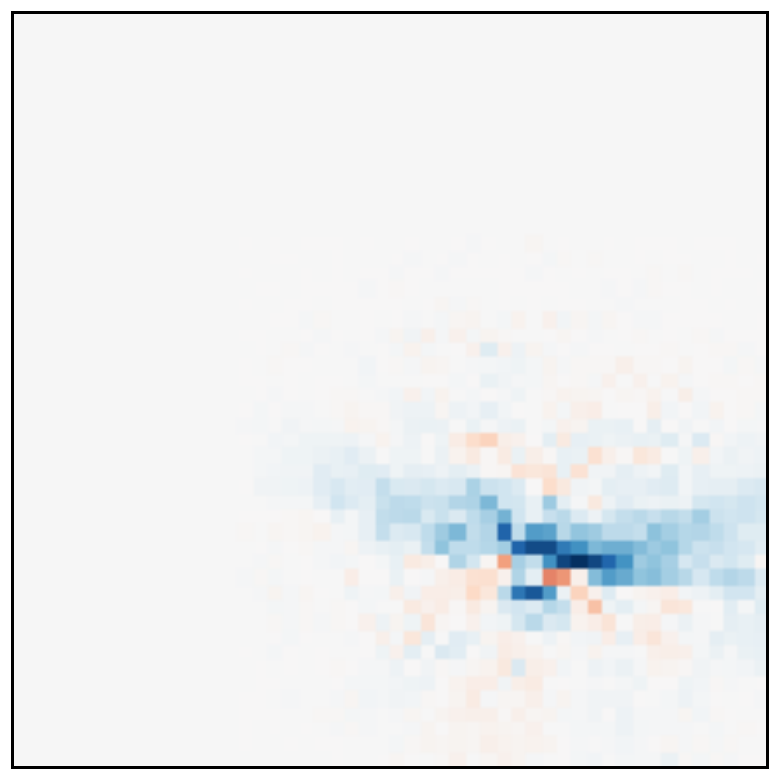}\nsp\\
  \end{tabular}\\[-0.5ex]
\caption{Visualization of the linear weighting functions (rows of $A_y$) of a BF-DnCNN for three example pixels of a noisy input image (left). The next image is the denoised output. The three images on the right show the linear weighting functions corresponding to each of the indicated pixels (red squares). All weighting functions sum to one, and thus compute a local average (although some weights are negative, indicated in red). Their shapes vary substantially, and are adapted to the underlying image content. Each row corresponds to a noisy input with increasing $\sigma$ and the filters adapt by averaging over a larger region.}
\label{fig:filter_bfdncnn}
\end{figure}

\begin{figure}[t]
\def\f1ht{1.0in}
\centering 
\begin{tabular}{
>{\centering\arraybackslash}m{0.02\linewidth}>{\centering\arraybackslash}m{0.16\linewidth}>{\centering\arraybackslash}m{0.16\linewidth}>{\centering\arraybackslash}m{0.16\linewidth}>{\centering\arraybackslash}m{0.16\linewidth}>{\centering\arraybackslash}m{0.16\linewidth}}
\scriptsize{$\sigma$}\vspace{0.1cm}&\footnotesize{Noisy Input($y$)} \vspace{0.1cm} & \footnotesize{Denoised ($f(y)=A_yy)$} \vspace{0.1cm} & \footnotesize{Pixel $1$} & \footnotesize{Pixel $2$} & \footnotesize{Pixel $3$} \vspace{0.1cm} \\
  \scriptsize{$5$} & \nsp\includegraphics[height=\f1ht]{./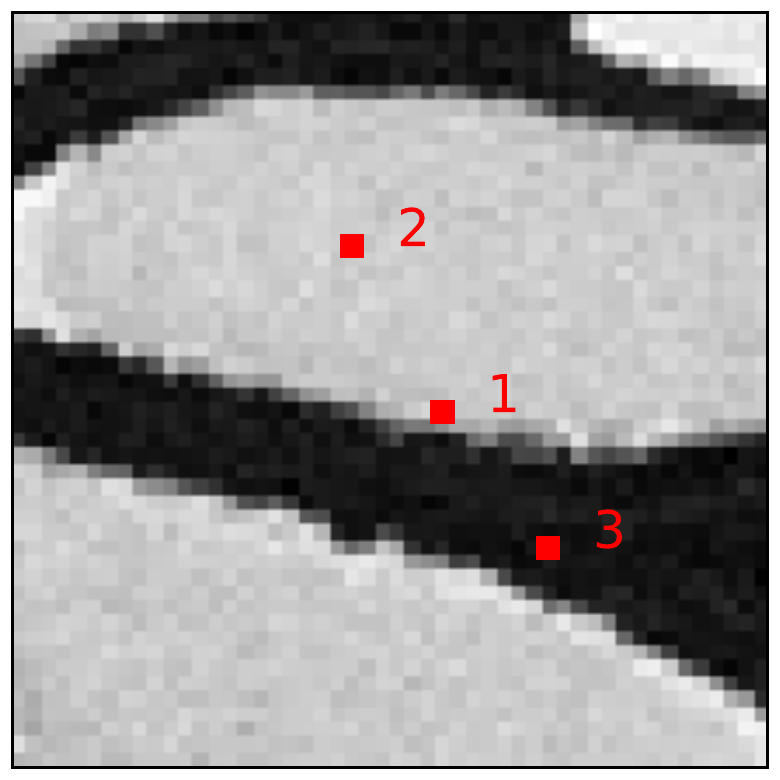}\nsp &
  \nsp\includegraphics[height=\f1ht]{./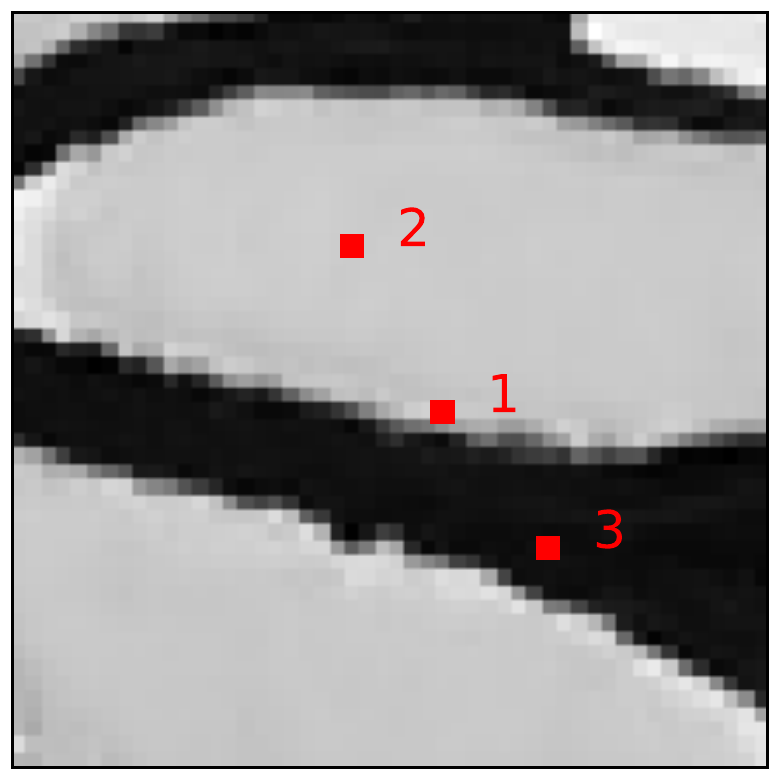}\nsp &
  \nsp\includegraphics[height=\f1ht]{./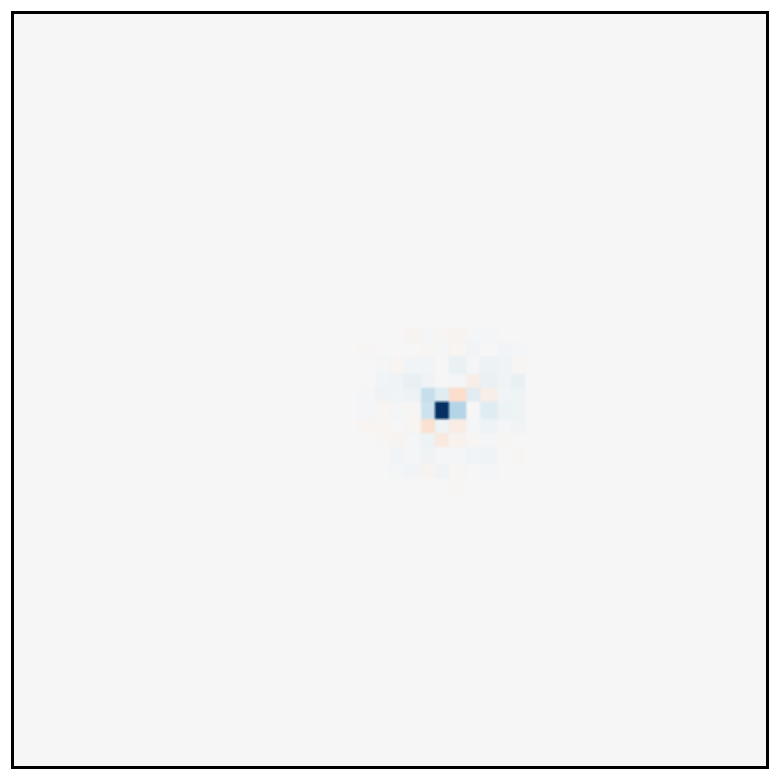}\nsp &
  \nsp\includegraphics[height=\f1ht]{./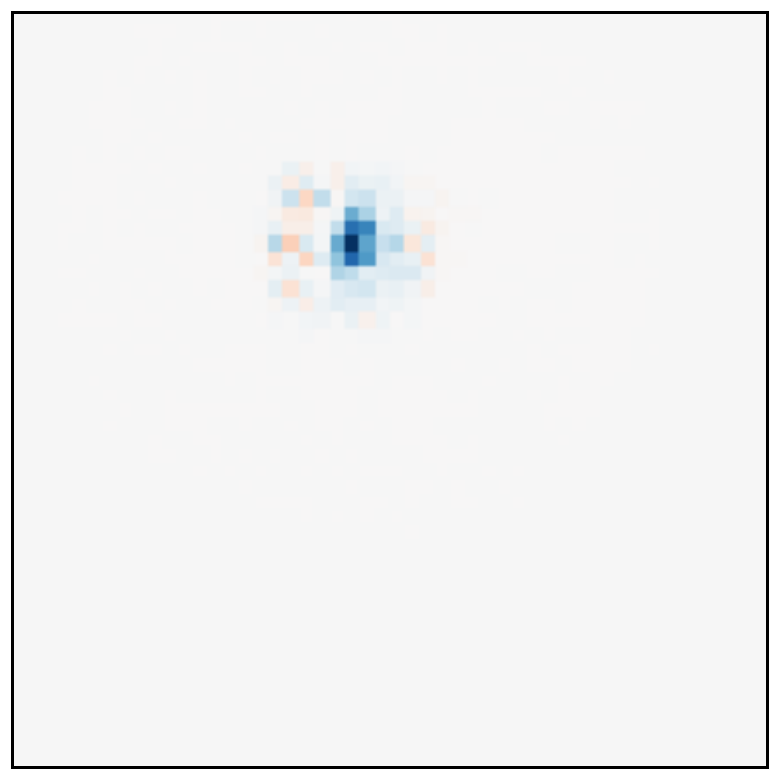}\nsp
  &
  \nsp\includegraphics[height=\f1ht]{./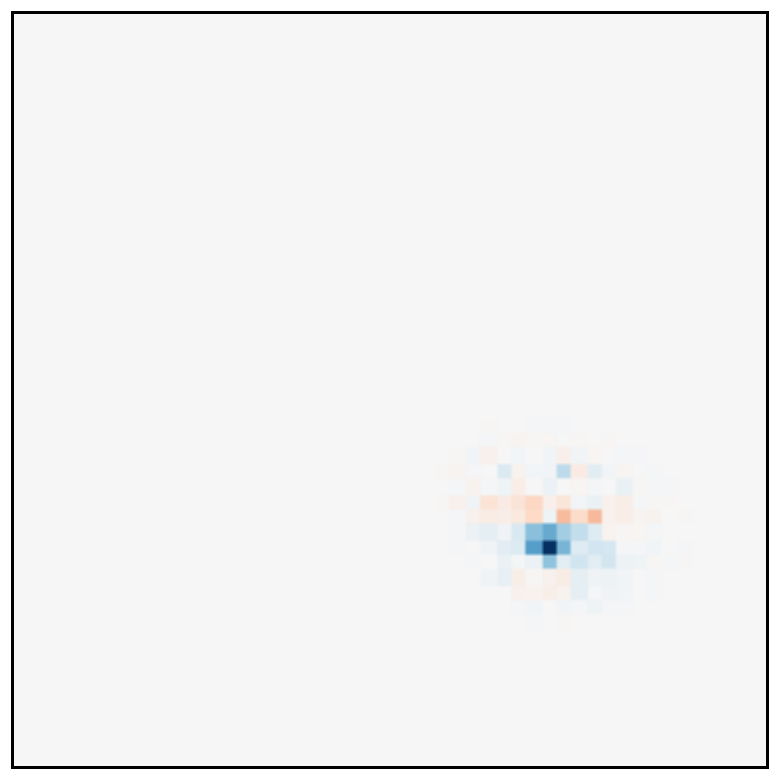}\nsp\\ 
  
  \scriptsize{$55$} &
  \nsp\includegraphics[height=\f1ht]{./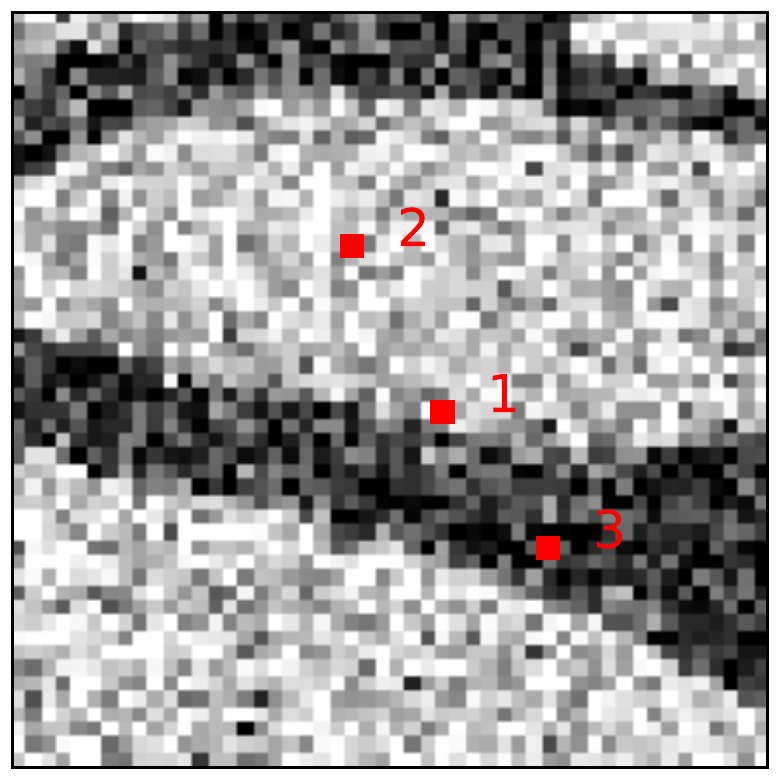}\nsp &
  \nsp\includegraphics[height=\f1ht]{./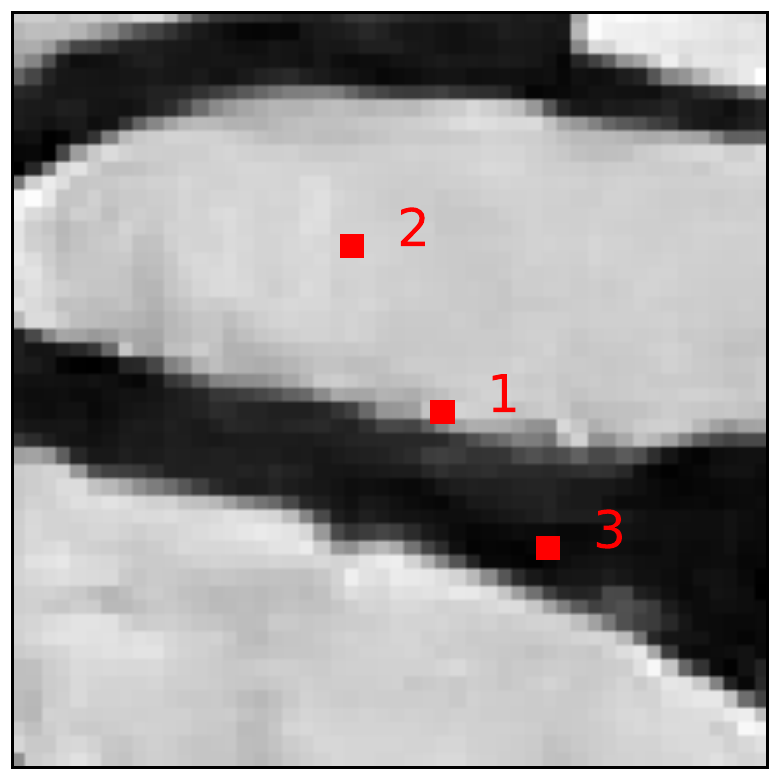}\nsp &
  \nsp\includegraphics[height=\f1ht]{./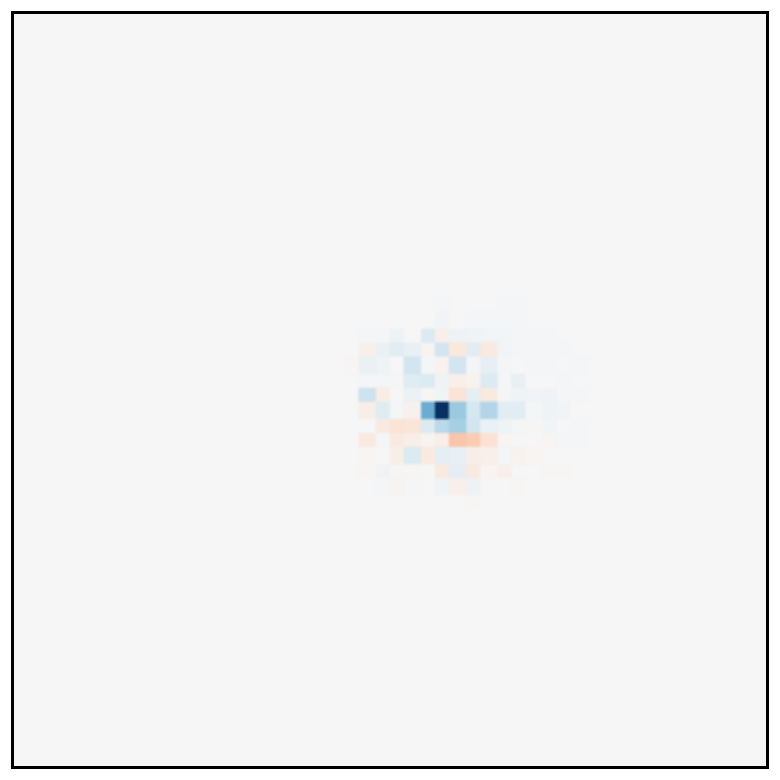}\nsp &
  \nsp\includegraphics[height=\f1ht]{./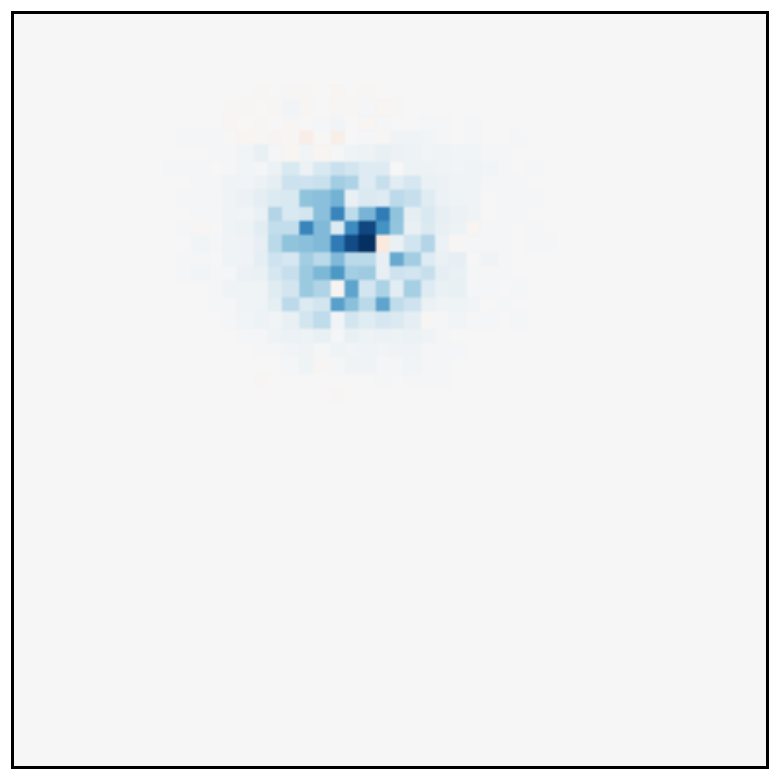}\nsp
  &
  \nsp\includegraphics[height=\f1ht]{./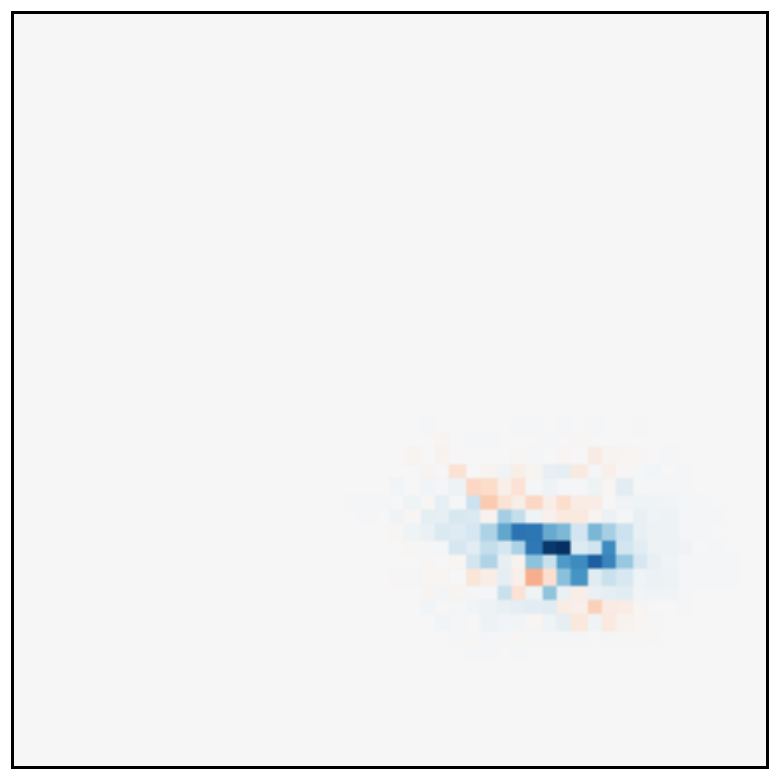}\nsp\\
  
  \scriptsize{$5$} & \nsp\includegraphics[height=\f1ht]{./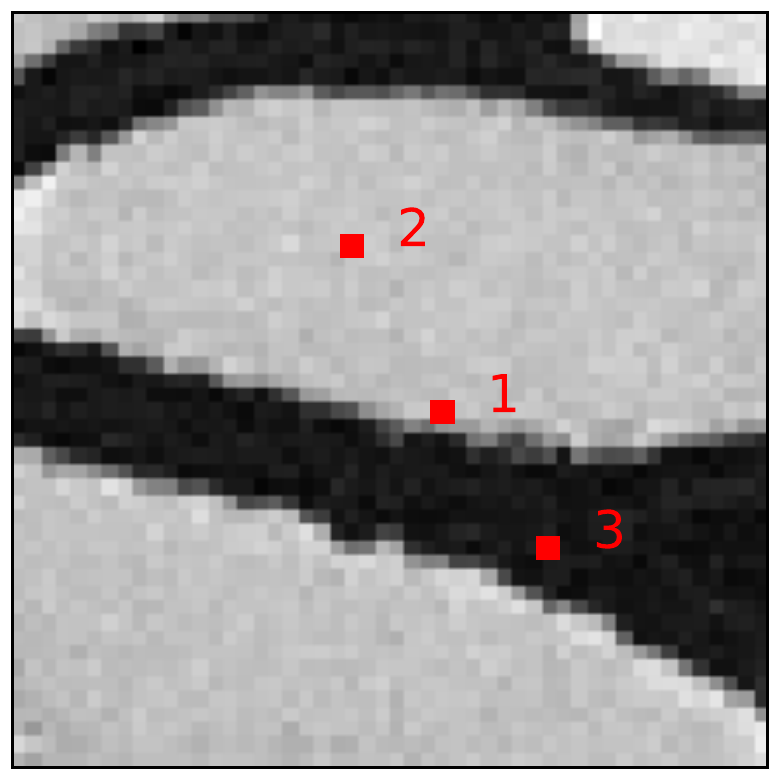}\nsp &
  \nsp\includegraphics[height=\f1ht]{./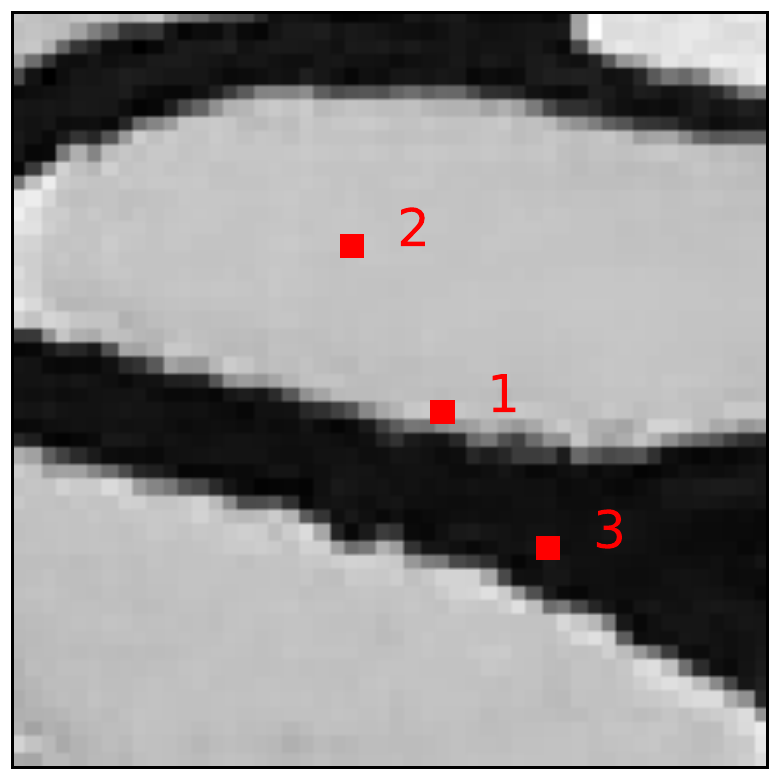}\nsp &
  \nsp\includegraphics[height=\f1ht]{./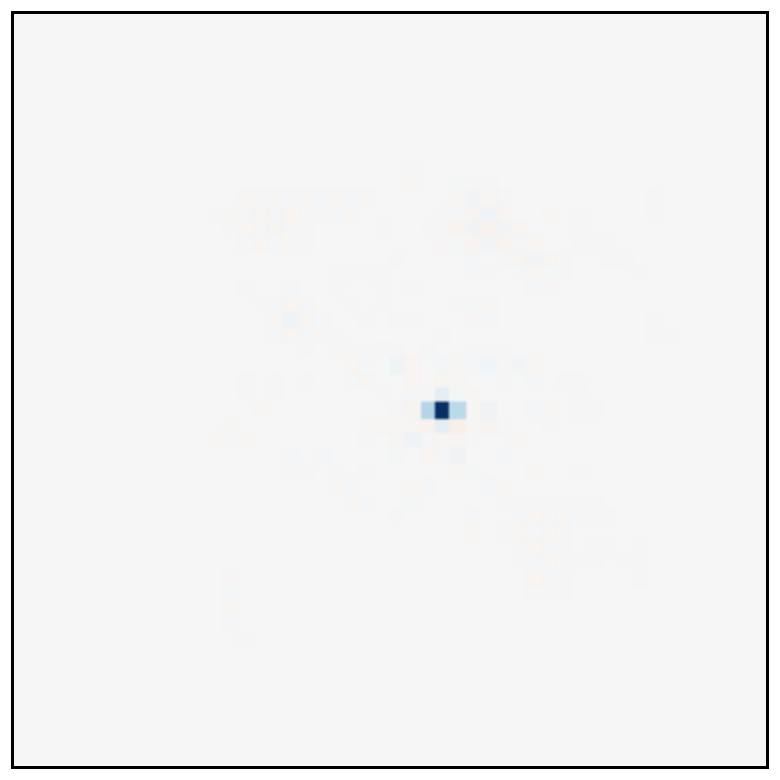}\nsp &
  \nsp\includegraphics[height=\f1ht]{./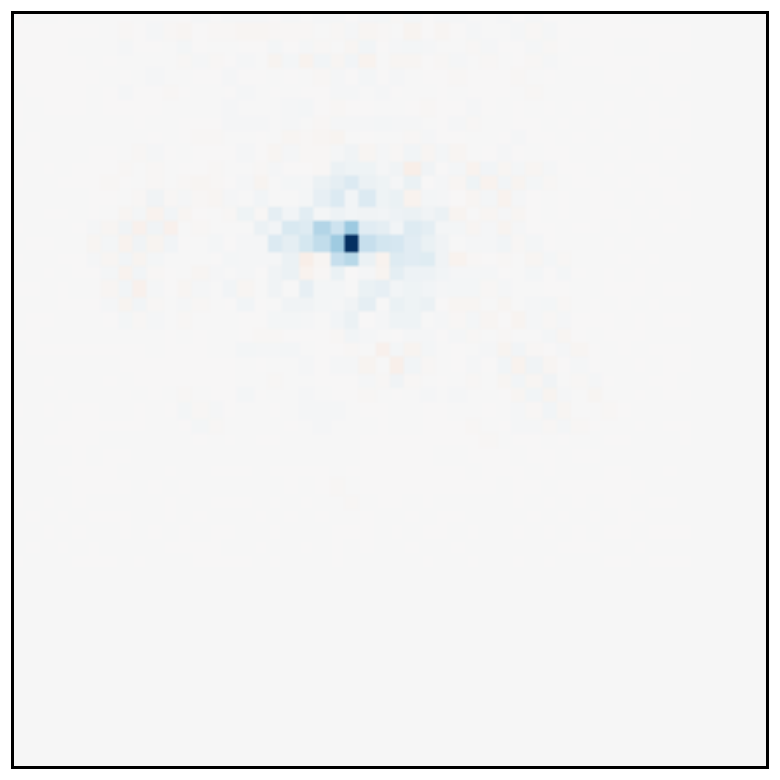}\nsp
  &
  \nsp\includegraphics[height=\f1ht]{./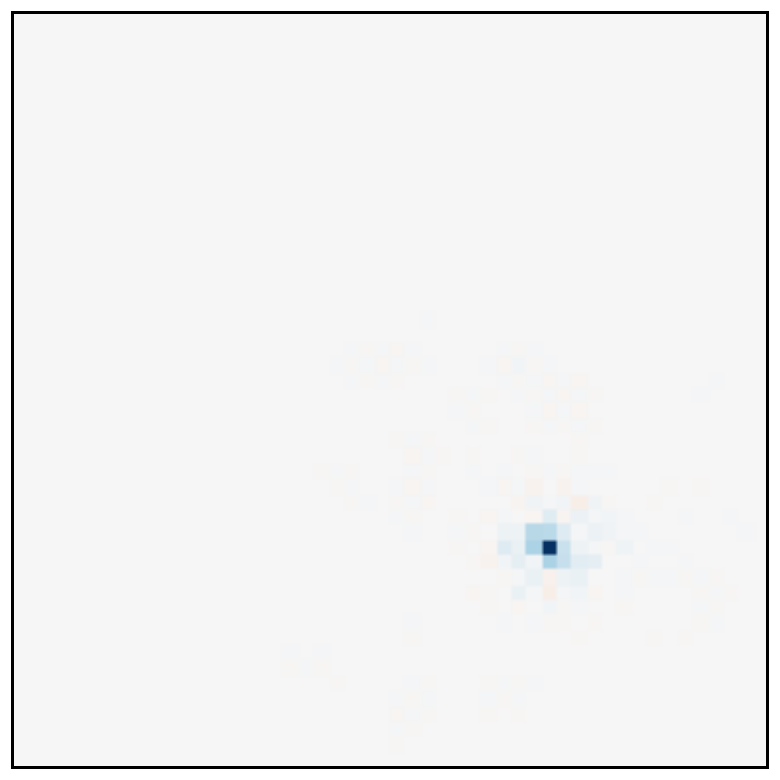}\nsp\\ 
  
  \scriptsize{$55$} &
  \nsp\includegraphics[height=\f1ht]{./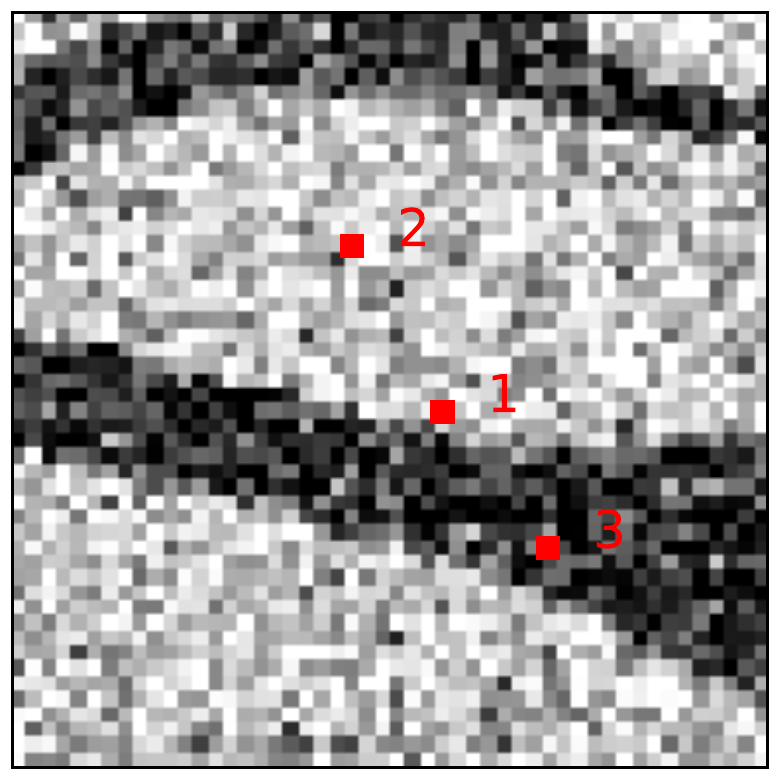}\nsp &
  \nsp\includegraphics[height=\f1ht]{./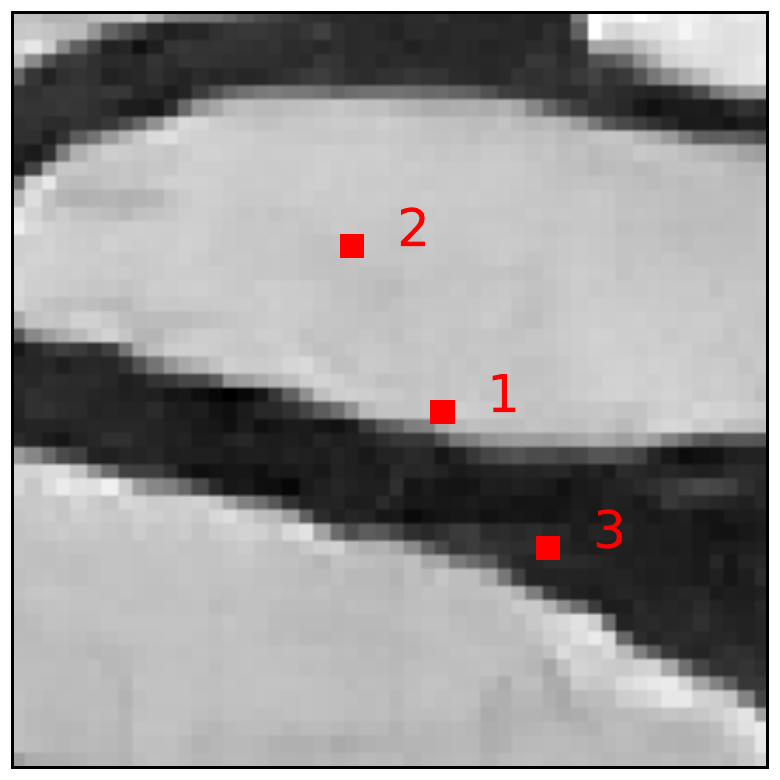}\nsp &
  \nsp\includegraphics[height=\f1ht]{./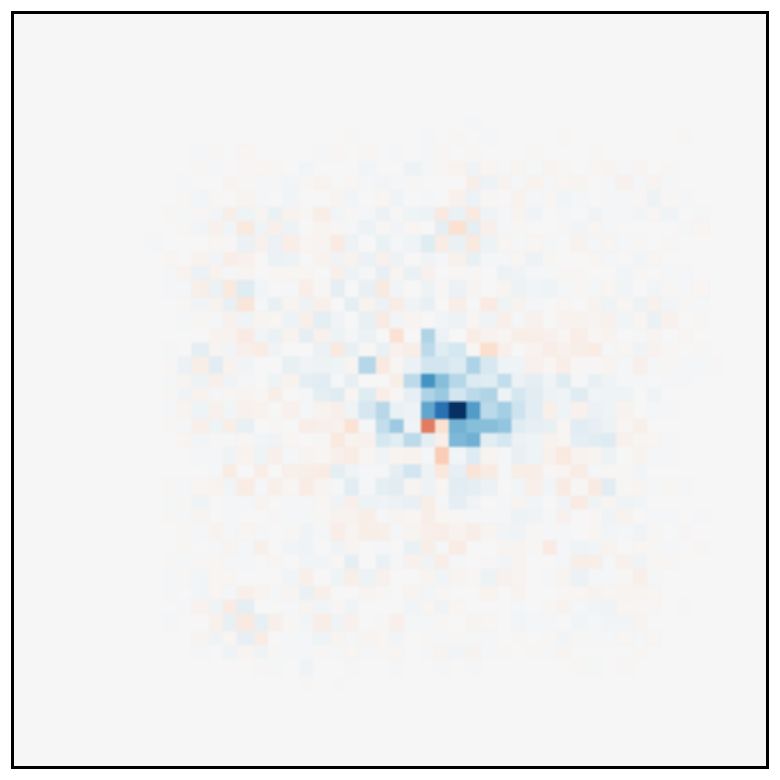}\nsp &
  \nsp\includegraphics[height=\f1ht]{./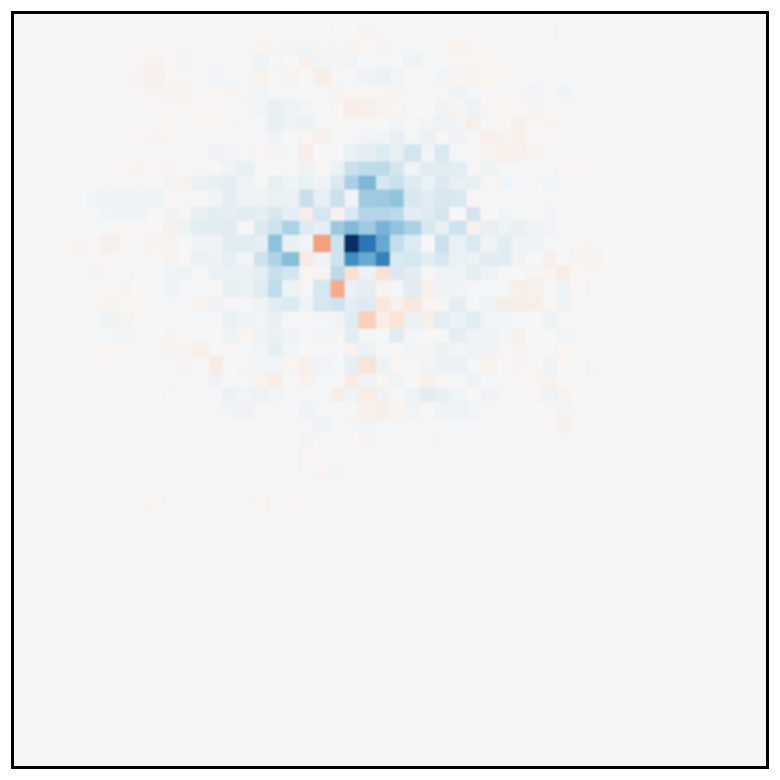}\nsp
  &
  \nsp\includegraphics[height=\f1ht]{./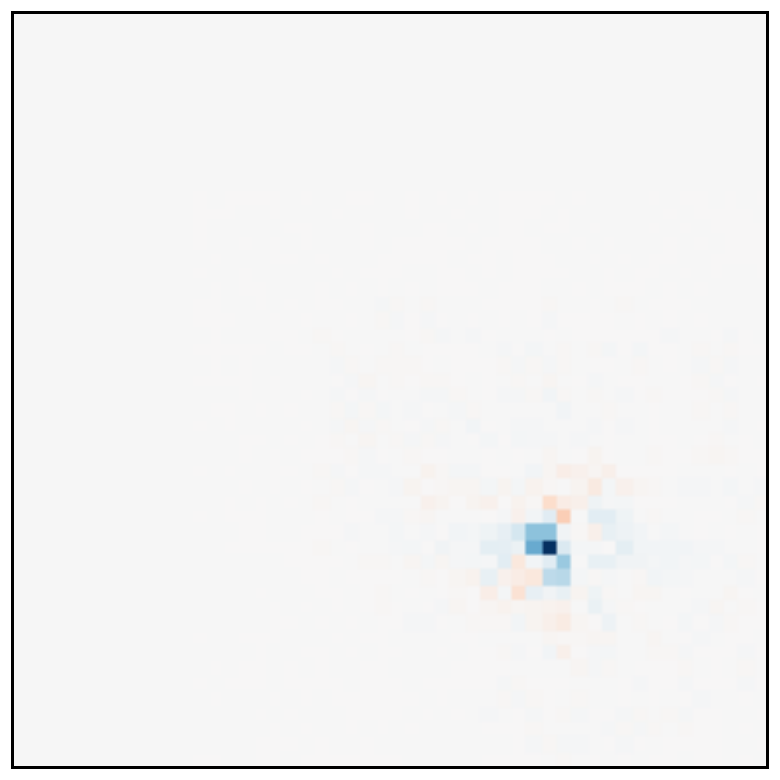}\nsp\\
  
  \scriptsize{$5$} & \nsp\includegraphics[height=\f1ht]{./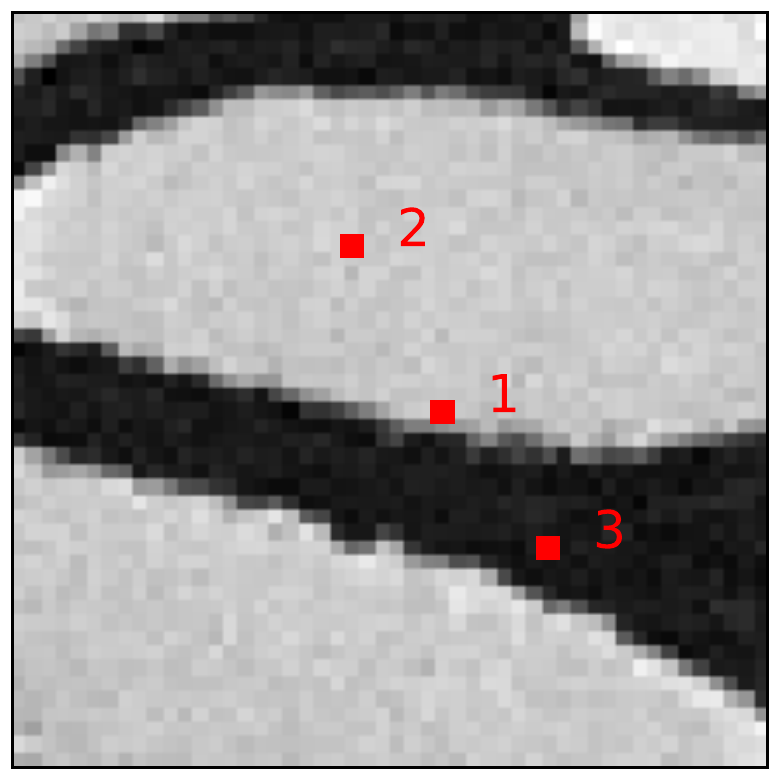}\nsp &
  \nsp\includegraphics[height=\f1ht]{./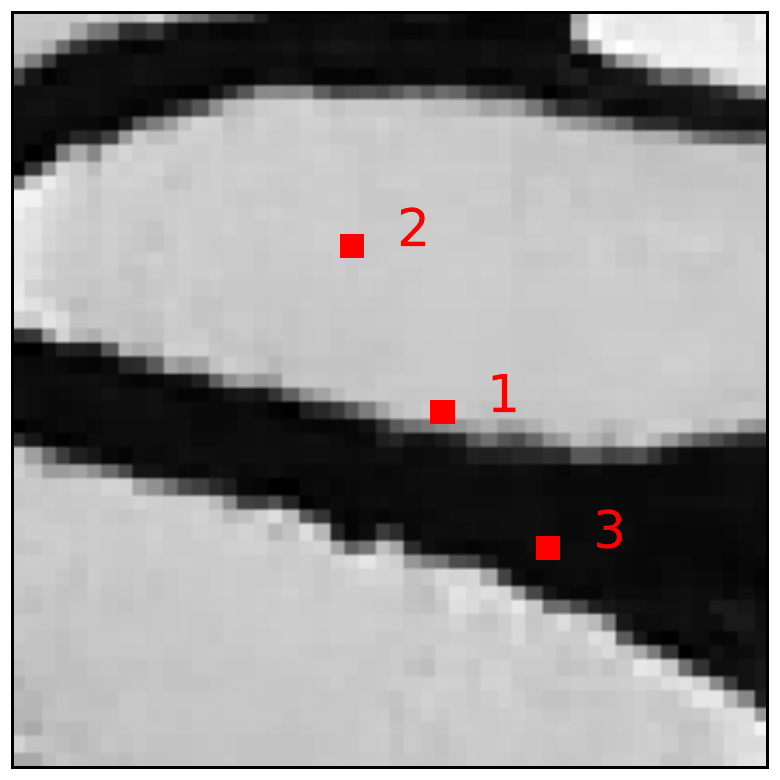}\nsp &
  \nsp\includegraphics[height=\f1ht]{./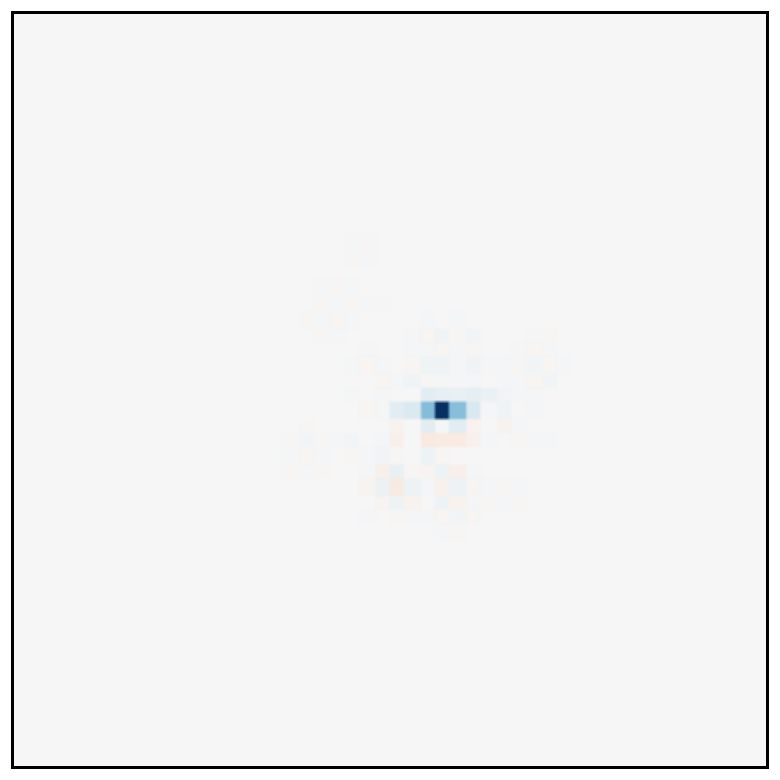}\nsp &
  \nsp\includegraphics[height=\f1ht]{./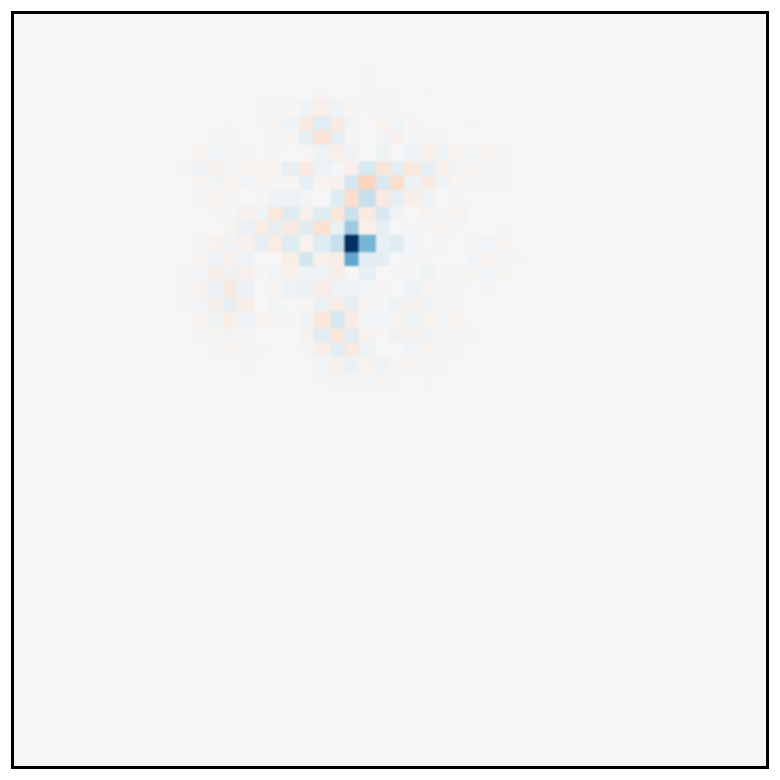}\nsp
  &
  \nsp\includegraphics[height=\f1ht]{./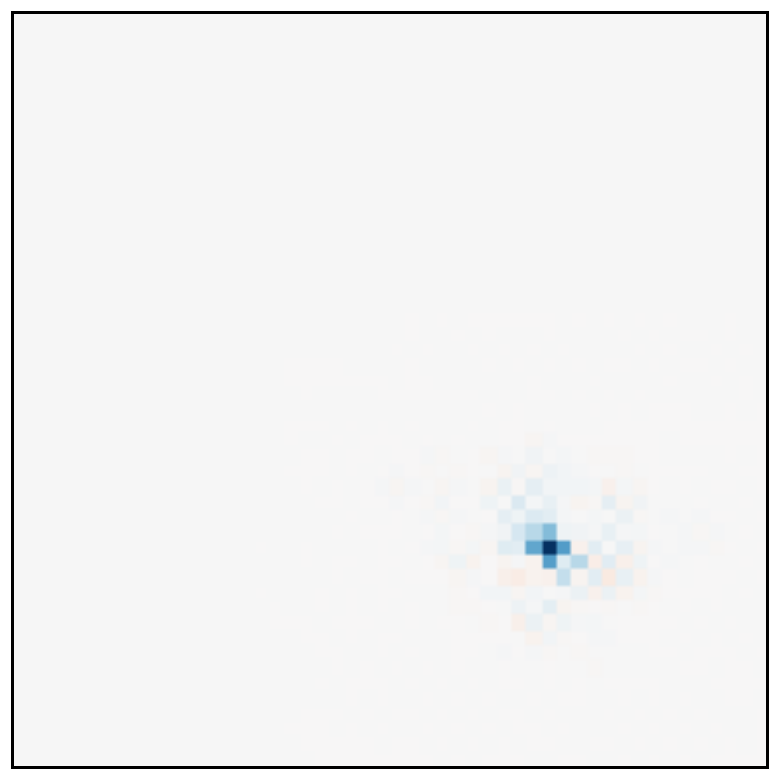}\nsp\\ 
  
  \scriptsize{$55$} &
  \nsp\includegraphics[height=\f1ht]{./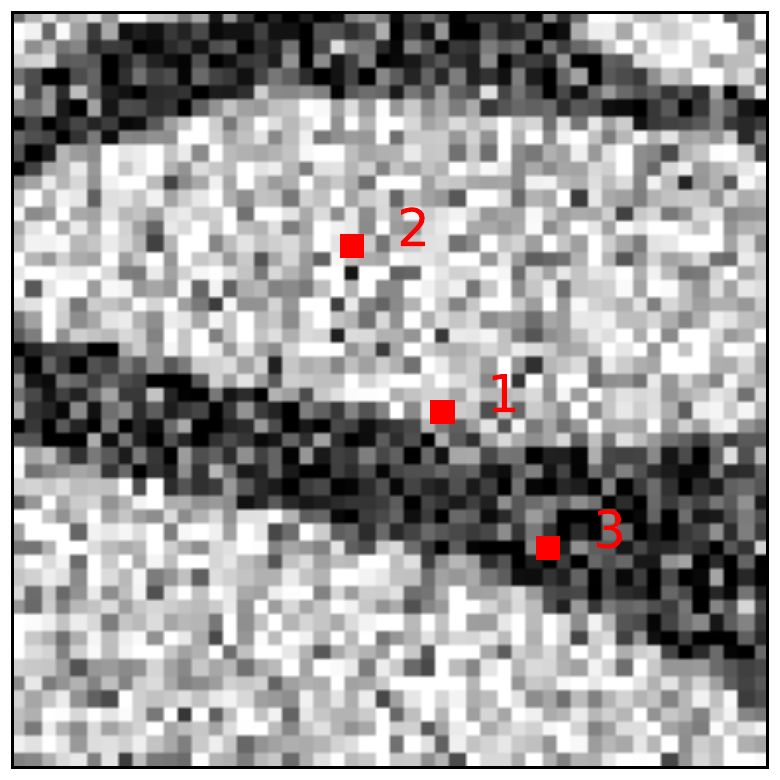}\nsp &
  \nsp\includegraphics[height=\f1ht]{./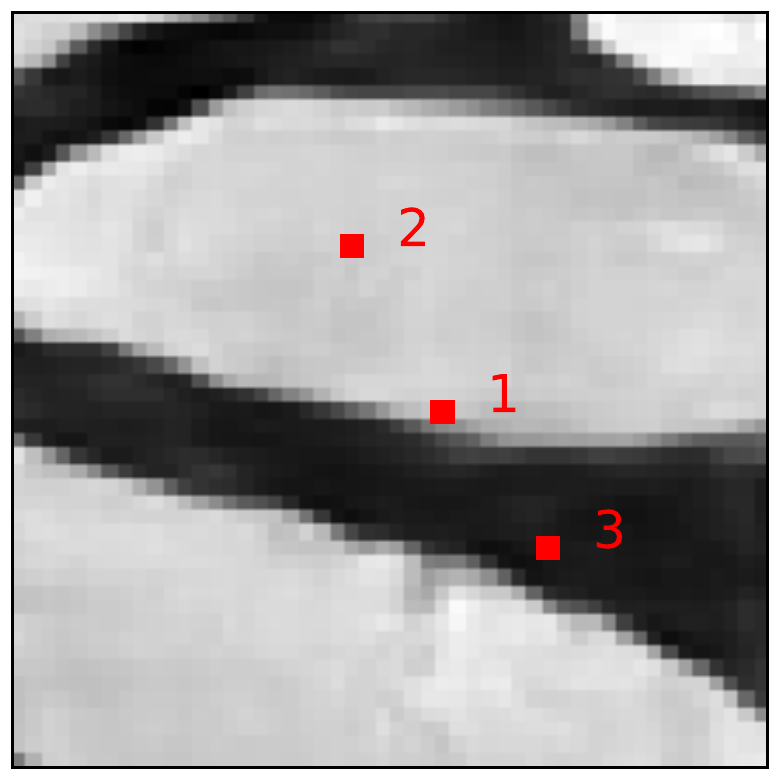}\nsp &
  \nsp\includegraphics[height=\f1ht]{./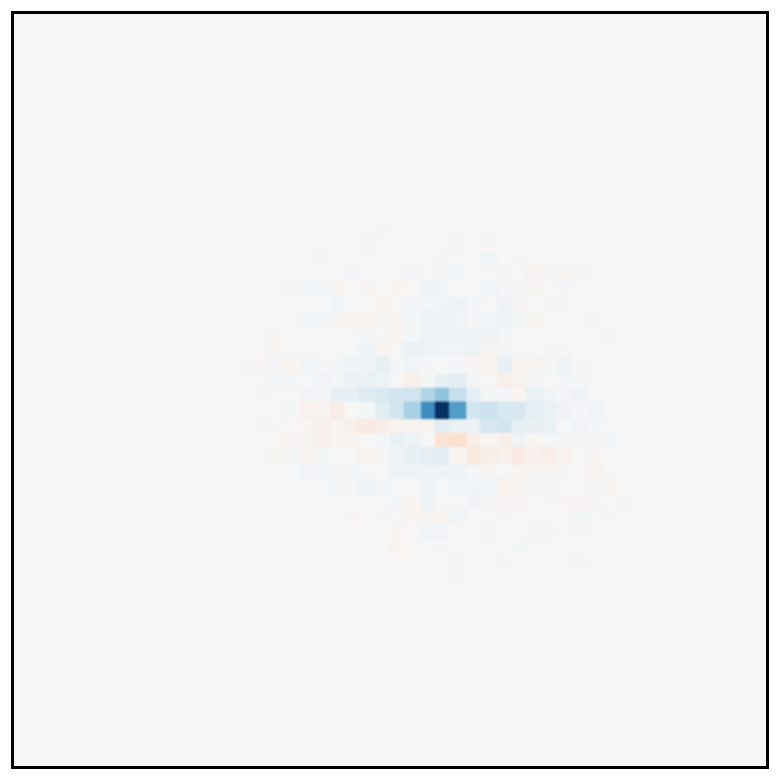}\nsp &
  \nsp\includegraphics[height=\f1ht]{./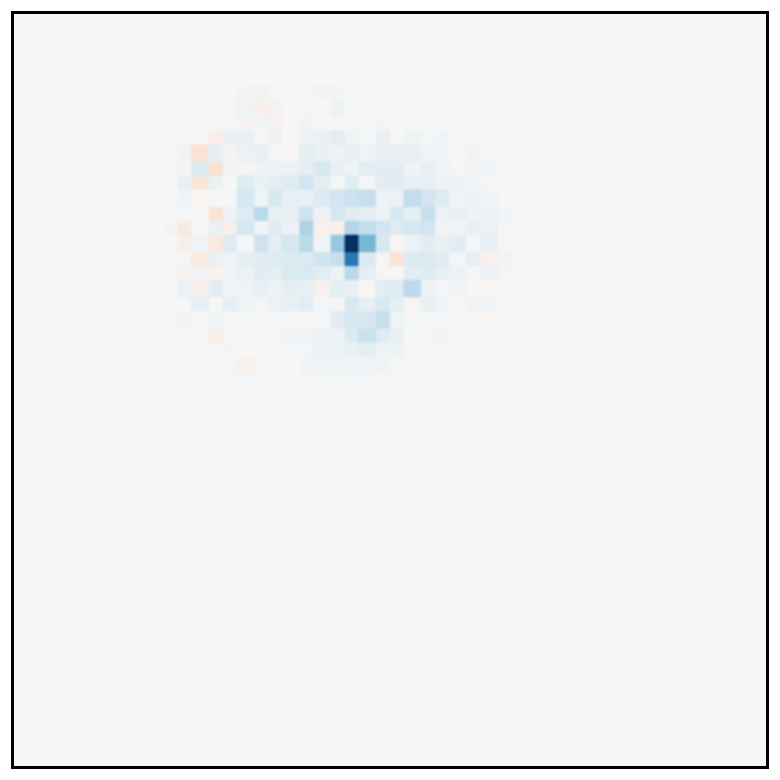}\nsp
  &
  \nsp\includegraphics[height=\f1ht]{./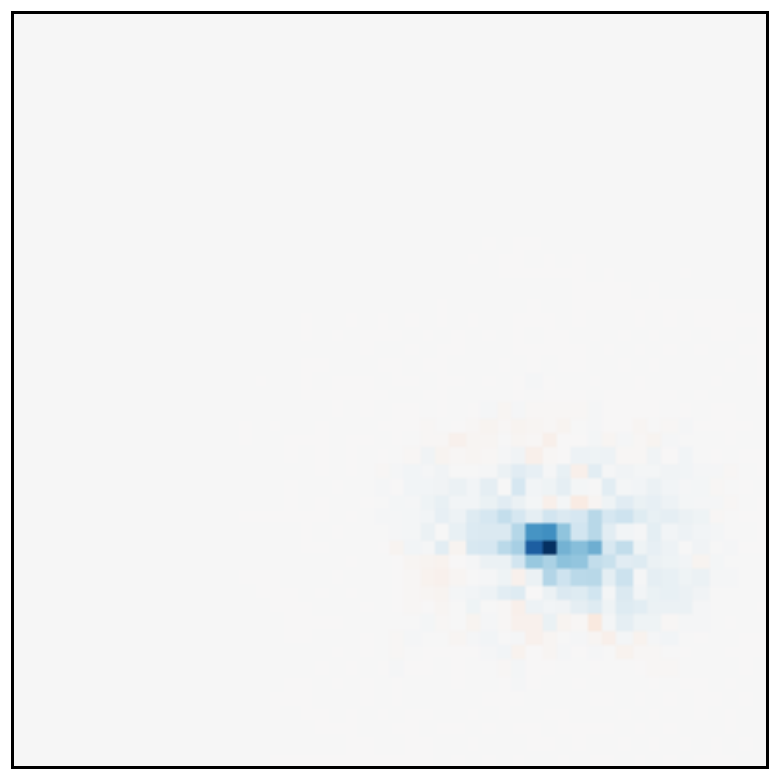}\nsp\\
  \end{tabular}\\[-0.5ex]
\caption{Visualization of the linear weighting functions (rows of $A_y$) of Bias-Free Recurrent-CNN (top $2$ rows) (Section \ref{sec:recursive_framework}), Bias-Free UNet (next $2$ rows) (Section \ref{sec:unet}) and Bias-Free DenseNet (bottom $2$ rows) (Section \ref{sec:densenet}) for three example pixels of a noisy input image (left). The next image is the denoised output. The three images on the right show the linear weighting functions corresponding to each of the indicated pixels (red squares). All weighting functions sum to one, and thus compute a local average (although some weights are negative, indicated in red). Their shapes vary substantially, and are adapted to the underlying image content. Each row corresponds to a noisy input with increasing $\sigma$ and the filters adapt by averaging over a larger region.}
\label{fig:filter_others}
\end{figure}




\begin{figure}[t]
\centering
\def\fsz{0.132\linewidth} \def\nsp{\hspace*{-0.015\linewidth}}
\begin{tabular}{c@{\hfil}ccccccccc}
   \nsp\includegraphics[width=\fsz]{./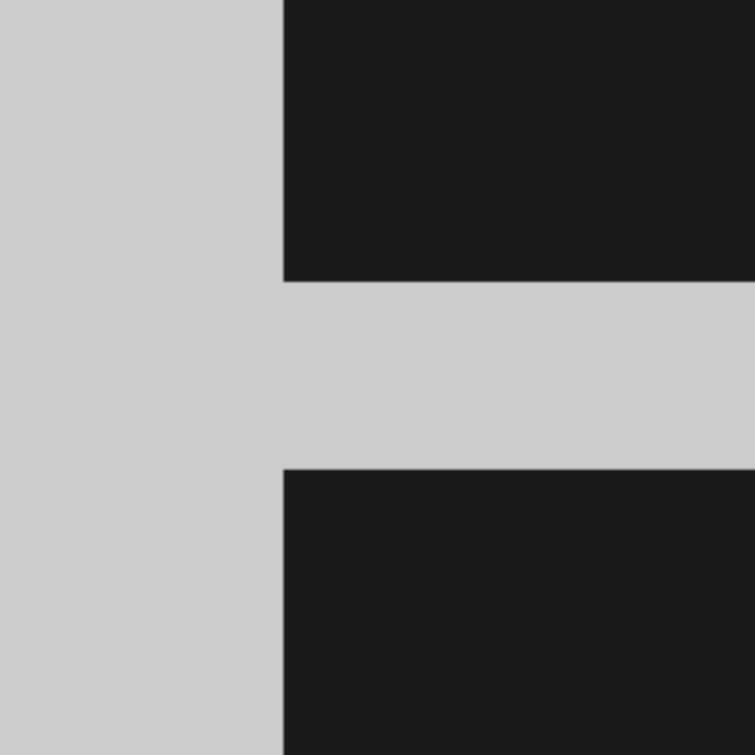}\nsp & 
  \hspace*{0.025\linewidth} &
  \nsp\includegraphics[width=\fsz]{./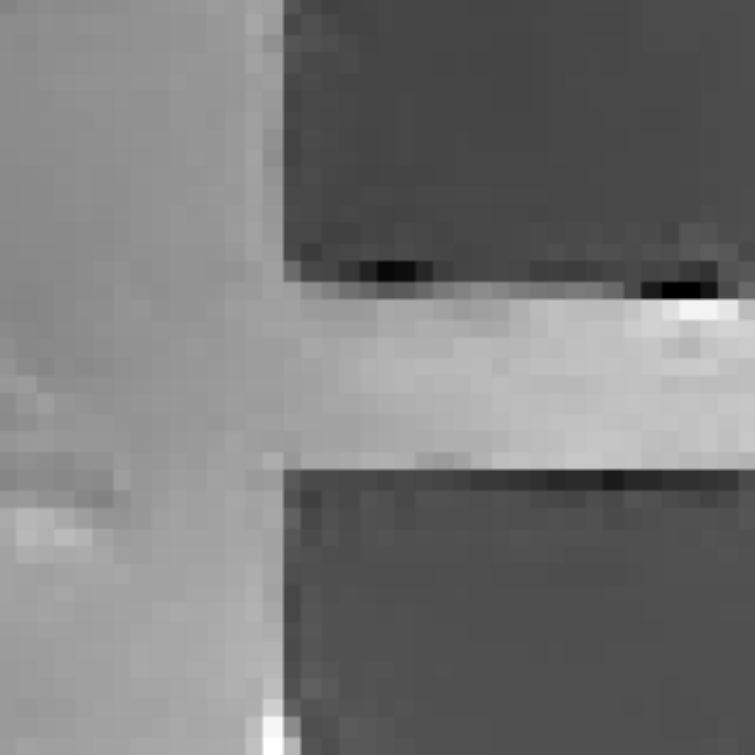}\nsp &
  \nsp\includegraphics[width=\fsz]{./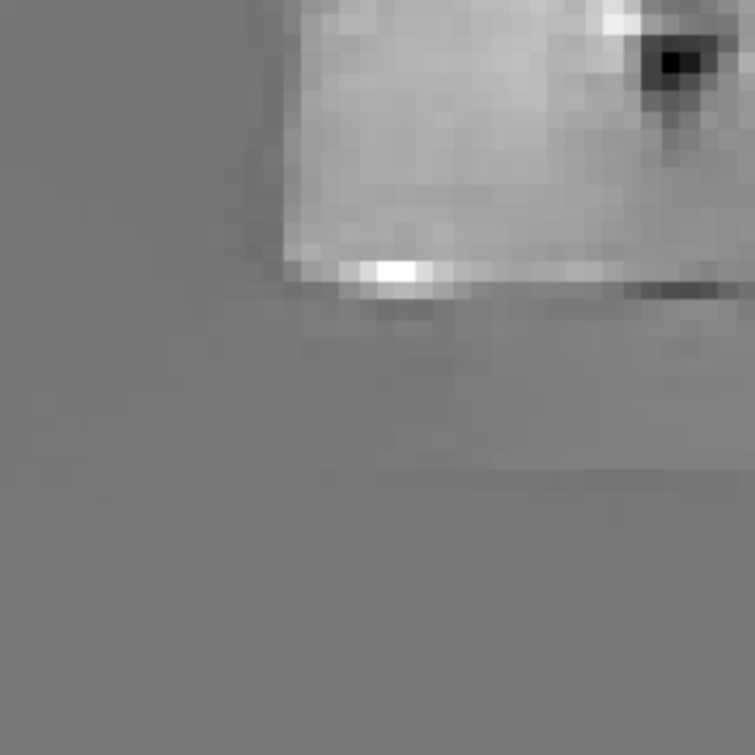}\nsp &
  \nsp\includegraphics[width=\fsz]{./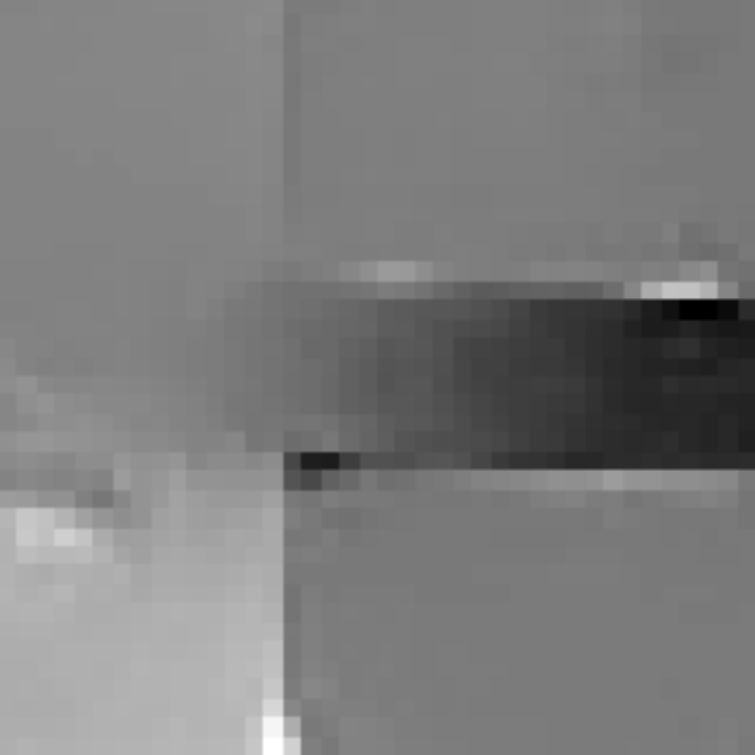}\nsp &
  \nsp\includegraphics[width=0.05\linewidth]{figs/sig_space4.pdf}\nsp &
  \nsp\includegraphics[width=\fsz]{./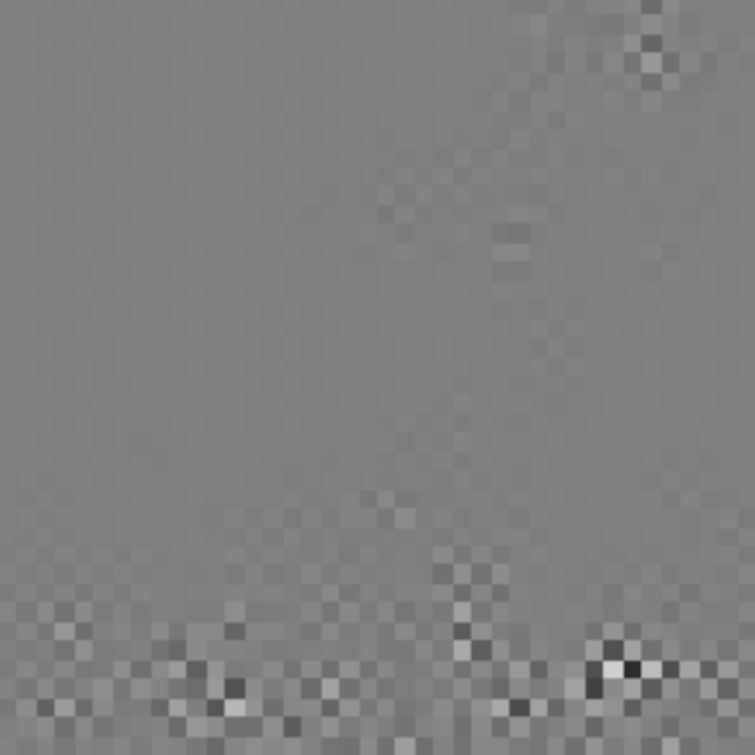}\nsp &
  \nsp\includegraphics[width=\fsz]{./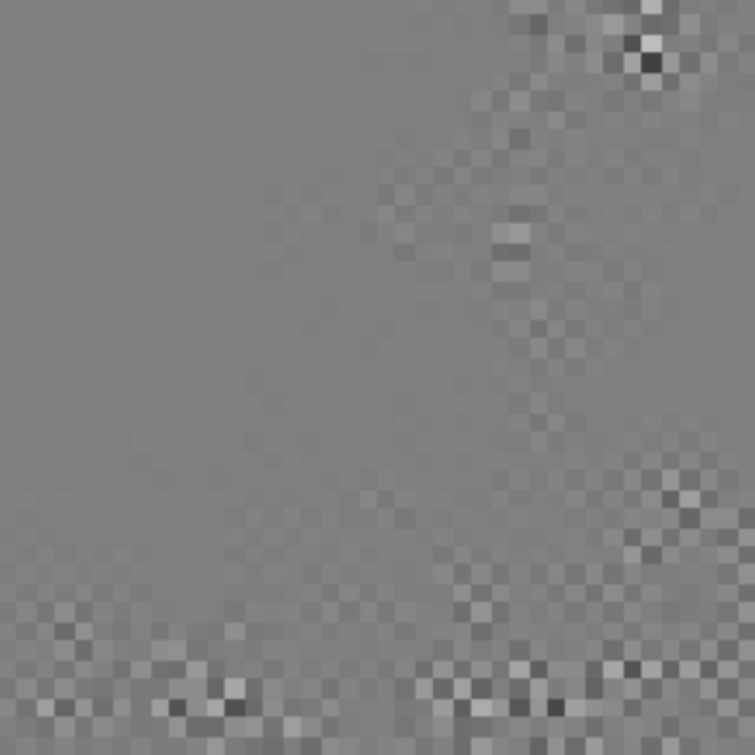}\nsp &
  \nsp\includegraphics[width=\fsz]{./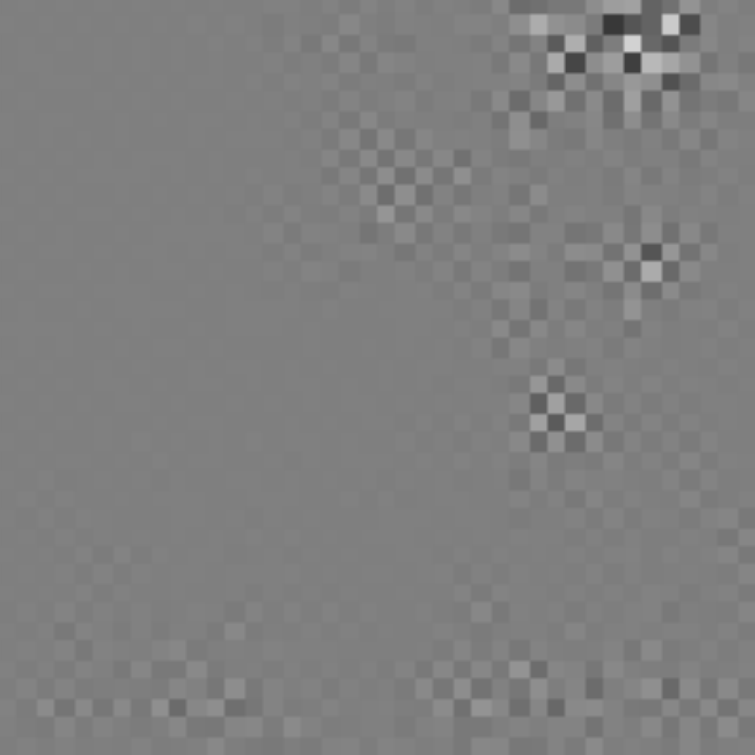}\nsp \\
  
     \nsp\includegraphics[width=\fsz]{./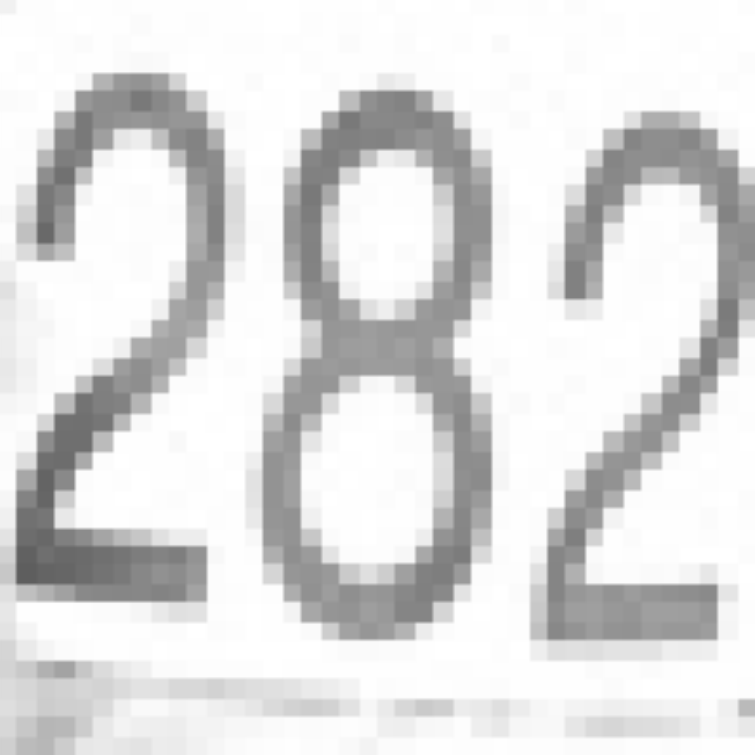}\nsp & 
  \hspace*{0.025\linewidth} &
  \nsp\includegraphics[width=\fsz]{./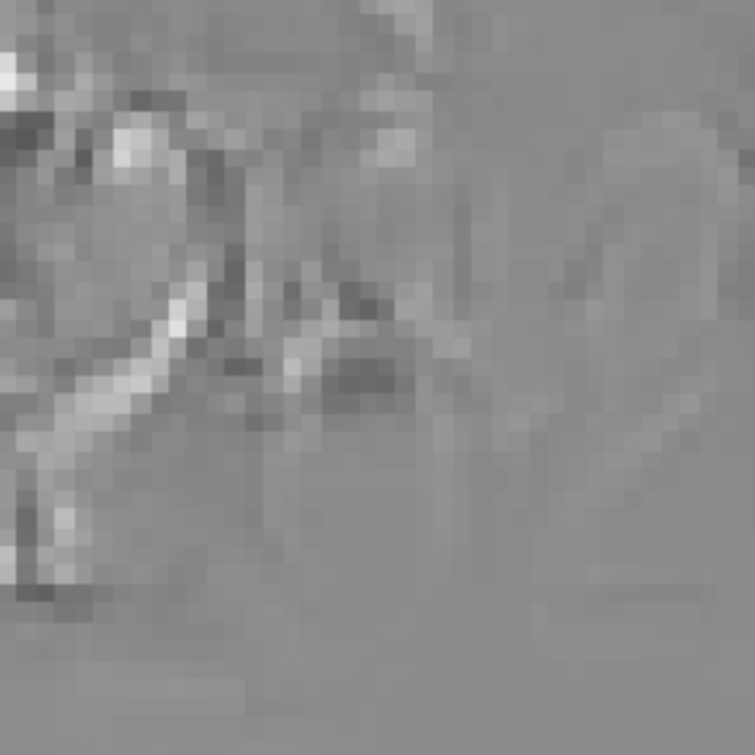}\nsp &
  \nsp\includegraphics[width=\fsz]{./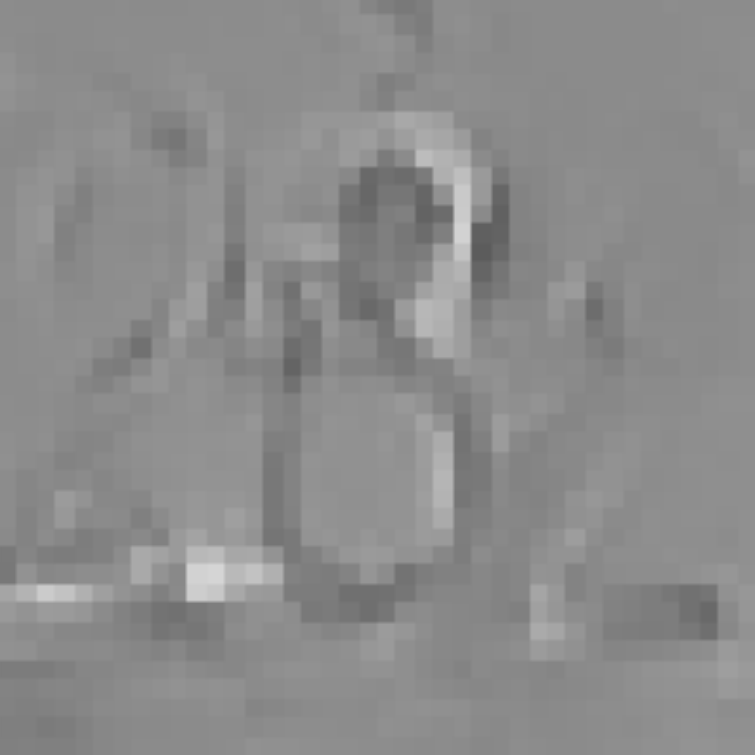}\nsp &
  \nsp\includegraphics[width=\fsz]{./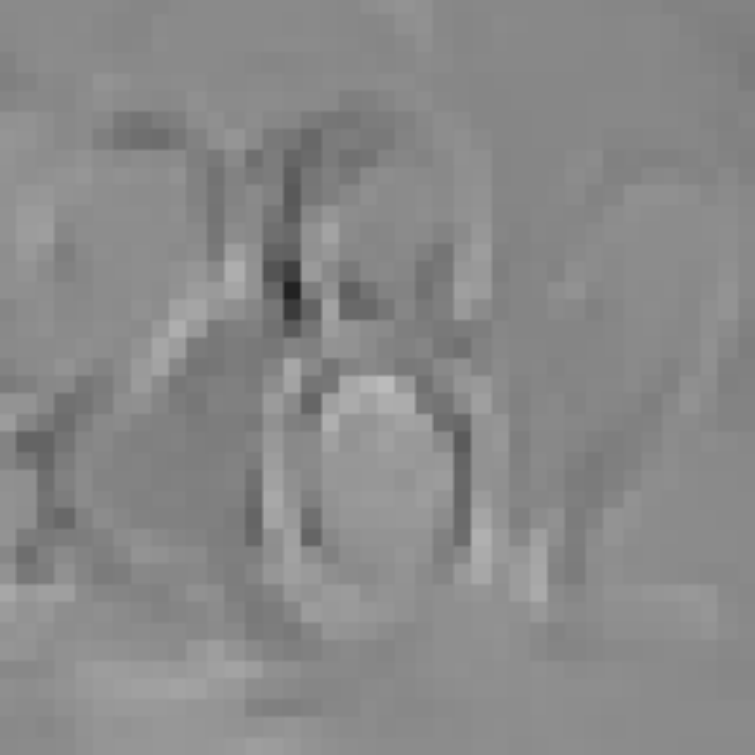}\nsp &
  \nsp\includegraphics[width=0.05\linewidth]{figs/sig_space4.pdf}\nsp &
  \nsp\includegraphics[width=\fsz]{./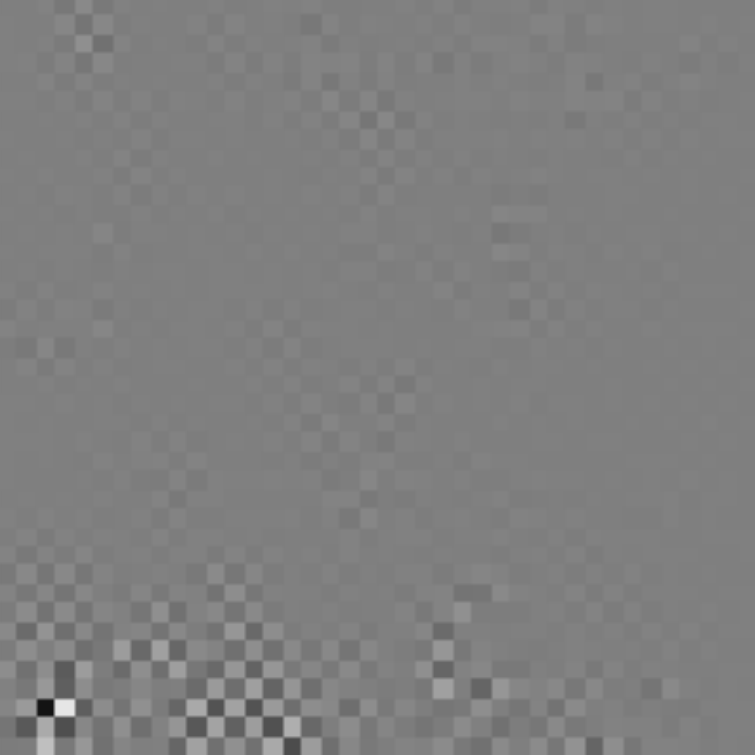}\nsp &
  \nsp\includegraphics[width=\fsz]{./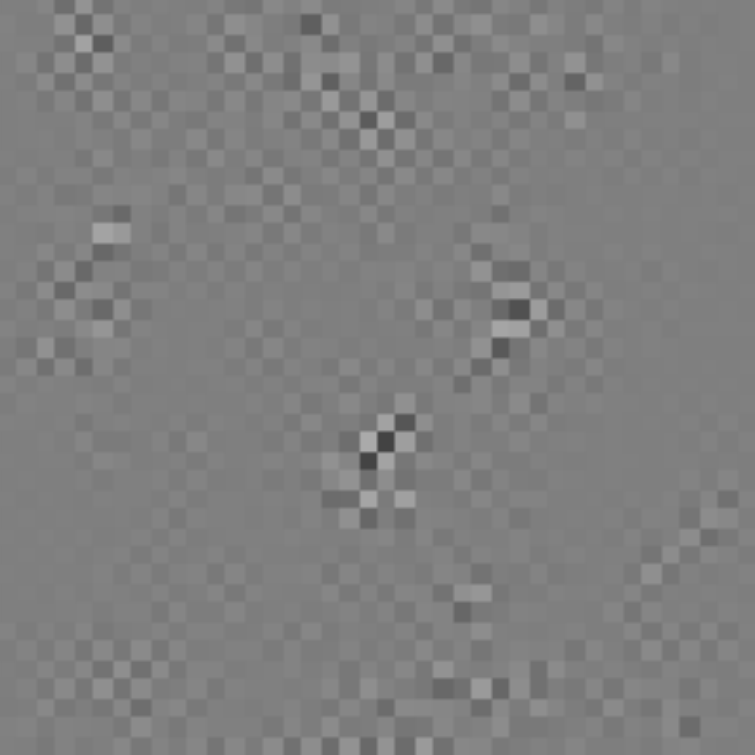}\nsp &
  \nsp\includegraphics[width=\fsz]{./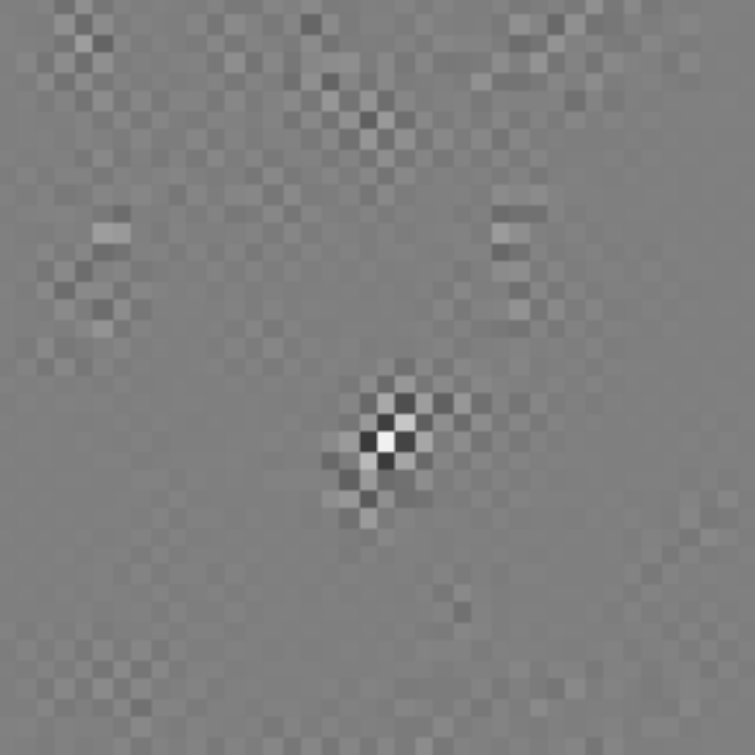}\nsp \\
  

   \nsp\includegraphics[width=\fsz]{./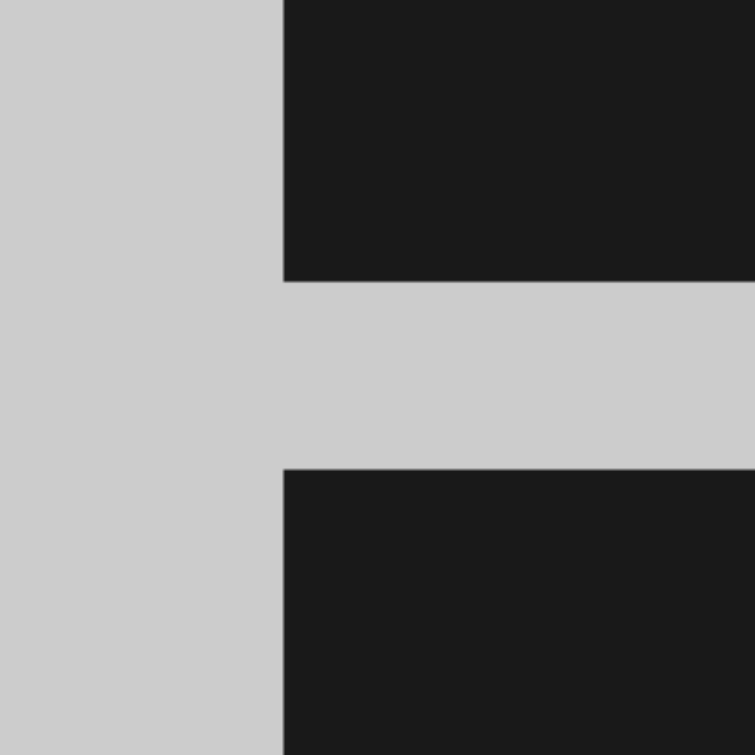}\nsp & 
  \hspace*{0.025\linewidth} &
  \nsp\includegraphics[width=\fsz]{./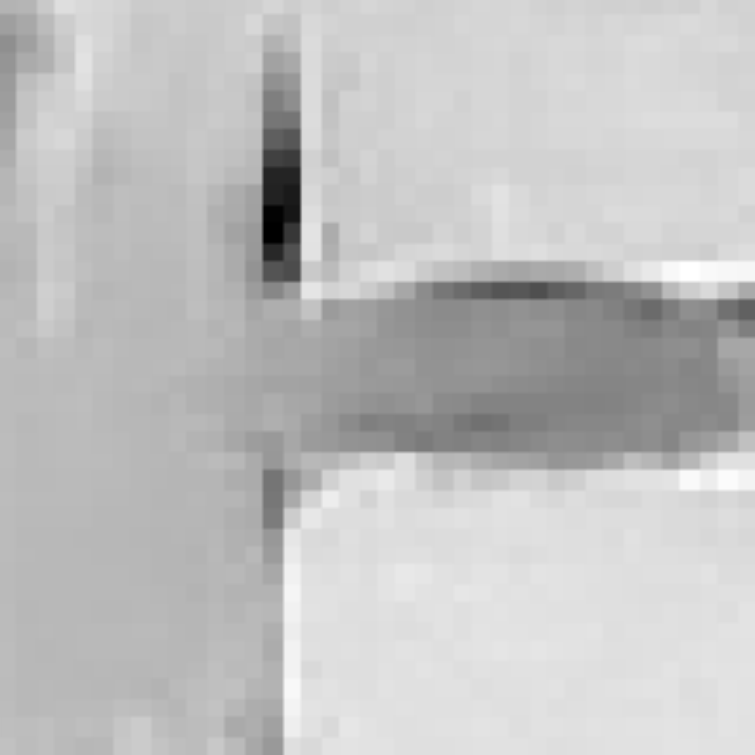}\nsp &
  \nsp\includegraphics[width=\fsz]{./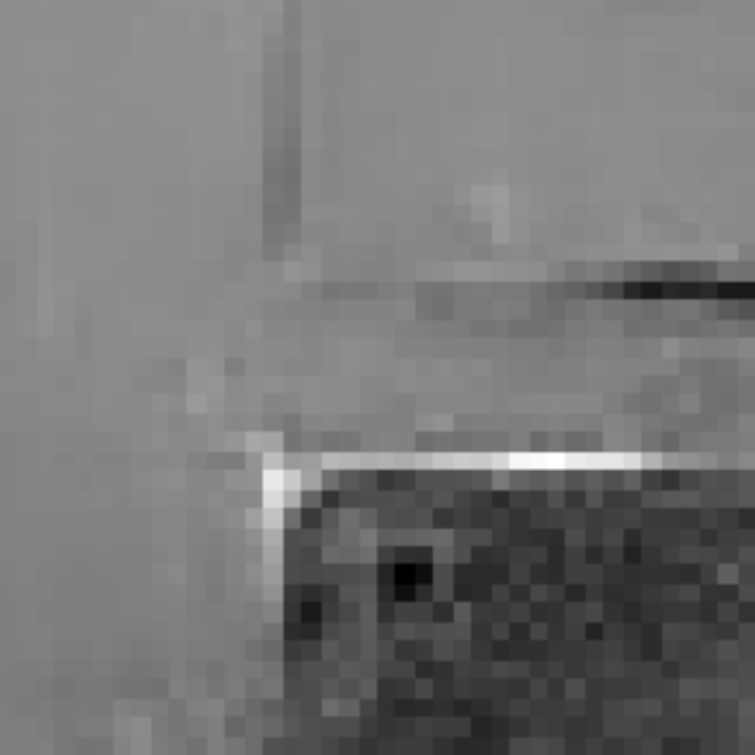}\nsp &
  \nsp\includegraphics[width=\fsz]{./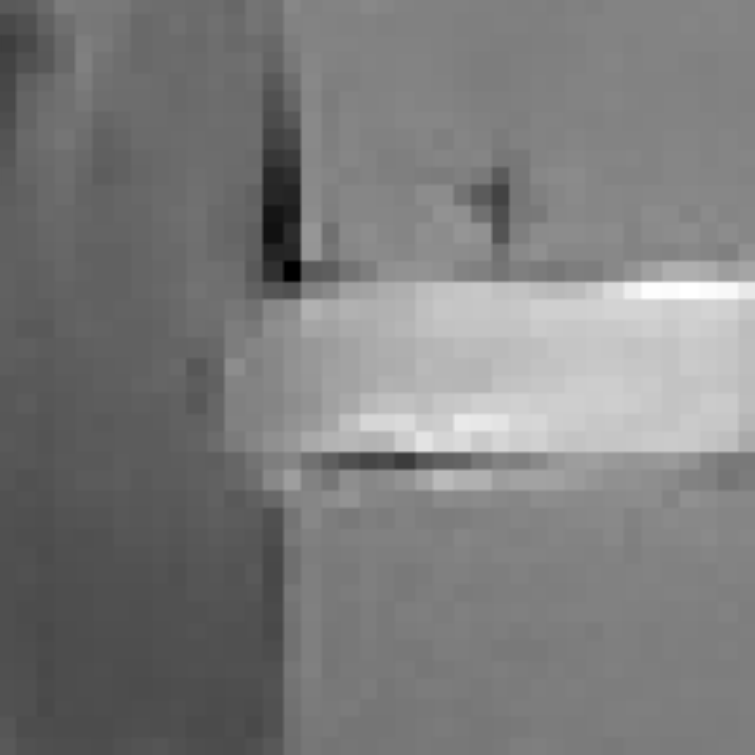}\nsp &
  \nsp\includegraphics[width=0.05\linewidth]{figs/sig_space4.pdf}\nsp &
  \nsp\includegraphics[width=\fsz]{./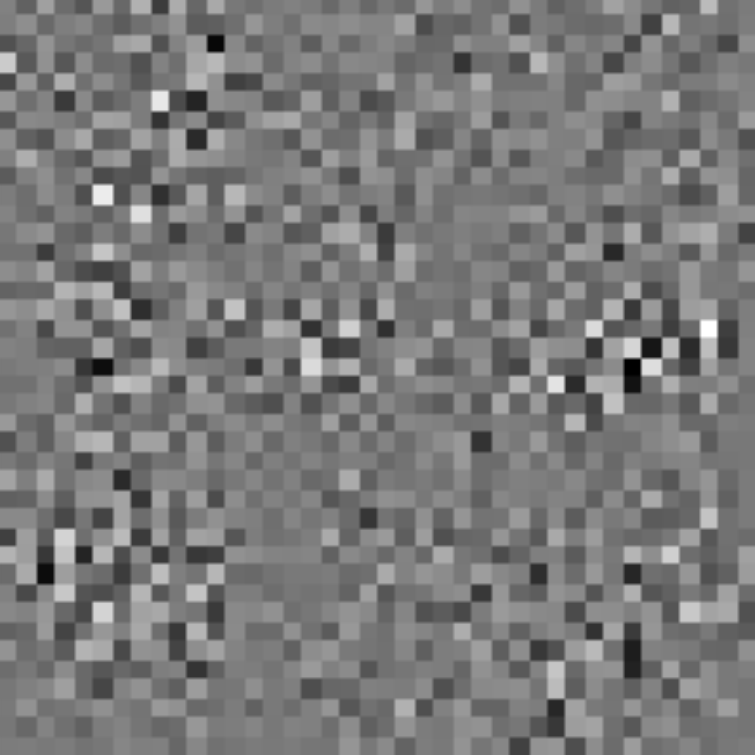}\nsp &
  \nsp\includegraphics[width=\fsz]{./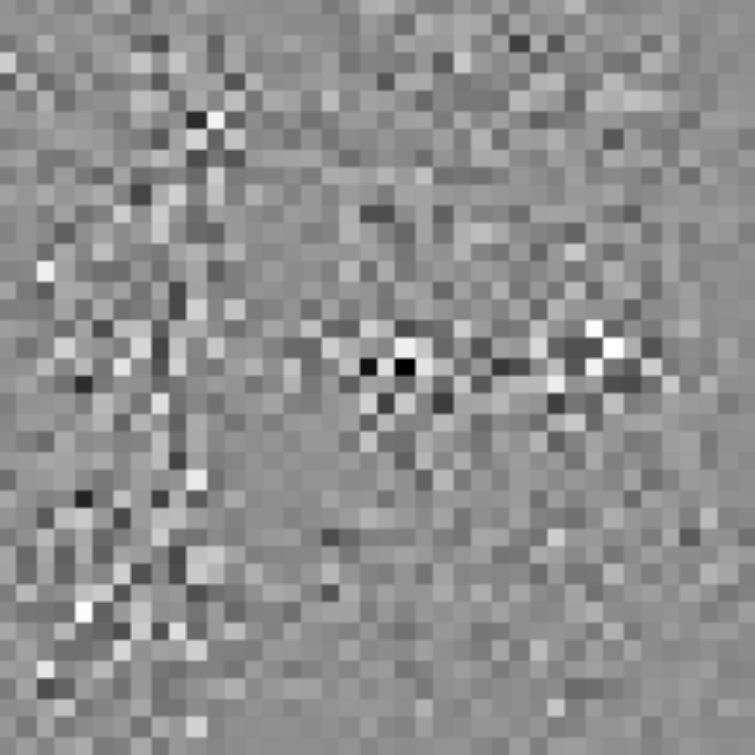}\nsp &
  \nsp\includegraphics[width=\fsz]{./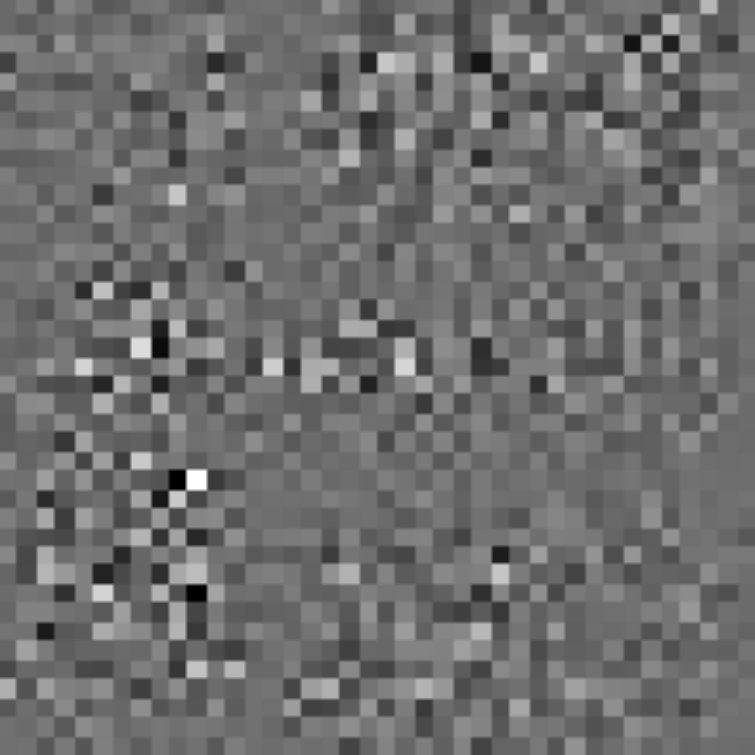}\nsp \\
  
     \nsp\includegraphics[width=\fsz]{./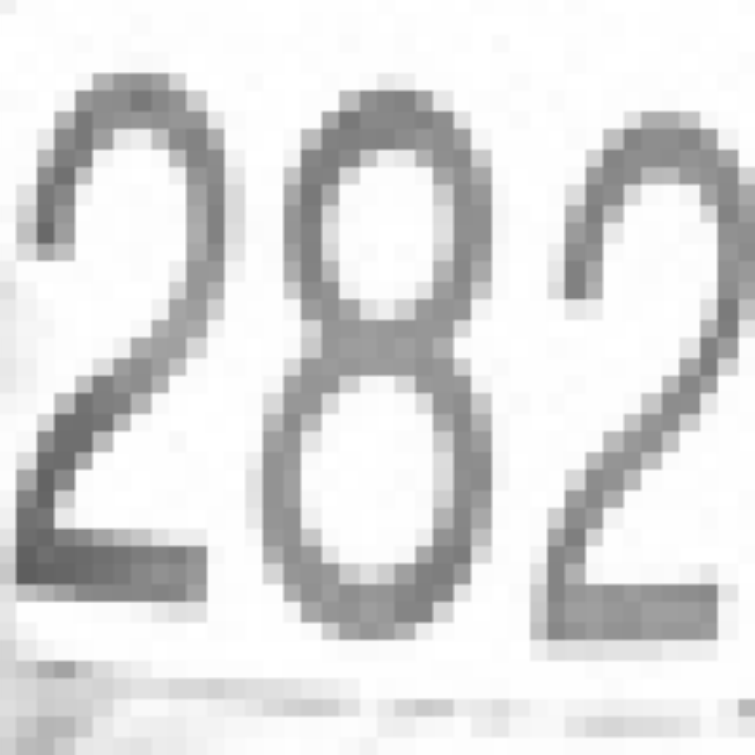}\nsp & 
  \hspace*{0.025\linewidth} &
  \nsp\includegraphics[width=\fsz]{./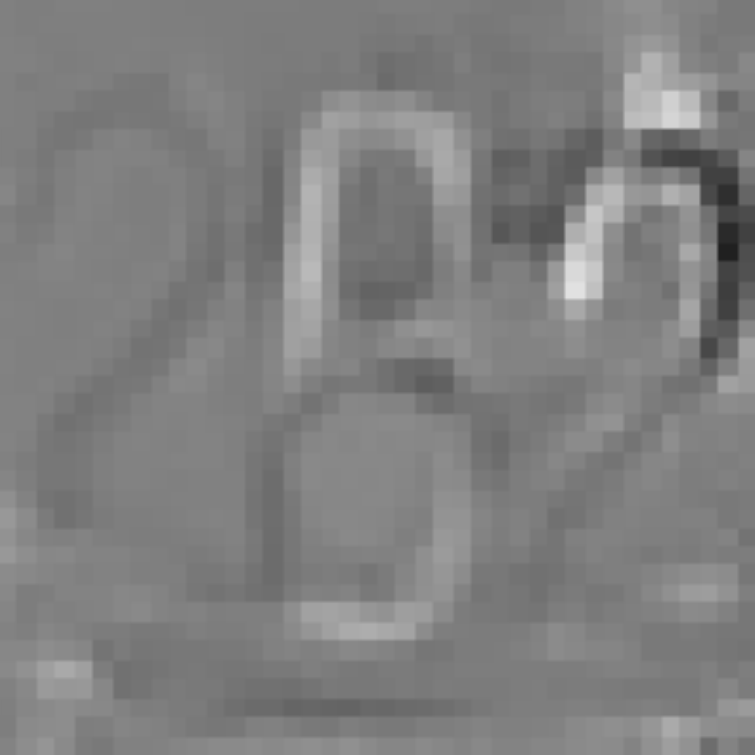}\nsp &
  \nsp\includegraphics[width=\fsz]{./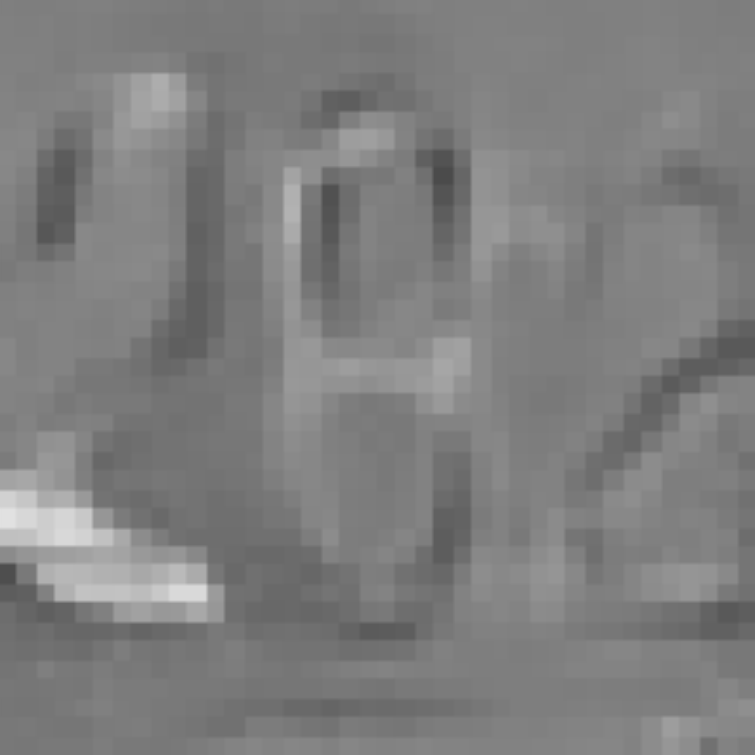}\nsp &
  \nsp\includegraphics[width=\fsz]{./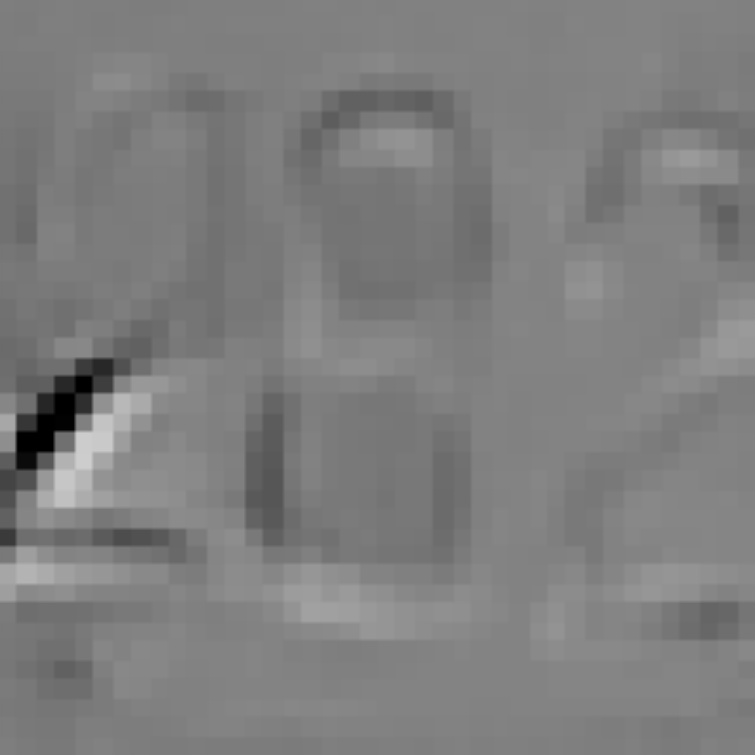}\nsp &
  \nsp\includegraphics[width=0.05\linewidth]{figs/sig_space4.pdf}\nsp &
  \nsp\includegraphics[width=\fsz]{./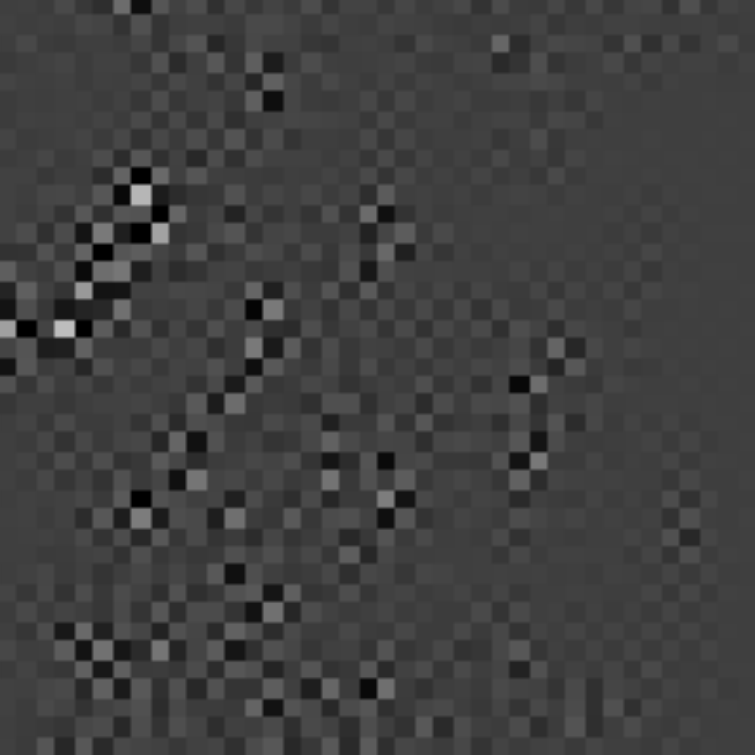}\nsp &
  \nsp\includegraphics[width=\fsz]{./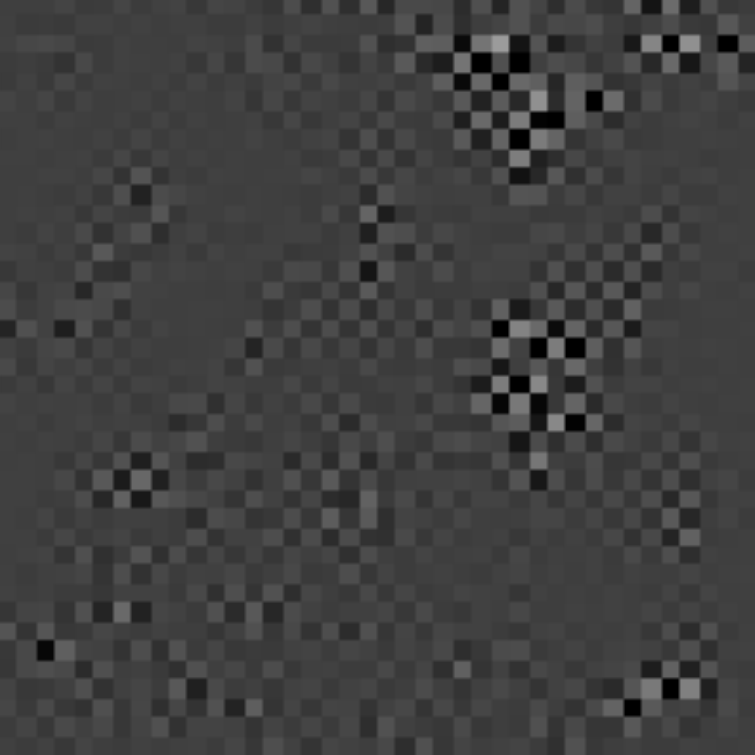}\nsp &
  \nsp\includegraphics[width=\fsz]{./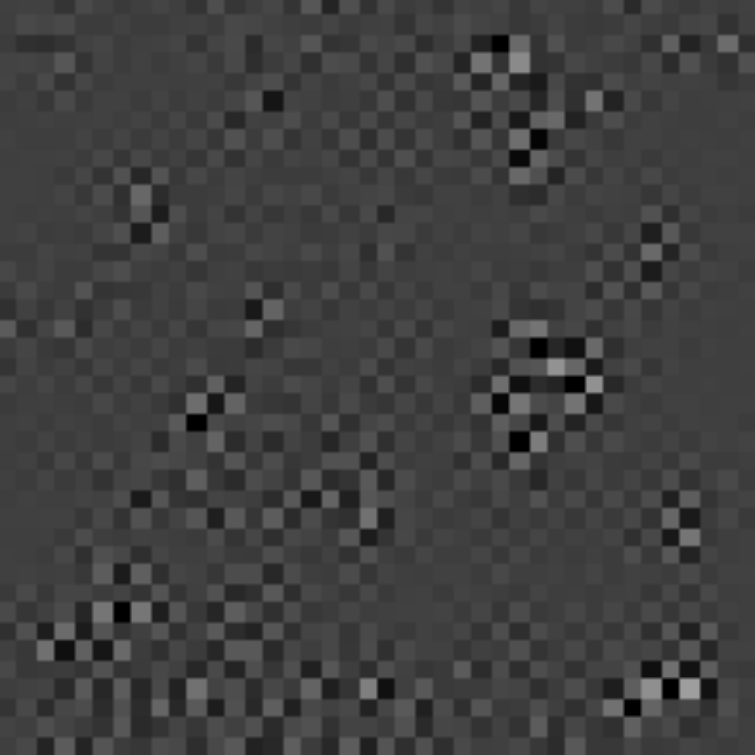}\nsp \\
  
  
     \nsp\includegraphics[width=\fsz]{./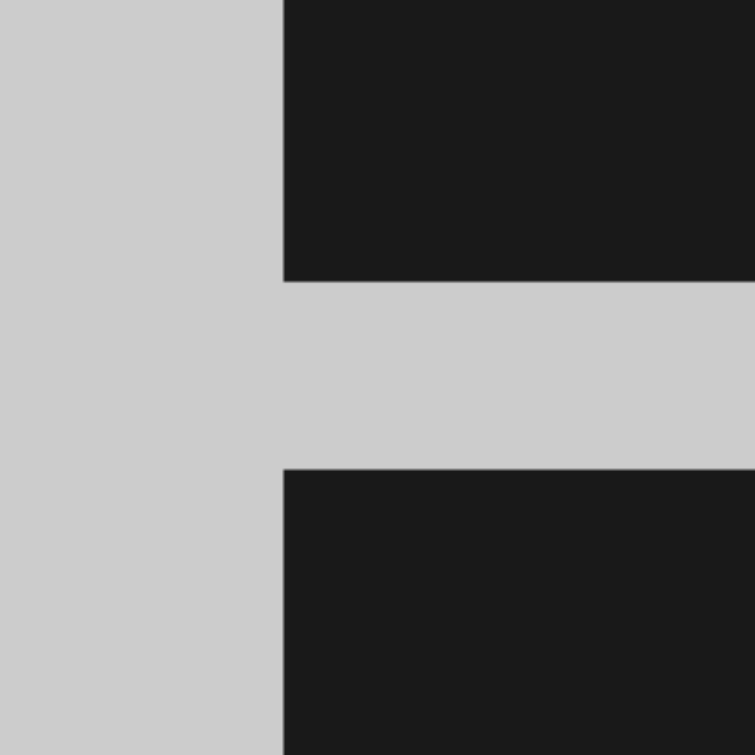}\nsp & 
  \hspace*{0.025\linewidth} &
  \nsp\includegraphics[width=\fsz]{./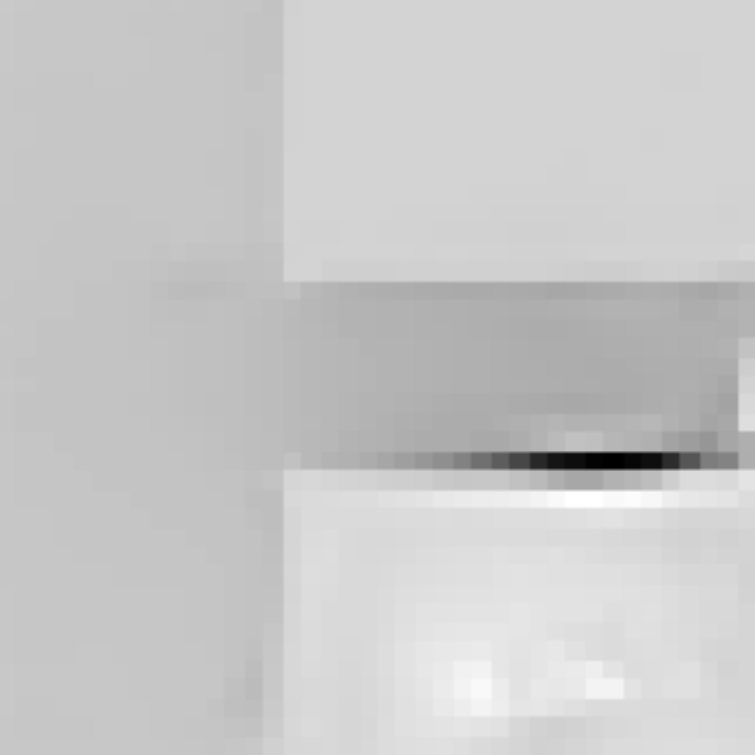}\nsp &
  \nsp\includegraphics[width=\fsz]{./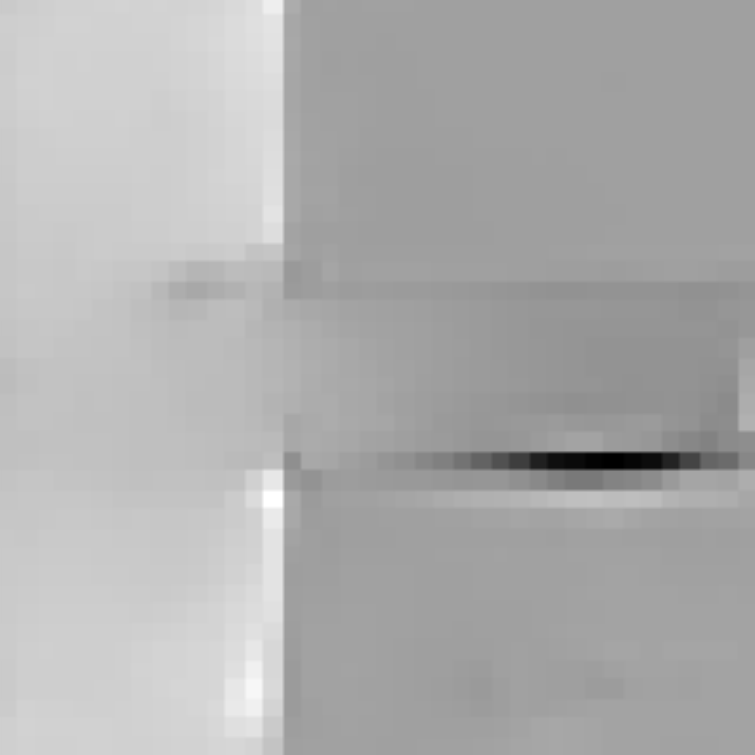}\nsp &
  \nsp\includegraphics[width=\fsz]{./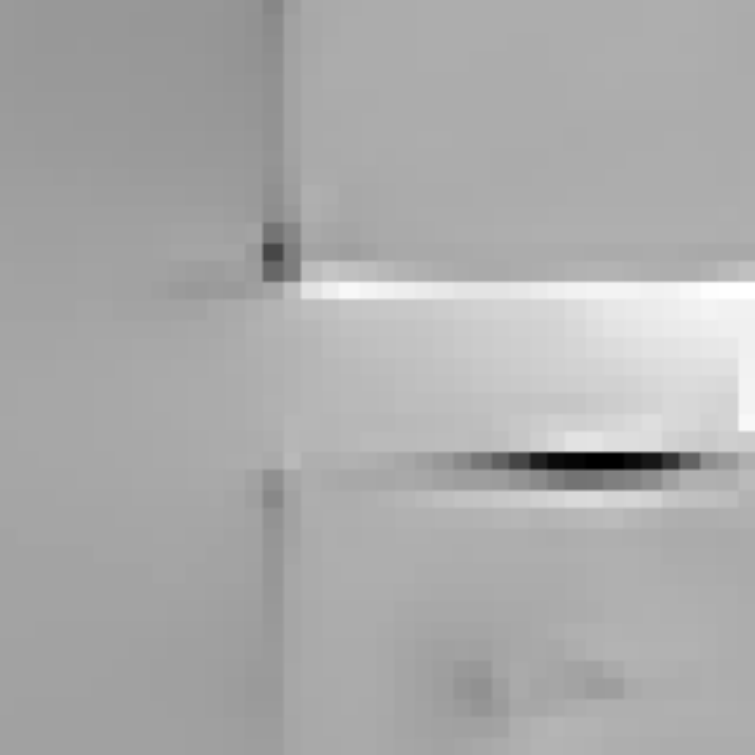}\nsp &
  \nsp\includegraphics[width=0.05\linewidth]{figs/sig_space4.pdf}\nsp &
  \nsp\includegraphics[width=\fsz]{./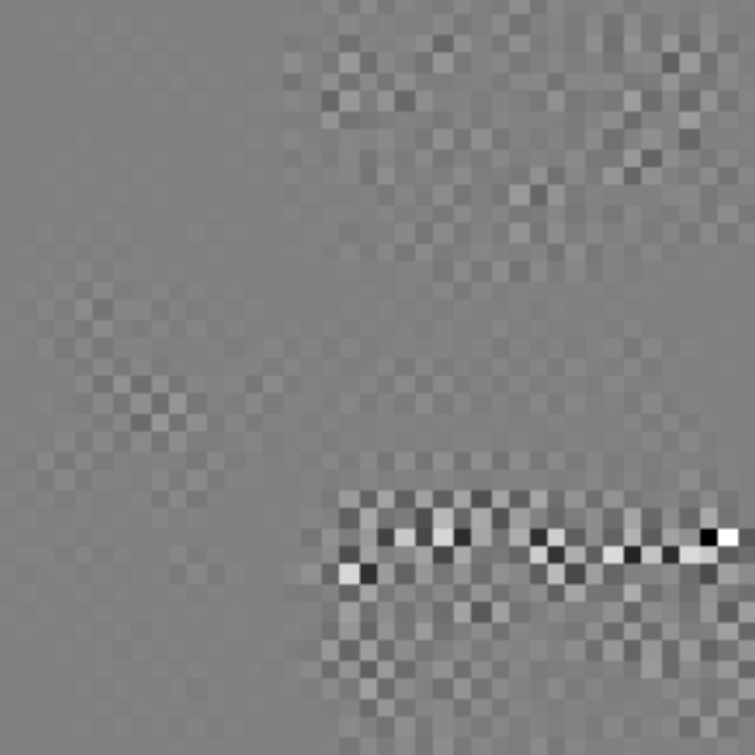}\nsp &
  \nsp\includegraphics[width=\fsz]{./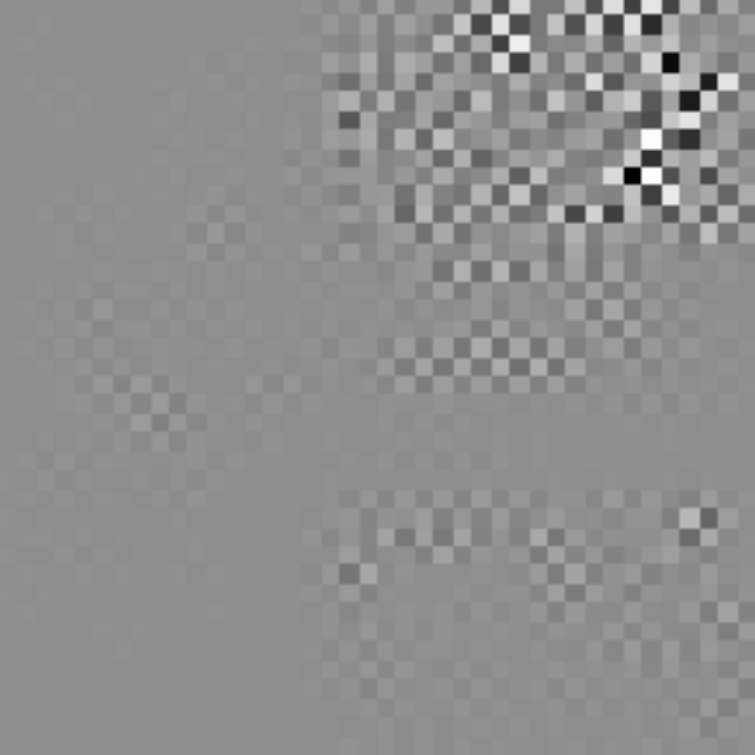}\nsp &
  \nsp\includegraphics[width=\fsz]{./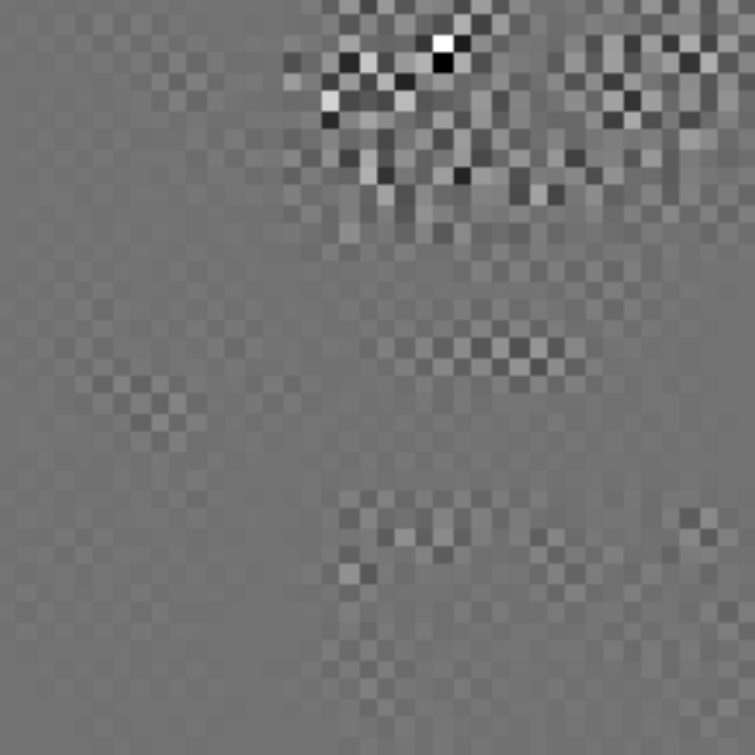}\nsp \\
     \nsp\includegraphics[width=\fsz]{./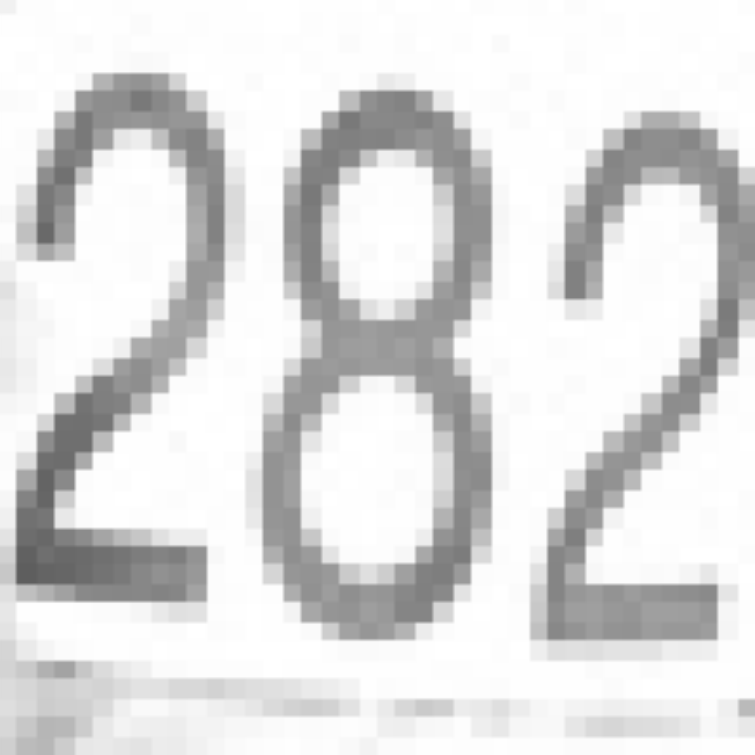}\nsp & 
  \hspace*{0.025\linewidth} &
  \nsp\includegraphics[width=\fsz]{./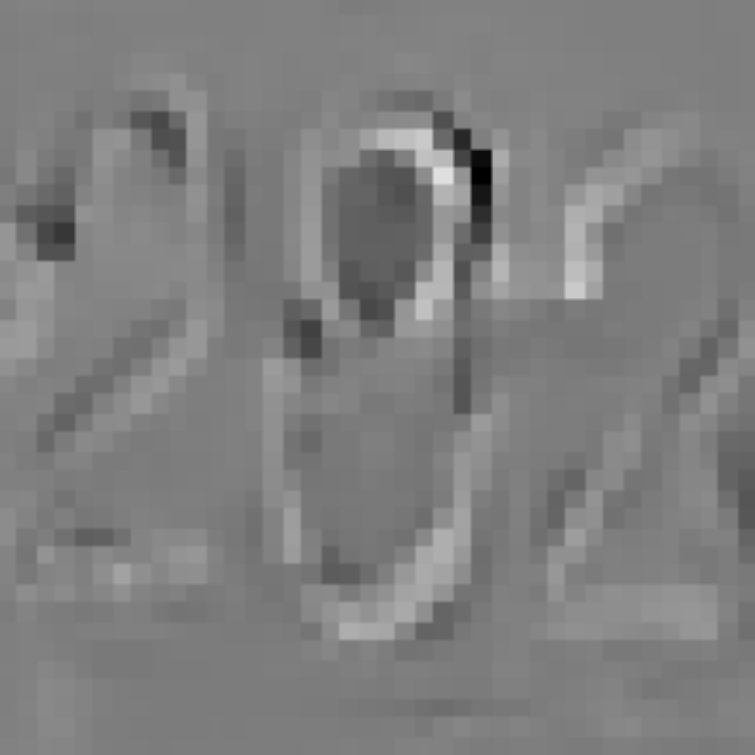}\nsp &
  \nsp\includegraphics[width=\fsz]{./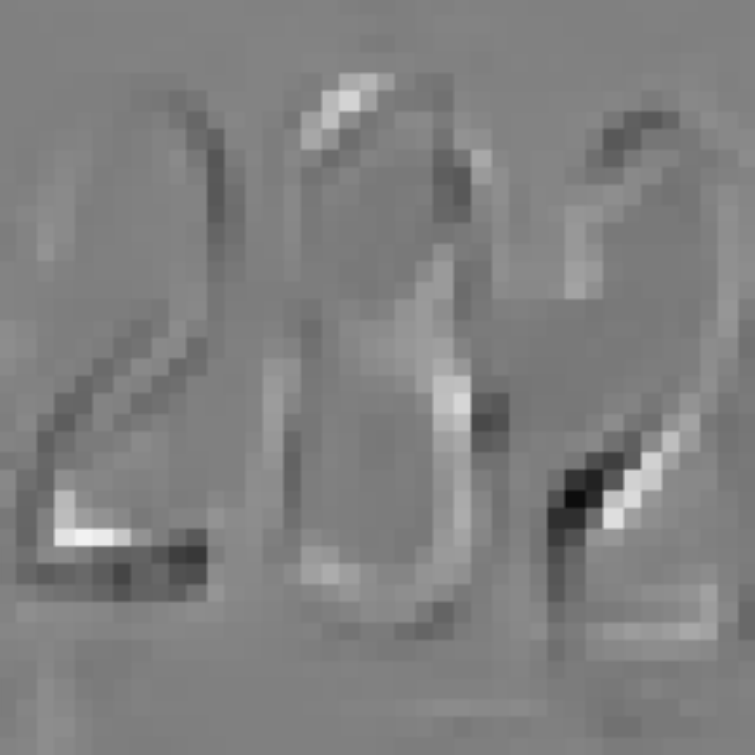}\nsp &
  \nsp\includegraphics[width=\fsz]{./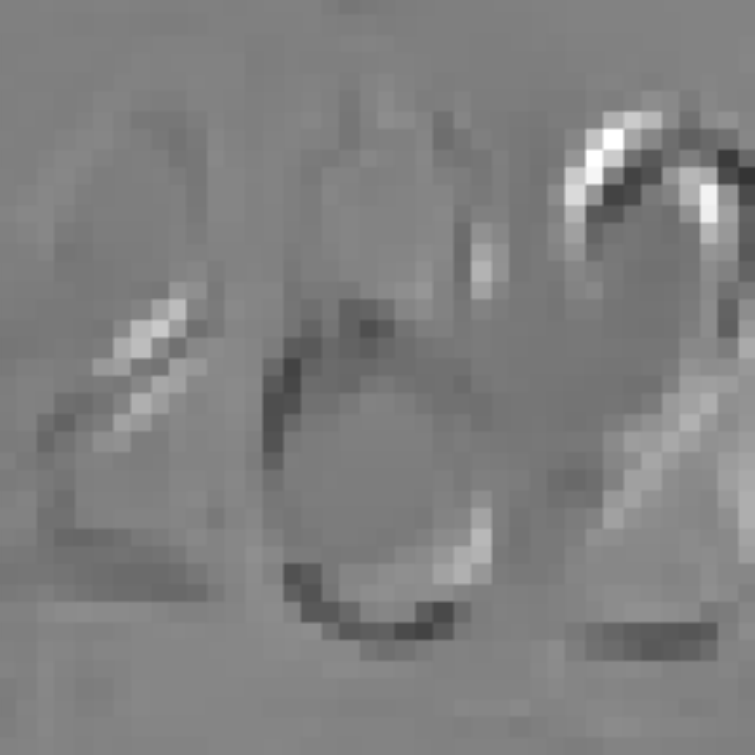}\nsp &
  \nsp\includegraphics[width=0.05\linewidth]{figs/sig_space4.pdf}\nsp &
  \nsp\includegraphics[width=\fsz]{./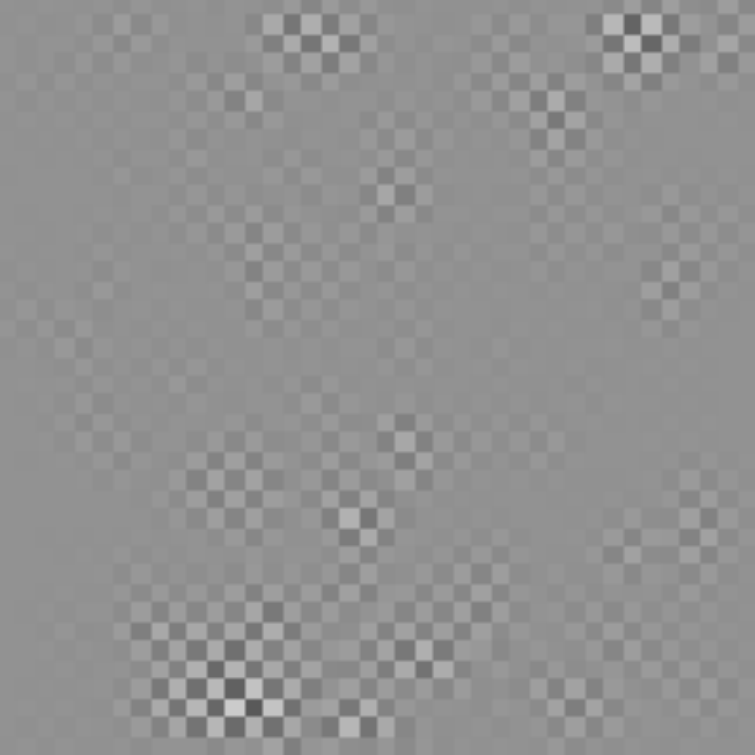}\nsp &
  \nsp\includegraphics[width=\fsz]{./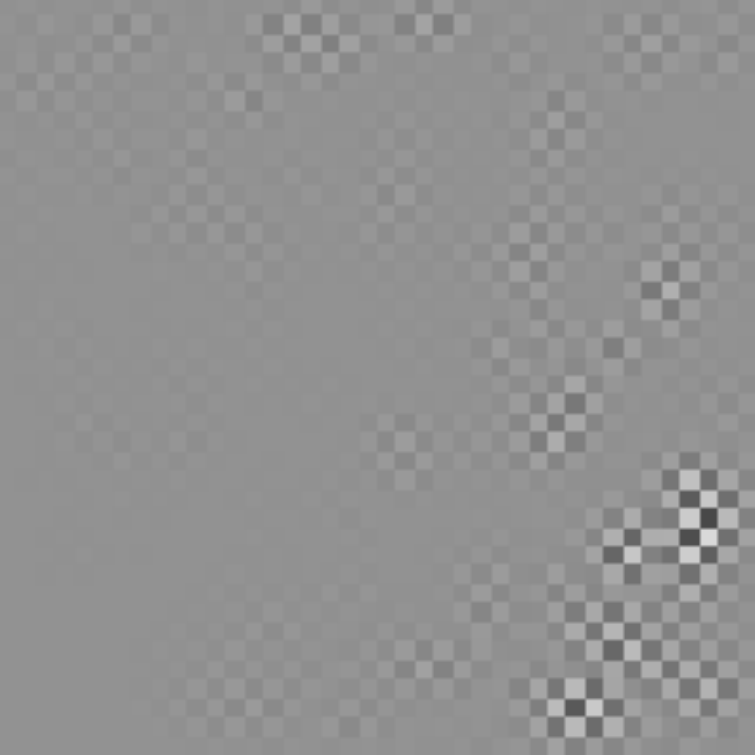}\nsp &
  \nsp\includegraphics[width=\fsz]{./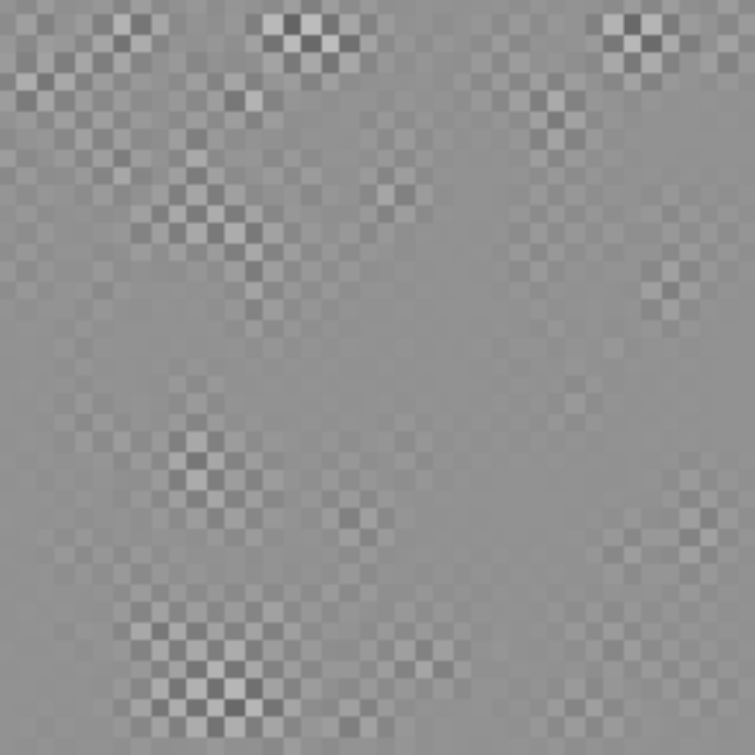}\nsp \\
  
\end{tabular}\\[-0.5ex]
\caption{Visualization of left singular vectors of the Jacobian of a BF Recurrent CNN (top $2$ rows), BF UNet (next $2$ rows) and BF DenseNet (bottom $2$ rows) evaluated on three different images, corrupted by noise with standard deviation $\sigma = 25$. The left column shows original (clean) images. The next three columns show singular vectors corresponding to non-negligible singular values. The vectors capture features from the clean image. The last three columns on the right show singular vectors corresponding to singular values that are almost equal to zero. These vectors are noisy and unstructured.} 
\label{fig:singular_vec_others}
\end{figure}

\begin{figure}[t]
    \centering
    \begin{subfigure}[b]{0.32\textwidth}
        \centering
        \includegraphics[width=\textwidth]{./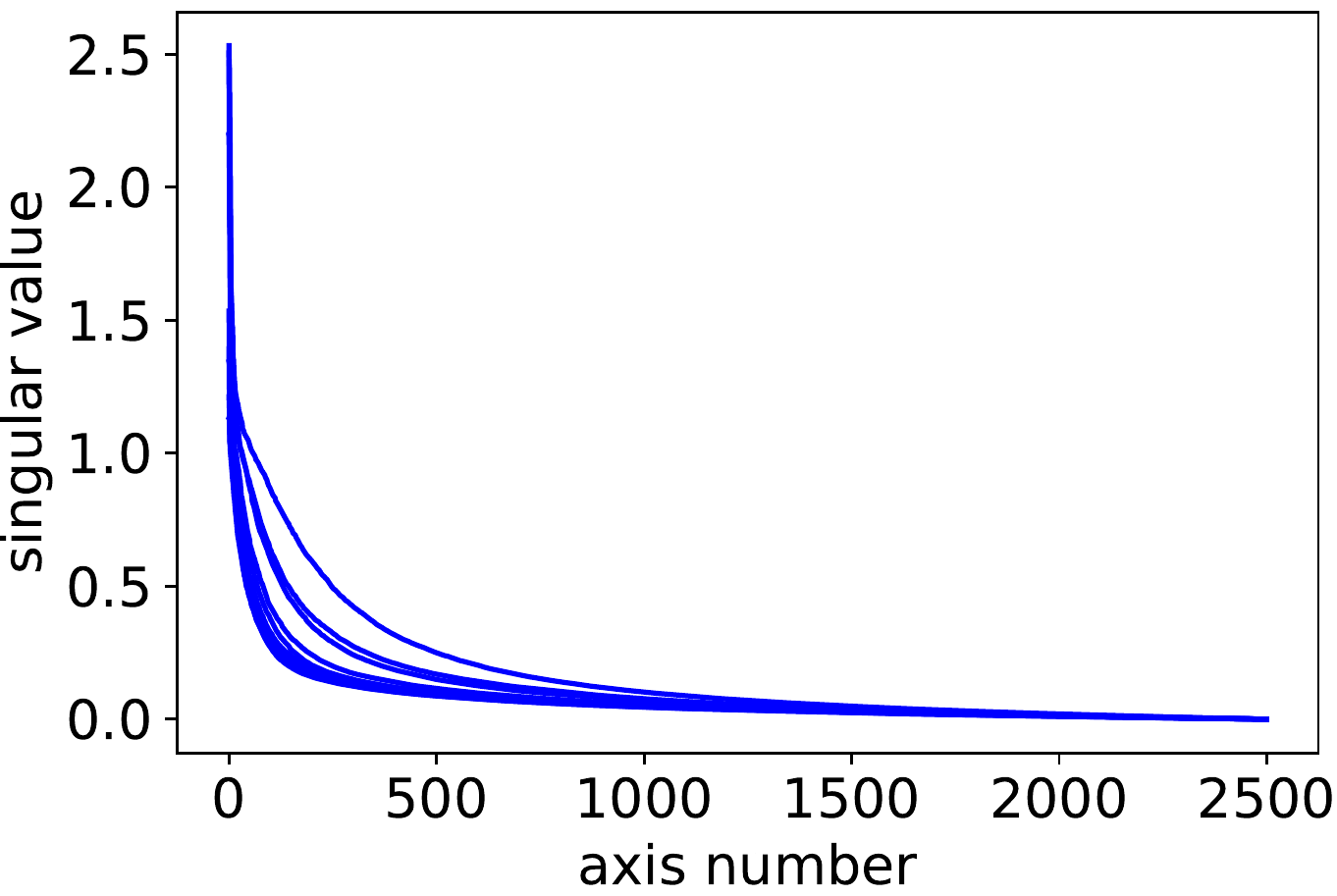}
        \caption{}
        \label{fig:durr_sing_val}
    \end{subfigure}
    \hfill
    \begin{subfigure}[b]{0.32\textwidth}  
        \centering 
        \includegraphics[width=\textwidth]{./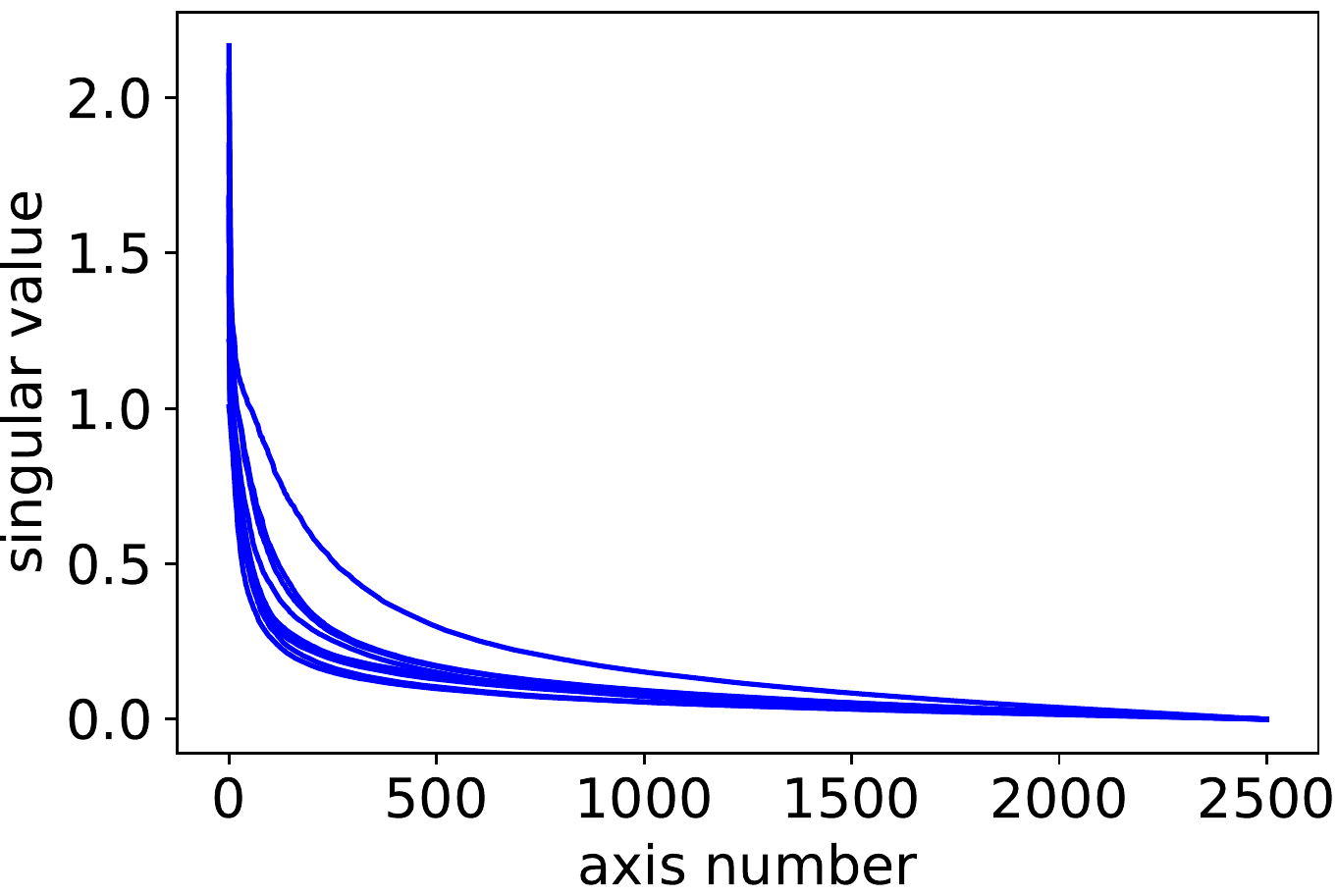}
        \caption{}
        \label{fig:unet_sing_val}
    \end{subfigure}
    \hfill
    \begin{subfigure}[b]{0.32\textwidth}
        \centering
        \includegraphics[width=\textwidth]{./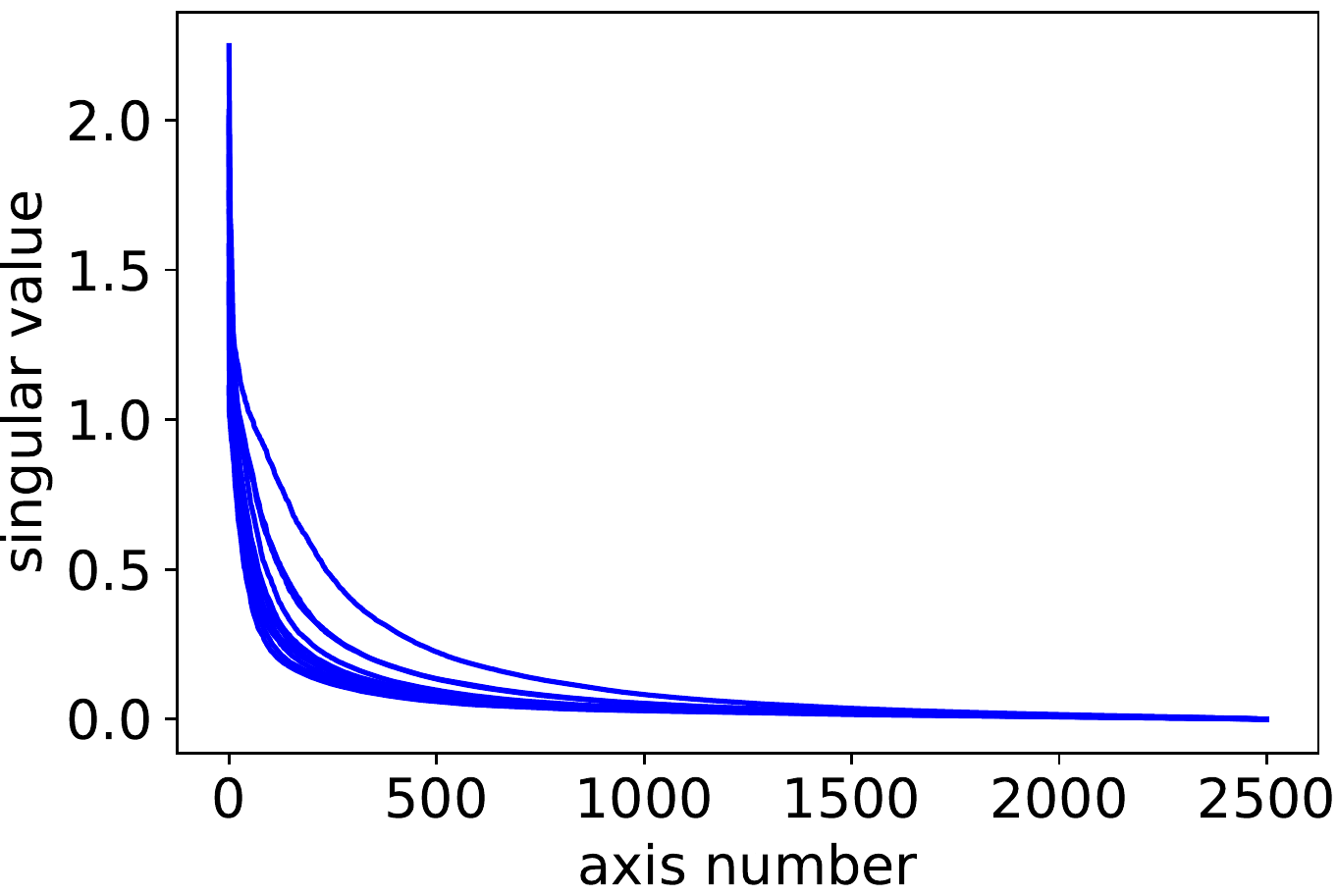}
        \caption{}
        \label{fig:densenet_sing_val}
    \end{subfigure}
    \caption{Analysis of the SVD of the Jacobian of BF-CNN for ten natural images, corrupted by noise of standard deviation $\sigma=50$. For all images, a large proportion of the singular values are near zero, indicating (approximately) a projection onto a subspace (the \emph{signal subspace}). {\bf (a)} Recurrent architecture inspired by DURR~\citep{DURR}.
    {\bf (b)} Multiscale architecture inspired by the UNet~\citep{ronneberger2015unet}. {\bf (c)} Architecture with multiple skip connections inspired by the DenseNet~\citep{huang2017densnet}.
    }
    \label{fig:sing_values_others}
\end{figure}


\begin{figure}[t]
\centering
\begin{subfigure}{.245\textwidth}
  \centering
  \includegraphics[width=1\linewidth]{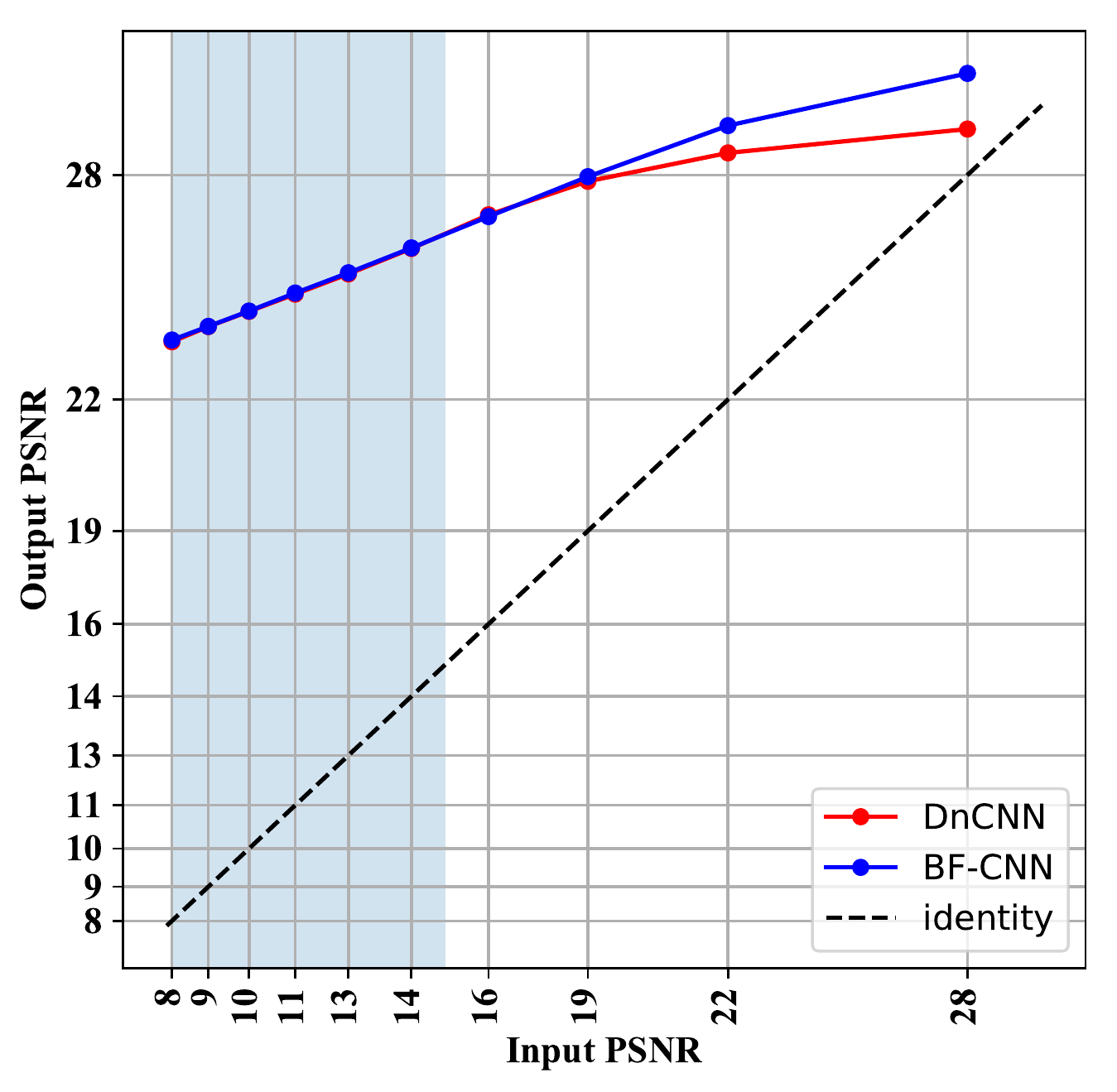}
  \label{fig:perf-a-uniform}
\end{subfigure}%
\begin{subfigure}{.245\textwidth}
  \centering
  \includegraphics[width=1\linewidth]{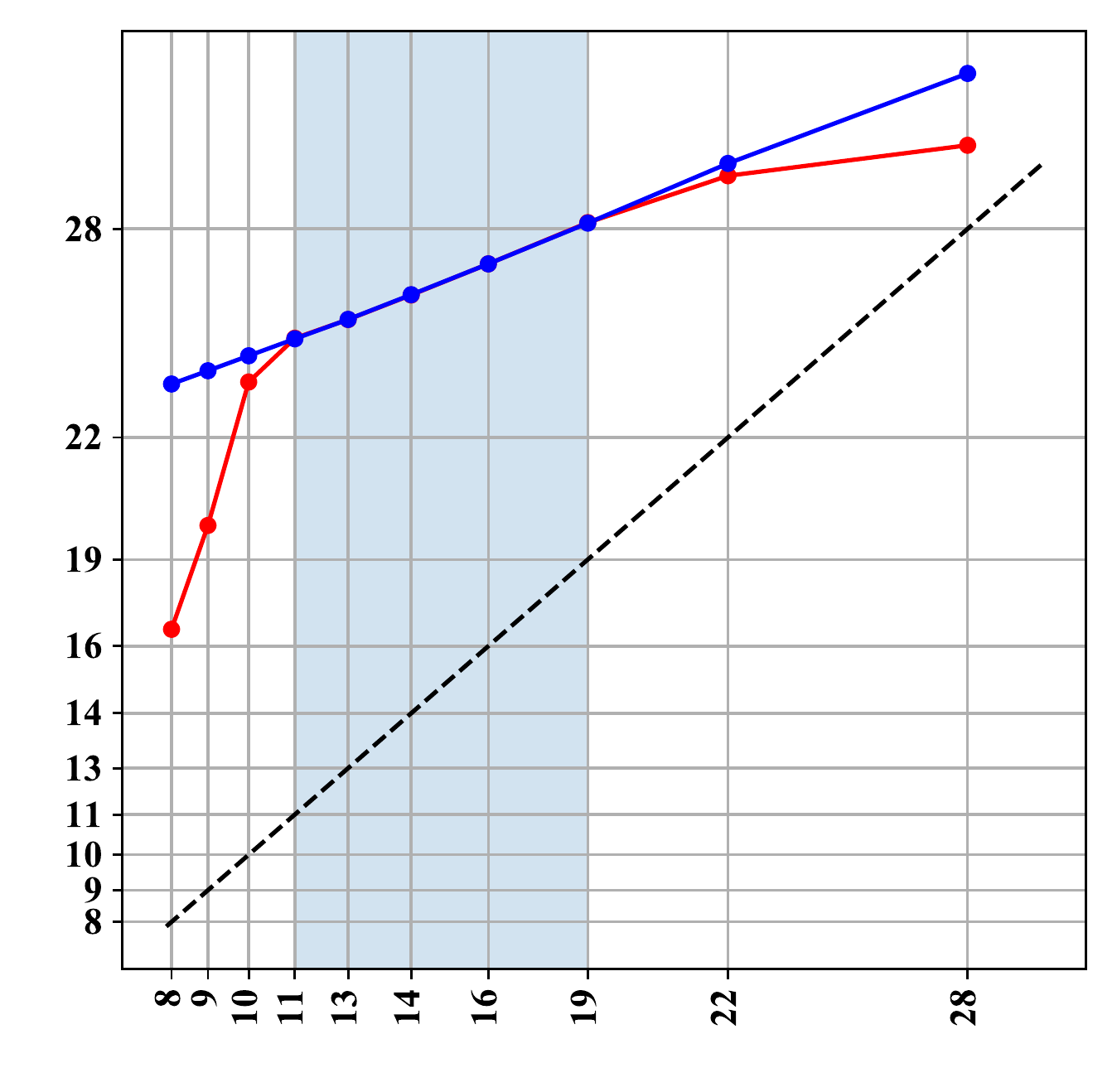}
  \label{fig:perf-b-uniform}
\end{subfigure}
\begin{subfigure}{.245\textwidth}
  \centering
  \includegraphics[width=1\linewidth]{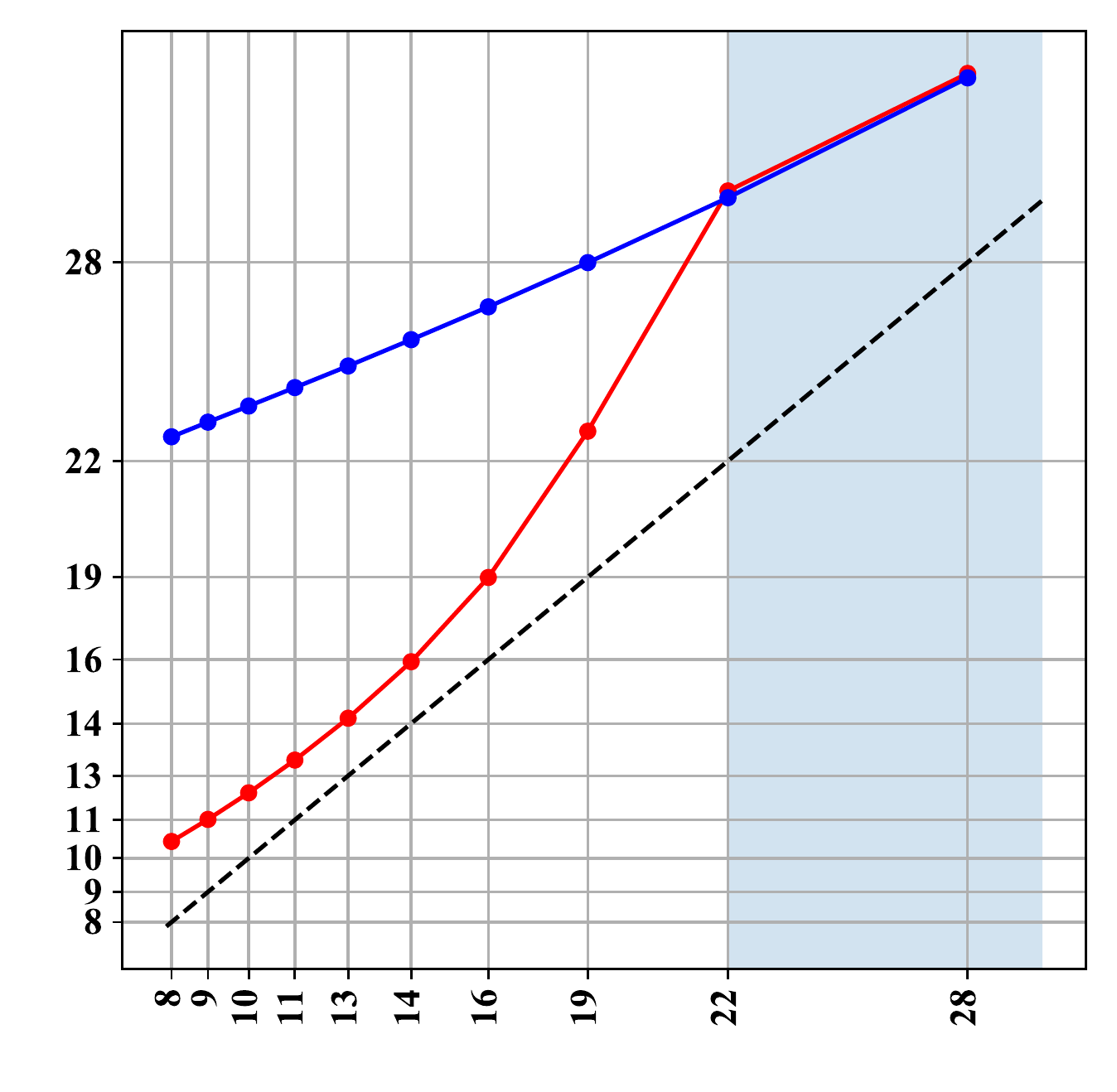}
  \label{fig:perf-c-uniform}
\end{subfigure}
\begin{subfigure}{.245\textwidth}
  \centering
  \includegraphics[width=1\linewidth]{./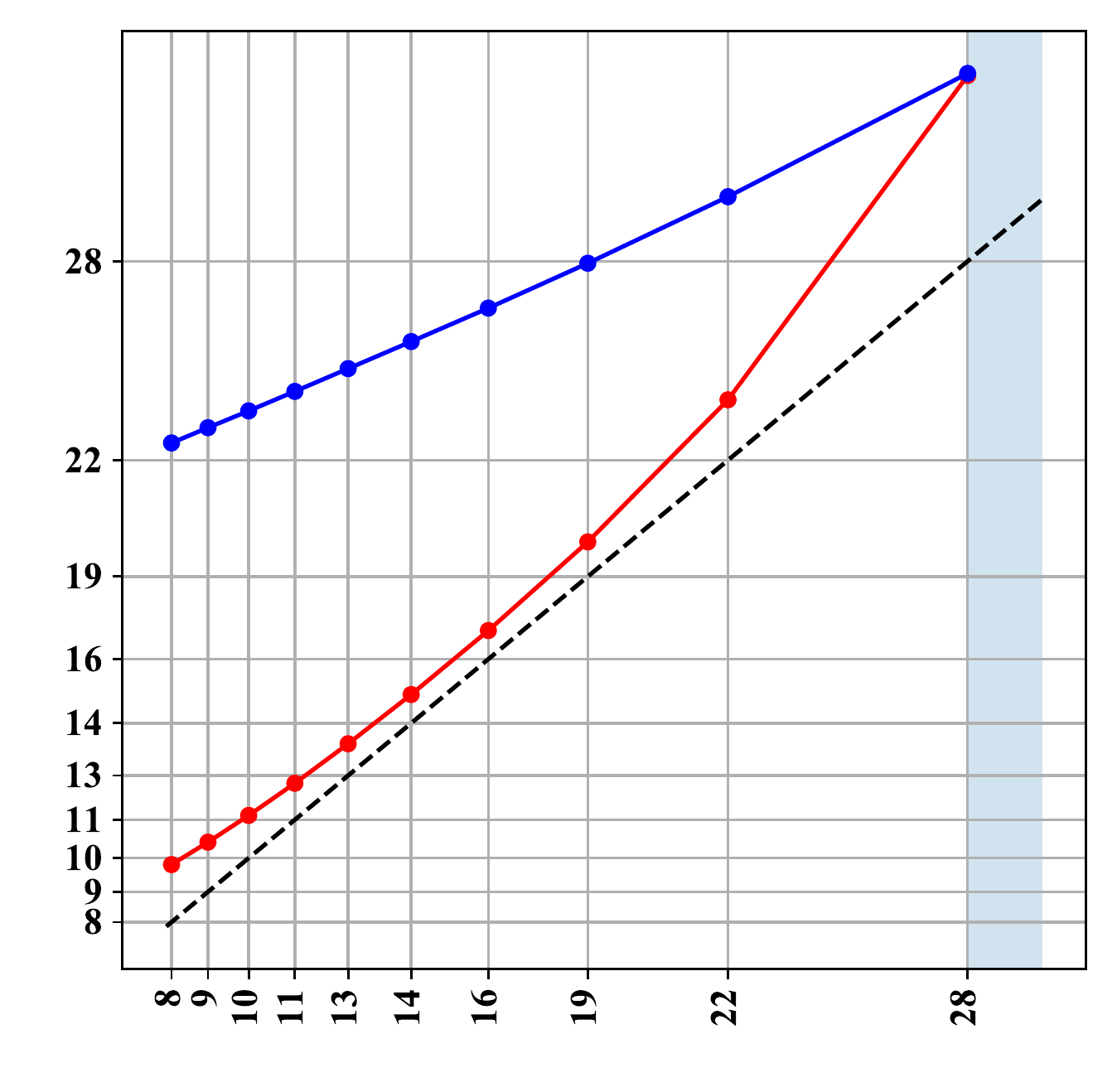}
  \label{fig:perf-d-uniform}
\end{subfigure}
\vspace*{-2ex}
\caption{Comparison of the performance of a CNN and a BF-CNN with the same architecture for the experimental design described in Section~\ref{sec:generalization}. The networks are trained using i.i.d. Gaussian noise but evaluated on noise drawn i.i.d. from a uniform distribution with mean $0$. The performance is quantified by the PSNR of the denoised image as a function of the input PSNR of the noisy image. All the architectures with bias perform poorly out of their training range, whereas the bias-free versions all achieve excellent generalization across noise levels, i.e. they are able to generalize across the two different noise distributions. The CNN used for this example is DnCNN~\citep{zhang2017beyond}; using alternative architectures yields similar results (see Figures~\ref{fig:others_psnr_gaussian_uniform}).
}
\label{fig:gen_per_uniform}
\end{figure}

\begin{figure}[t]
    \centering
    \begin{subfigure}[b]{0.23\textwidth}
        \centering
        \includegraphics[width=\textwidth]{./figs/DnCNNpsnr_vs_psnr.pdf}
        \caption{}
        \label{fig:dncnn_psnr_vs_psnr_all_gaussian_uniform}
    \end{subfigure}
    \hfill
        \begin{subfigure}[b]{0.23\textwidth}
        \centering
        \includegraphics[width=\textwidth]{./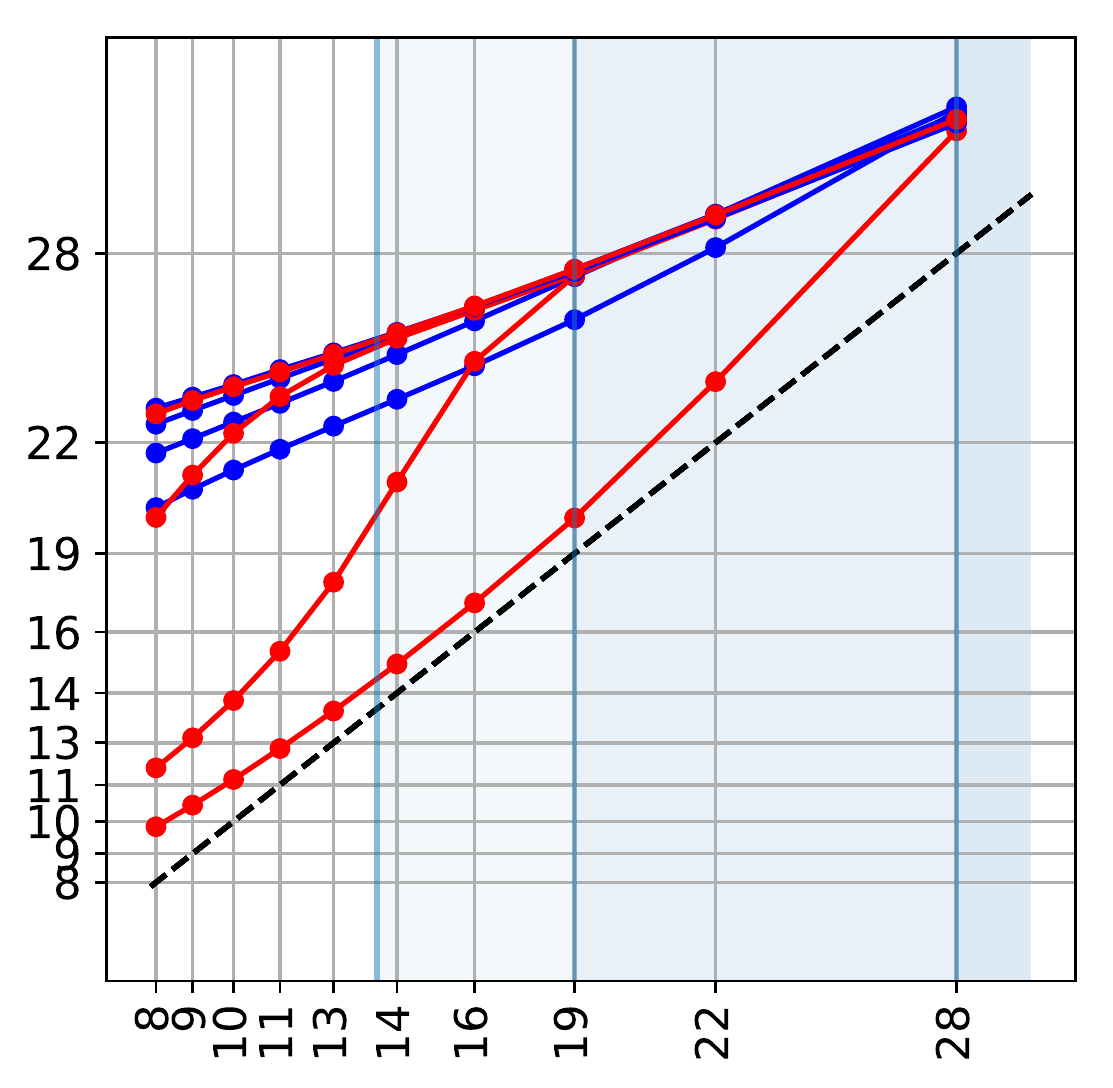}
        \caption{}
        \label{fig:durr_gaussian_uniform}
    \end{subfigure}
    \hfill
    \begin{subfigure}[b]{0.23\textwidth}  
        \centering 
        \includegraphics[width=\textwidth]{./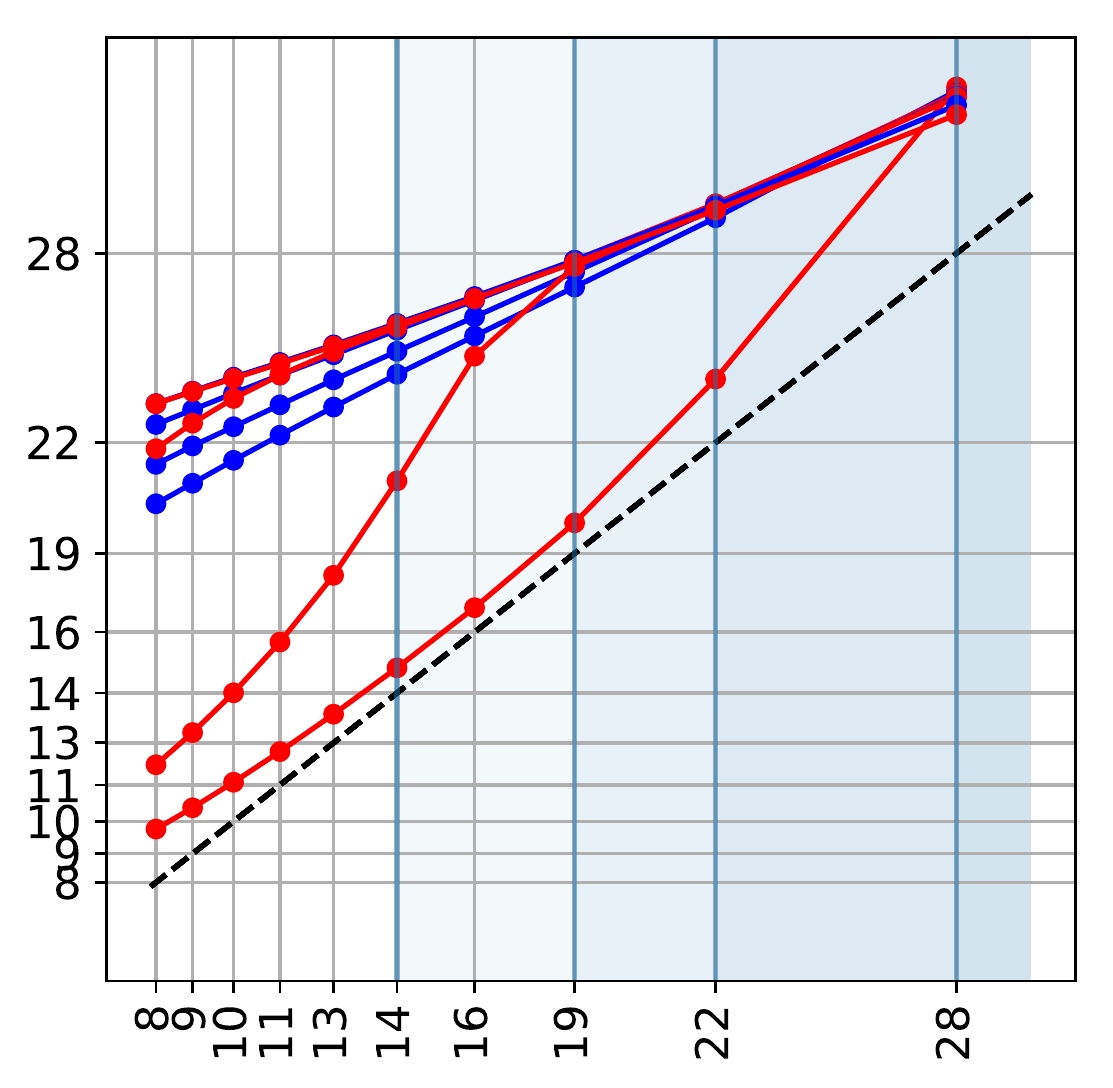}
        \caption{\small }
        \label{fig:unet_gaussian_uniform}
    \end{subfigure}
    \hfill
    \begin{subfigure}[b]{0.23\textwidth}
        \centering
        \includegraphics[width=\textwidth]{./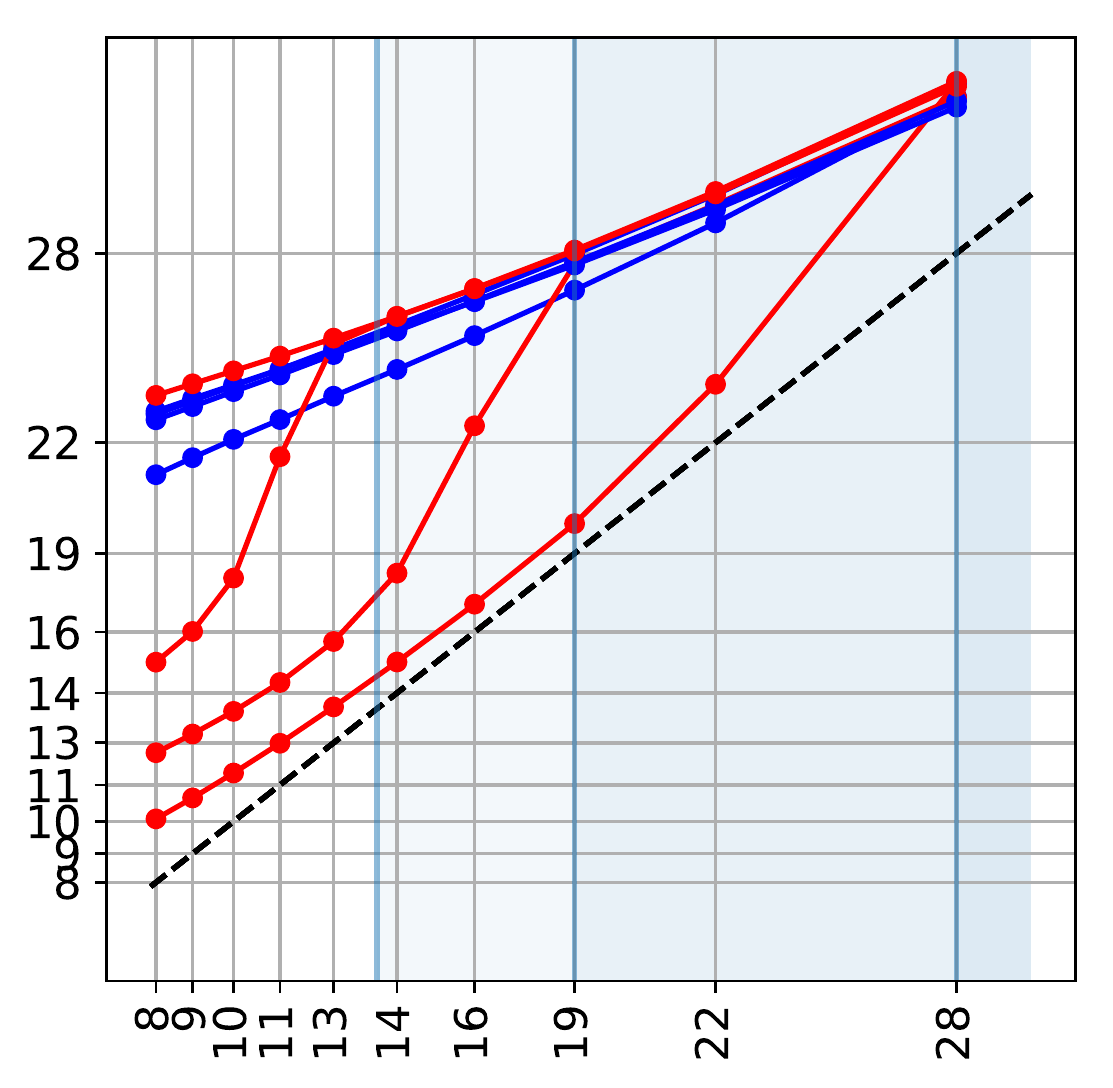}
        \caption{}
        \label{fig:skips_gaussian_uniform}
    \end{subfigure}
    \caption{Comparisons of architectures with (red curves) and without (blue curves) a net bias for the experimental design described in Section~\ref{sec:generalization}. The networks are trained using i.i.d. Gaussian noise but evaluated on noise drawn i.i.d. from a uniform distribution with mean $0$. The performance is quantified by the PSNR of the denoised image as a function of the input PSNR of the noisy image. All the architectures with bias perform poorly out of their training range, whereas the bias-free versions all achieve excellent generalization across noise levels, i.e. they are able to generalize across the two different noise distributions. {\bf (a)} Deep Convolutional Neural Network, DnCNN \citep{zhang2017beyond}. {\bf (b)} Recurrent architecture inspired by DURR~\citep{DURR}.
    {\bf (c)} Multiscale architecture inspired by the UNet~\citep{ronneberger2015unet}. {\bf (d)} Architecture with multiple skip connections inspired by the DenseNet~\citep{huang2017densnet}.
    }
    \label{fig:others_psnr_gaussian_uniform}
\end{figure}


\begin{figure}[t]
\centering
\begin{subfigure}{.50\textwidth}
  \centering
  \includegraphics[width=.8\linewidth]{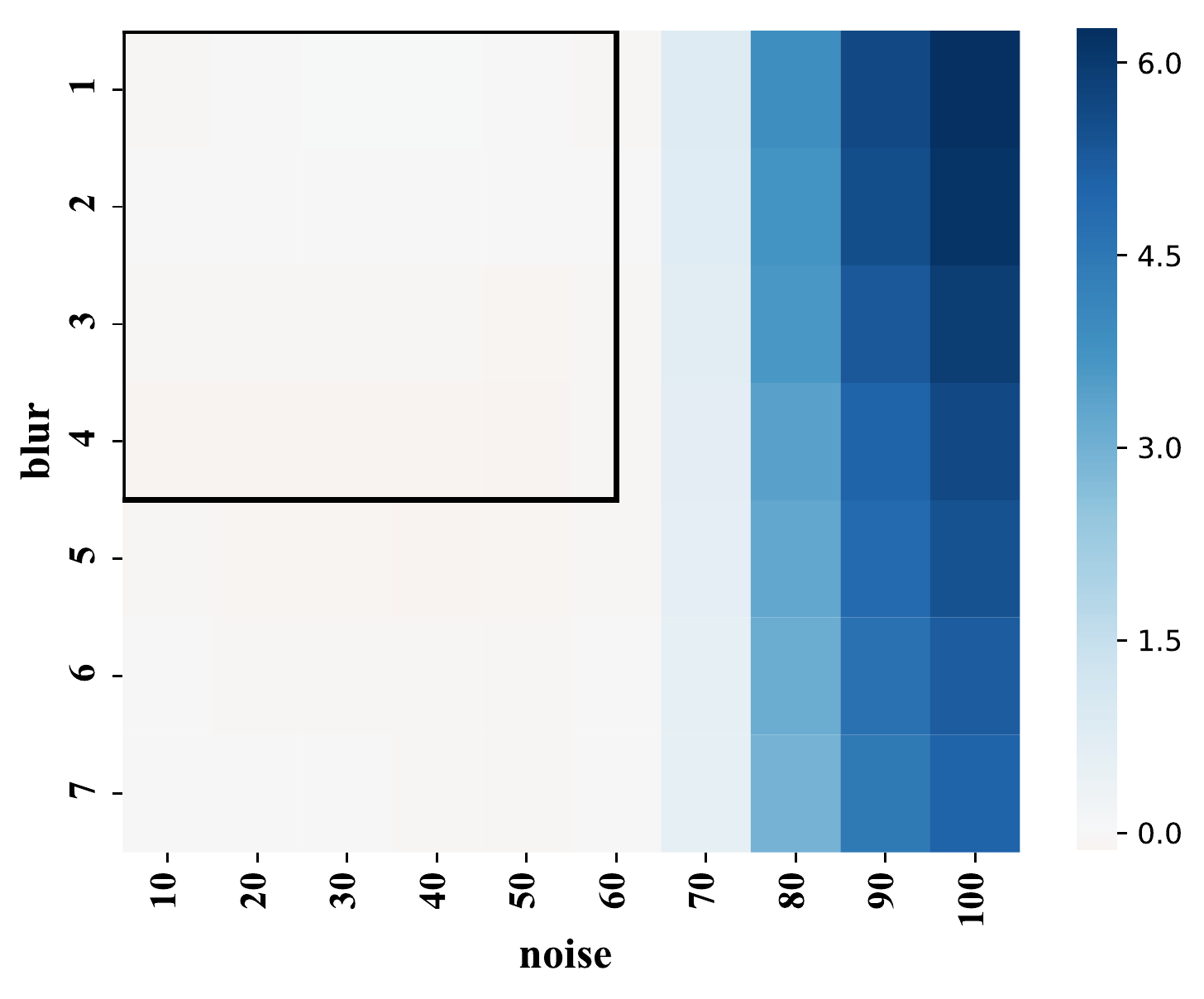}
  \label{fig:perf-b-restoration_diff}
\end{subfigure}
\begin{subfigure}{.43\textwidth}
  \centering
  \includegraphics[width=.8\linewidth]{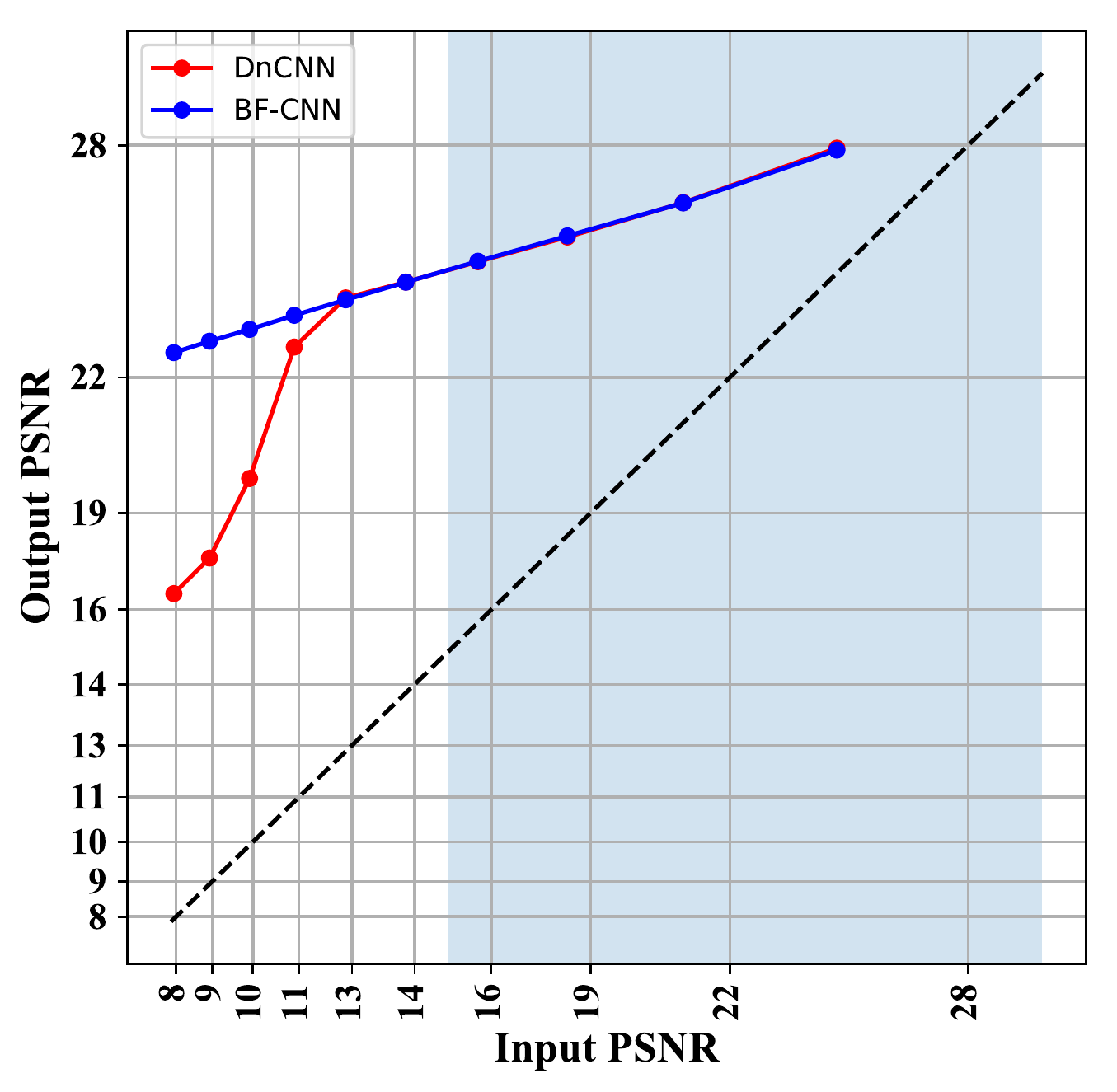}
  \label{fig:perf-a-restoration}
\end{subfigure}%
    \hfill
\vspace*{-2ex}
\caption{Comparison of the performance of DnCNN and a corresponding BF-CNN for image restoration. Training is carried out on data corrupted with Gaussian noise $\sigma_\text{noise} \in [0,55]$ and Gaussian blur $\sigma_\text{blur} \in [0,4]$. Performance is measured on test data for inside and outside the training ranges. \textbf{Left:} The difference in performance measured in $\Delta \text{PSNR} = \text{PSNR}_\text{{BF-CNN}} - \text{PSNR}_\text{{DnCNN}}$. The training region is illustrated by the rectangular boundary. Bias-free network generalizes across noise levels for each fixed blur levels, whereas DnCNN does not. However, BF-CNN does not generalize across blur levels. \textbf{Right:} A horizontal slice of the left plot for a fixed blur level of $\sigma_\text{blur} = 2$. BF-CNN generalizes robustly beyond the training range, while the performance of DnCNN degrades significantly.
}
\label{fig:gen_restoration}
\end{figure}
\end{document}